\let\counterwithin\relax
\newtheorem{assumption}{Assumption}
\newtheorem{lemma}{Lemma}
\newtheorem{proposition}{Proposition}
\newtheorem{theorem}{Theorem}
\newtheorem{corollary}{Corollary}
\theoremstyle{remark}
\newtheorem{remark}{Remark}
\newcolumntype{P}[1]{>{\centering\arraybackslash}p{#1}}
\newcommand{\bm}{\boldsymbol}
\newcommand{\cm}[1]{\mbox{\boldmath$\mathscr{#1}$}}
\newcommand{\cmt}[1]{\mbox{\boldmath\tiny$\mathscr{#1}$}}
\newcommand{\Fr}{{\mathrm{F}}}
\newcommand{\op}{{\mathrm{op}}}
\def\HH{{\mathrm{\scriptscriptstyle\mathsf{H}}}}
\newcommand{\est}{{\mathrm{est}}}
\newcommand{\pred}{{\mathrm{pred}}}
\newcommand{\trunc}{{\mathrm{trunc}}}
\newcommand{\hardt}[1]{\mathrm{HT(#1)}}
\DeclareMathOperator*{\rank}{rank}
\DeclareMathOperator*{\trace}{tr}
\DeclareMathOperator*{\argmin}{arg\,min}
\DeclareMathOperator*{\diag}{diag}
\DeclareMathOperator*{\var}{var}
\DeclarePairedDelimiter\floor{\lfloor}{\rfloor}
\renewcommand{\arraystretch}{1.0}
\numberwithin{equation}{section}
\title{Supervised Factor Modeling for High-Dimensional Linear Time Series}
\author{Feiqing Huang, Kexin Lu and Guodong Li
	\\ \textit{Department of Statistics and Actuarial Science, University of Hong Kong} }
\begin{document}
	
	\setlength{\parindent}{16pt}
	
	\maketitle
	
	\begin{abstract}
Motivated by Tucker tensor decomposition, this paper imposes low-rank structures to the column and row spaces of coefficient matrices in a multivariate infinite-order vector autoregression (VAR), which leads to a supervised factor model with two factor modelings being conducted to responses and predictors simultaneously.
Interestingly, the stationarity condition implies an intrinsic weak group sparsity mechanism of infinite-order VAR, and hence a rank-constrained group Lasso estimation is considered for high-dimensional linear time series.
Its non-asymptotic properties are discussed thoughtfully by balancing the estimation, approximation and truncation errors.
Moreover, an alternating gradient descent algorithm with thresholding is designed to search for high-dimensional estimates, and its theoretical justifications, including statistical and convergence analysis, are also provided.
Theoretical and computational properties of the proposed methodology are verified by simulation experiments, and the advantages over existing methods are demonstrated by two real examples.
	\end{abstract}
	
	\textit{Keywords}: Dimension reduction; High-dimensional time series; Infinite-order VAR; Tensor decomposition; Weak group sparsity.
	
	\newpage	
	\section{Introduction}
The high-speed advance in technology has spurred the rapid growth of high-dimensional data, especially time-dependent data, and examples can be found in many fields such as economics, finance, biology and neuroscience \citep{HL13, DP16,nicholson2020high,PT21}.
It is urgent to develop suitable models and methods for these larger and more complex time series data and, as arguably the most widely used low-dimensional (or multivariate) time series model, the vector autoregressive (VAR) model has been a primary workhorse for many high-dimensional tasks \citep{basu2019low,basu2015regularized,kock2015oracle,ZC21}.
However, the performance of VAR models can be seriously limited by the finite order in real applications, since information further in the past is often needed for fully capturing the complex temporal dependence.
As a remedy, the vector autoregressive moving average (VARMA) model has a parametric VAR$(\infty)$ form, and it hence can provide a better forecasting accuracy when time series are longer \citep{CEK16,WBBM21}.  
Comparing with VAR models, the VARMA model is usually more difficult to fit, and it is also harder to interpret \citep{Tsay14}.
Moreover, a simple VARMA model may still have the difficulty in capturing more complicated temporal dynamics such as the latent seasonality and inactive lags. 
This motivates us to directly study the most general linear stationary time series, i.e. the general linear process, and it includes both VAR and VARMA models as special cases.
Specifically, the general linear process and its VAR($\infty$) representation have the forms of
\begin{equation}\label{eq:VAR_inf}
	\mathbf{y}_t =  \sum_{j=1}^{\infty}\bm{\Psi}_j\bm{\varepsilon}_{t-j}+\bm{\varepsilon}_t \hspace{9mm}\text{and}\hspace{9mm}\mathbf{y}_t=\sum_{j=1}^\infty \bm{A}_j\mathbf{y}_{t-j}+\bm{\varepsilon}_t,
\end{equation}
respectively, where $\mathbf{y}_t\in\mathbb{R}^N$, the white noise $\bm{\varepsilon}_t\in\mathbb{R}^N$, and $\bm{\Psi}_j$'s and $\bm{A}_j$'s are $N\times N$ coefficient matrices \citep{BD91,Lutkepohl2005,Tsay14}.
Under low-dimensional settings, a VAR$(T_0)$ sieve model is usually used for estimation, and the corresponding asymptotic properties have been well established by choosing a suitable running order $T_0$ to control truncation errors; see \cite{GZ01,Lutkepohl2005,Li-Leng-Tsai2014}.
There is no discussion on general linear processes under high-dimensional settings in the literature, and this paper aims to fill this gap by providing inference tools with non-asymptotic properties.

The number of parameters at \eqref{eq:VAR_inf} increases quadratically with respect to the variable dimension, i.e. there are $N^2$ parameters for each coefficient matrix $\bm{A}_j$, and it is necessary to first restrict the parameter space along this dimension to achieve reasonable estimation.
A direct method is to assume that the coefficient matrices are sparse and then to apply sparsity-inducing regularized estimation; see, e.g., the $\ell_1$-regularization for VAR models \citep{basu2015regularized,han2015direct,kock2015oracle} and VARMA models \citep{WBBM21}.
However, its success may be discounted due to the three reasons below, given that time series data have a special structure and all predictors are the lagged values of responses.
First, the VAR models with sparsity may not have the most appropriate interpretation unless they have special sparsity structures, such as the banded VAR \citep{guo2016high} and network VAR \citep{zhu2017network} models. This drawback becomes more serious for a sparse VARMA model \citep{WBBM21} since the sparse MA coefficient matrices may not lead to sparse coefficient matrices in its VAR$(\infty)$ representation.
Secondly, the stationarity condition is usually required to establish theoretical properties, and this may force the entries of coefficient matrices to shrink to zero as $N\rightarrow \infty$, undermining the elementwise sparsity mechanism; see Remark 1 in \cite{Wang2021High}.	
Finally, for financial and economic time series, one often observes strong cross-sectional dependence among the $N$ scalar series, and it can be described by assuming that the $N$ variables are driven by a small number of common latent factors \citep{BN08, lam2012factor, BW16,fan2022bridging}.
This is different from the application of sparsity constraints; see, for example, time-course gene expression data, where dependence among the $N$ genes is believed to be sparse \citep{LZLR09}.

Another commonly used approach for dimension reduction is to impose low-rank structures to coefficient matrices \citep{basu2019low,billio2023bayesian}, and it will lead to the reduced-rank \citep{Velu13} and low-multilinear-rank \citep{Wang2021High} models when the low-rank assumption is applied to VAR models.
Along this line and motivated by Tucker tensor decomposition, this paper first imposes different low-rank structures to the column and row spaces of coefficient matrices $\bm{A}_j$'s, and they can then be interpreted as projecting responses and predictors into a small number of latent factors, \textit{response and predictor factors}, respectively. 
The response factors are used to summarize all predictable components of the market, while the predictor factors contain all driving forces; see Section 2.1 for details.
In fact, in the econometric literature, high-dimensional time series are usually analyzed by factor models, and the estimated factors and their loading matrices can provide useful insights into the interactive mechanism of large financial and economic systems \citep{lam2012factor, BW16,fan2022bridging}.
As a result, the proposed model can be called the \textit{supervised factor model} to emphasize its interpretations from unsupervised factor modeling perspectives, marking the first main contribution of this paper.

The infinite number of lagged values, $\mathbf{y}_{t-j}$'s, of responses is another hindrance for the high-dimensional VAR$(\infty)$ modeling at \eqref{eq:VAR_inf}, and this issue does not exist for traditional linear regression while is common in the literature of time series \citep{Lutkepohl2005,Tsay14}.
Note that all predictors $\mathbf{y}_{t-j}$'s, as well as the responses $\mathbf{y}_{t}$, are observed values of the same $N$ variables. 
This makes it complicated to construct reasonable dimension reduction methods along lags,
and most existing works ignore this issue by choosing a fixed order or even order one \citep{basu2015regularized,han2015direct}.
\cite{nicholson2017varx} considered several group-wise sparsity mechanisms to the lag dimension, such as the hierarchical group Lasso (HLag), and they are all carefully designed for certain special structure of time series data; see also \cite{wilms2017interpretable}.
These methods can potentially enhance interpretability and performance on particular datasets, while their frameworks may be inevitably limited for a general purpose.
Interestingly,  the stationarity condition of general linear processes intrinsically induces a natural constraint on coefficient matrices, i.e. they lie within a generalized $\ell_1$-ball, $\{\sum_{j=1}^{\infty}\|\bm{A}_j\|_{\Fr}\leq R\}$ with radius $0<R<\infty$.
This constraint exactly matches the weak group sparsity scenario with each coefficient matrix being a group of parameters \citep{raskutti2011minimax, Wainwright19}, and it, together with the low-rank assumption along the variable dimension, leads to the second main contribution of this paper, i.e. the high-dimensional rank-constrained group Lasso estimation and its non-asymptotic properties in Section 2.2 and 2.3, respectively. 
Specifically, we first choose a suitable running order $T_0$ to control truncation errors, and then the weak group sparsity design enables us to automatically account for the estimation and approximation errors; see Figure \ref{fig:glasso} for illustration.

We next consider algorithms to search for the high-dimensional estimate, which involves both the low-rank constraint and group sparsity. 
First, the low-rank structures are defined via matrix or tensor decomposition, and it is unavoidable for the estimating procedure to search for orthonomal matrices usually by the singular value decomposition, resulting in a time-consuming algorithm \citep{tu2016low,chi2019nonconvex}.
A novel solution in the literature is to add regularization terms rather than to assume the orthogonality directly \citep{wang2017unified,park2018finding,HWZ21}, and we adopt this method to tackle the low-rank constraints. 
Secondly, this is a non-convex optimization problem, and it is usually difficult to achieve the global optimality and to conduct the convergence analysis.
Following the non-convex matrix or tensor factorization literature, this paper considers an alternating gradient descent approach, and it is supposed to enjoy low computation cost and storage complexity; see \cite{chi2019nonconvex} and references therein.
Finally, due to the alternating mechanism, thresholding methods are more convenient to handle the group sparsity, and there are two choices, the hard- and soft-thresholding, in the literature \citep{Wainwright19}.
Although the soft-thresholding is more relevant to the lasso problem, the hard-thresholding will lead to a more stable algorithm for the VAR$(\infty)$ modeling; see Section 3.1 for details, and it is also more convenient to conduct convergence analysis \citep{tropp2010computational,shen2017tight}. More importantly, the hard-thresholding will result in unbiased estimators, while the soft-thresholding usually leads to biased ones; see Figure \ref{fig:glasso} for the illustration.
As a result, the third main contribution of this paper is to propose an alternating gradient descent algorithm with hard-thresholding in Section 3.1, and its theoretical justifications, including both statistical and convergence analysis, are also provided in Section 3.2.

\begin{figure}[t]
	\centering
	\includegraphics[width=1.\linewidth]{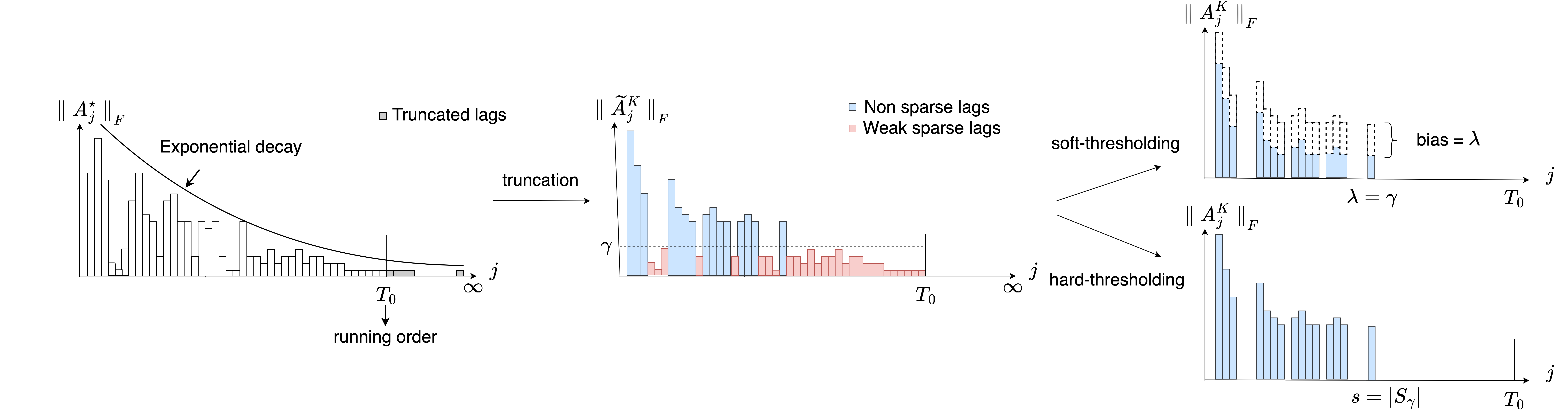}
	\caption{Dimension reduction along lags in the VAR($\infty$) modeling with weak group sparsity. Coefficient matrices with lags greater than $T_0$ are first truncated, and estimated ones with small magnitude are further suppressed, leading to the truncation and approximation errors, respectively.
		The remaining estimated coefficient matrices contribute to the estimation error, while the soft- or hard-thresholding will result in biased or unbiased estimators, respectively.
		\label{fig:glasso}}
\end{figure}

In addition, Section 4 conducts simulation experiments to evaluate the finite-sample performance of the proposed methodology, and its usefulness is further demonstrated by two empirical examples in Section 5. Section 6 gives a short conclusion and discussion, and all technical proofs are relegated to a supplementary file.
Throughout the paper, tensors are denoted by calligraphic capital letters; see, e.g. $\cm{A}$, $\cm{B}$, etc., and a brief introduction to tensor notations and Tucker decomposition is provided in the supplementary file. 
For two scalars $a$ and $b$, we denote $a\wedge b=\min\{a,b\}$ and $a\vee b=\max\{a,b\}$. For vectors $\bm{a}$ and $\bm{b}$, denote by $\langle\bm{a},\bm{b}\rangle = \sum_{j}a_jb_j$ and $\|\bm{a}\|_2=\sqrt{\langle\bm{a},\bm{a}\rangle}$ the inner product and $\ell_2$-norm, respectively.
For a matrix $\bm{A}\in\mathbb{R}^{d_1\times d_2}$, let $\bm{A}^\prime$, $\rank(\bm{A})$, $\sigma_{\max}(\bm{A})$ (or $\sigma_{\min}(\bm{A})$),  $\lambda_{\max}(\bm{A})$ (or $\lambda_{\min}(\bm{A})$), $\|\bm{A}\|_\op=\sigma_{\max}(\bm{A})$  and $\|\bm{A}\|_\Fr=\sqrt{\sum_{i,j}\bm{A}_{ij}^2}$
be its transpose, rank, largest (or smallest non-zero) singular value, largest (or smallest) eigenvalue, operator norm and  Frobenius norm, respectively. 
Moreover, for any $d_1\geq d_2$, the set of orthonormal matrices is denoted by $\mathcal{O}^{d_1\times d_2} := \{\bm{A}\in\mathbb{R}^{d_1\times d_2}\mid \bm{A}^\prime \bm{A} = \bm{I}_{d_2}\}$, where $\bm{I}_{d_2}$ is a $d_2\times d_2$ identity matrix.
On the other hand, for any two sequences $x_n$ and $y_n$, we denote $x_n\lesssim y_n$ (or $x_n\gtrsim y_n$) if there exists an absolute constant $C>0$ such that $x_n\leq C y_n$ (or $x_n\geq C y_n$). Write $x_n\asymp y_n$ if $x_n\lesssim y_n$ and $x_n\gtrsim y_n$, $x_n = O(y_n)$ if $x_n\lesssim y_n$, and $x_n = o(y_n)$ if $\lim_{n\rightarrow \infty} x_n/y_n = 0$. 

\section{High-dimensional linear time series modeling}\label{sec:lass}
\subsection{Low-rank linear time series and supervised factor models}

This subsection imposes low-rank assumptions to general linear processes, which leads to a supervised factor model \citep{wang2022high,wang2023high} in this paper.

Consider the $N$-dimensional general linear process at \eqref{eq:VAR_inf}, where the error terms $\{\bm{\varepsilon}_t\}$ are independent and identically distributed ($i.i.d.$).  Its coefficient matrices $\bm{\Psi}_j$'s may have a low-rank structure, especially when the number of variables $N$ is large.
Specifically, let $\mathbb{M}_1=\text{colspace}\{\bm{\Psi}_j,j\geq 1\}$ and $\mathbb{M}_2=\text{rowspace}\{\bm{\Psi}_j,j\geq 1\}$ be the column and row spaces of coefficient matrices $\bm{\Psi}_j$'s, respectively, and their dimensions are denoted by $r_i=\text{dim}(\mathbb{M}_i)\leq N$ with $i=1$ and 2. 
Note that $r_1$ and $r_2$ are not equal in general, and it is natural to consider the above low-rank constraints from the viewpoint of Tucker tensor decomposition; see discussions at the end of this subsection.
In addition, the matrix polynomial for the general linear process is defined as $\bm{\Psi}(z)=\bm{I}-\sum_{j=1}^{\infty}\bm{\Psi}_jz^j$, where $z\in \mathbb{C}$, and $\mathbb{C}$ is the complex space. 

\begin{assumption}[Invertibility condition] \label{assum:glp}
	The determinant of $\bm{\Psi}(z)$ is not equal to zero for all $|z|<1$, and  $\sum_{j=1}^{\infty}\|\bm{\Psi}_j\|_{\op} < \infty$.
\end{assumption}

\begin{proposition}\label{prop:Aj}
	If Assumption \ref{assum:glp} holds, then the VAR$(\infty)$ form at \eqref{eq:VAR_inf} can be uniquely identified with $\sum_{j=1}^{\infty}\|\bm{A}_j\|_{\op} < \infty$. Moreover, $\mathbb{M}_1$ and $\mathbb{M}_2$ are also the column and row spaces of coefficient matrices $\bm{A}_j$'s, respectively. 
\end{proposition}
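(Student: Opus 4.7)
The plan is to establish the proposition in three stages: (i) construct the $\bm{A}_j$'s and verify $\sum_{j\geq 1}\|\bm{A}_j\|_{\op}<\infty$; (ii) argue uniqueness of the VAR$(\infty)$ representation; (iii) deduce the column-space and row-space equalities.

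For (i), I would view $\bm{\Psi}(z)$ as an element of the matrix-valued Wiener algebra $\mathcal{W}=\{\bm{F}(z)=\sum_{j\geq 0}\bm{F}_j z^j : \|\bm{F}\|_{\mathcal{W}}:=\sum_{j}\|\bm{F}_j\|_{\op}<\infty\}$, which is a Banach algebra under the convolution product. Assumption \ref{assum:glp} places $\bm{\Psi}\in\mathcal{W}$ with $\det\bm{\Psi}(z)\neq 0$ on the (closed) unit disk, so the matrix version of Wiener's theorem guarantees that the inverse $\bm{\Psi}(z)^{-1}$ also lies in $\mathcal{W}$. The VAR$(\infty)$ coefficients can then be read off the power-series expansion of $\bm{\Psi}(z)^{-1}$ (after matching the paper's sign convention), and $\sum_j\|\bm{A}_j\|_{\op}<\infty$ is exactly the $\mathcal{W}$-norm of this inverse. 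Substituting the MA representation into $\mathbf{y}_t-\sum_j\bm{A}_j\mathbf{y}_{t-j}$ and collapsing by the operator identity $\bm{A}(B)\bm{\Psi}(B)=\bm{I}$ in $\mathcal{W}$ yields $\bm{\varepsilon}_t$, confirming the VAR$(\infty)$ form.

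For (ii), if $\{\bm{A}_j\}$ and $\{\bm{A}'_j\}$ are two $\ell_1$-summable sequences that both yield the VAR$(\infty)$ form, the corresponding $\bm{A}(z),\bm{A}'(z)\in\mathcal{W}$ satisfy $\bm{A}(z)\bm{\Psi}(z)=\bm{A}'(z)\bm{\Psi}(z)=\bm{I}$; right-multiplication by $\bm{\Psi}(z)^{-1}$ in $\mathcal{W}$ gives $\bm{A}=\bm{A}'$, hence $\bm{A}_j=\bm{A}'_j$ for all $j$. (Alternatively, one can subtract the two VAR equations, substitute the MA form, and extract coefficients using the $i.i.d.$-ness and linear independence of $\{\bm{\varepsilon}_{s}:s\leq t\}$.) For (iii), equating coefficients of $z^m$ in $\bm{A}(z)\bm{\Psi}(z)=\bm{I}$ (with $\bm{\Psi}_0:=\bm{I}$) yields the recursion
\begin{equation*}
\bm{A}_m = \bm{\Psi}_m - \sum_{k=1}^{m-1}\bm{A}_k\bm{\Psi}_{m-k},\qquad m\geq 1.
\end{equation*}
Induction on $m$ then gives $\mathrm{colspace}(\bm{A}_m)\subseteq \mathbb{M}_1$: the leading $\bm{\Psi}_m$ term contributes a subspace of $\mathbb{M}_1$ by definition, and each $\bm{A}_k\bm{\Psi}_{m-k}$ has column space contained in $\mathrm{colspace}(\bm{A}_k)\subseteq \mathbb{M}_1$ by the inductive hypothesis. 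The reverse inclusion $\mathbb{M}_1\subseteq \mathrm{colspace}\{\bm{A}_j\}$ follows from the dual recursion $\bm{\Psi}_m = \bm{A}_m + \sum_{k=1}^{m-1}\bm{\Psi}_k\bm{A}_{m-k}$ obtained from $\bm{\Psi}(z)\bm{A}(z)=\bm{I}$, by the symmetric induction. The row-space identity for $\mathbb{M}_2$ is obtained by running the same argument on transposes, or equivalently by exploiting the right factors $\bm{\Psi}_{m-k}$ and $\bm{A}_{m-k}$ in the two recursions.

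The main obstacle is the $\ell_1$-operator-norm summability in stage (i): the recursion defines the $\bm{A}_j$'s algebraically, but controlling the tail $\sum_{j\geq M}\|\bm{A}_j\|_{\op}$ requires the inverse-closedness of the Wiener algebra, equivalently a uniform lower bound on $\sigma_{\min}(\bm{\Psi}(z))$ over the closed disk together with a Neumann-series argument. Once this analytic input is secured, the identifiability and the two subspace equalities reduce to routine manipulations with matrix power series.
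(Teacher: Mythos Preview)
Your proposal is correct and follows essentially the same route as the paper: both obtain the recursions $\bm{A}_m=\bm{\Psi}_m-\sum_{k=1}^{m-1}\bm{A}_k\bm{\Psi}_{m-k}$ and $\bm{\Psi}_m=\bm{A}_m+\sum_{k=1}^{m-1}\bm{A}_k\bm{\Psi}_{m-k}$ (or their left/right variants) and run the two-way induction for the column- and row-space identities. The only difference is in stage~(i): the paper simply cites \cite{lewis1985prediction} for the existence and uniqueness of the VAR$(\infty)$ coefficients and asserts $\sum_j\|\bm{A}_j\|_{\op}<\infty$ without further argument, whereas you supply the analytic machinery (inverse-closedness of the matrix Wiener algebra) that actually delivers this summability; your treatment is therefore more self-contained, at the cost of needing $\det\bm{\Psi}(z)\neq 0$ on the \emph{closed} disk, which you correctly flag as the crux.
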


From the above proposition, the general linear process has an equivalent VAR$(\infty)$ form, whose coefficient matrices satisfy $\bm{A}_j = \bm{\Psi}_j - \sum_{k=1}^{j-1}\bm{\Psi}_{j-k}\bm{A}_k$ for all $j\geq 1$.
Moreover, it can be verified that the determinant of $\bm{A}(z)$ is not equal to zero for all $|z|<1$, where $\bm{A}(z)=\bm{I}-\sum_{j=1}^{\infty}\bm{A}_jz^j$ is the VAR$(\infty)$ matrix polynomial.
Lastly, coefficient matrices of the general linear process and its VAR$(\infty)$ representation share the same column and row spaces.

Consider two matrices $\bm{U}_1\in\mathcal{O}^{N\times r_1}$ and $\bm{U}_2\in\mathcal{O}^{N\times r_2}$, which contain the bases of subspaces $\mathbb{M}_1$ and $\mathbb{M}_2$, respectively. 
Then, by some tensor algebra, there exist two sequences of $r_1\times r_2$ matrices $\{\bm{H}_j, \bm{G}_j,j\geq 1\}$ such that $\bm{\Psi}_j = \bm{U}_1\bm{H}_j \bm{U}_2^\prime$ and $\bm{A}_j = \bm{U}_1\bm{G}_j \bm{U}_2^\prime$ for all $j\geq 1$.
As a result, under Assumption \ref{assum:glp} and the low-rank constraint, model \eqref{eq:VAR_inf} can be rewritten into
\begin{equation}\label{eq:VAR-low-rank}
	\mathbf{y}_t = \bm{U}_1\sum_{j=1}^{\infty}  \bm{H}_j \bm{U}_2^\prime \bm{\varepsilon}_{t-j} +\bm{\varepsilon}_t \hspace{6mm}\text{and}\hspace{6mm}\mathbf{y}_t = \bm{U}_1\sum_{j=1}^{\infty}  \bm{G}_j \bm{U}_2^\prime \mathbf{y}_{t-j} +\bm{\varepsilon}_t.
\end{equation}
Note that $\bm{U}_1$ and $\bm{U}_2$ are not unique, while the projection matrices of $\mathbb{M}_1$ and $\mathbb{M}_2$ can be uniquely defined by $\bm{P}_1=\bm{U}_1\bm{U}_1^\prime$ and $\bm{P}_2 = \bm{U}_2\bm{U}_2^\prime$, respectively. 
Moreover, for $i=1$ and 2, let $\bm{U}_i^{\perp}\in\mathcal{O}^{N\times (N-r_i)}$ such that $(\bm{U}_i,\bm{U}_i^{\perp})$ is an $N\times N$ orthonormal matrix, and then $\bm{P}_i^{\perp}=\bm{U}_i^{\perp}\bm{U}_i^{\perp\prime}$ is the projection matrix of $\mathbb{M}_i^{\perp}$, i.e. orthogonal complement of subspace $\mathbb{M}_i$.

Note that model \eqref{eq:VAR-low-rank} involves two types of dimension reduction.
We first consider the projection of $\mathbf{y}_t$ onto subspace $\mathbb{M}_1$ and its orthogonal complement, i.e. $\mathbf{y}_t=\bm{P}_1\mathbf{y}_t+\bm{P}_1^{\perp}\mathbf{y}_t$, and these two parts can be verified to have completely different dynamic structures,
\[
\bm{P}_1\mathbf{y}_t=\bm{U}_1\sum_{j=1}^{\infty}  \bm{H}_j \bm{U}_2^\prime \bm{\varepsilon}_{t-j} +\bm{P}_1\bm{\varepsilon}_t
\hspace{6mm}\text{and}\hspace{6mm}
\bm{P}_1^{\perp}\mathbf{y}_t=\bm{P}_1^{\perp}\bm{\varepsilon}_t,
\]
where all information of $\mathbf{y}_t$ related to temporally dependent structures is contained in $\mathbb{M}_1$, whereas $\mathbb{M}_1^{\perp}$ includes only purely idiosyncratic and serially independent components.
In fact, model \eqref{eq:VAR-low-rank} has a form of static factor models,
\begin{equation*}\label{eq:factor}
	\mathbf{y}_t = \bm{U}_1 \bm{f}_t+\bm{\varepsilon}_t\hspace{5mm}\text{with}\hspace{5mm}\bm{f}_t = \sum_{j=1}^{\infty}  \bm{H}_j \bm{U}_2^\prime \bm{\varepsilon}_{t-j}=
	\sum_{j=1}^{\infty} \bm{G}_j \bm{U}_2^\prime \mathbf{y}_{t-j},
\end{equation*}
where $\bm{f}_t\in\mathbb{R}^{r_1}$ contains $r_1$ latent factors, and $\bm{U}_1$ is the corresponding loading matrix; see \cite{Pan-Yao2008,lam2012factor,BW16}.
Consequently, we call $\bm{f}_t=\bm{U}_1^{\prime}\mathbf{y}_t-\bm{U}_1^{\prime}\bm{\varepsilon}_t$ or $\bm{U}_1^{\prime}\mathbf{y}_t$ the \textit{response factor} since it summarizes all predictable components in the response, and accordingly $\mathbb{M}_1$ can be referred to as the \textit{response factor space}.

On the other hand, for the dimension reduction on predictors, we project $\mathbf{y}_{t-j}$ onto $\mathbb{M}_2$ and $\mathbb{M}_2^{\perp}$, and it holds that $\mathbf{y}_{t-j}=\bm{P}_2\mathbf{y}_{t-j}+\bm{P}_2^{\perp}\mathbf{y}_{t-j}$.
Moreover, for $N$-dimensional random vectors $\mathbf{x}_1$, $\mathbf{x}_2$ and $\mathbf{z}_j$'s with $j\geq 1$, the partial covariance function is usually used to measure the relationship between $\mathbf{x}_1$ and $\mathbf{x}_2$ after removing the effects of $\mathbf{z}_j$'s, and it has the form of 
$
\mathrm{pcov}(\mathbf{x}_1,\mathbf{x}_2| \mathbf{z}_1,\mathbf{z}_2,\ldots)=\mathrm{cov}(\mathbf{x}_1-\widehat{\mathbf{x}}_1,\mathbf{x}_2-\widehat{\mathbf{x}}_2),
$
where, for $i=1$ and 2, $\widehat{\mathbf{x}}_i=\sum_{j=1}^{\infty}\widehat{\bm{B}}_j^{(i)}\mathbf{z}_j$ and $(\widehat{\bm{B}}_1^{(i)},\widehat{\bm{B}}_2^{(i)},\ldots)=\argmin E\|\mathbf{x}_i-\sum_{j=1}^{\infty}\bm{B}_j\mathbf{z}_j\|_2^2$; see \cite{Fan2003,Tsay14}.
As a result, if $E\|\mathbf{y}_t\|_2^2<\infty$, then
\[
\mathrm{pcov}(\mathbf{y}_t, \mathbf{y}_{t-j} | \bm{P}_2\mathbf{y}_{t-1},\bm{P}_2\mathbf{y}_{t-2},\ldots) =\mathrm{pcov}(\mathbf{y}_t, \bm{P}_2^{\perp}\mathbf{y}_{t-j} | \bm{P}_2\mathbf{y}_{t-1},\bm{P}_2\mathbf{y}_{t-2},\ldots) =0 \hspace{2mm}\text{for all}\hspace{2mm}j\geq 1,
\]
i.e. the space $\mathbb{M}_2$ can summarize all information of $\mathbf{y}_{t-j}$ that contributes to predicting $\mathbf{y}_{t}$, or $\bm{U}_2^\prime\mathbf{y}_{t-j}$ contains all driving forces of the market.
Thus, we call $\bm{U}_2^\prime\mathbf{y}_{t-j}$ the \textit{predictor factor} for simplicity, and $\mathbb{M}_2$ is referred to as the \textit{predictor factor space}.
Since model \eqref{eq:VAR-low-rank} is a supervised problem in nature, and we call it the \textit{supervised factor model} to emphasize the above interpretations from unsupervised factor modeling perspectives \citep{lam2012factor,BW16}.

\begin{remark}
	The response factor space $\mathbb{M}_1$ is exactly the factor space of $\{\mathbf{y}_t\}$ in the unsupervised factor modeling settings, and it hence can be easily estimated by many model-free methods in the literature \citep{Pan-Yao2008,Lam-Yao-Bathia2011,lam2012factor,gao2021two}. 
	On the other hand, for a simple diagonal VAR model, the proposed supervised factor model cannot be applied since both column and row spaces of coefficient matrices have full rank, and we may need a model with sparsity for it; see, e.g., \citep{basu2015regularized,guo2016high}. 
\end{remark}
\begin{remark}
Consider a special case of model \eqref{eq:VAR-low-rank} with $\bm{U}_1=\bm{U}_2$, i.e. the response and predictor factors are identical, and then the VAR$(\infty)$ representation can be rewritten into
\[
\bm{f}_t = \sum_{j=1}^{\infty}  \bm{G}_j \bm{f}_{t-j} +\bm{U}_1^{\prime}\bm{\varepsilon}_t \hspace{5mm}\text{with}\hspace{5mm}
\bm{f}_t =\bm{U}_1^{\prime} \mathbf{y}_{t},
\]
where $\{\bm{U}_1^{\prime}\bm{\varepsilon}_t\}$ is the new white noise sequence.
It hence leads to the commonly used dynamic factor model in the literature: we first conduct factor modeling on $\{\mathbf{y}_{t}\}$, and an VAR$(\infty)$ model is then considered for the summarized factors $\bm{f}_t$; see \cite{SW05, SW11}.
Since the response and predictor factors at \eqref{eq:VAR-low-rank} are not equal in general, i.e. $\bm{U}_1\neq \bm{U}_2$, the supervised factor model can provide a better performance in forecasting high-dimensional time series.
\end{remark}

Model \eqref{eq:VAR-low-rank} is partially motivated from tensor techniques, and the two types of dimension reduction can be imposed naturally from viewpoints of tensor decomposition; see the supplementary file for more details on tensor notations and decomposition.
Specifically, for the VAR$(\infty)$ form at \eqref{eq:VAR_inf}, we first rearrange the coefficient matrices into a tensor $\cm{A}_{\infty}\in\mathbb{R}^{N \times N\times \infty}$ such that its mode-1 matricization is $(\cm{A}_{\infty})_{(1)}=(\bm{A}_1,\bm{A}_2,\ldots)$, and then its mode-2 matricization assumes the form of $(\cm{A}_{\infty})_{(2)}=(\bm{A}_1^{\prime},\bm{A}_2^{\prime},\ldots)$.
Note that the column spaces of $(\cm{A}_{\infty})_{(1)}$ and $(\cm{A}_{\infty})_{(2)}$ are $\mathbb{M}_1$ and $\mathbb{M}_2$, respectively, and $r_1=\rank\{(\cm{A}_{\infty})_{(1)}\}$ and $r_2=\rank\{(\cm{A}_{\infty})_{(2)}\}$ are the first two Tucker ranks.
Accordingly, we have the Tucker decomposition \citep{tucker1966some,delathauwer2000multilinear} below,
\begin{equation}\label{eq:low-tucker-rank}
	\cm{A}_{\infty}=\cm{G}_{\infty}\times_1 \bm{U}_1\times_2 \bm{U}_2,
\end{equation}
where $\cm{G}_{\infty}\in\mathbb{R}^{r_1 \times r_2\times \infty}$, $\bm{U}_1\in\mathbb{R}^{N\times r_1}$ and $\bm{U}_2\in\mathbb{R}^{N\times r_2}$.
Similarly, coefficient matrices of general linear processes can also be represented in the form of tensors and tensor decomposition.

Note that Tucker decomposition at \eqref{eq:low-tucker-rank} is not unique, since $\cm{A}_{\infty}=\cm{G}_{\infty}\times_1 \bm{U}_1\times_2 \bm{U}_2=(\cm{G}_{\infty}\times_1 \bm{O}_1\times_2 \bm{O}_2)\times_1 (\bm{U}_1\bm{O}_1^{-1})\times_2 (\bm{U}_2\bm{O}_2^{-1})$ for any invertible matrices $\bm{O}_i\in\mathbb{R}^{r_i\times r_i}$ with $i=1$ and 2.
In particular, we can choose $\bm{U}_1$, $\bm{U}_2$ and $\bm{G}_j$'s in \eqref{eq:VAR-low-rank} and let $\bm{G}_j$ be the $j$-th frontal slice of $\cm{G}_{\infty}$ for $j\geq 1$, i.e. $(\cm{G}_{\infty})_{(1)}=(\bm{G}_1,\bm{G}_2,\ldots)$.
As a result, the VAR$(\infty)$ form at \eqref{eq:VAR_inf}, together with the low-Tucker-rank constraint at \eqref{eq:low-tucker-rank}, is equivalent to the supervised factor model at \eqref{eq:VAR-low-rank}.



\subsection{High-dimensional estimation}\label{sec:lasso-estimator}

For an observed time series $\{\mathbf{y}_1,\ldots,\mathbf{y}_T\}$ generated by model \eqref{eq:VAR_inf} with the low-rank constraint \eqref{eq:low-tucker-rank}, its true coefficient matrices are denoted by $\bm{A}_j^*$'s, and this paper adopts the VAR sieve approximation method to estimate them, i.e.
\begin{equation}\label{eq:trunc_tensor}
	\mathbf{y}_t=\sum_{j=1}^{T_0}\bm{A}_j^*\mathbf{y}_{t-j}+\widetilde{\bm{\varepsilon}}_t,\hspace{5mm}\widetilde{\bm{\varepsilon}}_t=\bm{\varepsilon}_t+  \bm{r}_t \hspace{2mm}\text{and}\hspace{2mm} \rank\{(\cm{A}_{\infty}^*)_{(i)}\} \leq r_i \text{ with $i = 1$ and 2},
\end{equation}
where $T_0$ is the running order of VAR models, $\bm{r}_t=\sum_{j=T_0+1}^{\infty}\bm{A}_j^*\mathbf{y}_{t-j}$ is the truncated term, and $\cm{A}_{\infty}^*$ is the true full coefficient tensor.

The coefficient matrices of model \eqref{eq:trunc_tensor} are $\bm{A}_j$'s with $1\leq j\leq T_0$, and they can be rearranged into a coefficient tensor $\cm{A}\in\mathbb{R}^{N\times N\times T_0}$ such that its mode-1 matricization is $\cm{A}_{(1)}=(\bm{A}_1,\bm{A}_2,\ldots,\bm{A}_{T_0})\in\mathbb{R}^{N\times NT_0}$.
Let $\mathbf{x}_{t}=(\mathbf{y}_{t-1}^\prime, \mathbf{y}_{t-2}^\prime, \ldots, \mathbf{y}_{t-T_0}^\prime)^\prime\in \mathbb{R}^{NT_0}$, 
$\bm{X}=(\mathbf{x}_T,\mathbf{x}_{T-1},\dots,\mathbf{x}_{T_0+1})\in \mathbb{R}^{NT_0\times (T-T_0)}$ and $\bm{Y}=(\mathbf{y}_T,\mathbf{y}_{T-1},\dots,\mathbf{y}_{T_0+1})\in\mathbb{R}^{N\times (T-T_0)}$. 
Consider the ordinary least squares estimation, and then the loss function has the form of
\[
\mathcal{L}(\cm{A})= \frac{1}{2T_1}\sum_{t=T_0+1}^{T}\|\mathbf{y}_t-\sum_{j=1}^{T_0}\bm{A}_j\mathbf{y}_{t-j}\|_{\Fr}^2 =\frac{1}{2T_1}\|\bm{Y}-\cm{A}_{(1)}\bm{X}\|_{\Fr}^2,
\]
where the effective sample size is $T_1=T-T_0$.
Denote by $\cm{A}^*\in\mathbb{R}^{N\times N\times T_0}$ the true coefficient tensor, and it is a truncated tensor obtained by removing all $\bm{A}_j^*$ with $j> T_0$ from the $N\times N\times \infty$ full coefficient tensor $\cm{A}^*_\infty$. 
Define the parameter space
\begin{equation*}
	\bm{\Theta}(r_1, r_2) = \{\cm{A}\in\mathbb{R}^{N\times N\times T_0}\mid \rank(\cm{A}_{(1)})\leq r_1, \rank(\cm{A}_{(2)})\leq r_2\},
\end{equation*}
and the low-Tucker-rank constraint at \eqref{eq:trunc_tensor} implies that $\cm{A}^*\in \bm{\Theta}(r_1, r_2)$.

To obtain theoretical justifications, the truncated term $\bm{r}_t$ is required to approach zero quickly, which requests $\bm{A}_j^*$ or $\bm{\Psi}_j^*$ to decay at a sufficient rate as $j\rightarrow \infty$, where $\bm{\Psi}_j^*$'s are true coefficient matrices of the corresponding general linear process.
While Assumption \ref{assum:glp} is sufficient to achieve asymptotic properties for the low-dimensional (or multivariate) time series, we need a stronger exponential decay below to establish non-asymptotic properties for the high-dimensional case.

\begin{assumption}[Exponential decay] \label{assum:Adecay}
	There exists some $\rho\in(0,1)$ such that $\|\bm{\Psi}_j^*\|_{\op} =O(\rho^j)$ and $\|\bm{A}_j^*\|_{\op} = O(\rho^j)$ as $j\rightarrow \infty$.
\end{assumption}

The above constraint is mild since all VAR and VARMA processes are still included.
On the other hand, from Assumption \ref{assum:Adecay} and the low-rank condition at \eqref{eq:low-tucker-rank}, it can be verified that $\sum_{j=1}^{\infty}\|\bm{A}_j^*\|_{\Fr} \leq C\rho^{-1}(r_1\wedge r_2)$, where $C$ is an absolute constant, and then the true coefficients of model \eqref{eq:trunc_tensor} will be within a generalized $\ell_1$-ball, $\{\cm{A}: \sum_{j=1}^{T_0}\|\bm{A}_j\|_{\Fr}\leq R\}$.
This feature exactly matches the scenario of weak sparsity \citep{raskutti2011minimax,Wainwright19}, and hence an automated lag selection procedure can be inspired.
Specifically, if we regard each  $\bm{A}_j^*$ as a group of parameters,  since  any $\bm{A}_j^*$ with  a relatively large $j$ must be close to (if not exactly) a zero matrix, then the target coefficient tensor
$\cm{A}^*$ must be weakly group-sparse. 
As a result, this paper considers the following rank-constrained group Lasso estimator of $\cm{A}^*$,
\begin{equation}\label{eq:minimizer}
	\cm{\widehat{A}} = \argmin_{\scriptsize\cm{A}\in\bm{\Theta}(r_1, r_2)}\mathcal{L}(\cm{A}) +\lambda \|\cm{A}\|_{\ddagger} \hspace{5mm}\text{with}\hspace{3mm} \|\cm{A}\|_{\ddagger}=\sum_{j=1}^{T_0}\|\bm{A}_j\|_{\Fr},
\end{equation}
where $\cm{\widehat{A}}_{(1)}=(\bm{\widehat{A}}_1,\bm{\widehat{A}}_2,\ldots,\bm{\widehat{A}}_{T_0})$, and $\lambda>0$ is a tuning parameter of penalization.
The true full coefficient tensor $\cm{A}_{\infty}^*$ can be estimated by
$\cm{\widehat{A}}_\infty\in\mathbb{R}^{N\times N\times \infty}$, which appends infinitely many zero matrices to $\cm{\widehat{A}}\in\mathbb{R}^{N\times N\times T_0}$ such that
$	(\cm{\widehat{A}}_\infty)_{(1)}= (\cm{\widehat{A}}_{(1)}, \bm{0}_{N\times N}, \bm{0}_{N\times N}, \dots),
$
i.e. $\bm{\widehat{A}}_j=\bm{0}$ for $j>T_0$. 
Note that $\cm{\widehat{A}}_\infty$ satisfies the low-Tucker-rank constraint at \eqref{eq:trunc_tensor}.

\subsection{Non-asymptotic properties}
We next establish the non-asymptotic properties of $\cm{\widehat{A}}_\infty$, and its accuracy is measured in terms of both parameter estimation and prediction as in the literature \citep{Wainwright19},
\begin{align}
	\begin{split}\label{eq:errors}
		e_{\est}(\cm{\widehat{A}}_\infty) &=\|\cm{A}^*_{\infty}-\cm{\widehat{A}}_{\infty}\|_\Fr^2= \sum_{j=1}^{\infty} \|\bm{A}^*_j-\bm{\widehat{A}}_j \|_{\Fr}^2 = \|\cm{A}^*-\cm{\widehat{A}}\|_\Fr^2+e_{\trunc},\\
		e_{\pred}(\cm{\widehat{A}}_\infty) &= \frac{1}{T_1}\sum_{t=T_0+1}^{T} \Big \| \sum_{j=1}^{\infty}( \bm{A}^*_j-\bm{\widehat{A}}_j) \mathbf{y}_{t-j} \Big \|_2^2 = T_1^{-1}\|(\cm{A}^*-\cm{\widehat{A}})_{(1)}\bm{X}\|_{\Fr} + \widetilde{e}_{\trunc},
	\end{split}
\end{align}
where the truncation errors coming from the sieve approximation are given by
\begin{equation*}\label{eq:truncerrors}
	e_{\trunc}=\sum_{j=T_0+1}^{\infty}\|\bm{A}_j^*\|_{\Fr}^2 \quad\text{and}\quad
	\widetilde{e}_{\trunc}=
	\frac{1}{T_1}\sum_{t=T_0+1}^{T} \Bigg \{ 2 \Big  \langle\sum_{j=1}^{T_0}(\bm{A}^*_j-\bm{\widehat{A}}_j) \mathbf{y}_{t-j}, \bm{r}_t \Big \rangle + \| \bm{r}_t \|_2^2 \Bigg \},
\end{equation*}
respectively, and $\bm{r}_t$ is defined in \eqref{eq:trunc_tensor}. 
We first handle the VAR sieve estimator $\cm{\widehat{A}}$, which corresponds to the first terms on the right hand side of both equations at \eqref{eq:errors}, and the key intermediate step is to prove a more general result, which is known as the oracle inequality in the literature \citep{negahban2012unified}.


\begin{assumption}[Sub-Gaussian errors]\label{assum:error}
	Let $\bm{\varepsilon}_t = \bm{\Sigma}_\varepsilon^{1/2}\bm{\xi}_t$, where  $\{\bm{\xi}_t\}$ is a sequence of i.i.d. random vectors with zero mean and $\var(\bm{\xi}_t) = \bm{I}_{N}$. 
	In addition, the coordinates $(\bm{\xi}_{it})_{1\leq i\leq N}$ within $\bm{\xi}_t$ are mutually independent and $\sigma^2$-sub-Gaussian, where $\sigma^2>0$ is an absolute constant.
\end{assumption}
\begin{assumption}[Running orders]\label{assum:T0}
	There exists an absolute constant $C>0$ such that $T_1\rho^{T_0/2}\leq C$, or equivalently, 
	$T_0 \geq 2\{\log(1/\rho)\}^{-1}\log (T_1/C)$.
\end{assumption}

The sub-Gaussian condition in Assumption \ref{assum:error} is commonly used in the literature of high-dimensional time series \citep{zheng2019testing,ZC21,wang2023high}.
The running order $T_0$ is required to grow at a rate of $T_0\gtrsim \log(T_1)$ in Assumption \ref{assum:T0} such that the effect of truncated terms $\bm{r}_t$'s can be dominated; see the technical proofs for details.
In fact, to derive the asymptotic normality of low-dimensional VAR sieve estimation \citep{lewis1985prediction}, it is usually assumed that $T_1^{1/2}\sum_{j=T_0+1}^{\infty}\|\bm{A}_j^*\|_{\op} =o(1)$, which exactly corresponds to $T_0\gtrsim \log(T_1)$ since $\sum_{j=T_0+1}^{\infty}\|\bm{A}_j^*\|_{\op} \lesssim \rho^{T_0}$ under Assumption \ref{assum:Adecay}.


We next state the oracle inequalities of $\cm{\widehat{A}}$, which will rely on the temporal and cross-sectional dependence of $\{\mathbf{y}_t\}$ \citep{basu2015regularized}.
To this end, we first define
\[
\mu_{\min}(\bm{\Psi}_*)= \min_{|z|=1}\lambda_{\min}(\bm{\Psi}_*^{\HH}(z)\bm{\Psi}_*(z)) \hspace{5mm}\text{and}\hspace{5mm}
\mu_{\max}(\bm{\Psi}_*)= \max_{|z|=1}\lambda_{\max}(\bm{\Psi}_*^{\HH}(z)\bm{\Psi}_*(z)),
\]
where $\bm{\Psi}_*(z) = \sum_{j=0}^{\infty}\bm{\Psi}_j^*z^j$ for $z \in \mathbb{C}$, and $\bm{\Psi}_*^{\HH}(z)$ is its complex conjugate.
Note that, from Assumption \ref{assum:Adecay}, $\mu_{\max}(\bm{\Psi}_*)\leq C(1-\rho^2)^{-1}$ with $C$ being an absolute constant.
Let $\kappa_{\mathrm{RSC}}=\lambda_{\min}(\bm{\Sigma}_\varepsilon)\mu_{\min}(\bm{\Psi}_*)$ and $\kappa_{\mathrm{RSS}}=\lambda_{\max}(\bm{\Sigma}_\varepsilon)\mu_{\max}(\bm{\Psi}_*)$, where $\lambda_{\min}(\bm{\Sigma}_{\varepsilon})$ and $\lambda_{\max}(\bm{\Sigma}_{\varepsilon})$ are the minimum and maximum eigenvalues of $\bm{\Sigma}_{\varepsilon}$, respectively, and $\kappa_{\mathrm{RSC}}$ and $\kappa_{\mathrm{RSS}}$ are key quantities related to the restricted strong convexity and smoothness conditions \citep{agarwal2010fast}.

\begin{theorem}[Group Lasso oracle inequalities]\label{prop:main}
	Let $S\subset \{1,\dots, T_0\}$ be an arbitrary index set with cardinality $|S|=s$ 
	and denote $S^c=\{1,\dots, T_0\} \setminus S$.  	
	Suppose that $\kappa_{\mathrm{RSC}}$ and $\kappa_{\mathrm{RSS}}$ are bounded away from zero and infinity, and Assumptions \ref{assum:glp}--\ref{assum:T0} hold.
	If 
	$T_1\gtrsim \{(r_1\wedge r_2) +s^2\} N+s^2\log T_0$,
	and  
	$\lambda \gtrsim \sqrt{\{(r_1\wedge r_2) N+\log T_0 \}/T_1}$,
	then  with probability at least $1-C e^{-(r_1 \wedge r_2)N-\log T_0}$, 
	\begin{align*}
		\|\cm{\widehat{A}} - \cm{A}^*\|_{\Fr}^2  &\lesssim  \lambda^2s+ \underbrace{ \lambda\|\cm{A}^{*}_{S^c}\|_{\ddagger}+\tau^2\|\cm{A}^{*}_{S^c}\|_{\ddagger}^2 }_{\text{approx error}} \hspace{2mm} \text{and}\hspace{2mm}\\
		T_1^{-1}\|(\cm{\widehat{A}}-\cm{A}^*)_{(1)}\bm{X}\|_{\Fr}^2 &\lesssim \lambda^2s+ \underbrace{ \lambda\|\cm{A}^{*}_{S^c}\|_{\ddagger}+\tau^2\|\cm{A}^{*}_{S^c}\|_{\ddagger}^2 }_{\text{approx error}},
	\end{align*}
	where $\tau^2=C\sqrt{(N+\log T_0)/T_1}$, and $C$ is an absolute constant given in the proof.
\end{theorem}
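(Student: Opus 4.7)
The plan is to follow the standard penalized M-estimator roadmap of \cite{negahban2012unified}, adapted to three complications specific to our setting: the low-Tucker-rank constraint on both $\cm{\widehat{A}}$ and $\cm{A}^*$, the group sparsity from the generalized $\ell_1$-ball $\|\cdot\|_\ddagger$, and the dependent, truncated regression with effective noise $\widetilde{\bm{\varepsilon}}_t = \bm{\varepsilon}_t + \bm{r}_t$. First I would write down the basic inequality coming from the optimality of $\cm{\widehat{A}}$ together with feasibility of $\cm{A}^*\in\bm\Theta(r_1,r_2)$. Setting $\cm{\Delta} = \cm{\widehat{A}}-\cm{A}^*$ and $\bm{E}=(\widetilde{\bm\varepsilon}_T,\ldots,\widetilde{\bm\varepsilon}_{T_0+1})$, this yields
\begin{equation*}
\frac{1}{2T_1}\|\cm{\Delta}_{(1)}\bm{X}\|_\Fr^2 \;\le\; \frac{1}{T_1}\langle \cm{\Delta}_{(1)}\bm{X},\bm{E}\rangle + \lambda\bigl(\|\cm{A}^*\|_\ddagger - \|\cm{\widehat{A}}\|_\ddagger\bigr).
\end{equation*}

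Next I would dualise the stochastic term in a way that exploits both the group structure and the low rank. Since each slice $\bm{A}_j^*$ has column space in $\mathbb{M}_1$ and row space in $\mathbb{M}_2$, each $\bm{\Delta}_j$ has rank at most $2(r_1\wedge r_2)$, so
\begin{equation*}
\frac{1}{T_1}\bigl|\langle\cm{\Delta}_{(1)}\bm{X},\bm{E}\rangle\bigr| \;=\; \Bigl|\frac{1}{T_1}\sum_{j=1}^{T_0}\langle\bm{\Delta}_j,\bm{E}\bm{Y}_{-j}^\prime\rangle\Bigr| \;\le\; \sqrt{2(r_1\wedge r_2)}\,\|\cm{\Delta}\|_\ddagger\cdot \max_{1\le j\le T_0}\frac{1}{T_1}\|\bm{E}\bm{Y}_{-j}^\prime\|_\op.
\end{equation*}
The calibration $\lambda\asymp \sqrt{\{(r_1\wedge r_2)N+\log T_0\}/T_1}$ is then exactly what is needed if the maximum above can be shown to be of order $\sqrt{(N+\log T_0)/T_1}$. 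I would prove this deviation bound by (i) splitting $\bm{E}$ into the pure-noise part $\bm{\varepsilon}$ and the truncation residual $\bm{r}$, (ii) applying a spectral concentration inequality for sub-Gaussian linear processes together with a covering argument on the $N$-sphere to handle $\|\bm{\varepsilon}\bm{Y}_{-j}^\prime\|_\op$, (iii) taking a union bound over $j\in\{1,\ldots,T_0\}$ to yield the $\log T_0$ factor, and (iv) using Assumption \ref{assum:T0} together with the exponential decay of Assumption \ref{assum:Adecay} to absorb the $\bm{r}_t$ contribution into the noise term.

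I would then establish a restricted strong convexity inequality of the form $T_1^{-1}\|\cm{\Delta}_{(1)}\bm{X}\|_\Fr^2\ge \kappa_{\mathrm{RSC}}\|\cm{\Delta}\|_\Fr^2 - \tau^2\|\cm{\Delta}\|_\ddagger^2$ with $\tau^2=C\sqrt{(N+\log T_0)/T_1}$, by decomposing $\bm{X}\bm{X}^\prime/T_1$ into its population version $\bm{\Gamma}$ (lower bounded by $\kappa_{\mathrm{RSC}}$ via $\mu_{\min}(\bm{\Psi}_*)$) plus a fluctuation, and showing that the fluctuation is controlled blockwise uniformly over the $T_0$ lags, which is again the source of the $\log T_0$ factor. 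Combining the basic inequality with a standard oracle manipulation on an arbitrary index set $S$, namely $\|\cm{A}^*\|_\ddagger - \|\cm{\widehat{A}}\|_\ddagger \le \|\cm{\Delta}_S\|_\ddagger - \|\cm{\Delta}_{S^c}\|_\ddagger + 2\|\cm{A}^*_{S^c}\|_\ddagger$, yields the approximate cone bound $\|\cm{\Delta}\|_\ddagger \lesssim \sqrt{s}\,\|\cm{\Delta}\|_\Fr + \|\cm{A}^*_{S^c}\|_\ddagger$. Plugging this into RSC and solving the resulting quadratic in $\|\cm{\Delta}\|_\Fr$ produces the stated $\lambda^2 s + \lambda\|\cm{A}^*_{S^c}\|_\ddagger + \tau^2\|\cm{A}^*_{S^c}\|_\ddagger^2$ bound; the prediction error follows by a parallel manipulation using RSC on the other side.

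\textbf{Main obstacle.} The hardest step is the RSC inequality for the dependent, truncated design. Unlike the sparsity case of \cite{basu2015regularized}, here the slack must be measured in $\|\cdot\|_\ddagger$ rather than in $\|\cdot\|_1$, so the usual one-norm deviation bound on $\bm{X}\bm{X}^\prime/T_1$ is not directly applicable. I would address this by proving a group-wise version: show that $\max_{1\le j,k\le T_0}T_1^{-1}\|\bm{Y}_{-j}\bm{Y}_{-k}^\prime - \mathbb{E}[\bm{Y}_{-j}\bm{Y}_{-k}^\prime]\|_\op$ is of order $\sqrt{(N+\log T_0)/T_1}$ using a spectral Hanson-Wright-type inequality for sub-Gaussian vector autoregressions combined with an $\varepsilon$-net over $\mathcal{O}^{N\times 1}$ and a union bound over the $T_0^2$ lag pairs. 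Care must be taken to ensure that the rank-$(r_1\wedge r_2)$ structure of $\cm{\Delta}$ enters the RSC slack correctly, so that the sample size requirement $T_1\gtrsim\{(r_1\wedge r_2)+s^2\}N+s^2\log T_0$ emerges from balancing $\tau^2 s$ against $\kappa_{\mathrm{RSC}}$, and likewise that the truncation contribution is absorbed by Assumption \ref{assum:T0}.
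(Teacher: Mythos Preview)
Your plan is correct and tracks the paper's proof almost step for step: basic inequality, dualise the cross term against $\|\cdot\|_\ddagger$, control the deviation on an event $\mathcal{E}_1$, establish the RE/RSC inequality via the blockwise bound $\max_{i,j}\|T_1^{-1}\bm{X}_i\bm{X}_j^\prime-\bm{\Gamma}(i-j)\|_\op\lesssim\tau^2$ (this is exactly how the paper proves RSC), derive the approximate cone $\|\cm{\Delta}_{S^c}\|_\ddagger\le 3\|\cm{\Delta}_S\|_\ddagger+4\|\cm{A}^*_{S^c}\|_\ddagger$, and solve the quadratic.

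The only substantive difference is in the deviation bound. Your route---bound $|\langle\bm{\Delta}_j,\bm{E}\bm{X}_j^\prime\rangle|\le\sqrt{2r_{\min}}\,\|\bm{\Delta}_j\|_\Fr\|\bm{E}\bm{X}_j^\prime\|_\op$ and control the operator norm via a sphere covering---yields $\sqrt{r_{\min}(N+\log T_0)/T_1}$, not $\sqrt{(r_{\min}N+\log T_0)/T_1}$; your claim that this is ``exactly what is needed'' is off by a factor $\sqrt{r_{\min}}$ on the $\log T_0$ piece. The paper instead shows the discretization
\[
\sup_{\bm{\Delta}\in\bm{\Theta}_\ddagger(2r_1,2r_2)}\langle\bm{\Delta}_{(1)},\cm{E}\bm{X}^\prime\rangle\;\le\;4\max_{1\le j\le T_0}\sup_{\bm{M}\in\bm{\Theta}_\Fr(2r_{\min})}\langle\bm{M},\cm{E}\bm{X}_j^\prime\rangle,
\]
and covers the set $\bm{\Theta}_\Fr(2r_{\min})$ of rank-$2r_{\min}$ unit-Frobenius matrices directly, with log-cardinality $O(r_{\min}N)$; the union bound over this net and the $T_0$ lags then gives the sharper rate $\sqrt{\{r_{\min}N+\log T_0\}/T_1}$ and the exact probability $1-Ce^{-(r_1\wedge r_2)N-\log T_0}$ in the statement. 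In the typical regime $\log T_0\ll N$ the two coincide, so this is a refinement rather than a gap.
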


Following the standard arguments for weak sparsity \citep{raskutti2011minimax,Wainwright19}, the two upper bounds in the above theorem hold for any subset $S$, and each of them consists of two terms: the estimation error, i.e. $\lambda^2 s$, is associated with estimating a total of $s$ unknown $\bm{A}^*_j$'s, and the remaining part of $\cm{A}^*$ that is not estimated, i.e. $\cm{A}^{*}_{S^c}$, gives rise to the approximation error.
The optimal $S$ can be chosen by trading off the estimation and approximation errors.
Moreover, by Assumptions \ref{assum:Adecay} and \ref{assum:T0}, it can be shown that the truncation errors, $e_{\trunc}$ and $\widetilde{e}_{\trunc}$, are dominated by the estimation error $\lambda^2s$. 
Hence,
by balancing the three types of errors and further considering the exponential decay of $\bm{A}_j^*$ as $j\rightarrow\infty$, we can establish the following convergence result.

\begin{theorem}[General linear process]\label{thm:main}
	Suppose that $\kappa_{\mathrm{RSC}}$ and $\kappa_{\mathrm{RSS}}$ are bounded away from zero and infinity, and Assumptions \ref{assum:glp}--\ref{assum:T0} hold.
	If
	\begin{equation}\label{eq:thm1}
		T_1\gtrsim \left [(r_1\wedge r_2) +\left \{\frac{\log T_1}{\log (1/\rho)}\right\}^2  \right] N+ \left \{\frac{\log T_1}{\log (1/\rho)}\right\}^2 \log T_0,
	\end{equation} 
	and $\lambda\asymp \sqrt{\{(r_1\wedge r_2) N+\log T_0 \}/T_1}$, then with probability at least $1-C e^{-(r_1 \wedge r_2)N-\log T_0}$, 
	\[
	e_{\est}(\cm{\widehat{A}}_\infty)  \lesssim  \frac{\{(r_1\wedge r_2) N+\log T_0\}\log T_1 }{T_1 \log(1/\rho)}
	\hspace{2mm} \text{and}\hspace{2mm}
	e_{\pred}(\cm{\widehat{A}}_\infty) \lesssim  \frac{\{(r_1\wedge r_2) N+\log T_0\}\log T_1 }{T_1 \log(1/\rho)},
	\]
	where $C$ is an absolute constant given in the proof.
\end{theorem}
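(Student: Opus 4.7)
The plan is to use Theorem 1 as a black box and optimize its bounds over the subset $S$, then bootstrap those bounds into $e_{\est}$ and $e_{\pred}$ by controlling the truncation errors $e_{\trunc}$ and $\widetilde{e}_{\trunc}$ separately.

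First I would exploit the exponential decay together with the low-rank constraint. Since each $\bm{A}_j^*$ lies in $\bm{\Theta}(r_1,r_2)$ after matricization, its rank is at most $r_1\wedge r_2$, so Assumption \ref{assum:Adecay} gives $\|\bm{A}_j^*\|_{\Fr}\leq\sqrt{r_1\wedge r_2}\,\|\bm{A}_j^*\|_{\op}\lesssim\sqrt{r_1\wedge r_2}\,\rho^j$. Choosing the nested set $S=\{1,\dots,s^*\}$ with $s^*\asymp \log(T_1)/\log(1/\rho)$, the geometric tail yields
\[
\|\cm{A}^{*}_{S^c}\|_{\ddagger}\;\lesssim\;\sqrt{r_1\wedge r_2}\,\rho^{s^*}/(1-\rho)\;\lesssim\;\sqrt{r_1\wedge r_2}/T_1,
\]
which, with $\lambda\asymp\sqrt{\{(r_1\wedge r_2)N+\log T_0\}/T_1}$ and $\tau^2\lesssim\sqrt{(N+\log T_0)/T_1}$, makes both approximation terms $\lambda\|\cm{A}^*_{S^c}\|_{\ddagger}$ and $\tau^2\|\cm{A}^*_{S^c}\|_{\ddagger}^2$ of strictly smaller order than the estimation term $\lambda^2 s^*$. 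The sample size condition \eqref{eq:thm1} is exactly what is needed so that $T_1\gtrsim\{(r_1\wedge r_2)+(s^*)^2\}N+(s^*)^2\log T_0$, the hypothesis of Theorem \ref{prop:main}. Plugging in gives
\[
\|\cm{\widehat{A}}-\cm{A}^*\|_{\Fr}^2 \;\lesssim\; \lambda^2 s^* \;\asymp\; \frac{\{(r_1\wedge r_2)N+\log T_0\}\log T_1}{T_1\log(1/\rho)},
\]
and similarly for the in-sample prediction term $T_1^{-1}\|(\cm{\widehat{A}}-\cm{A}^*)_{(1)}\bm{X}\|_{\Fr}^2$.

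Next I would dispatch the truncation errors. For $e_{\trunc}$, the geometric sum and Assumption \ref{assum:T0} give $e_{\trunc}\leq (r_1\wedge r_2)\sum_{j>T_0}\rho^{2j}\lesssim (r_1\wedge r_2)\rho^{2T_0}\lesssim (r_1\wedge r_2)/T_1^{4}$, which is negligible compared to the main rate. For $\widetilde{e}_{\trunc}$, Young's inequality $2\langle a,b\rangle\leq\tfrac12\|a\|_2^2+2\|b\|_2^2$ lets me absorb the inner-product cross term into $T_1^{-1}\sum_t\|\sum_{j\leq T_0}(\bm{A}_j^*-\bm{\widehat{A}}_j)\mathbf{y}_{t-j}\|_2^2$, which is already controlled by the prediction bound from Theorem \ref{prop:main}, leaving only a residual $T_1^{-1}\sum_t\|\bm{r}_t\|_2^2$ to handle. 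Using $\|\bm{r}_t\|_2\leq\sum_{j>T_0}\|\bm{A}_j^*\|_{\op}\|\mathbf{y}_{t-j}\|_2$ together with a standard high-probability uniform bound $\max_t\|\mathbf{y}_t\|_2^2\lesssim N\log T$ for sub-Gaussian stationary series (and a $\sigma_{\max}(\bm{\Sigma}_\varepsilon)\mu_{\max}(\bm{\Psi}_*)$ prefactor) gives $T_1^{-1}\sum_t\|\bm{r}_t\|_2^2\lesssim N\rho^{2T_0}\log T\lesssim N(\log T)/T_1^{4}$, again dominated.

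The main obstacle is the cross term inside $\widetilde{e}_{\trunc}$: naively Cauchy–Schwarz loses a factor, so the correct move is the weighted AM–GM split above, which creates a controlled feedback into the prediction error that one then re-absorbs on the left-hand side of the oracle inequality (replacing the constant implicit in $\lesssim$). A secondary technical point is verifying that the event of probability at least $1-Ce^{-(r_1\wedge r_2)N-\log T_0}$ from Theorem \ref{prop:main} can be intersected with the event on which $\max_t\|\mathbf{y}_t\|_2^2$ is controlled, without degrading the stated probability; this is a routine union bound since the latter event has exponentially small failure probability in $N$ as well. Combining the estimation bound on $\|\cm{\widehat{A}}-\cm{A}^*\|_{\Fr}^2$ with the truncation bounds then yields both conclusions of the theorem.
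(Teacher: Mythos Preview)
Your overall plan matches the paper's: invoke Theorem~\ref{prop:main} with a well-chosen $S$ of size $\asymp\log T_1/\log(1/\rho)$, verify the approximation terms are dominated by $\lambda^2 s$, and then add the truncation contributions. Your choice of the prefix set $S=\{1,\dots,s^*\}$ is essentially equivalent to the paper's threshold set $S_\gamma=\{j:\|\bm{A}_j^*\|_{\Fr}>\gamma\}$ with $\gamma\asymp\lambda$, since under exponential decay the two coincide up to constants; the paper balances $\lambda^2|S_\gamma|$ against $\lambda\|\cm{A}^*_{S_\gamma^c}\|_{\ddagger}\lesssim\lambda\gamma Q_\gamma$, whereas you simply push $\|\cm{A}^*_{S^c}\|_{\ddagger}$ down to $O(T_1^{-1})$ so it drops out. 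Both yield the same final rate.

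The one place where you genuinely diverge is the control of $\widetilde{e}_{\trunc}$. The paper does \emph{not} use AM--GM and a uniform bound on $\max_t\|\mathbf{y}_t\|_2$; instead it writes the cross term as $2\langle\bm{\widehat{\Delta}}_{(1)},T_1^{-1}\bm{R}\bm{X}'\rangle$, bounds it by $2\|\bm{\widehat{\Delta}}\|_{\Fr}\sup_{\bm{\Delta}\in\bm{\Theta}_\ddagger(2r_1,2r_2)}\langle\bm{\Delta}_{(1)},T_1^{-1}\bm{R}\bm{X}'\rangle$, and recycles the concentration lemma already proved for Theorem~\ref{prop:main} (giving a $C\sqrt{N}/T_1$ bound on the supremum); separately, a dedicated lemma gives $T_1^{-1}\|\bm{R}\|_{\Fr}^2\lesssim N^{3/2}\rho^{T_0}/T_1$ with probability $1-Ce^{-N}$. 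Your AM--GM route is more elementary and avoids these lemmas, but note that no ``feedback'' or re-absorption into the left-hand side is actually needed: the term $T_1^{-1}\|(\cm{\widehat{A}}-\cm{A}^*)_{(1)}\bm{X}\|_{\Fr}^2$ you absorb into is already bounded by Theorem~\ref{prop:main}, so you just pick up a constant factor. The paper's route has the advantage of reusing machinery built for Theorem~\ref{prop:main} and keeping the probability statement clean; your route requires the extra high-probability event $\max_t\|\mathbf{y}_t\|_2^2\lesssim N\log T$, whose failure probability you should check carefully against the target $Ce^{-(r_1\wedge r_2)N-\log T_0}$ (it works, but the union over $T$ time points needs $\log T\lesssim N$, which holds under the sample-size condition only up to logarithmic factors).
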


The optimal choice of $S$ has cardinality $s\lesssim \log(\sqrt{N}/\lambda)/\log(1/\rho)$, which, together with the rate of $\lambda$ in Theorem \ref{thm:main}, implies that $s\lesssim \log(T_1)/\log(1/\rho)$, i.e. the number of active lags decreases as $\rho$ decreases.
This is natural since a smaller $\rho$ corresponds to a faster decay rate of $\bm{A}_j^*$ and hence a greater level of group sparsity of $\cm{A}^*$; i.e. there is a smaller cutoff $Q$  such that all $\bm{A}_j^*$'s with $j\geq Q$ are very close to zero matrices.
Secondly, the term $\log T_0$ appears in the above theorems, and this is due to the Lasso regularization on $T_0$ groups of coefficients, $\bm{A}_1, \dots, \bm{A}_{T_0}$.
Thirdly, the upper bound on $T_0$, namely $\log T_0 \lesssim T_1/(\log T_1)^2$, is looser than $T_0 = o(T^{1/3})$, which is necessary for the low-dimensional VAR sieve estimation  \citep{lewis1985prediction}. It is mainly due to the group Lasso penalty, and this makes it possible to consider a larger $T_0$ in real applications; see Section \ref{sec:experiment1} for numerical evidences. 
Finally, the above two theorems can be easily adjusted when the two quantities, $\kappa_{\mathrm{RSC}}$ and $\kappa_{\mathrm{RSS}}$, depend on $N$, $T_0$ and $T_1$, while the assumption that they are bounded away from zero and infinity can simplify the discussions on the three types of errors, as well as the running order selection.


Note that Theorem \ref{thm:main} gives error bounds uniformly for all general linear processes, while some VAR models may have finite orders and the coefficient matrices at some lags are even exactly zero; see, e.g., the commonly used seasonal VAR models in real applications \citep{cryer1986time}.
For this case, truncation errors disappear, and we even do not need to handle approximation errors.
Specifically, consider an VAR$(T_0)$ model with its coefficient matrices satisfying the low-Tucker-rank assumption at \eqref{eq:trunc_tensor}, i.e. $\bm{r}_t=0$ and $\widetilde{\bm{\varepsilon}}_t=\bm{\varepsilon}_t$, and Assumption \ref{assum:T0} is no longer needed.
Moreover, the following stationarity condition, rather than Assumption \ref{assum:glp}, is required.

%
\begin{assumption}[Stationarity condition]\label{assum:stationarity}
	The determinant of $\bm{A}(z)$ is not equal to zero for all $z\in\mathbb{C}$ and $|z|<1$, where $\bm{A}(z) = \bm{I} - \sum_{j=1}^{T_0}\bm{A}_j z^j$ is the VAR matrix polynomial.
\end{assumption} 
\begin{corollary}[Finite-order VAR process]\label{corollary:statistical}
	For an VAR$(T_0)$ process, if Assumptions \ref{assum:Adecay}, \ref{assum:error} and \ref{assum:stationarity} are satisfied, then Theorem \ref{prop:main} still holds. 
\end{corollary}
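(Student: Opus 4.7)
The plan is to adapt the proof of Theorem \ref{prop:main} to the finite-order VAR setting by exploiting the crucial simplification that the truncation remainder $\bm{r}_t\equiv 0$ whenever the data-generating process is truly VAR$(T_0)$ with $T_0$ equal to the sieve order used in estimation. This immediately removes any need for the running-order lower bound in Assumption \ref{assum:T0}, whose sole purpose was to ensure $T_1\rho^{T_0/2}\leq C$ so that the truncation contribution could be dominated. Similarly, the $\ell_1$-summability half of Assumption \ref{assum:glp} is only needed to make Proposition \ref{prop:Aj} available and to identify the VAR$(\infty)$ representation, neither of which is required when the coefficients $\bm{A}_1^*,\ldots,\bm{A}_{T_0}^*$ are directly given by the model.

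First I would verify that the spectral quantities $\mu_{\min}(\bm{\Psi}_*)$ and $\mu_{\max}(\bm{\Psi}_*)$, which drive $\kappa_{\mathrm{RSC}}$ and $\kappa_{\mathrm{RSS}}$, remain well-defined under Assumption \ref{assum:stationarity}. Since $\det\bm{A}(z)\neq 0$ on the closed unit disk, the MA representation $\bm{\Psi}_*(z)=\bm{A}^{-1}(z)$ exists, is analytic on a neighbourhood of $|z|\leq 1$, and together with Assumption \ref{assum:Adecay} yields exponentially decaying $\bm{\Psi}_j^*$. Hence $\mu_{\min}(\bm{\Psi}_*)$ and $\mu_{\max}(\bm{\Psi}_*)$ are bounded away from zero and infinity, matching the conditions imposed in Theorem \ref{prop:main}.

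Next I would retrace the three pillars of the oracle inequality proof: (i) the basic inequality $\mathcal{L}(\widehat{\cm{A}})+\lambda\|\widehat{\cm{A}}\|_{\ddagger}\leq\mathcal{L}(\cm{A}^*)+\lambda\|\cm{A}^*\|_{\ddagger}$, which depends only on optimality of $\widehat{\cm{A}}$ and is unchanged; (ii) the deviation bound on the score $T_1^{-1}\sum_t\widetilde{\bm{\varepsilon}}_t\mathbf{x}_t^\prime$ that underpins the choice of $\lambda$; and (iii) the restricted strong convexity bound on $T_1^{-1}\bm{X}\bm{X}^\prime$. In the VAR$(\infty)$ setting step (ii) decomposes $\widetilde{\bm{\varepsilon}}_t=\bm{\varepsilon}_t+\bm{r}_t$ and requires a separate bound on the truncation piece using Assumption \ref{assum:T0}; here this piece vanishes, so only the sub-Gaussian concentration for $\bm{\varepsilon}_t$ under Assumption \ref{assum:error} is needed, and the threshold $\lambda\gtrsim\sqrt{\{(r_1\wedge r_2)N+\log T_0\}/T_1}$ is recovered. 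Step (iii), which relies on lower-bounding the smallest eigenvalue of the spectral density matrix via $\mu_{\min}(\bm{\Psi}_*)$, goes through identically by the first paragraph.

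The main obstacle is confirming that no other latent use of Assumption \ref{assum:glp} or Assumption \ref{assum:T0} appears in the auxiliary lemmas supporting Theorem \ref{prop:main}, especially in the concentration arguments for dependent quadratic forms and in the covering/union-bound over low-Tucker-rank tensors. Once those are checked to depend only on the spectral density properties (guaranteed by Assumptions \ref{assum:Adecay} and \ref{assum:stationarity}) and on the sub-Gaussianity of $\bm{\varepsilon}_t$, the finite-order statement follows verbatim with the same constants, sample-size condition, and penalty level.
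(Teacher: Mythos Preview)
Your proposal is correct and follows essentially the same route as the paper: the key observation is that for a true VAR$(T_0)$ process the remainder $\bm{r}_t$ (hence $\bm{R}$) vanishes, so $\widetilde{\cm{E}}=\cm{E}$ and the deviation event $\mathcal{E}_1$ is controlled by Lemma~\ref{lemma:devbd1} alone, with all remaining steps of the proof of Theorem~\ref{prop:main} carrying over unchanged. Your additional verification that Assumption~\ref{assum:stationarity} supplies the MA representation and the spectral bounds formerly drawn from Assumption~\ref{assum:glp} is a valid (and slightly more explicit) justification of what the paper leaves implicit.
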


For a group-sparse VAR process, i.e. $\cm{A}^{*}_{S^c}=0$, the approximation error becomes zero, and the error bounds then have a rate of $s\{(r_1\wedge r_2) N+\log T_0 \}/T_1$, which is sharper than that in Theorem \ref{thm:main} when $s\leq T_0$ is fixed or has a much slower rate than $\log(T_1)$.


\section{Algorithm for high-dimensional estimation}
\subsection{Alternating gradient descent algorithm with hard-thresholding}

The rank-constrained group Lasso estimation in Section 2 involves both the low-rank constraint and sparsity, which makes the parameter estimation difficult.
This section introduces an alternating algorithm to search for the estimate with both low-rankness and group sparsity, and it consists of two key steps: gradient descent and hard-thresholding. 
Note that the resulting estimator is different from $\cm{\widehat{A}}$ in Section \ref{sec:lass}, and hence this section also discusses its theoretical properties, including both statistical and convergence analysis.

The first key step of the proposed algorithm is gradient descent, and the challenging part is to deal with the low-rank constraint.
Note that, for any $\cm{A}\in \bm{\Theta}(r_1, r_2)$, there exists a Tucker decomposition $\cm{A}=\cm{G}\times_1 \bm{U}_1\times_2 \bm{U}_2$ with $\cm{G}\in \mathbb{R}^{r_1 \times r_2\times T_0}$, $\bm{U}_1\in\mathbb{R}^{N\times r_1}$ and $\bm{U}_2\in\mathbb{R}^{N\times r_2}$.
As a result, the loss function in Section 2.2 can be rewritten as $\mathcal{L}(\cm{G},\bm{U}_1,\bm{U}_2):=\mathcal{L}(\cm{G}\times_1 \bm{U}_1\times_2 \bm{U}_2)$, and we further adjust it by adding two regularization terms,
\begin{equation*}\label{eq:hard-estimation}
	\mathcal{L}^{\mathrm{GD}}(\cm{G},\bm{U}_1,\bm{U}_2)= \mathcal{L}(\cm{G},\bm{U}_1,\bm{U}_2) + \frac{a}{2}(\|\bm{U}_1^\prime \bm{U}_1 - b^2\bm{I}_{r_1}\|_{\Fr}^2 +\|\bm{U}_2^\prime \bm{U}_2 - b^2\bm{I}_{r_2}\|_{\Fr}^2 ),
\end{equation*}
where $a, b >0$ are two tuning parameters.
The above method is motivated by \cite{HWZ21} for low-rank tensor estimation, and the regularization terms, $\|\bm{U}_1^\prime \bm{U}_1 - b^2\bm{I}_{r_1}\|_{\Fr}^2$ and $\|\bm{U}_2^\prime \bm{U}_2 - b^2\bm{I}_{r_2}\|_{\Fr}^2$, are used to keep $\bm{U}_1$ and $\bm{U}_2$ from being singular, and meanwhile they can also balance the scaling of $\cm{G}$, $\bm{U}_1$ and $\bm{U}_2$.

Denote by $\cm{\widetilde{G}}$, $\widetilde{\bm{U}}_1$ and $\widetilde{\bm{U}}_2$ the minimizers of $\mathcal{L}^{\mathrm{GD}}(\cm{G},\bm{U}_1,\bm{U}_2)$, and it then holds that $\bm{\widetilde{U}}_i^\prime\bm{\widetilde{U}}_i = b^2\bm{I}_{r_i}$, i.e. $b^{-1}\bm{\widetilde{U}}_i$ are orthonormal, for $i = 1$ and 2.
In fact, if this is not true, then there exist invertible matrices $\bm{O}_i\in\mathbb{R}^{r_i\times r_i}$ such that $\bm{\widetilde{U}}_i = \bm{\bar{U}}_i\bm{O}_i$ and $\bm{\bar{U}}_i^\prime\bm{\bar{U}}_i = b^2\bm{I}_{r_i}$, where $1
\leq i\leq 2$. Let $\cm{\widebar{A}} = (\cm{\widetilde{G}}\times_1\bm{O}_1\times_2\bm{O}_2) \times_1 \bm{\widebar{U}}_1\times_2\bm{\widebar{U}}_2$. Then $\mathcal{L}(\cm{\widebar{A}}) = \mathcal{L}(\cm{\widetilde{A}})$ while the regularization terms for $\bm{\widebar{U}}_1$ and $\bm{\widebar{U}}_2$ are reduced to zero. This leads to a contradiction with the definition of minimizers.
Moreover, the proposed algorithm is not sensitive to the choice of regularization parameters, $a$ and $b$, in $\mathcal{L}^{\mathrm{GD}}(\cm{G},\bm{U}_1,\bm{U}_2)$, and they are set to one in all our simulation and empirical studies.
Similar regularization methods have been widely applied to non-convex low-rank matrix estimation problems; see \cite{tu2016low}, \cite{wang2017unified} and references therein.

We now consider the gradient descent. First, the partial derivatives can be calculated as
\[
\nabla_{\mathcal{G}}\mathcal{L}^{\mathrm{GD}}=\nabla_{\mathcal{G}}\mathcal{L}(\cm{A})
\hspace{3mm}\text{and}\hspace{3mm}
\nabla_{{U}_i}\mathcal{L}^{\mathrm{GD}}= \nabla_{{U}_i}\mathcal{L}(\cm{A}) + a \bm{U}_i(\bm{U}_i^{\prime} \bm{U}_i - b^2\bm{I}_{r_i}) \hspace{3mm}\text{with $i=1$ and 2},
\]
where $\nabla\mathcal{L}(\cm{A})$, $\nabla_{{U}_1}\mathcal{L}(\cm{A}), \nabla_{{U}_2}\mathcal{L}(\cm{A})$ and $\nabla_{\mathcal{G}}\mathcal{L}(\cm{A})$ are the first order partial derivatives of $\mathcal{L}(\cm{A})$ with respect to $\cm{A}$, $\bm{U}_1, \bm{U}_2$ and $\cm{G}$, respectively; see Algorithm \ref{alg:AGD-HT} for details.

\begin{algorithm}[t] 
	\caption{Alternating gradient descent algorithm with hard-thresholding}
	\label{alg:AGD-HT}
	\textbf{Input:} Running order $T_0$, ranks $(r_1,r_2)$, sparsity parameter $s$, initialization $\cm{G}^0, \bm{U}_1^{0},  \bm{U}_2^{0}$, regularization parameters $a, b > 0$ and step size $\eta>0$. \\
	\textbf{For} $k=0,1,2,\dots,K-1$:\\\vspace{2mm}
	\hspace*{5mm}$ \bm{U}_1^{k+1} \leftarrow \bm{U}_1^{k} - {\eta}\left[ \nabla_{{U}_1}\mathcal{L}(\cm{A}^k) + a \bm{U}_1^k(\bm{U}_1^{k\prime} \bm{U}_1^k - b^2\bm{I}_{r_1}) \right]	$\\\vspace{2mm}
	\hspace*{5mm}$ \bm{U}_2^{k+1} \leftarrow \bm{U}_2^{k} - {\eta}\left[ \nabla_{{U}_2}\mathcal{L}(\cm{A}^k) + a \bm{U}_2^k(\bm{U}_2^{k\prime} \bm{U}_2^k - b^2\bm{I}_{r_2}) \right]	$\\\vspace{2mm}
	\hspace*{5mm}$ \cm{\widetilde{G}}^{k+1} \leftarrow \cm{G}^{k} - {\eta}\nabla_{\mathcal{G}}\mathcal{L}(\cm{A}^k)	$\\\vspace{2mm}
	\hspace*{5mm}$\cm{\widetilde{A}}^{k+1} = \cm{\widetilde{G}}^{k+1}\times_1\bm{U}_1^{k+1}\times_2\bm{U}_2^{k+1}$\\\vspace{2mm}
	\hspace*{5mm}$\cm{A}^{k+1} = \cm{G}^{k+1}\times_1\bm{U}_1^{k+1}\times_2\bm{U}_2^{k+1}\leftarrow \hardt{\cm{\widetilde{A}}^{k+1},s}$\\\vspace{2mm}
	\textbf{end for}\\\vspace{2mm}
	\textbf{return} \hspace*{2mm}$\cm{A}^{K} = \cm{G}^{K}\times_1\bm{U}_1^{K}\times_2\bm{U}_2^{K}$
\end{algorithm}

The second key step of Algorithm \ref{alg:AGD-HT} is hard-thresholding, and it aims at the group sparsity. We consider a hard-thresholding method in this paper although the Lasso is a soft-thresholding problem \citep{Wainwright19}, and this is due to the three reasons below.
First, the VAR sieve approximation method heavily depends on the running order $T_0$, and we may need to try different values of $T_0$ for a better performance in real applications.
In the meanwhile, as $T_0$ increases, the model complexity will increase, while the number of effective samples $T_1 = T - T_0$ will decrease. 
As a result, the numerically selected tunning parameter $\lambda$ with different $T_0$ may differ from each other dramatically, while the hard-thresholding is less sensitive to such type of changes; see simulation experiments in Section 4.2 for more evidences. 
Secondly, although the Lasso problem is usually convex, the loss function at \eqref{eq:minimizer} is still nonconvex due to the low-rankness. It is hence not necessary to insist on the Lasso method to induce the group sparsity of $\bm{A}_j$'s.
Finally, the hard-thresholding method is always faster than many common convex algorithms in the literature, including Lasso-type ones, by orders of magnitude, and its convergence analysis is also relatively easier to establish; see \cite{tropp2010computational,shen2017tight}. 

We next define the hard-thresholding operation. Consider a coefficient tensor $\cm{B}\in\mathbb{R}^{d_1\times d_2\times T_0}$ with $\cm{B}_{(1)}=(\bm{B}_1,\bm{B}_2,\ldots,\bm{B}_{T_0})$, where for each $1\leq j\leq T_0$ the coefficient matrix $\bm{B}_j\in\mathbb{R}^{d_1\times d_2}$ is the $j$-th frontal slice. Denote by $\hardt{\cm{B}, s}\in\mathbb{R}^{d_1\times d_2\times T_0}$ the hard-thresholding operator, which keeps the $s$ largest coefficient matrices in terms of $\|\bm{B}_j\|_{\Fr}$'s and suppresses the rest to zero.
Define a parameter space with both low-rankness and sparsity below,
\begin{equation*}\label{eq:Theta_s}
	\bm{\Theta}^{\mathrm{SP}}(r_1,r_2,s)= \{\cm{A}\in\mathbb{R}^{N\times N\times T_0}\mid \|\cm{A}\|_0 \leq s,\hspace{2mm} \rank(\cm{A}_{(1)}) \leq r_1, \hspace{2mm}\rank(\cm{A}_{(2)}) \leq r_2\},
\end{equation*}
where $\|\cm{A}\|_0=\sum_{j=1}^{T_0}I(\|\bm{A}_j\|_{\Fr}>0)$ is the number of active coefficient matrices.
Note that, for any $1\leq s\leq T_0$, $\bm{\Theta}^{\mathrm{SP}}(r_1,r_2,s)\subseteq \bm{\Theta}(r_1,r_2)$, and the hard-thresholding operator $\hardt{\cm{A}, s}$ is a projection from parameter spaces $\bm{\Theta}(r_1,r_2)$ into $\bm{\Theta}^{\mathrm{SP}}(r_1,r_2,s)$.
For a tensor $\cm{A} =\cm{G}\times_1\bm{U}_1\times_2\bm{U}_2$, it can be verified that $\hardt{\cm{A}, s}=\cm{\widetilde{G}}\times_1\bm{U}_1\times_2\bm{U}_2$, and $\cm{\widetilde{G}}=\hardt{\cm{G}, s}$ if $\bm{U}_1$ and $\bm{U}_2$ are assumed to have orthonormal columns.
The hard-thresholding operator $\hardt{\cm{A}, s}$ is used in Algorithm \ref{alg:AGD-HT} to project a non-group-sparse coefficient tensor into a group-sparse one.

In the meanwhile, we may also consider a soft-thresholding method since it is more relevant to Lasso problems \citep{hastie2015statistical}. Specifically, for a tuning parameter $\lambda>0$, define the soft-thresholding operator as $\cm{\widetilde{B}} = \text{ST}(\cm{B},\lambda)\in\mathbb{R}^{d_1\times d_2\times T_0}$ with  $\bm{\widetilde{B}}_j = (1-\lambda/\|\bm{B}_j\|_{\Fr})_{+}\bm{B}_j$ for $1\leq j\leq T_0$ and $\cm{\widetilde{B}}_{(1)}=(\bm{\widetilde{B}}_1,\bm{\widetilde{B}}_2,\ldots,\bm{\widetilde{B}}_{T_0})$, where $\bm{B}_j$ and $\bm{\widetilde{B}}_j$ are the $j$-th frontal slices of $\cm{B}$ and $\cm{\widetilde{B}}$, respectively, and the function $(x)_{+}=x$ for $x>0$ and zero otherwise. However, as shown by simulations in Section 4.2, the numerically selected $\lambda$ may change dramatically for slightly different running orders $T_0$ and sample sizes $T$, making the soft-thresholding method less attractive.

The proposed alternating gradient descent algorithm is detailed in Algorithm \ref{alg:AGD-HT}.
Specifically, at the $k$-th step, the three components, $\bm{U}_1^{k+1}, \bm{U}_2^{k+1}$ and $\cm{\widetilde{G}}^{k+1}$, are first updated by gradient descent separately, while the resulting estimator $\cm{\widetilde{A}}^{k+1} = \cm{\widetilde{G}}^{k+1}\times_1\bm{U}_1^{k+1}\times_2\bm{U}_2^{k+1}$ is a non-group-sparse coefficient tensor. We then project it into the group-sparse parameter space $\bm{\Theta}^{\mathrm{SP}}(r_1,r_2,s)$ by $\hardt{\cm{\widetilde{A}}^{k+1},s}=\cm{G}^{k+1}\times_1\bm{U}_1^{k+1}\times_2\bm{U}_2^{k+1}$, denoted by $\cm{A}^{k+1}$.
The above two steps are repeated $K$ times, after which we can obtain the estimate $\cm{A}^{K}$.
In Algorithm \ref{alg:AGD-HT}, we may conduct the hard-thresholding on $\cm{\widetilde{G}}^{k+1}$ directly, and a similar performance can be observed. However, it is more convenient to establish the corresponding convergence analysis for the hard-thresholding on $\cm{\widetilde{A}}^{k+1}$.

We next consider initialization of the algorithm.
First, the running order $T_0$ plays a key role, and we need to try many different values in real applications with the help of some statistical tools such as the sample autocorrelation function \citep{cryer1986time,Tsay14}.
Secondly, since the true model is most likely an VAR$(\infty)$ process, the high-dimensional Akaike information criterion (AIC) is more suitable to select the low ranks $r_1$ and $r_2$ and sparsity $s$,
\begin{equation*}\label{eq:AIC}
	\text{AIC}(r_1,r_2,s) = \log\left\{(2T_1)^{-1}\|\bm{Y} - \cm{\widetilde{A}}_{(1)}\bm{X}\|_{\Fr}^2\right\} + c {[(r_1+ r_2)N+\log T_0]s}/{T_1},
\end{equation*}
where $c>0$ is a tunning parameter; see \cite{bai2018strong}.
Thirdly, the $r_1\times r_2\times T_0$ tensor $\cm{G}^0$ can be set to zero, while $\bm{U}_1^0$ and $\bm{U}_2^0$ are initialized by some orthonormal matrices of sizes $N \times r_1$ and $N \times r_2$, respectively. Moreover, to provide a warm-start initialization in practice, we may skip the hard-thresholding operation for the first few iterations.

\subsection{Theoretical properties}


Consider the true coefficient tensor $\cm{A}^*$ with $\cm{A}_{(1)}^*=(\bm{A}_1^*,\bm{A}_2^*,\ldots,\bm{A}_{T_0}^*)\in\mathbb{R}^{N\times NT_0}$ in Section 2 and the parameter space $\bm{\Theta}^\mathrm{SP}(r_1,r_2,s)$ with low-rankness and group-sparsity in Section 3.1.
For a given threshold $\gamma>0$, let the active set $S_\gamma = \left \{j\in\{1, \dots, T_0\}\mid\| \bm{A}_j^*\|_\text{F} > \gamma \right \}$ with cardinality $s_{\gamma}=|S_{\gamma}|$, and $S_\gamma^c=\{1,\dots,T_0\}\setminus S_\gamma$. 
We then define a random quantity
\begin{equation*}
	e_{\gamma}(r_1,r_2,s) := \sup_{\cmt{M}\in\bm{\Theta}^{\mathrm{SP}}(r_1,r_2,s),\; \|\cmt{M}\|_{\Fr}=1}\langle \nabla\mathcal{L}(\cm{A}_{S_{\gamma}}^*) ,\cm{M}  \rangle.
\end{equation*}
This quantity is directly related to statistical errors of the proposed algorithm, and we first analyze it statistically by providing error bounds as in Section 2.3.

\begin{theorem}[Statistical analysis] \label{thm:stat}
	Suppose that $\kappa_{\mathrm{RSC}}$ and $\kappa_{\mathrm{RSS}}$ are bounded away from zero and infinity, and Assumptions \ref{assum:glp}--\ref{assum:T0} hold.
	For any $\gamma \gtrsim \sqrt{\{(r_1\wedge r_2)N+\log T_0\}/{T_1}}$, if 
	$T_1\gtrsim \{(r_1\wedge r_2) +s^2\} N+s^2\log T_0$,
	then  with probability at least $1-C e^{-(r_1\wedge r_2)N-\log T_0}$,
	\begin{align*}
		e_{\gamma}^2(r_1,r_2,s) \lesssim \gamma^2 s + \|\cm{A}^*_{S_\gamma^c}\|_{\Fr}^2 + \tau^2\|\cm{A}^*_{S_\gamma^c}\|_{\ddagger}^2,
	\end{align*}
	where $\tau^2=C\sqrt{(N+\log T_0)/T_1}$, and $C$ is an absolute constant given in the proof.
\end{theorem}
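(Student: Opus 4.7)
The plan is to decompose the gradient $\nabla\mathcal{L}(\cm{A}_{S_\gamma}^*)$ into a stochastic noise piece, a sieve-truncation piece, and an approximation piece coming from the indices in $S_\gamma^c$, and then bound the supremum of its inner product with $\cm{M}\in\bm{\Theta}^{\mathrm{SP}}(r_1,r_2,s)\cap\{\|\cm{M}\|_{\Fr}=1\}$ term by term. Writing $\mathbf{y}_t=\sum_{j=1}^{T_0}\bm{A}_j^*\mathbf{y}_{t-j}+\bm{\varepsilon}_t+\bm{r}_t$ and plugging in, one gets $\nabla\mathcal{L}(\cm{A}_{S_\gamma}^*)_{(1)}=-T_1^{-1}\{\bm{E}+\bm{R}+(\cm{A}_{S_\gamma^c}^*)_{(1)}\bm{X}\}\bm{X}'$, where $\bm{E}$ and $\bm{R}$ collect the $\bm{\varepsilon}_t$'s and $\bm{r}_t$'s, so $e_\gamma\le I_1+I_2+I_3$ with the obvious split.

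For the stochastic term $I_1$, the strategy is to parametrize $\cm{M}=\cm{G}\times_1\bm{U}_1\times_2\bm{U}_2$ with $\bm{U}_i\in\mathcal{O}^{N\times r_i}$ and $\cm{G}$ having at most $s$ nonzero frontal slices, then carry out a covering argument: $\varepsilon$-nets on the two Stiefel manifolds have log-covering numbers of order $r_iN$, there are $\binom{T_0}{s}\le(eT_0/s)^s$ choices for the sparsity pattern, and, conditional on $\bm{X}$, $\langle\bm{E}\bm{X}'/T_1,\cm{M}_{(1)}\rangle$ is sub-Gaussian by Assumption \ref{assum:error}. A union bound, combined with the same double-factorization trick used in the proof of Theorem \ref{prop:main} to extract $r_1\wedge r_2$ rather than $r_1+r_2$, will give $I_1\lesssim\sqrt{\kappa_{\mathrm{RSS}}}\sqrt{\{s(r_1\wedge r_2)N+s\log T_0\}/T_1}$ on the event of the claimed probability. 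The lower bound $\gamma\gtrsim\sqrt{\{(r_1\wedge r_2)N+\log T_0\}/T_1}$ then gives $I_1^2\lesssim\gamma^2 s$.

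For the approximation term $I_3$, apply Cauchy--Schwarz in the form $|\langle(\cm{A}_{S_\gamma^c}^*)_{(1)}\bm{X},\cm{M}_{(1)}\bm{X}\rangle|/T_1\le T_1^{-1/2}\|(\cm{A}_{S_\gamma^c}^*)_{(1)}\bm{X}\|_{\Fr}\cdot T_1^{-1/2}\|\cm{M}_{(1)}\bm{X}\|_{\Fr}$. The restricted strong smoothness bound (driven by $\kappa_{\mathrm{RSS}}$) controls the second factor by $\|\cm{M}\|_{\Fr}=1$, so it suffices to bound $T_1^{-1}\|(\cm{A}_{S_\gamma^c}^*)_{(1)}\bm{X}\|_{\Fr}^2$. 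Writing $T_1^{-1}\bm{X}\bm{X}'=\bm{\Gamma}+\bm{\Delta}$ with $\bm{\Gamma}=E(\mathbf{x}_t\mathbf{x}_t')$, the mean part contributes $\lesssim\|\bm{\Gamma}\|_{\op}\|\cm{A}_{S_\gamma^c}^*\|_{\Fr}^2\lesssim\kappa_{\mathrm{RSS}}\|\cm{A}_{S_\gamma^c}^*\|_{\Fr}^2$, while for the fluctuation one uses the duality between $\|\cdot\|_{\ddagger}$ and block-wise max operator norm: $|\langle\bm{\Delta},(\cm{A}_{S_\gamma^c}^*)_{(1)\prime}(\cm{A}_{S_\gamma^c}^*)_{(1)}\rangle|\le\max_{j,k}\|\bm{\Delta}_{(j,k)}\|_{\op}\|\cm{A}_{S_\gamma^c}^*\|_{\ddagger}^2$. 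A maximal concentration inequality over the $T_0^2$ blocks of $\bm{\Delta}$, which is already used in the $\widetilde{e}_{\trunc}$ control for Theorem \ref{prop:main}, gives $\max_{j,k}\|\bm{\Delta}_{(j,k)}\|_{\op}\lesssim\tau^2=C\sqrt{(N+\log T_0)/T_1}$ on an event of the required probability, delivering the $\tau^2\|\cm{A}_{S_\gamma^c}^*\|_{\ddagger}^2$ contribution.

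The truncation term $I_2$ is handled deterministically: Assumption \ref{assum:Adecay} gives $\|\bm{r}_t\|_2\lesssim\rho^{T_0}$, and Assumption \ref{assum:T0} forces $T_1^{-1}\sum_t\|\bm{r}_t\|_2^2=O(\rho^{2T_0})$, which is of smaller order than $\gamma^2 s$ and is absorbed. Combining the three bounds via $(I_1+I_2+I_3)^2\le 3(I_1^2+I_2^2+I_3^2)$ yields the claimed inequality. The main obstacle will be the sharp metric-entropy bookkeeping for the bilinearly parametrized, group-sparse, low-Tucker-rank set that appears in $I_1$: the low-rank factorization interacts nontrivially with the $s$-sparse support selection, and getting the $(r_1\wedge r_2)$ rate rather than $(r_1+r_2)$ requires applying the factorization argument along the minimizing mode, exactly as in Theorem \ref{prop:main}; everything else is assembly of tools already present in the earlier proofs.
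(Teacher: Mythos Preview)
Your decomposition and the treatment of $I_3$ match the paper's proof closely: the paper also writes $\langle\nabla\mathcal{L}(\cm{A}^*_{S_\gamma}),\cm{M}\rangle=-T_1^{-1}\langle\widetilde{\cm{E}}\bm{X}',\cm{M}_{(1)}\rangle-T_1^{-1}\langle(\cm{A}^*_{S_\gamma^c})_{(1)}\bm{X}\bm{X}',\cm{M}_{(1)}\rangle$ and, for the approximation piece, splits $\bm{X}\bm{X}'/T_1$ into a population part ($\lesssim\kappa_{\mathrm{RSS}}\|\cm{A}^*_{S_\gamma^c}\|_{\Fr}$) and a fluctuation controlled by Lemma~\ref{lemma:RE1}'s blockwise bound $\max_{i,j}\|\bm{X}_i\bm{X}_j'/T_1-\bm{\Gamma}(i-j)\|_{\op}\leq\tau^2$, exactly as you propose. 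Two points deserve comment.

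For the stochastic term the paper does \emph{not} run a fresh covering over $\bm{\Theta}^{\mathrm{SP}}(r_1,r_2,s)\cap\{\|\cdot\|_{\Fr}=1\}$. It simply uses $\|\cm{M}\|_{\ddagger}\leq\sqrt{s}\,\|\cm{M}\|_{\Fr}=\sqrt{s}$ to write
\[
\langle T_1^{-1}\widetilde{\cm{E}}\bm{X}',\cm{M}_{(1)}\rangle\;\leq\;\sqrt{s}\cdot\sup_{\bm{\Delta}\in\bm{\Theta}_\ddagger(r_1,r_2)}\langle T_1^{-1}\widetilde{\cm{E}}\bm{X}',\bm{\Delta}_{(1)}\rangle,
\]
and the right-hand supremum is already $\lesssim\gamma$ by Lemmas~\ref{lemma:devbd1}--\ref{lemma:devbd2}. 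This bypasses the $\binom{T_0}{s}$ union bound and the Tucker parametrization altogether; the $(r_1\wedge r_2)$ rate comes for free from Lemma~\ref{lemma:discretize}, which reduces the $\ddagger$-unit-ball supremum to a max over $1\leq j\leq T_0$ of suprema over rank-$r_{\min}$ $N\times N$ matrices. Your direct covering would also work, but the ``double-factorization trick'' you invoke is precisely this $\ddagger$-duality step, so the cleanest execution is to use it from the outset rather than after building a net on the Stiefel factors and the support.

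Your handling of $I_2$ has a gap: $\bm{r}_t=\sum_{j>T_0}\bm{A}_j^*\mathbf{y}_{t-j}$ is random because $\mathbf{y}_{t-j}$ is, so the claim $\|\bm{r}_t\|_2\lesssim\rho^{T_0}$ does not follow deterministically from Assumption~\ref{assum:Adecay}. The paper folds $\bm{R}$ into $\widetilde{\cm{E}}=\cm{E}+\bm{R}$ and controls $\sup_{\bm{\Delta}\in\bm{\Theta}_\ddagger}\langle T_1^{-1}\bm{R}\bm{X}',\bm{\Delta}_{(1)}\rangle$ by the concentration argument of Lemma~\ref{lemma:devbd2} (which expands $\bm{r}_t$ and $\mathbf{y}_{t-k}$ as infinite moving averages in the sub-Gaussian innovations and applies Hanson--Wright type bounds), showing it is $O(\sqrt{N}/T_1)$ with the required probability. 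Your conclusion that this term is negligible is correct, but the argument must be probabilistic.
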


The upper bound in the above theorem has a form similar to that in Theorem \ref{prop:main}, and it increases as $\gamma$ increases. As a result, we can obtain the statistical error bound below,
\begin{equation}\label{eq:statistic-error}
	T_1^{-1}[(r_1\wedge r_2)N+\log T_0] [s+\log(T_1)/\log(1/\rho)]
\end{equation}
by choosing $\gamma \asymp\sqrt{\{(r_1\wedge r_2)N+\log T_0\}/{T_1}} $, and it can also be verified that $s_{\gamma} \lesssim \log T_1/\log(1/\rho)$.

We next conduct convergence analysis. To this end, for input $\cm{A}^{k}$ at the $k$-th iteration, denote its active set by $S_{k} = \{j\in\{1,\dots,T_0\} \mid \|\bm{A}_j^{k}\|_{\Fr} \neq 0\}$, and let $\nu_k = |S_k \cup S_{k+1}|/s-1$.
Note that it corresponds to the case with $S_k= S_{k+1}$ if $\nu_k=0$ and that with $S_k\cap S_{k+1}=\emptyset$ if  $\nu_k=1$, i.e. $\nu_k \in [0, 1]$ can be used to measure the size of overlap between $S_k$ and $S_{k+1}$.
Let 
$\nu_{\mathrm{min}} = \min_{0\leq k\leq K-1}\nu_k$, and $\nu_{\mathrm{max}} = \max_{0\leq k\leq K-1}\nu_k$.
Moreover, denote
$\sigma_L = \min  \{ \sigma_{\min}[(\cm{A}_{S_{\gamma}}^*)_{(1)}], \sigma_{\min}[(\cm{A}_{S_{\gamma}}^*)_{(2)}]\}$,
$\sigma_U = \max  \{ \sigma_{\max}[(\cm{A}_{S_{\gamma}}^*)_{(1)}], \sigma_{\max}[(\cm{A}_{S_{\gamma}}^*)_{(2)}]\}$,
and $\kappa=\sigma_U/\sigma_L$, where their dependence on $\gamma$ is suppressed without confusion.

\begin{theorem}[Convergence analysis] \label{thm:optimization}
	Consider Algorithm \ref{alg:AGD-HT} with step size $\eta = \eta_0(\kappa_{\mathrm{RSC}} + \kappa_{\mathrm{RSS}})^{-1}[(1+\sigma_U)(1+\sigma_U^{1/2})]^{-1}$, where $\eta_0\leq \min\{150^{-1},204\sigma_U^{-1}\}$ is a positive constant, and denote by $e_{\mathrm{stat}}^2=e_{\gamma}^2(r_1,r_2,3s)$ the statistical error.
	For a given $\gamma$, suppose that $b \asymp \sigma_U^{1/4}$, $ a \asymp (\kappa_{\mathrm{RSC}}^{-1} + \kappa_{\mathrm{RSS}}^{-1})^{-1}(\sigma_U^{1/2}+\sigma_U)$, $\|\cm{A}^0 - \cm{A}^*_{S_{\gamma}}\|^2_{\Fr} \lesssim \sigma_L^{5/2}\kappa^{-3/2}$, 
	$\nu_{\mathrm{max}}\lesssim \eta_0\kappa_{\mathrm{RSC}}^{2}\kappa_{\mathrm{RSS}}^{-2}\kappa^{-4}$, $s \geq \nu_{\mathrm{min}}^{-1}s_{\gamma}$, and
	$e_{\mathrm{stat}}^2\lesssim \eta_0^2\kappa_{\mathrm{RSC}}^{4}\kappa_{\mathrm{RSS}}^{-4}\kappa^{-8}$.	
	If Assumptions \ref{assum:glp}--\ref{assum:T0} hold, and $T_1 \gtrsim s^2(N+\log T_0)$, 
	\begin{align}\label{eq:iteration}
		\|\cm{A}^{K} - \cm{A}^*_{S_\gamma}\|_{\Fr}^2 \lesssim \kappa^{3/2}\sigma_L^{-1/2} \left(1 - \eta_0^2 \delta^2 \right)^K\|\cm{A}^0 - \cm{A}^*_{S_\gamma}\|_{\Fr}^2 +  \kappa^{7/2}\sigma_L^{-1/2}\kappa_{\mathrm{RSC}}^{-2} \eta_0^{-2} \delta^{-2} e_{\mathrm{stat}}^2
	\end{align}
	holds, with probability at least $1-C e^{-N-\log T_0}$,
	where $\delta=1088^{-1}\kappa_{\mathrm{RSC}}\kappa_{\mathrm{RSS}}^{-1}\kappa^{-2}$, $\eta_0\delta<1$, and $C$ is an absolute constant given in the proof.
\end{theorem}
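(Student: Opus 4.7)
My plan is to establish linear contraction of the iterates in a Tucker-factor distance, and then convert back to Frobenius distance on $\cm{A}$. Since the Tucker factorization is identifiable only up to rotational ambiguity, fix a reference factorization $(\cm{G}^*,\bm{U}_1^*,\bm{U}_2^*)$ of $\cm{A}^*_{S_\gamma}$ with $\bm{U}_i^{*\prime}\bm{U}_i^* = b^2\bm{I}_{r_i}$ and introduce
\[
d_k^2 := \inf_{\bm{O}_1,\bm{O}_2}\Big\{\|\bm{U}_1^k\bm{O}_1^{-1} - \bm{U}_1^*\|_{\Fr}^2 + \|\bm{U}_2^k\bm{O}_2^{-1} - \bm{U}_2^*\|_{\Fr}^2 + \|\cm{G}^k\times_1\bm{O}_1\times_2\bm{O}_2 - \cm{G}^*\|_{\Fr}^2\Big\}.
\]
A standard calculation under the balance condition $\bm{U}_i^{*\prime}\bm{U}_i^*=b^2\bm{I}_{r_i}$ yields a two-sided sandwich relating $d_k^2$ and $\|\cm{A}^k - \cm{A}^*_{S_\gamma}\|_{\Fr}^2$, with the lower bound scaling like $\sigma_L^{1/2}\kappa^{-3/2}$ and the upper bound like $\sigma_U$. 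This asymmetry is what produces the $\kappa^{3/2}\sigma_L^{-1/2}$ prefactors appearing in \eqref{eq:iteration}.

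The per-iteration analysis then decomposes into a gradient stage and a thresholding stage. For the pre-thresholded iterate $\widetilde{\cm{A}}^{k+1}$, I would pick the rotations $(\bm{O}_1^\sharp,\bm{O}_2^\sharp)$ attaining the infimum in $d_k^2$, expand each of the three squared differences after the gradient update, and decompose the linear $-2\eta$ cross terms into three contributions: (i) an RSC descent of order $-\eta\kappa_{\mathrm{RSC}}\|\cm{A}^k-\cm{A}^*_{S_\gamma}\|_{\Fr}^2$ coming from the data-fitting gradient; (ii) a statistical inner product with $\nabla\mathcal{L}(\cm{A}^*_{S_\gamma})$, bounded by $e_\mathrm{stat}\|\cm{A}^k - \cm{A}^*_{S_\gamma}\|_{\Fr}$ via Theorem \ref{thm:stat}, after verifying that $\cm{A}^k-\cm{A}^*_{S_\gamma}$ lies in $\bm{\Theta}^\mathrm{SP}(2r_1,2r_2,3s)$; and (iii) a negative balancing contribution from the regularizer $a\bm{U}_i(\bm{U}_i^{k\prime}\bm{U}_i^k - b^2\bm{I})$, which, under the prescribed choices $b\asymp\sigma_U^{1/4}$ and $a\asymp(\kappa_{\mathrm{RSC}}^{-1}+\kappa_{\mathrm{RSS}}^{-1})^{-1}(\sigma_U^{1/2}+\sigma_U)$, actively drives $\bm{U}_i^{k\prime}\bm{U}_i^k$ toward $b^2\bm{I}_{r_i}$. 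The quadratic $\eta^2$ terms are controlled by restricted smoothness $\kappa_{\mathrm{RSS}}$ together with the smallness of $\eta_0$. Together these give
\[
\widetilde{d}_{k+1}^{\,2} \le (1 - c_0\eta_0\delta)\,d_k^2 + c_1\,\eta_0\kappa_{\mathrm{RSC}}^{-1}\delta^{-1}\,e_\mathrm{stat}^2.
\]

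The second stage handles $\cm{A}^{k+1} = \mathrm{HT}(\widetilde{\cm{A}}^{k+1}, s)$. Because $|S_\gamma|=s_\gamma\le\nu_{\min}s$ while $|S_k\cup S_{k+1}| = (1+\nu_k)s$, a tight projection inequality for group hard-thresholding in the spirit of \cite{shen2017tight} gives $\|\cm{A}^{k+1}-\cm{A}^*_{S_\gamma}\|_{\Fr}^2 \le (1 + c_2\sqrt{\nu_k/(1-\nu_{\min})})\|\widetilde{\cm{A}}^{k+1}-\cm{A}^*_{S_\gamma}\|_{\Fr}^2$. The calibration $\nu_{\max} \lesssim \eta_0\kappa_{\mathrm{RSC}}^2\kappa_{\mathrm{RSS}}^{-2}\kappa^{-4}$ is chosen precisely so that this expansion is dominated by the descent gain, leaving a net one-step rate of $1-\eta_0^2\delta^2$ in $d_k^2$. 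Iterating $K$ times and passing through the distance sandwich yields \eqref{eq:iteration}, with the extra $\kappa^{7/2}\sigma_L^{-1/2}\kappa_{\mathrm{RSC}}^{-2}$ on the noise floor arising from compounding the sandwich with the noise-amplification factor $\kappa_{\mathrm{RSC}}^{-1}$ in the one-step recursion.

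The main obstacle will be (i) maintaining the invariant $\|\cm{A}^k - \cm{A}^*_{S_\gamma}\|_{\Fr}^2 \lesssim \sigma_L^{5/2}\kappa^{-3/2}$ for every $k=0,\dots,K-1$ by induction, so that RSC, RSS, and Theorem \ref{thm:stat} remain applicable, and (ii) reconciling the discontinuity of the group hard-thresholding operator with the continuous Tucker factorization analysis, since the optimal $(\bm{O}_1^\sharp,\bm{O}_2^\sharp)$ realizing $d_k^2$ can jump between iterations and must be re-chosen after each projection. The initialization condition $\|\cm{A}^0 - \cm{A}^*_{S_\gamma}\|_{\Fr}^2 \lesssim \sigma_L^{5/2}\kappa^{-3/2}$ together with the smallness of $e_\mathrm{stat}$ closes the induction; pinning down the explicit absolute constants ($150^{-1}$, $204$, $1088^{-1}$) requires careful bookkeeping through several nested cross-term estimates, and this is the source of the intricate constraints on $a$, $b$, $\nu_{\max}$, and $\eta_0$ in the statement.
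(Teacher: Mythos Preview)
Your overall architecture matches the paper's: a factor-level distance aligned over orthogonal rotations, a two-sided sandwich against $\|\cm{A}-\cm{A}^*_{S_\gamma}\|_{\Fr}^2$ (the paper's Lemma~\ref{lemma:A->E->A}), one-step gradient contraction followed by a hard-thresholding expansion, and induction on a basin invariant. Two points, however, separate the sketch from a working proof.

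The substantive gap concerns the pre-thresholded iterate. Your $\widetilde d_{k+1}^2$ is built from the full core $\cm{\widetilde{G}}^{k+1}=\cm{G}^k-\eta\nabla_{\mathcal{G}}\mathcal{L}(\cm{A}^k)$, which is \emph{dense} along the lag mode. The quadratic term $\eta^2\|\nabla_{\mathcal{G}}\mathcal{L}(\cm{A}^k)\|_{\Fr}^2$ is then a supremum of $\langle\nabla\mathcal{L}(\cm{A}^k),\cm{N}\rangle$ over low-rank but non-sparse test tensors $\cm{N}$; neither $e_{\mathrm{stat}}=e_\gamma(r_1,r_2,3s)$ (defined over $3s$-group-sparse directions) nor the \emph{restricted} smoothness constant $\kappa_{\mathrm{RSS}}$ controls this, so the claimed contraction $\widetilde d_{k+1}^{\,2}\le(1-c_0\eta_0\delta)d_k^2+\cdots$ does not follow. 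The paper's fix is to work throughout with the restriction to $\breve S:=S_k\cup S_{k+1}\cup S_\gamma$, defining the intermediate distance via $\cm{\widetilde{G}}^{k+1}_{\breve S}$; since $|\breve S|\le 3s$ both the statistical and smoothness bounds then apply, and because $S_{k+1}\subset\breve S$ one still has $\cm{A}^{k+1}=\mathrm{HT}(\cm{\widetilde{A}}^{k+1}_{\breve S},s)$, so the CPP lemma (the paper's Lemma~\ref{lemma:HTineq}) can be invoked with $\mu_s=\sqrt{(\breve s-s)/(\breve s-s_\gamma)}$.

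A second methodological difference: rather than splitting into ``RSC for descent, RSS for the $\eta^2$ terms,'' the paper upgrades restricted RSC/RSS to the co-coercivity inequality
\[
\langle\nabla\mathcal{L}(\cm{A})-\nabla\mathcal{L}(\cm{A}^*_{S_\gamma}),\,\cm{A}-\cm{A}^*_{S_\gamma}\rangle\ \ge\ \alpha\|\cm{A}-\cm{A}^*_{S_\gamma}\|_{\Fr}^2+\beta\|\nabla\mathcal{L}(\cm{A})-\nabla\mathcal{L}(\cm{A}^*_{S_\gamma})\|_{\Fr}^2
\]
via Nesterov's Theorem~2.1.11, with $\alpha\beta\asymp\kappa_{\mathrm{RSC}}/\kappa_{\mathrm{RSS}}$. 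The $\beta$-term is what cancels the full (unrestricted) $\|\nabla\mathcal{L}(\cm{A}^k)-\nabla\mathcal{L}(\cm{A}^*_{S_\gamma})\|_{\Fr}^2$ appearing in all three $\eta^2$ contributions and in the cross term with the second-order Tucker residual $\cm{H}_\epsilon=\cm{A}_{\mathcal{G}}+\cm{A}_{U_1}+\cm{A}_{U_2}-(\cm{A}-\cm{A}^*_{S_\gamma})$, which your decomposition (i)--(iii) does not isolate. This is where $\eta=\eta_0\beta[(1+\sigma_U)(1+\sigma_U^{1/2})]^{-1}$ and the constant $1088=204\cdot 16/3$ originate.
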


The two terms at the right hand side of \eqref{eq:iteration} correspond to the optimization and statistical errors, respectively. The statistical error is discussed at Theorem \ref{thm:stat} and, since $\eta_0\delta<1$, the linear convergence rate can be implied for the optimization error.
Secondly, by triangle inequality, $0.5\|\cm{A}^{K} - \cm{A}^*\|_{\Fr}^2\leq \|\cm{A}^{K} - \cm{A}^*_{S_\gamma}\|_{\Fr}^2+\|\cm{A}^*_{S_\gamma^c}\|_{\Fr}^2$, and hence the convergence analysis at Theorem \ref{thm:optimization} can be readily extended to include approximation errors.
Thirdly, if we further assume that $\kappa_{\mathrm{RSS}}$, $\kappa_{\mathrm{RSC}}$, $\sigma_U$ and $\sigma_L$ are bounded away from zero and infinity as in all the other theorems, the tuning parameters $a$, $b$ and $\eta$ in Algorithm \ref{alg:AGD-HT} can then be at a constant level, i.e. they do not depend on $N$, $T_0$ or $T_1$. 
Finally, it is required by Theorem \ref{thm:optimization} that $s\geq s_{\gamma}$ and, from \eqref{eq:statistic-error}, we then can choose $s \asymp \log T_1/\log(1/\rho)$. This hence leads to the following results for general linear processes.

\begin{corollary}[General linear process with hard-thresholding]\label{cor:algorithm}
	Suppose that the conditions of Theorems \ref{thm:stat} and \ref{thm:optimization} hold, and we choose $s \asymp \log T_1/\log(1/\rho)$ and $\gamma \asymp\sqrt{\{(r_1\wedge r_2)N+\log T_0\}/{T_1}}$ in Algorithm \ref{alg:AGD-HT}. After $K$-th iteration with
	\begin{equation*}
		K \gtrsim \frac{\log(\kappa^{7/2}\sigma_L^{-5/2}\kappa_{\mathrm{RSC}}^{-2}\delta^{-2})}{\log(1-\eta_0^2\delta^2)},
	\end{equation*} 
	and $\eta_0$ and $\delta$ given in Theorem \ref{thm:optimization}, if $T_1\gtrsim \{(r_1\wedge r_2) +s^2\} N+s^2\log T_0$, it then holds that
	\begin{equation*}
		\|\cm{A}^{K} - \cm{A}^*\|_{\Fr}^2 \lesssim \frac{[(r_1\wedge r_2)N+\log T_0]s}{T_1}
	\end{equation*}
	with probability at least $1-C e^{-(r_1\wedge r_2)N-\log T_0}$, where $C$ is an absolute constant given in the proof.
\end{corollary}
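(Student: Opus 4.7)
The plan is to combine the convergence bound from Theorem \ref{thm:optimization} with the statistical bound from Theorem \ref{thm:stat}, and then pass from $\cm{A}^*_{S_\gamma}$ to the full $\cm{A}^*$ via the triangle inequality, controlling the truncation tail through the exponential decay from Assumption \ref{assum:Adecay}. Concretely, I would first verify that the hypotheses of Theorems \ref{thm:stat} and \ref{thm:optimization} are met under the prescribed choice $\gamma \asymp \sqrt{\{(r_1\wedge r_2)N+\log T_0\}/T_1}$ and $s \asymp \log T_1/\log(1/\rho)$. In particular, one needs $s \geq \nu_{\min}^{-1} s_\gamma$, which is implied once the sample-size condition $T_1\gtrsim\{(r_1\wedge r_2)+s^2\}N+s^2\log T_0$ is in force, because (as noted after Theorem \ref{thm:stat}) the cardinality $s_\gamma$ satisfies $s_\gamma \lesssim \log T_1/\log(1/\rho)$ with this choice of $\gamma$.

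Next I would invoke Theorem \ref{thm:optimization} to obtain
\[
\|\cm{A}^{K}-\cm{A}^*_{S_\gamma}\|_{\Fr}^2 \lesssim \kappa^{3/2}\sigma_L^{-1/2}(1-\eta_0^2\delta^2)^K\|\cm{A}^0-\cm{A}^*_{S_\gamma}\|_{\Fr}^2 + \kappa^{7/2}\sigma_L^{-1/2}\kappa_{\mathrm{RSC}}^{-2}\eta_0^{-2}\delta^{-2}\,e_{\mathrm{stat}}^2,
\]
and choose $K$ as specified in the statement so that the first (optimization) term falls below the second (statistical) term; the prescribed lower bound on $K$ is exactly the number of iterations needed for the geometric factor $(1-\eta_0^2\delta^2)^K$ to swamp the ratio $\kappa^{7/2}\sigma_L^{-5/2}\kappa_{\mathrm{RSC}}^{-2}\delta^{-2}$. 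At this point the bound reduces (up to constants absorbed into $\lesssim$) to $e_{\mathrm{stat}}^2 = e_\gamma^2(r_1,r_2,3s)$.

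I would then apply Theorem \ref{thm:stat} with sparsity $3s$ in place of $s$, which gives
\[
e_{\mathrm{stat}}^2 \lesssim \gamma^2 s + \|\cm{A}^*_{S_\gamma^c}\|_{\Fr}^2 + \tau^2\|\cm{A}^*_{S_\gamma^c}\|_{\ddagger}^2,
\]
with $\tau^2=C\sqrt{(N+\log T_0)/T_1}$. The remaining work is to bound the tail quantities $\|\cm{A}^*_{S_\gamma^c}\|_{\Fr}^2$ and $\|\cm{A}^*_{S_\gamma^c}\|_{\ddagger}$. Using Assumption \ref{assum:Adecay} together with the low-rank structure, $\|\bm{A}_j^*\|_{\Fr} \leq \sqrt{r_1\wedge r_2}\,\|\bm{A}_j^*\|_{\op} = O(\sqrt{r_1\wedge r_2}\,\rho^j)$. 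Hence the cut-off $\|\bm{A}_j^*\|_{\Fr} \leq \gamma$ occurs for $j$ beyond $Q\asymp \log(\sqrt{r_1\wedge r_2}/\gamma)/\log(1/\rho)$, giving a geometric tail bound $\|\cm{A}^*_{S_\gamma^c}\|_{\Fr}^2 \lesssim \gamma^2/\log(1/\rho)$ and $\|\cm{A}^*_{S_\gamma^c}\|_{\ddagger}\lesssim \gamma/\log(1/\rho)$. Since $\tau^2\gamma^2 \ll \gamma^2 s$ under the sample-size condition, both tail contributions are absorbed into $\gamma^2 s$.

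Finally I would apply the triangle inequality $\|\cm{A}^K-\cm{A}^*\|_{\Fr}^2 \leq 2\|\cm{A}^K-\cm{A}^*_{S_\gamma}\|_{\Fr}^2 + 2\|\cm{A}^*_{S_\gamma^c}\|_{\Fr}^2$ and combine everything to conclude the claimed rate $\gamma^2 s \asymp [(r_1\wedge r_2)N+\log T_0]\,s/T_1$. The probability bound follows from intersecting the high-probability events in Theorems \ref{thm:stat} and \ref{thm:optimization}, both of which hold with probability at least $1-Ce^{-(r_1\wedge r_2)N-\log T_0}$. I expect the main technical bookkeeping obstacle to be the verification that the initialization-type and curvature-type conditions of Theorem \ref{thm:optimization} (bounds on $\|\cm{A}^0-\cm{A}^*_{S_\gamma}\|_{\Fr}^2$, on $\nu_{\max}$, and on $e_{\mathrm{stat}}^2$) remain compatible with the prescribed $s$ and $\gamma$ under the sample-size assumption $T_1\gtrsim\{(r_1\wedge r_2)+s^2\}N+s^2\log T_0$; once these are checked, the remaining argument is essentially a mechanical combination of the two preceding theorems with the geometric tail control above.
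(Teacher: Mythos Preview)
Your approach is essentially the same as the paper's: invoke Theorem \ref{thm:optimization}, control $e_{\mathrm{stat}}^2$ via Theorem \ref{thm:stat}, bound the approximation tail from Assumption \ref{assum:Adecay}, and pass to $\cm{A}^*$ by the triangle inequality. One minor slip: your intermediate bounds $\|\cm{A}^*_{S_\gamma^c}\|_{\Fr}^2 \lesssim \gamma^2/\log(1/\rho)$ and $\|\cm{A}^*_{S_\gamma^c}\|_{\ddagger}\lesssim \gamma/\log(1/\rho)$ are too optimistic, because $S_\gamma^c$ can contain up to $Q_\gamma$ indices $j\leq Q_\gamma$ with $\|\bm{A}_j^*\|_{\Fr}\leq\gamma$ in addition to the geometric tail beyond $Q_\gamma$; the paper instead gets $\|\cm{A}^*_{S_\gamma^c}\|_{\ddagger}\lesssim \gamma Q_\gamma$ and $\|\cm{A}^*_{S_\gamma^c}\|_{\Fr}^2\lesssim \gamma^2 Q_\gamma$. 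Since $Q_\gamma\asymp s$, these are still absorbed into $\gamma^2 s$, so your final conclusion is unaffected.
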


The above corollary gives the same bound as that in Theorem \ref{thm:main} since $s \asymp \log T_1/\log(1/\rho)$.
Moreover, when the quantities of $\kappa_{\mathrm{RSC}}, \kappa_{\mathrm{RSS}}, \kappa$ and $\sigma_L$ are bounded away from zero
and infinity, the required number of iterations does not depend on $N$, $T_0$ or $T_1$, and this makes sure that the proposed algorithm can be applied to large
datasets without any difficulty. 
Finally, as in Section 2.3, for group-sparse VAR$(T_0)$ processes, i.e. $\cm{A}^*_{S_\gamma^c}=0$ for some $\gamma>0$, we can obtain the same bound as that in Corollary \ref{cor:algorithm}, while $s$ may be fixed or have a much slower rate than $\log(T_1)$.

\section{Simulation studies}
\subsection{Finite-sample performance of the proposed methodology}\label{sec:experiment1} 

This subsection conducts two simulation experiments to evaluate the finite-sample performance of VAR sieve estimators $\cm{A}^{K}$ from Algorithm \ref{alg:AGD-HT} in Section 3.

The first experiment is to evaluate how the three types of errors, i.e. the estimation, approximation and truncation errors, can be balanced numerically when the sparsity level $s$ varies. We consider two data generating processes below,
\begin{align*}
	\mathrm{VARMA:}\;& \mathbf{y}_t=\bm{\Phi}\mathbf{y}_{t-1}+\bm{\varepsilon}_t-\bm{\Theta}\bm{\varepsilon}_{t-1}\text{ with }\bm{\Phi}=-0.5\bm{B}\bm{J}\bm{B}^\prime\text{ and }\bm{\Theta}=0.7\bm{B}\bm{J}\bm{B}^\prime,\\
	\mathrm{VAR:}\;&
	\mathbf{y}_t = \bm{\Phi}_1\mathbf{y}_{t-1}+\bm{\Phi}_4\mathbf{y}_{t-4}+\bm{\Phi}_5\mathbf{y}_{t-5}+\bm{\Phi}_8\mathbf{y}_{t-8}+\bm{\Phi}_9\mathbf{y}_{t-9}+\bm{\varepsilon}_t\\
	&\hspace{41mm} \text{ with } \bm{\Phi}_j = c_j0.7^j\bm{B}\bm{J}\bm{B}^\prime,
\end{align*}
where $\mathbf{y}_t\in\mathbb{R}^N$, $\bm{B}\in\mathbb{R}^{N\times N}$ is an orthogonal matrix, $\bm{J}=\diag\{\bm{1}_r, \bm{0}_{N-r}\}$ with $\bm{1}_r=(1,\dots,1)^\prime$ is an $N$-dimensional diagonal matrix with rank $r$, $\{\bm{\varepsilon}_t\}$ are $i.i.d.$ $N$-dimensional standard normal random vectors, and $(c_1,c_4,c_5,c_8,c_9)=(1,2,-2,-1,1)$.
The VARMA model is stationary since the spectral radius of $\bm{\Phi}$ is $0.5$, and it has a weakly group-sparse VAR($\infty$) form at \eqref{eq:VAR_inf} with $\bm{A}_j^* = \bm{\Theta}^{j-1}(\bm{\Phi} - \bm{\Theta}) = -1.2\times 0.7^{j-1}\bm{B}\bm{J}\bm{B}^\prime$ for all $j\geq 1$.
The VAR model also satisfies the stationarity condition at Assumption \ref{assum:stationarity}, i.e. $|\det(\bm{A}(z))| = (1/z-0.7)(1/z^4-0.7^4)^2 \neq 0$ for all $z\in\mathbb{C}$ and $|z|\leq 1$, and it is group-sparse. 
In fact, the VAR process has a form of seasonal VAR(1)$\times$VAR(2)$_4$ models, $(\bm{I}_N-\bm{\Pi}_1L)(\bm{I}_N-\bm{\Pi}_2L^4-\bm{\Pi}_3L^8)\mathbf{y}_t = \bm{\varepsilon}_t$, where $\bm{\Pi}_1=0.7\bm{B}\bm{J}\bm{B}^\prime, \bm{\Pi}_2 = 2\times 0.7^4\bm{B}\bm{J}\bm{B}^\prime, \bm{\Pi}_3 = -0.7^8\bm{B}\bm{J}\bm{B}^\prime$, and $L$ is the backshift operator.
Note that the VAR coefficient matrices $\{\bm{A}_j^*,j\geq 1\}$ of both processes share the same row and column spaces, spanned by first $r$ columns of $\bm{B}$, and the corresponding supervised factor models have the rank of $r_1=r_2=r$.

We fix the settings at $(N,r,T)=(20,4,1500)$, and there are 500 replications with the orthogonal matrix $\bm{B}$ being generated randomly at each replication for better evaluation. Algorithm \ref{alg:AGD-HT} is applied to each generated sequence with $T_0=\floor{1.5\sqrt{T}}=58$ and $s$ varying from 3 to 35, where $\floor{\cdot}$ is the floor function, and we can obtain the output $\cm{A}^{K}$ until the algorithm converges. 
The estimation, approximation and truncation errors refer to $\|\cm{A}^{K} - \cm{A}_{S}^*\|_{\Fr}^2$, $\|\cm{A}_{S^c}^*\|_{\Fr}^2$ and $\sum_{j=T_0+1}^{\infty}\|\bm{A}_j^*\|_{\Fr}^2$, respectively, and the parameter estimation error is defined as $\|\cm{A}^{K} - \cm{A}^*\|_{\Fr}^2=\|\cm{A}^{K} - \cm{A}_{S}^*\|_{\Fr}^2+\|\cm{A}_{S^c}^*\|_{\Fr}^2$, where $S$ contains the indices of all estimated active coefficient matrices.
The truncation error is $1.32\times 10^{-17}$ for the VARMA process and zero for the VAR process, and hence they can be ignored numerically comparing with the other two types of errors.
Figure \ref{fig:rate_s} gives the estimation and approximation errors, averaged over 500 replications, and we have three findings below. 
First, as the sparsity level $s$ increases, linear growth in the estimation errors can be roughly observed for both the VARMA and VAR processes. The approximation error decreases quickly for both processes and, when $s>5$, it becomes almost zero for the VAR process since the active set can be correctly selected for most replications.
Secondly, for the VARMA process, the approximation error is dominating for the cases with $s<10$, while the estimation error has much larger values when $s>10$. As a result, as $s$ increases, the parameter estimation error  decreases first and then increases when the sparsity level $s>10$. This phenomenon can also be observed for the VAR process, and the parameter estimation error reaches the minimum at $s=5$, which is the true sparsity level.
Finally, for large sparsity levels $s$, the parameter estimation error exhibits linearity for both  processes, which is consistent with the theoretical findings at Corollary \ref{cor:algorithm}. 


The second experiment is to further verify the theoretical bound of parameter estimation errors at Corollary \ref{cor:algorithm}, and the two data generating processes in the first experiment are employed again.
Note that $r_1=r_2=r$ in both processes, and hence the parameter estimation error is expected to have a rate of $\beta = s(rN+\log T_0)/(T-T_0)$.
Moreover, since the linearity with respect to $s$ has already been confirmed in the first experiment, we fix the sparsity level $s=10$ for the VARMA process and $s=5$ for the VAR process in this experiment.
Finally, we consider three different rates for running orders, i.e. $T_0=\floor{cT^{\alpha}}$ with $\alpha=1/4$, $1/3$ or $1/2$, and the true order with $T_0=9$ is also considered for VAR models. The value of $c$ is set to $1.5$ for the case with $\alpha=1/2$, while $c=3$ for those with $\alpha=1/4$ and $1/3$ such that the resulting $T_0$ is not too small.


In order to verify the linearity of parameter estimation errors with respect to the rate $\beta$, rank $r$ and dimension $N$, we consider three group\textcolor{purple}{s} of settings to generate high-dimensional time series. First, by fixing $(N,r)=(20,4)$, one can vary the sample size $T$ such that the values of $\beta$ are equally spaced from $0.4$ to $1.0$. Secondly, we fix $(N,T)=(20,1200)$ for the VARMA process and $(20,700)$ for the VAR process, and let the rank vary among $\{2,3,4,5\}$.
Finally, the dimension $N$ varies among $\{10,20,30,40\}$, while $(r,T)$ is fixed at $(4,2000)$ for the VARMA process and $(4,1200)$ for the VAR process.
All the other settings are the same as those in the first experiment, and Figure \ref{fig:rate} presents the plots of parameter estimation errors, averaged over 500 replications, against the rate $\beta$, rank $r$ and dimension $N$, respectively.
The linearity can be observed roughly from all combinations, which confirms the theoretical findings at Corollary \ref{cor:algorithm}.
Moreover, for each combination of the triplet $(N,r,T)$, a larger parameter estimation error can be observed for a higher running order $T_0$, and it is more pronounced for the VAR process since its sample sizes are relatively small. This is due to the fact that, as $T_0$ increases, the model is more complicated, while the effective sample size becomes smaller.


\subsection{Comparison of soft- and hard-thresholding methods}

This subsection conducts another two simulations to compare the finite-sample performance of the soft- and hard-thresholding methods in Section 3.1 at different values of $(T,T_0)$. 

In the first experiment, we generate the data by using the VARMA process in the previous subsection with $(N,r)=(20,4)$, and four sample sizes are considered with $T=400$, 600, 800 and 1000. The running order is fixed at $T_0=100$, and the effective sample size is $T_1=T-T_0$.
There are 500 replications for each sample size, and the hard-thresholding method, i.e. Algorithm \ref{alg:AGD-HT}, is first considered to search for estimates with the sparsity level $s$ varying from 4 to 18.
Figure \ref{fig:hardvsoft} gives the parameter estimation errors, averaged over 500 replications, and it can be seen that the optimal sparsity level grows slowly from nine to eleven as the sample size $T$ increases.
In the meanwhile, Algorithm \ref{alg:AGD-HT} modified with the soft-thresholding method in Section 3.1 is also applied, and the tuning parameter $\lambda$ varies among the values of $\{0.5j\times 10^{-3}, 1\leq j\leq 33\}$.
The averaged parameter estimation errors are presented in Figure \ref{fig:hardvsoft}, and the optimal values of $\lambda$ change a lot when the sample size increases from $T=400$ to 1000.
Moreover, since the fitted models at each fixed $\lambda$ may have different sparsity levels, Figure \ref{fig:hardvsoft} also plots the sparsity levels of 500 fitted models at the optimal value of $\lambda$ for each sample size.
The sparsity level varies roughly from 20 to 30, and the variation decreases as the sample size increases.
Finally, although parameter estimation errors for both methods become larger as sample sizes decrease, those for the soft-thresholding method may vary dramatically, especially when the sample size is as small as $T=400$.

The second experiment fixes the sample size at $T=800$, while four running orders are considered with $T_0=25$, 50, 100 and 200.
The sparsity level for the hard-thresholding method varies from $s=5$ to 14, and the tuning parameter for the soft-thresholding method takes the values of $\lambda\in\{0.5j\times 10^{-3}, 4\leq j\leq 20\}$. All the other settings are the same as those in the first experiment, and the estimation results are also given in Figure \ref{fig:hardvsoft}.
The stability of the hard-thresholding method is confirmed again, while the optimal values of $\lambda$ are more sensitive to the change in $T_0$ for the soft-thresholding method.
In fact, when the running order is as large as $T_0=200$, the parameter estimation errors from the soft-thresholding method vary dramatically, and we can even observe a much higher variation in sparsity levels of the fitted models at the optimal values of $\lambda$. This further undermines the stability of the soft-thresholding method.

\section{Two empirical examples} \label{sec:empirical}

\subsection{Macroeconomic dataset} \label{subsec:macro}

This dataset contains observations of $N=20$ quarterly macroeconomic variables from June 1959 to December 2019, with $T=243$, retrieved from FRED-QD \citep{MN16}. These variables come from four categories: (i) interest rates, (ii) money and credit, (iii) exchange rates, and (iv) stock markets. These categories are usually considered in the construction of financial condition indices, since they reflect important factors that can affect the stance of monetary policy and aggregate demand conditions
\citep{goodhart2001asset,bulut2016financial,hatzius2010financial}. All sequences are first transformed to be stationary, and are then standardized to have zero mean and unit variance; see the supplementary file for the variables and their transformations.

We use Algorithm \ref{alg:AGD-HT} to conduct the VAR sieve estimation, and the running order is set to $4\leq T_0\leq 12$, which corresponds to observations in the past one to three years. The high-dimensional AIC at Section 3.1 is used to search for the values of $(r_1, r_2, s)$ with tunning parameter $c = 0.004$, and the sparsity level $s=2$ is always chosen for each running order $T_0$. 
Specifically speaking, the first and second lags are consistently selected across all $T_0$, i.e. the macroeconomic variables in the dataset are mainly driven by economic situations of the recent two quarters.
As a result, we fix the running order at $T_0=4$, and the ranks are at $({r}_1, {r}_2) = (7, 4)$, selected by the AIC.
Note that $\widehat{\bm{U}}_1$ and $\widehat{\bm{U}}_2$ are loadings of response and predictor factors in the fitted supervised factor model, respectively, and their values are plotted in Figure \ref{fig:loading}.
It can be seen that, for the seven fitted response factors (RF1--RF7), interest rates contribute most, and each interest rate variable actually occupies one response factor. The money and credit variables also have contributions to RF3 and RF7.
We may argue that, among the four categories of macroeconomic variables, interest rates have the most temporally dependent structures, i.e. they are most likely predictable, while money and credit variables can be predicted to some extent. 
On the other hand, low weightings are assigned to exchange rates and stock markets, and this implies that their movements cannot be explained well by using these endogenous variables only. In fact, it is hard to forecast these two types of variables in real applications, and perhaps we may need to involve some exogenous variables in order to improve the forecasting power.

It can also been seen from Figure \ref{fig:loading} that, different from response factors, compositions of predictor factors (PF1--PF7) are more balanced and diversified, and we argue that the four categories of variables are all driving forces of the market. Specifically, PF1 is positively correlated to short-term interest rate spreads, stock market performance and the strength of US dollar, and we may interpret it as a driver of short-term economic growth. In the meanwhile, PF4 can be treated as a driver of longer-term economic growth, since it is positively correlated to long-term interest rate spreads and the strength of US dollar, and negatively correlated to the stock market performance. The other two factors, PF2 and PF3, may be interpreted as the combined forces among all four groups of variables. 


We next compare the proposed model with five existing ones in the literature, including three VAR-based models and two VARMA-based models.  The three VAR-based models are by (a) the Lasso method \citep{basu2015regularized}, (b) multilinear low-rank (MLR) method with a low-rank structure, rather than weak sparsity, being assumed to the lag direction \citep{Wang2021High}, and (c) sparse higher-order reduced-rank (SHORR) method,  which further enforces sparsity on the factor matrices in (b). 
The two VARMA-based models are from \cite{WBBM21} with (d) the $\ell_1$-penalty and (e) HLag penalty, respectively. 
For the order selection and initial estimators in the above five models, we strictly follow the original papers, and a rolling forecast procedure is used for the comparison. Specifically, we first fit the model with $(T_0,r_1,r_2,s)=(4,7,4,2)$ to the historical data with the ending point iterating from the fourth quarter of 2015 to the third quarter of 2019, and then one-step ahead prediction is conducted for each iteration.
Table~\ref{tab:real} gives the mean squared forecast error (MSFE) and mean absolute forecast error (MAFE) for all six models, and the smallest forecast errors are achieved by the proposed model.
The outperformance of our model against model (a) confirms the existence of low-rank structures in the dataset, which has been widely recognized in the literature \citep{stock2012disentangling}, while that against model (b) verifies the necessity of group-sparsity for coefficient matrices.
In fact, the proposed model even outperforms model (c), i.e. model (b) with sparsity on factor matrices. 
On the other hand, the VARMA model can be used to fit VAR$(\infty)$ processes, since it has a VAR$(\infty)$ form and MA coefficient matrices can depict the decay patterns. Worse performance of two VARMA-based models at Table~\ref{tab:real} further confirms the efficacy of the proposed VAR sieve estimation.

\subsection{Realized volatility}

It is an important task in finance to predict realized volatility \citep{Chen2010}, and this subsection attempts to tackle this problem by considering the daily realized volatility for $N=46$ stocks from January 2, 2012 to December 31, 2013, with sample size $T=495$. The stocks are from S\&P 500 companies with the largest trading volumes on the first day of 2013, and they cover a wide range of sectors, including communication service, information technology, consumer, finance, healthcare, materials and energy. The tick-by-tick data are downloaded from the Wharton Research Data Service (WRDS), and the daily realized volatility is calculated based on five-minute returns \citep{andersen2006volatility}. 
The stationarity can be confirmed for these series by checking their sample autocorrelation functions, and each sequence is standardized to have zero mean and unit variance; see the supplementary file for information on the 46 stocks.

Algorithm \ref{alg:AGD-HT} is applied again to search for the VAR sieve estimates, and the running order is set to $6\leq T_0\leq 10$, i.e. the observations in the past one to two trading weeks. 
The AIC chooses the sparsity level $s=4$ for most running orders $T_0$ and $s=3$ for the others, while the selected lags are always among $\{1,3,4,6\}$. 
We may argue that the movement of market volatility is largely driven by intra-week information, with a bit spillover effect from the previous week.
As a result, the running order is fixed at $T_0=8$, giving two more lags as a buffer, and accordingly the selected ranks and sparsity level by the AIC are $(r_1,r_2,s)=(3,5,4)$.
The rolling forecast procedure in the first example is employed with the last 10\% observations being reserved for one-step ahead prediction, and Table~\ref{tab:real} gives the mean squared forecast error (MSFE) and mean absolute forecast error (MAFE) from the proposed method, as well as the five competing ones.
The proposed model has a better performance by a big margin, and this further confirms the usefulness of our method. 

\section{Conclusion and discussion}

This paper proposes a supervised factor model for high-dimensional time series by introducing low-rank structures to the coefficient matrices of VAR($\infty$) models. With the help of tensor techniques, the proposed model can be rewritten into a form of two factor models, which allows us to interpret it from unsupervised factor modeling perspectives.
For its application on high-dimensional time series, by making use of an interesting fact that the stationarity condition implies the weak group sparsity of coefficient matrices, a rank-constrained group Lasso estimation is considered, and its non-asymptotic properties are carefully investigated by trading-off the estimation, approximation, and truncation errors. 
Moreover, an alternating gradient descent algorithm with thresholding is suggested to search for the high-dimensional estimate, and its theoretical properties, including both statistical and convergence analysis, are also provided.
Finally, as illustrated by two empirical examples, the proposed model exceeds the existing methods in terms of forecasting accuracy while enjoys the nice interpretation of factor models.

The proposed methodology in this paper can be extended along three directions. 
First, to obtain a reliable estimator, the sample size is required to be $T_1\gtrsim \{(r_1\wedge r_2) +s^2\} N+s^2\log T_0$ in Theorem \ref{prop:main}, while the number of variables $N$ may be larger than the sample size $T_1$, say for time-course gene expression data \citep{LZLR09}.
To handle this case, we may further impose group sparsity to the rows of factor matrices $\bm{U}_1$ and $\bm{U}_2$. Specifically, let $s_u=s_{u1}\vee s_{u2}$, where $s_{u1}$ and $s_{u2}$ are the numbers of active rows in $\bm{U}_1$ and $\bm{U}_2$, respectively.
The coefficient tensor can then be estimated by $\widehat{\cm{A}} =\widehat{\cm{G}}\times_1\widehat{\bm{U}}_1\times_2\widehat{\bm{U}}_2$, and
\[
(\widehat{\cm{G}}, \widehat{\bm{U}}_1, \widehat{\bm{U}}_2)= \argmin \mathcal{L}(\cm{G},\bm{U}_1,\bm{U}_2) +\lambda \|\cm{G}\|_{\ddagger} +\lambda_1\|\bm{U}_1\|_{\ddagger} +\lambda_2\|\bm{U}_2\|_{\ddagger},
\]
where $\lambda$, $\lambda_1$ and $\lambda_2$ are three tuning parameters of penalization, $\|\cm{G}\|_{\ddagger}=\sum_{j=1}^{T_0}\|\bm{G}_j\|_{\Fr}$, and $\|\bm{U}_i\|_{\ddagger}=\sum_{j=1}^{N}\|\bm{u}_j^{(i)}\|_2$ with $\bm{u}_j^{(i)}$ being the $j$-th row of $\bm{U}_i$ and $i=1$ and 2.
The required sample size can be reduced to $T_1\gtrsim s_u^2(\log N + r_1 + r_2) + (\log T_0 + r_1r_2)s^2$, and error bounds can be derived by a method similar to that in Section 2.3.
We can also construct an alternating gradient descent method with three thresholdings, which is similar to Algorithm \ref{alg:AGD-HT} in Section 3.1.
Secondly, our current proving techniques heavily depend on the exponential decay of coefficient matrices, which actually excludes many important time series models such as the fractionally integrated autoregressive moving average (ARFIMA) model with long memory \citep{grange1980introduction}. It is urgent to look for a new proving technique to remove this restriction.
Finally, it is of interest to make further inference on estimated coefficient matrices such as checking the significance of some coefficients \citep{xia2022inference,cai2020uncertainty}, and we will leave it for future research.

\bibliography{VARinf}

\clearpage
\newpage
\begin{figure}[t]
	\centering
	\includegraphics[width=1.0\linewidth]{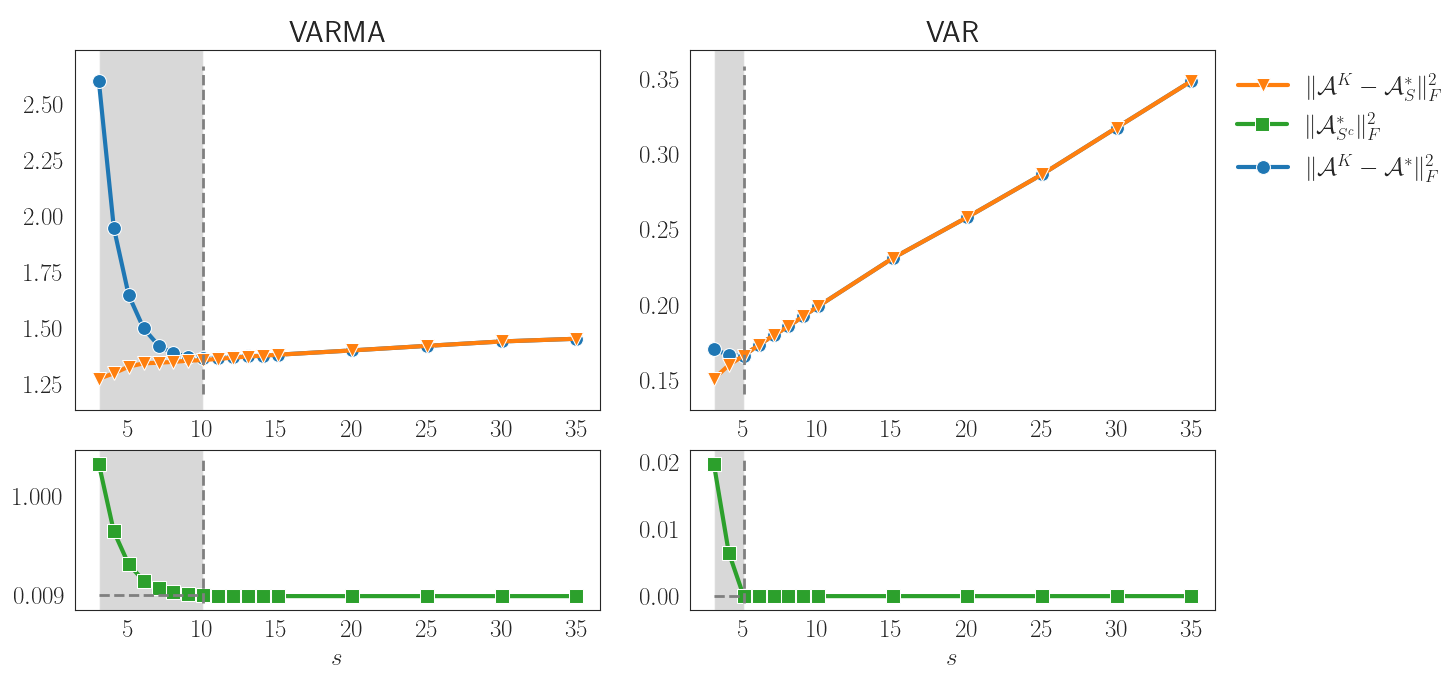}
	\caption{Estimation, approximation and parameter estimation errors, i.e. $\|\cm{A}^{K} - \cm{A}_{S}^*\|_{\Fr}^2$, $\|\cm{A}_{S^c}^*\|_{\Fr}^2$ and $\|\cm{A}^{K} - \cm{A}^*\|_{\Fr}^2$, at different sparsity levels $s$ for VARMA (left panel) and VAR (right panel) processes.
			The range of decreasing parameter estimation errors is shaded. \label{fig:rate_s}}
\end{figure}

\begin{figure}[h]
	\centering
	\includegraphics[width=1\linewidth]{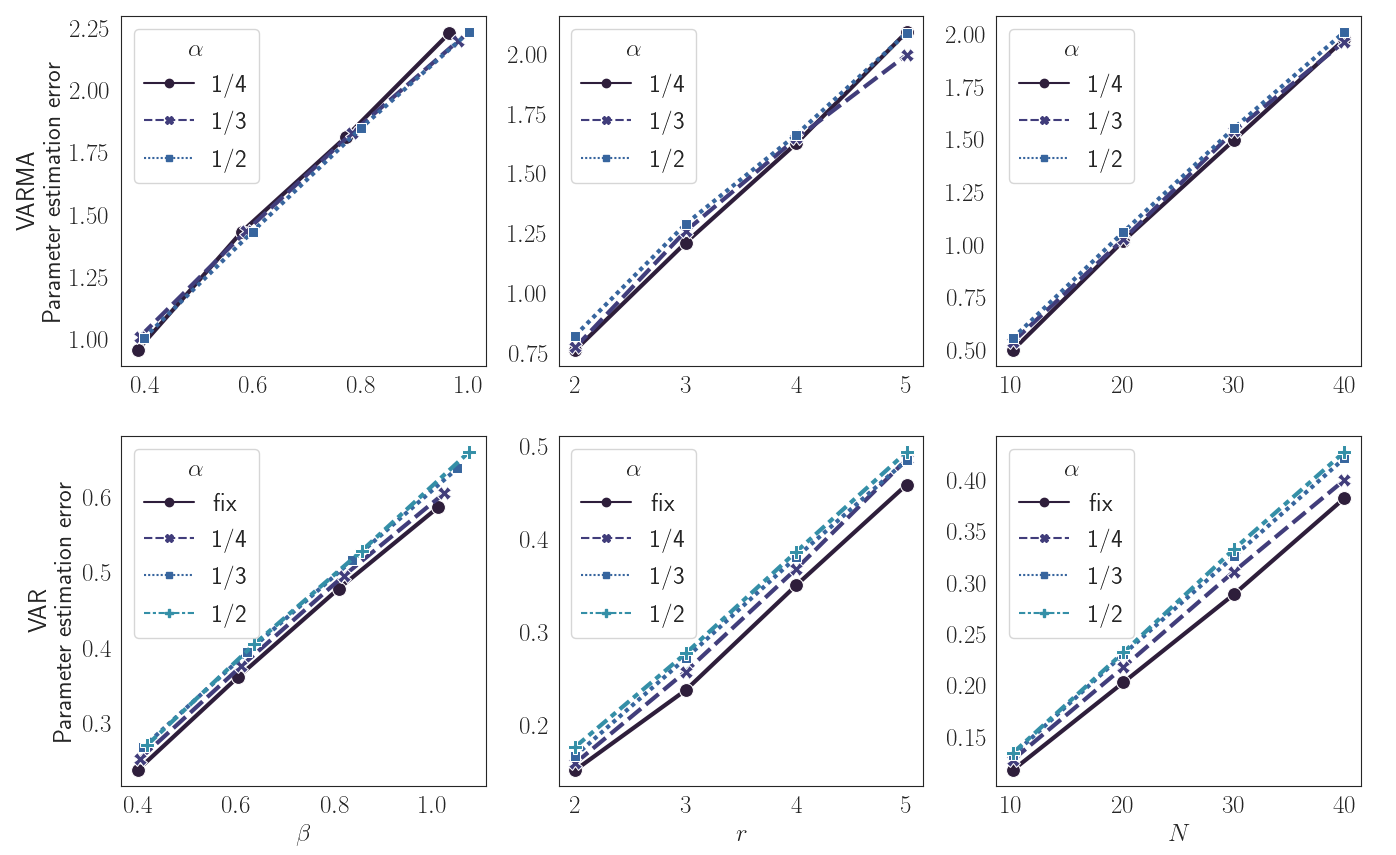}
	\caption{Plots of parameter estimation errors $\|\cm{A}^{K} - \cm{A}^*\|_{\Fr}^2$ against the error rate $\beta = [rN + \log T_0]s/(T-T_0)$ (left panel), rank $r$ (middle panel) and dimension $N$ (right panel), respectively. The data generating processes are VARMA (upper panel) and VAR (lower panel) models, and the running order $T_0$ is proportional to $T^{\alpha}$ with ``fix" referring to $T_0=9$.  \label{fig:rate}}
\end{figure}

\clearpage

\newpage

\begin{figure}[t]
	\centering
	\includegraphics[width=1\linewidth]{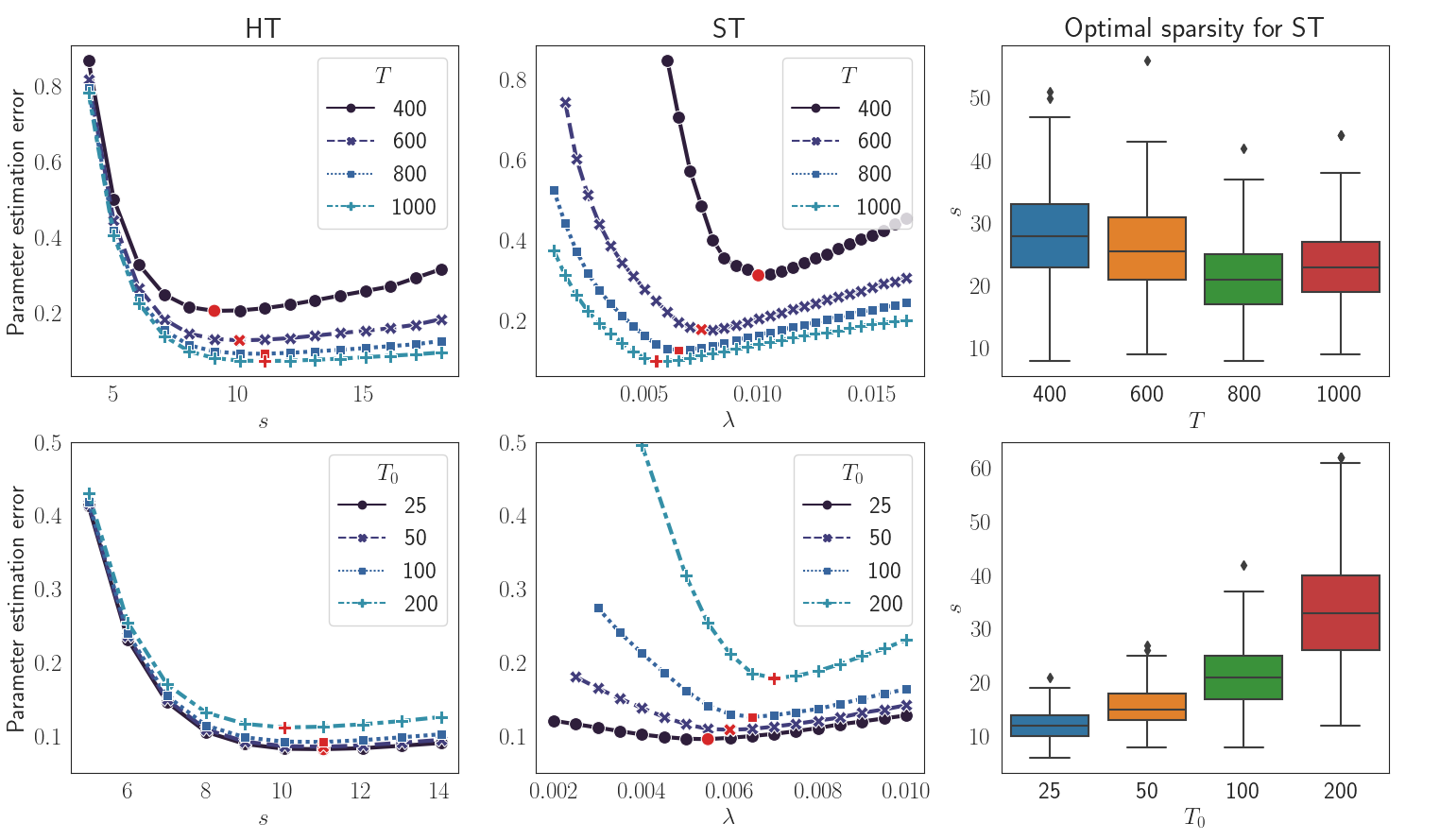}
	\caption{Plots of parameter estimation errors against sparsity levels $s$ (left panel) for the hard-thresholding (HT) method and tuning parameters $\lambda$ (middle panel) for the soft-thresholding (ST) method, where the optimal setting on each curve with the minimum error is highlighted in red, and boxplots (right panel) for sparsity levels of 500 fitted models at the optimal values of $\lambda$. Two cases are considered: varying sample size $T$ but fixed running order $T_0$ (upper panel) and varying running order $T_0$ but fixed sample size $T$ (lower panel). \label{fig:hardvsoft}}
\end{figure}


\begin{table}[ht]
	\caption{Mean squared forecast errors (MSFE) and mean absolute forecast errors (MAFE) of the VAR sieve method and five competing models on macroeconomic and realized volatility datasets. The smallest numbers in each row are highlighted in bold.}
	\label{tab:real}
	\centering
	\begin{tabular}{@{}cccccccc@{}}
			\hline
			&&\multicolumn{3}{c}{\makecell{VAR}}&\multicolumn{2}{c}{\makecell{VARMA }} & \multirow{2}{*}{Ours} \\
			\cmidrule(r){3-5}\cmidrule(r){6-7}
			&&Lasso&MLR&SHORR&$\ell_1$&HLag&\\
			\midrule
			\multirow{2}{3.5cm}{\makecell{Macroeconomic}}&MSFE&2.78&2.77&2.71&2.80&2.79&\textbf{2.68}\\
			&MAFE&9.26&9.27&8.99&9.28&9.24&\textbf{8.68}\\
			\midrule
			\multirow{2}{3.5cm}{\makecell{Realized Volatility}}&MSFE&5.17&4.93&4.87&5.19&5.19&\textbf{4.72}\\
			&MAFE&21.58&19.02&18.22&21.70&21.70&\textbf{17.40}\\
			\bottomrule
		\end{tabular}
\end{table}

\begin{figure}[h]
	\centering
	\includegraphics[width=1.0\linewidth]{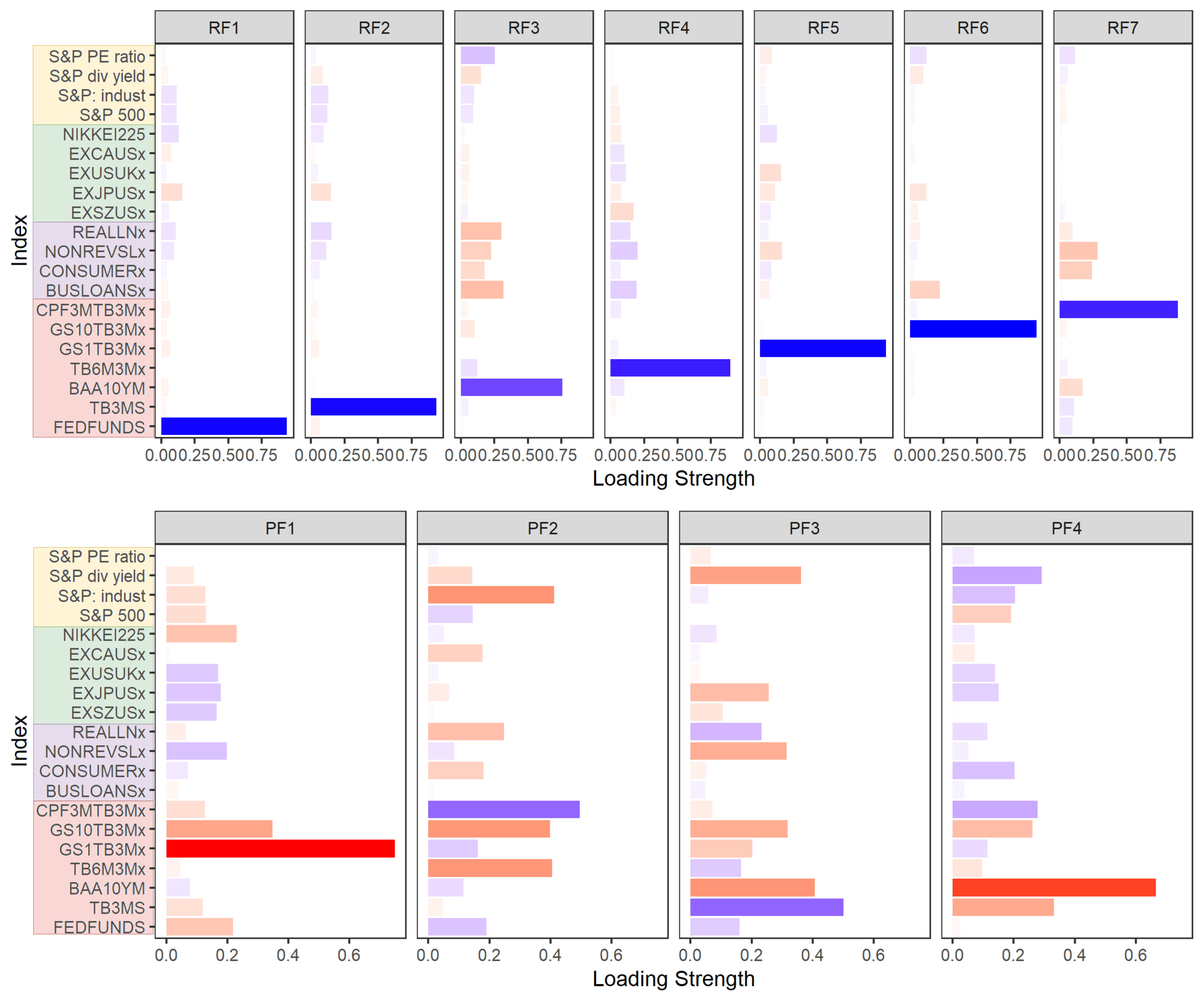}
	\caption{Loadings for seven response (upper panel) and four predictor factors (lower panel) in the fitted supervised factor model for the macroeconomic dataset. Blue bars correspond to positive loadings, while red bars to negative ones, where the larger magnitudes are given darker colors. \label{fig:loading}}
\end{figure}

\clearpage
\newpage
\renewcommand{\thesection}{Appendix A}
\renewcommand{\thelemma}{A.\arabic{lemma}}
\renewcommand{\thesubsection}{A.\arabic{subsection}}
\renewcommand{\theequation}{A.\arabic{equation}}
\setcounter{lemma}{0}

\section{Tensor notations and decomposition}

This appendix gives a brief introduction to tensor notations and Tucker decomposition, and a detailed review on tensor notations and operations can be referred to in \cite{Kolda09}.
Tensors, also known as multidimensional arrays, are higher-order extensions of matrices, and a multidimensional array $\cm{A}\in\mathbb{R}^{d_1\times\cdots\times d_K}$ is called a $K$-th-order tensor, where the order of a tensor is known as the dimension, way or mode.
This paper concentrates on third-order tensors.

For a tensor $\cm{A}\in\mathbb{R}^{d_1\times d_2\times d_3}$, its element is denoted by $\cm{A}_{ijk}$ for $1\leq i\leq d_1,1\leq j\leq d_2$ and $1\leq k\leq d_3$, and the Frobenius norm is defined as $\|\cm{A}\|_{\Fr} = \sqrt{\sum_{i,j,k}\cm{A}_{ijk}^2}$.
We define its mode-1 multiplication with a matrix $\bm{B}\in\mathbb{R}^{d_1\times p_1}$ as $\cm{A}\times_1\bm{B}\in\mathbb{R}^{p_1\times d_2\times d_3}$ with elements of
$(\cm{A}\times_1\bm{B})_{\ell jk}=\sum_{i=1}^{d_1}\cm{A}_{ijk}\bm{B}_{\ell i}$. The mode-2 and -3 multiplications, $\times_2$ and $\times_3$, can be defined similarly.

Matricization or unfolding is an operation to reshape a tensor into matrices of different sizes, and it can help to link the concepts and properties of matrices to those of tensors.
The mode-1 matricization of $\cm{A}$ is defined as $\cm{A}_{(1)}\in\mathbb{R}^{d_1\times d_2d_3}$, whose $\{i,(k-1)d_3+j\}$-th entry is $\cm{A}_{ijk}$ for all possible $i$'s, $j$'s and $k$'s, i.e. $\cm{A}_{(1)}$  contains all mode-1 fibers $\{(\cm{A}_{[:,j,k]})\in\mathbb{R}^{d_1}:1\leq j\leq d_2,1\leq k\leq d_3\}$.
We can similarly define the mode-2 and -3 matricizations of $\cm{A}$, denoted by $\cm{A}_{(2)}\in\mathbb{R}^{d_2\times d_1d_3}$ and $\cm{A}_{(3)}\in\mathbb{R}^{d_3\times d_1d_2}$, respectively.
When the tensor $\cm{A}$ has a form of $\cm{A}_{(1)}=(\bm{A}_1,\ldots,\bm{A}_{d_3} )$ with $\bm{A}_j\in\mathbb{R}^{d_1\times d_2}$ for all $1\leq j\leq d_3$, it holds that $\cm{A}_{(2)}=(\bm{A}_1^{\prime},\ldots,\bm{A}_{d_3}^{\prime} )$ and $\cm{A}_{(3)}=(\mathrm{vec}(\bm{A}_1),\ldots,\mathrm{vec}(\bm{A}_{d_3}) )^{\prime}$.

The multilinear ranks of a tensor $\cm{A}\in\mathbb{R}^{d_1\times d_2\times d_3}$ is defined as $(r_1,r_2,r_3)$, where
\[
r_1=\mathrm{rank}(\cm{A}_{(1)}),\hspace{2mm} r_2=\mathrm{rank}(\cm{A}_{(2)})\hspace{2mm}\text{and}\hspace{2mm} r_3=\mathrm{rank}(\cm{A}_{(3)}).
\]
Accordingly, there exists a Tucker decomposition \citep{tucker1966some},
\begin{equation*}
	\cm{A}=\cm{G}\times_1\bm{U}_1\times_2\bm{U}_2\times_3\bm{U}_3,
\end{equation*}
where $\cm{G}\in \mathbb{R}^{r_1\times r_2\times r_3}$ is the core tensor, $\bm{U}_j\in\mathbb{R}^{d_j\times r_j}$ with $1\leq j\leq 3$ are factor matrices.
Note that the Tucker decomposition is not unique, since
$\cm{A}=\cm{G}\times_1\bm{U}_1\times_2\bm{U}_2\times_3\bm{U}_3=(\cm{G}\times_1\bm{O}_1\times_2\bm{O}_2\times_3\bm{O}_3)\times_1 (\bm{U}_1 \bm{O}_1^{-1}) \times_2 (\bm{U}_2 \bm{O}_2^{-1}) \times_3(\bm{U}_3 \bm{O}_3^{-1})$
for any invertible matrices $\bm{O}_i\in\mathbb{R}^{r_i\times r_i}$ with $1\leq i\leq 3$. We can consider the higher order singular value decomposition (HOSVD) of $\cm{A}$, a special Tucker decomposition uniquely defined by choosing $\bm{U}_i$ as the tall matrix consisting of the top $r_i$ left singular vectors of $\cm{A}_{(i)}$ and then setting $\cm{G}=\cm{A}\times_1\bm{U}_1^\top \times_2\cdots\times_d\bm{U}_d^{\top}$.
Note that $\bm{U}_i$'s are orthonormal, i.e. $\bm{U}_i^\top \bm{U}_i=\bm{I}_{r_i}$ with $1\leq i\leq d$.

The three ranks, $r_1$, $r_2$ and $r_3$, are not equal in general.
In particular, when $r_3 = d_3$, the multilinear ranks of $\cm{A}$ are denoted by $(r_1,r_2)$ instead, omitting the rank of mode-3 matricization. 
The multilinear ranks are also known as Tucker ranks, as they are closely related to the Tucker decomposition.
There are many other tensor decomposition methods, such as CP decomposition \citep{Kolda09}, and the ranks of a tensor can be defined in many different ways.

\renewcommand{\thesection}{Appendix B}
\renewcommand{\thelemma}{B.\arabic{lemma}}
\renewcommand{\thesubsection}{B.\arabic{subsection}}
\renewcommand{\theequation}{B.\arabic{equation}}
\setcounter{lemma}{0}

\section{Technical details}
This section first gives the detailed technical proofs for theoretical results in Sections 2 and 3, and then gives additional information on two empirical datasets in Section 5.

\subsection{More notations and preliminaries}

This section gives more notations and preliminaries.
Let $\mathcal{S}^{d-1}=\{\cm{X}\in\mathbb{R}^{d} \mid \|\cm{X}\|_2=1\}$ be the unit sphere of $\mathbb{R}^{d}$ in Euclidean norm. For $n\geq 2$, let $\mathcal{S}^{d_1\times\cdots \times d_n-1}=\{\cm{X}\in\mathbb{R}^{d_1\times \cdots\times d_n} \mid \|\cm{X}\|_{\Fr}=1\}$ be the unit sphere of $\mathbb{R}^{d_1\times \cdots \times d_n}$ in Frobenius norm. 
Moreover, for any positive integer $r\leq N$, denote the set of unit matrices with rank at most $r$ by
\[
\bm{\Theta}_{\Fr}(r) = \{\bm{M}\in\mathbb{R}^{N\times N}\mid\|\bm{M}\|_{\Fr}=1,  \text{rank}(\bm{M}) \leq r\}.
\]

Consider a tensor $\cm{A}\in\mathbb{R}^{N\times N\times T_0}$ such that $\cm{A}_{(1)}=(\bm{A}_1,\bm{A}_2,\ldots, \bm{A}_{T_0})$, i.e. $\bm{A}_j$ is the $j$-th frontal slice of $\cm{A}$. For any index set $S\subseteq \{1,\dots,T_0\}$ with cardinality $|S|=s$, denote by $\cm{A}_S\in\mathbb{R}^{N\times N\times T_0}$ such that its $j$-th frontal slice is $\bm{A}_j$ if $j \in S$, and a zero matrix if $j \notin S$.
Let $\|\cm{A}\|_{\ddagger} = \sum_{j=1}^{T_0}\|\bm{A}_j\|_{\Fr}$, and it holds that
\begin{equation}\label{eq:decomposable}
	\|\cm{A}\|_{\ddagger}=	\|\cm{A}_S\|_{\ddagger}+\|\cm{A}_{S^c}\|_{\ddagger},
\end{equation} 
where $S^c=\{1,\dots, T_0\} \setminus S$.
Moreover, it can be verified that
\begin{equation}\label{eq:Cauchy}
	\|\cm{A}_S\|_{\ddagger}=\sum_{i\in S}\|\bm{A}_i\|_{\Fr}\leq \sqrt{|S|\sum_{i\in S}\|\bm{A}_i\|_{\Fr}^2}\leq \sqrt{s}\|\cm{A}\|_{\Fr}.
\end{equation}

Recall that $\bm{Y}=(\mathbf{y}_T,\mathbf{y}_{T-1},\dots,\mathbf{y}_{T_0+1})\in\mathbb{R}^{N\times T_1}$ and  $\bm{X}=(\mathbf{x}_T,\mathbf{x}_{T-1},\dots,\mathbf{x}_{T_0+1})\in \mathbb{R}^{NT_0\times T_1}$, where
$\mathbf{x}_t=(\mathbf{y}_{t-1}^\prime, \mathbf{y}_{t-2}^\prime, \dots, \mathbf{y}_{t-T_0}^\prime)^\prime$, for $T_0+1\leq t\leq T$. We further denote
\[
\bm{r}_t = \sum_{j=T_0+1}^{\infty}\bm{A}_j^*\mathbf{y}_{t-j}, \hspace{5mm}\bm{R}=(\bm{r}_T,\bm{r}_{T-1},\dots,\bm{r}_{T_0+1})\in\mathbb{R}^{N\times T_1}
\] and
\[
\widetilde{\bm{\varepsilon}}_{t} = \bm{r}_t + \bm{\varepsilon}_t, \hspace{5mm}
\widetilde{\cm{E}} = \bm{R} + \cm{E} = (\widetilde{\bm{\varepsilon}}_T,\widetilde{\bm{\varepsilon}}_{T-1},\dots,\widetilde{\bm{\varepsilon}}_{T_0+1})\in\mathbb{R}^{N\times T_1} 
\]
where 
$\cm{E} = (\bm{\varepsilon}_T,\bm{\varepsilon}_{T-1},\dots,\bm{\varepsilon}_{T_0+1})$.
In addition, we partition $\bm{X}$ into $T_0$ blocks such that $\bm{X}=(\bm{X}_1^\prime, \dots, \bm{X}_{T_0}^\prime)^\prime$, where the $j$-th block is the $N\times T_1$ matrix
\begin{equation}\label{eq:Xj}
	\bm{X}_j=(\mathbf{y}_{T-j}, \mathbf{y}_{T-1-j}, \dots, \mathbf{y}_{T_0+1-j})\in\mathbb{R}^{N\times T_1},\hspace{5mm} j=1,\dots, T_0.
\end{equation}
Let
\begin{equation} \label{eq:sigma_T0}
	\bm{\Sigma}_{T_0} = \mathbb{E}\left (\frac{\bm{X}\bm{X}^\prime}{T_1} \right )= \mathbb{E}(\mathbf{x}_t\mathbf{x}_t^\prime)
	=\left[\begin{matrix}
		\bm{\Gamma}(0)&\bm{\Gamma}^\prime(1)&\cdots&\bm{\Gamma}^\prime(T_0-1)\\
		\bm{\Gamma}(1)&\bm{\Gamma}(0)&\cdots&\bm{\Gamma}^\prime(T_0-2)\\
		\vdots&\vdots&\ddots&\vdots\\
		\bm{\Gamma}(T_0-1)&\bm{\Gamma}(T_0-2)&\cdots&\bm{\Gamma}(0)
	\end{matrix}\right],
\end{equation}
whose $(i,j)$-th block is
\[
\bm{\Gamma}(i-j)=\mathbb{E}\left (\frac{\bm{X}_i\bm{X}_j^\prime}{T_1} \right )=\mathbb{E}(\mathbf{y}_{t-i}\mathbf{y}_{t-j}^\prime)\in\mathbb{R}^{N\times N}, \hspace{5mm} 1\leq i\leq j\leq T_0.
\]



\subsection{Proofs of theoretical results in Section \ref{sec:lass}}\label{appen:sec2}

\renewcommand{\thelemma}{B.\arabic{lemma}}
\setcounter{lemma}{0}

We first give the proofs of Proposition \ref{prop:Aj}, Theorems \ref{prop:main} and \ref{thm:main}, and Corollary \ref{corollary:statistical} in Sections B.2.1-B.2.4, respectively.
Section B.2.5 gives four lemmas used in the proof of Theorem \ref{prop:main}, and two auxiliary lemmas are provided in Section B.2.6.

\subsubsection{Proof of Proposition \ref{prop:Aj}} 

For a certain general linear process, $	\mathbf{y}_t = \bm{\varepsilon}_t+ \sum_{j=1}^{\infty}\bm{\Psi}_j\bm{\varepsilon}_{t-j}$, with fixed $r_1$, $r_2$ and $N$, it is implied by Assumption \ref{assum:glp} that $\sum_{j=0}^{\infty}\|\bm{\Psi}_j\|_{\Fr} < \infty$ and, by directly following (2.2) in \cite{lewis1985prediction} and its discussion, its VAR($\infty$) representation can be uniquely identified below,
\begin{equation}\label{eq:VAR}
	\mathbf{y}_t=\sum_{j=1}^\infty \bm{A}_j\mathbf{y}_{t-j}+\bm{\varepsilon}_t, \hspace{5mm}\text{with}\hspace{5mm}\bm{A}_j = \bm{\Psi}_j - \sum_{k=1}^{j-1}\bm{\Psi}_{j-k}\bm{A}_k,\hspace{2mm}\forall j\geq 1.
\end{equation}
Moreover, it can be easily verified that $\sum_{j=1}^{\infty}\|\bm{A}_j\|_{\op} < \infty$. 

Let $\mathbb{M}_1=\text{colspace}\{\bm{\Psi}_j,j\geq 1\}$ and $\mathbb{M}_2=\text{rowspace}\{\bm{\Psi}_j,j\geq 1\}$ be the column and row spaces of coefficient matrices $\bm{\Psi}_j$'s, respectively. We first show by induction that for all $j\geq 1$,
\begin{equation}\label{eq:A<Psi}
	\text{colspace}(\bm{A}_j) \subseteq \mathbb{M}_1\hspace{4mm}\text{and}\hspace{4mm}\text{rowspace}(\bm{A}_j)\subseteq \mathbb{M}_2.
\end{equation}
Since $\bm{A}_1=\bm{\Phi}_1$, \eqref{eq:A<Psi} holds trivially for $j=1$. Suppose that \eqref{eq:A<Psi} holds for all $j\leq k$, then, by the fact that $\bm{A}_j = \bm{\Psi}_j - \sum_{k=1}^{j-1}\bm{\Psi}_{j-k}\bm{A}_k$, \eqref{eq:A<Psi} also holds for $j=k+1$. And hence the induction is completed. 

In addition, it can also be verified with $\bm{\Psi}_j = \bm{A}_j+\sum_{k=1}^{j-1}\bm{A}_k\bm{\Psi}_{j-k}$ for all $j\geq 1$. By a method similar to \eqref{eq:A<Psi}, we can show that
\[
\mathbb{M}_1 \subseteq \text{colspace}\{\bm{A}_j, j\geq 1\}\hspace{4mm}\text{and}\hspace{4mm}	\mathbb{M}_2 \subseteq\text{rowspace}\{\bm{A}_j, j\geq 1\}.
\]
This hence accomplishes the proof.

\subsubsection{Proof of Theorem \ref{prop:main}}\label{sec:prop_main}
Let $\bm{\widehat{\Delta}}=\cm{\widehat{A}}-\cm{A}^*$. Since $\cm{\widehat{A}}, \cm{A}^*\in \bm{\Theta}(r_1,r_2)$, we can show that $\bm{\widehat{\Delta}} \in \bm{\Theta}(2r_1,2r_2)=\{\cm{A}\in\mathbb{R}^{N\times N\times T_0}\mid \rank(\cm{A}_{(i)})\leq 2r_i, i=1,2\}$. 
By the optimality of $\cm{\widehat{A}}$, we have
\begin{equation}\label{eq:add1}
	\frac{1}{2T_1}\|\bm{Y}-\cm{\widehat{A}}_{(1)}\bm{X}\|_{\Fr}^2 +\lambda\|\cm{\widehat{A}}\|_{\ddagger}
	\leq \frac{1}{2T_1}\|\bm{Y}-\cm{A}^*_{(1)}\bm{X}\|_{\Fr}^2 +\lambda\|\cm{A}^*\|_{\ddagger}.
\end{equation}

Define the event
\begin{equation*}
	\mathcal{E}_1=\left \{ \sup_{\bm{\Delta}\in\bm{\Theta}_\ddagger(2r_1,2r_2)}
	\langle\bm{\Delta}_{(1)}, \frac{1}{T_1}\widetilde{\cm{E}}\bm{X}^\prime
	\rangle \leq \lambda/2\right \},
\end{equation*}
where 
\begin{equation}\label{eq:space1}
	\bm{\Theta}_\ddagger(r_1,r_2) = \{\cm{A}\in\mathbb{R}^{N\times N\times T_0} \mid  \rank(\cm{A}_{(i)})\leq r_i, i=1,2, \|\cm{A}\|_{\ddagger}=1\}.
\end{equation}
It is then implied by \eqref{eq:add1} that, on the event $\mathcal{E}_1$, 
\begin{align} 
	\frac{1}{T_1}\|\bm{\widehat{\Delta}}_{(1)}\bm{X}\|_{\Fr}^2
	&\leq 2 \langle\bm{\widehat{\Delta}}_{(1)}, \frac{1}{T_1}\widetilde{\cm{E}}\bm{X}^\prime\rangle+ 2\lambda \left (\|\cm{A}^*\|_{\ddagger}-\|\cm{\widehat{A}}\|_{\ddagger} \right ) \notag\\
	&\leq 2\|\bm{\widehat{\Delta}}\|_{\ddagger} \sup_{\bm{\Delta}\in\bm{\Theta}_\ddagger(2r_1,2r_2)}
	\langle\bm{\Delta}_{(1)}, \frac{1}{T_1}\widetilde{\cm{E}}\bm{X}^\prime
	\rangle + 2\lambda \left (\|\cm{A}^*\|_{\ddagger}-\|\cm{\widehat{A}}\|_{\ddagger} \right ) \notag\\
	&\leq  \lambda \left \{ \|\bm{\widehat{\Delta}}\|_{\ddagger} + 2\left (\|\cm{A}^*\|_{\ddagger}-\|\cm{\widehat{A}}\|_{\ddagger} \right ) \right \}\notag\\
	&\leq \lambda\left (4 \|\cm{A}^*_{S^c}\|_{\ddagger}+3\|\bm{\widehat{\Delta}}_{S}\|_{\ddagger}-\|\bm{\widehat{\Delta}}_{S^c}\|_{\ddagger} \right ),  \label{eq:basic-eq}
\end{align}
where the  last inequality follows from \eqref{eq:decomposable} and the triangle inequality. 
Moreover, since the left hand side of \eqref{eq:basic-eq} is non-negative, we have $\bm{\widehat{\Delta}}\in\mathbb{C}(S)\cap \bm{\Theta}(2r_1,2r_2)$, where the restricted set $\mathbb{C}(S)$ is defined as
\[
\mathbb{C}(S)=\left \{\bm{\Delta}\in\mathbb{R}^{N\times N\times T_0}\mid \|\bm{\Delta}_{S^c}\|_{\ddagger}\leq 3\|\bm{\Delta}_{S}\|_{\ddagger} + 4\|\cm{A}^{*}_{S^c}\|_{\ddagger} \right \}.
\]
For any $\bm{\Delta}\in\mathbb{C}(S)$, by  the triangle inequality and \eqref{eq:Cauchy}, we have
\begin{equation} \label{eq:frobenius-link}
	\|\bm{\Delta}\|_{\ddagger}^2\leq \left (4\|\bm{\Delta}_{S}\|_{\ddagger} + 4\|\cm{A}^{*}_{S^c}\|_{\ddagger} \right )^2\leq 32s\|\bm{\Delta}\|_{\Fr}^2 + 32\|\cm{A}^{*}_{S^c}\|_{\ddagger}^2.
\end{equation}

On the other hand, let $\mathcal{E}_2$ be the event that the following restricted eigenvalue (RE) condition holds: 
\begin{equation*}
	\frac{1}{T_1}\|\bm{\Delta}_{(1)}\bm{X}\|_{\Fr}^2 \geq \kappa_{\mathrm{RSC}} \|\bm{\Delta}\|_{\Fr}^2 - {\tau}^2\|\bm{\Delta}\|_{\ddagger}^2 \quad\text{for all}\quad \bm{\Delta}\in\mathbb{R}^{N\times N\times T_0},
\end{equation*}
where $\tau^2=C\sqrt{(N+\log T_0)/T_1}$.
Note that, from \eqref{eq:Cauchy} and \eqref{eq:basic-eq},
${T_1}^{-1}\|\bm{\widehat{\Delta}}_{(1)}\bm{X}\|_{\Fr}^2 \leq 4\lambda \|\cm{A}^*_{S^c}\|_{\ddagger}+3\lambda\sqrt{s}\|\bm{\widehat{\Delta}}\|_{\Fr}$.
As a result, from \eqref{eq:frobenius-link} and on the event $\mathcal{E}_1\cap\mathcal{E}_2$,
\begin{equation}\label{eq:qdr}
	4\lambda \|\cm{A}^*_{S^c}\|_{\ddagger}+3\lambda\sqrt{s}\|\bm{\widehat{\Delta}}\|_{\Fr}\geq \left (\kappa_{\mathrm{RSC}}-32{\tau}^2 s \right )\|\bm{\widehat{\Delta}}\|_{\Fr}^2-32{\tau}^2\|\cm{A}^{*}_{S^c}\|_{\ddagger}^2 \geq\frac{\kappa_{\mathrm{RSC}}}{2}\|\bm{\widehat{\Delta}}\|_{\Fr}^2-32{\tau}^2\|\cm{A}^{*}_{S^c}\|_{\ddagger}^2,
\end{equation}
if $s\leq \kappa_{\mathrm{RSC}}/(64 {\tau}^2)$, i.e., as long as
\begin{equation}\label{eq:T1cond}
	T_1 \gtrsim  s^2 (N+\log T_0).
\end{equation}
In view of  \eqref{eq:qdr}, by solving the quadratic function in $\|\bm{\widehat{\Delta}}\|_{\Fr}$  we can show that, on the event $\mathcal{E}_1\cap\mathcal{E}_2$, if  \eqref{eq:T1cond} holds,  then
\begin{equation}\label{eq:Frerror}
	\|\bm{\widehat{\Delta}}\|_{\Fr}^2 \lesssim \kappa_{\mathrm{RSC}}^{-1}(\lambda^2s+ \lambda\|\cm{A}^{*}_{S^c}\|_{\ddagger}+{\tau}^2\|\cm{A}^{*}_{S^c}\|_{\ddagger}^2). 
\end{equation}

Since 
\begin{equation*}
	\sup_{\bm{\Delta}\in\bm{\Theta}_\ddagger(2r_1,2r_2)}\langle\bm{\Delta}_{(1)}, \frac{1}{T_1}\widetilde{\cm{E}}\bm{X}^\prime\rangle
	\leq \sup_{\bm{\Delta}\in\bm{\Theta}_\ddagger(2r_1,2r_2)} \langle\bm{\Delta}_{(1)}, \frac{1}{T_1}\cm{E}{\bm{X}}^\prime\rangle
	+\sup_{\bm{\Delta}\in\bm{\Theta}_\ddagger(2r_1,2r_2)} \langle\bm{\Delta}_{(1)}, \frac{1}{T_1}\bm{R}\bm{X}^\prime\rangle,
\end{equation*}
by Lemmas  \ref{lemma:devbd1} and  \ref{lemma:devbd2}, if $\lambda\gtrsim \sqrt{\{(r_1\wedge r_2)N+\log T_0\}/{T_1}}$, we then have
\begin{equation}\label{eq:devbd}
	\mathbb{P}(\mathcal{E}_1^c) \leq 
	\mathbb{P}\left\{ \sup_{\bm{\Delta}\in\bm{\Theta}_\ddagger(2r_1,2r_2)}
	\langle\bm{\Delta}_{(1)}, \frac{1}{T_1}\widetilde{\cm{E}}\bm{X}^\prime
	\rangle \geq C \sqrt{\frac{(r_1\wedge r_2)N+\log T_0}{T_1}} \right \}  \leq  C e^{-(r_1\wedge r_2)N-\log T_0}.
\end{equation}	
In addition, by Lemma \ref{lemma:RE},
\begin{equation}\label{eq:RE}
	\mathbb{P}(\mathcal{E}_2^c) \leq C e^{-N-\log T_0}.
\end{equation}
Combining  \eqref{eq:Frerror}--\eqref{eq:RE}, we prove the upper bound for $\|\bm{\widehat{\Delta}}\|_{\Fr}$.

Lastly, by \eqref{eq:Cauchy} and \eqref{eq:basic-eq}, we have
\begin{align*} 
	\frac{1}{T_1}\|\bm{\widehat{\Delta}}_{(1)}\bm{X}\|_{\Fr}^2 
	\leq \lambda \left (4 \|\cm{A}^*_{S^c}\|_{\ddagger}+3\sqrt{s}\|\bm{\widehat{\Delta}}\|_{\Fr} \right ) \lesssim  \lambda\|\cm{A}^*_{S^c}\|_{\ddagger} + \lambda^2 s + \|\bm{\widehat{\Delta}}\|_{\Fr}^2,
\end{align*}
since $2\lambda\sqrt{s}\|\bm{\widehat{\Delta}}\|_{\Fr}\leq \lambda^2 s + \|\bm{\widehat{\Delta}}\|_{\Fr}^2$.
Combining this with \eqref{eq:Frerror}--\eqref{eq:RE} and the condition in  \eqref{eq:T1cond}, we can similarly prove the upper bound for ${T_1}^{-1}\|\bm{\widehat{\Delta}}_{(1)}\bm{X}\|_{\Fr}^2$.

\subsubsection{Proof of Theorem \ref{thm:main}}

By Assumptions \ref{assum:Adecay} \& \ref{assum:T0} and the low-rank conditions at \eqref{eq:VAR-low-rank},
\begin{equation}\label{eq:e_trunc1}
	e_\trunc =  \sum_{j=T_0+1}^{\infty}\|\bm{A}_j^*\|^2_{\Fr} \leq  \frac{(r_1\wedge r_2)\rho^{2T_0}}{1-\rho} \lesssim \frac{r_1\wedge r_2}{T_1^4},
\end{equation}
which is clearly dominated by $\|\bm{\widehat{\Delta}}\|_{\Fr}^2$ in \eqref{eq:Frerror} under the  condition on $\lambda$.
Therefore, combining  \eqref{eq:Frerror} and \eqref{eq:e_trunc1}, we have that,
with probability at least $1-Ce^{-(r_1\wedge r_2)N-\log T_0}$,
\begin{equation}\label{eq:e_est}
	e_{\est}(\cm{\widehat{A}}_{\infty}) = \|\cm{\widehat{A}}_{\infty} - \cm{A}^*_{\infty}\|_{\Fr}^2 = \|\bm{\widehat{\Delta}}\|_{\Fr}^2 + e_\trunc 
	\lesssim  \kappa_{\mathrm{RSC}}^{-1}(\lambda^2s+\lambda\|\cm{A}^{*}_{S^c}\|_{\ddagger}+{\tau}^2\|\cm{A}^{*}_{S^c}\|_{\ddagger}^2).
\end{equation}

Moreover, by \eqref{eq:Cauchy} and \eqref{eq:basic-eq}, we have with probability at least $1-Ce^{-(r_1\wedge r_2)N-\log T_0}$,
\begin{align} \label{eq:epred1}
	\begin{split}
		e_{\pred}(\cm{\widehat{A}}_{\infty})
		&= \frac{1}{T_1}\sum_{t=T_0+1}^{T}\|\bm{\widehat{\Delta}}_{(1)}\mathbf{x}_t + \bm{r}_t\|_2^2 \\
		& =
		\frac{1}{T_1}\|\bm{\widehat{\Delta}}_{(1)}\bm{X}\|_{\Fr}^2 
		+ 
		\frac{2}{T_1}\langle\bm{\widehat{\Delta}}_{(1)}\bm{X}, \bm{R}\rangle  + \frac{1}{T_1}\|\bm{R}\|_{\Fr}^2 \\
		& \leq \frac{1}{T_1}\|\bm{\widehat{\Delta}}_{(1)}\bm{X}\|_{\Fr}^2  +  2\|\bm{\widehat{\Delta}}\|_{\Fr}\sup_{\bm{\Delta}\in\bm{\Theta}_\ddagger(2r_1,2r_2)}\langle\bm{\Delta}_{(1)}, \frac{1}{T_1}\bm{R}\bm{X}^\prime
		\rangle + \frac{1}{T_1}\|\bm{R}\|_{\Fr}^2\\
		&\lesssim  \kappa_{\mathrm{RSC}}^{-1}(\lambda^2s+\lambda\|\cm{A}^{*}_{S^c}\|_{\ddagger}+{\tau}^2\|\cm{A}^{*}_{S^c}\|_{\ddagger}^2),
	\end{split}
\end{align}
where the last inequality uses Theorem \ref{prop:main}, Lemmas \ref{lemma:devbd2} and \ref{lemma:errtrunc1}.

Next, we characterize the optimal bounds for $e_{\est}(\cm{\widehat{A}}_{\infty})$ in \eqref{eq:e_est} and $e_{\pred}(\cm{\widehat{A}}_{\infty})$ in \eqref{eq:epred1}. Consider a family of subsets indexed by a threshold $\gamma>0$:
$S_\gamma = \left \{j\in\{1, \dots, T_0\}\mid\| \bm{A}_j^*\|_\text{F} > \gamma \right \}$,
and let $S_\gamma^c=\{1,\dots,T_0\}\setminus S_\gamma$.  Note that 
under Assumption \ref{assum:Adecay} and the low-rank conditions at \eqref{eq:trunc_tensor}, $\|\bm{A}_j^*\|_\Fr\leq C\sqrt{r_1\wedge r_2}\rho^j$ for $j\geq 1$. 
Let $Q_\gamma$ be the smallest integer such that $C\sqrt{r_1\wedge r_2}\rho^j\leq \gamma$ for all $j\geq Q_\gamma$. Then,
\begin{equation}\label{eq:Q}
	Q_\gamma = \left \lceil \frac{\log(C\sqrt{r_1\wedge r_2}/\gamma)}{\log(1/\rho)} \right \rceil,
\end{equation}
where $\lceil\cdot\rceil$ is the ceiling function. Moreover, since $C\sqrt{r_1\wedge r_2}\rho^{Q_\gamma}\leq \gamma$, we have	
\begin{align}\label{eq:approx_bound}
	\|\cm{A}^{*}_{S_\gamma^c}\|_{\ddagger} = \sum_{j\in S_\gamma^c\cap \{1,\dots, Q_\gamma\}}\|\bm{A}_j^*\|_{\Fr} + \sum_{j=Q_\gamma+1}^{T_0}\|\bm{A}_j^*\|_{\Fr}
	\leq \gamma Q_\gamma + \sum_{j=Q_\gamma+1}^{\infty}C\sqrt{r_1\wedge r_2}\rho^j
	\lesssim \gamma Q_\gamma,
\end{align}
as long as $Q_\gamma\geq c$ for some absolute constant $c>0$.
Since  $\|\bm{A}_j^*\|_\Fr> \gamma \geq C\sqrt{N}\rho^{Q_\gamma}$ for any $j\in S_\gamma$, we have $S_\gamma\subseteq \{1,\dots, Q_\gamma\}$, and hence $|S_\gamma|\leq Q_\gamma$. Then, the upper bounds of $\lambda^2 |S_\gamma|$ and $\lambda\|\cm{A}^{*}_{S_\gamma^c}\|_{\ddagger}$, i.e., 
$\lambda^2 Q_\gamma$ and $\lambda  \gamma Q_\gamma$, are balanced when
$\gamma \asymp \lambda$. With this choice of $\gamma$ and the rate of $\lambda$ specified in the theorem, by \eqref{eq:Q}, we can show that 
\[
Q_\gamma \lesssim   \frac{\log T_1}{\log(1/\rho)}
\]
and $\tau^2 Q_\gamma \leq C$. Thus,  ${\tau}^2\|\cm{A}_{S_{\gamma}^c}^*\|_{\ddagger}^2$ is dominated by $\lambda\|\cm{A}_{S_{\gamma}^c}^*\|_{\ddagger}$. From \eqref{eq:e_est}, $e_{\est}(\cm{\widehat{A}}_{\infty})\lesssim \lambda^2 Q_\gamma$. In addition, substituting the upper bound of $|S_\gamma|$, i.e.,  $\log T_1/\log(1/\rho)$, into the sample size condition in Theorem \ref{prop:main}, we have \eqref{eq:thm1}. As a result, the upper bound for $e_{\est}(\cm{\widehat{A}}_{\infty})$ in this theorem holds.  Finally, the result for $e_{\pred}(\cm{\widehat{A}}_{\infty})$ can be proved by a similar method.

\subsubsection{Proof of Corollary \ref{corollary:statistical}}

This proof largely follows from the proof of Theorem \ref{prop:main} in Section \ref{sec:prop_main}. For the finite-order AR model with a fixed order $T_0$, $\bm{A}_j^* = \bm{0}$ for all $j\geq T_0+1$ leading to $\bm{R} = \bm{0}$ and $\cm{\widetilde{E}} = \cm{E}$. Subsequently, the event $\mathcal{E}_1$ becomes 
\[
\mathcal{E}_1=\left \{ \sup_{\bm{\Delta}\in\bm{\Theta}_\ddagger(2r_1,2r_2)}
\langle\bm{\Delta}_{(1)}, \frac{1}{T_1}\cm{E}\bm{X}^\prime
\rangle \leq \lambda/2\right \},
\]
and its probability can be established directly from Lemma \ref{lemma:devbd1}. Meanwhile, the other intermediate steps are the same as in the proof of Theorem \ref{prop:main} in Section \ref{sec:prop_main}.

\subsubsection{Four lemmas used in the proof of Theorem \ref{prop:main}}

We first state the four lemmas and then give their technical proofs.
\begin{lemma}\label{lemma:devbd1}
	Suppose that Assumptions \ref{assum:glp} -- \ref{assum:error} hold.  If $T_1\gtrsim (r_1 \wedge r_2)N + \log T_0$, then
	\begin{equation*}
		\mathbb{P}\left\{ \sup_{\bm{\Delta}\in\bm{\Theta}_\ddagger(2r_1,2r_2)} \langle\bm{\Delta}_{(1)}, \frac{1}{T_1}\cm{E}{\bm{X}}^\prime\rangle \geq C \sqrt{\frac{(r_1 \wedge r_2)N + \log T_0}{T_1}}\right\}\leq C e^{ - (r_1 \wedge r_2)N - \log T_0}.
	\end{equation*}
\end{lemma}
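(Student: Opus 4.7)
The plan is to reduce the supremum over low-Tucker-rank, unit $\ddagger$-norm tensors to a block-wise maximum of inner products against low-rank test matrices, and then apply an $\varepsilon$-net argument combined with conditional sub-Gaussian concentration and a union bound over the $T_0$ lag blocks.

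First, any $\bm{\Delta}\in\bm{\Theta}_\ddagger(2r_1,2r_2)$ admits an orthonormal Tucker decomposition $\bm{\Delta}=\cm{G}\times_1\bm{U}_1\times_2\bm{U}_2$ with $\bm{U}_i\in\mathcal{O}^{N\times 2r_i}$ and frontal slices $\bm{G}_j$ of $\cm{G}$ satisfying $\sum_j\|\bm{G}_j\|_\Fr=\sum_j\|\bm{\Delta}_j\|_\Fr=1$. Writing $M_j=T_1^{-1}\cm{E}\bm{X}_j^\prime$ and splitting $\bm{X}=(\bm{X}_1^\prime,\ldots,\bm{X}_{T_0}^\prime)^\prime$ block-wise, one has $\langle\bm{\Delta}_{(1)},T_1^{-1}\cm{E}\bm{X}^\prime\rangle=\sum_{j=1}^{T_0}\langle\bm{G}_j,\bm{U}_1^\prime M_j\bm{U}_2\rangle\le\max_{1\le j\le T_0}\|\bm{U}_1^\prime M_j\bm{U}_2\|_\Fr$. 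Absorbing the sup over $(\bm{U}_1,\bm{U}_2)$ collapses the quantity of interest to $\max_{1\le j\le T_0}\sup_{W\in\bm{\Theta}_\Fr(2(r_1\wedge r_2))}\langle W,M_j\rangle$, i.e.\ a maximum of linear functionals against $T_0$ independent-signal blocks over a low-rank sphere.

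Next, I would cover $\bm{\Theta}_\Fr(2(r_1\wedge r_2))$ by a $1/4$-net $\mathcal{N}$ of cardinality $|\mathcal{N}|\le 9^{c(r_1\wedge r_2)N}$ via the standard SVD-based parameter count for low-rank matrices, so that $\sup_W\langle W,M_j\rangle\le 2\max_{W\in\mathcal{N}}\langle W,M_j\rangle$. For fixed $W\in\mathcal{N}$ and $j$, causality of the linear process gives $\bm{\varepsilon}_t\perp\mathbf{y}_{t-j}$ whenever $j\ge 1$, so $\langle W,M_j\rangle=T_1^{-1}\sum_t\bm{\varepsilon}_t^\prime W\mathbf{y}_{t-j}$ is, conditional on $\{\mathbf{y}_{t-j}\}_t$, sub-Gaussian in the innovations with variance proxy $\sigma^2 T_1^{-2}\mathrm{tr}(W^\prime W\bm{X}_j\bm{X}_j^\prime)$. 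Under Assumptions \ref{assum:glp}--\ref{assum:error}, this quadratic form concentrates around $\|W\|_\Fr^2\,\kappa_{\mathrm{RSS}}=\kappa_{\mathrm{RSS}}$ by spectral-density deviation bounds of the Basu--Michailidis type, yielding $\mathbb{P}(\langle W,M_j\rangle\ge\tau)\le 2\exp(-c\sigma^{-2}\kappa_{\mathrm{RSS}}^{-1}T_1\tau^2)$ in the Gaussian regime secured by $T_1\gtrsim(r_1\wedge r_2)N+\log T_0$. A union bound over $W\in\mathcal{N}$ and $j=1,\ldots,T_0$ at level $\tau=C\sqrt{\{(r_1\wedge r_2)N+\log T_0\}/T_1}$ then balances the $9^{c(r_1\wedge r_2)N}\cdot T_0$ cardinality against the exponential tail and delivers the claimed probability bound $Ce^{-(r_1\wedge r_2)N-\log T_0}$.

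The delicate piece is the uniform control of the conditional variance proxy $T_1^{-1}\mathrm{tr}(W^\prime W\bm{X}_j\bm{X}_j^\prime)$ over $W\in\mathcal{N}$, which requires a Hanson--Wright-type deviation for a quadratic form in a serially dependent sub-Gaussian process. I expect to route this through the spectral-density machinery of \cite{basu2015regularized}, with $\kappa_{\mathrm{RSS}}=\lambda_{\max}(\bm{\Sigma}_\varepsilon)\mu_{\max}(\bm{\Psi}_*)$ providing the correct scale; alternatively, Assumption \ref{assum:Adecay} allows truncating the MA expansion of $\mathbf{y}_t$ to $O(\log T_1)$ terms and invoking Hanson--Wright for independent sub-Gaussians, with the truncation remainder absorbed by Assumption \ref{assum:T0}.
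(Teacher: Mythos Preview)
Your skeleton matches the paper's: both reduce via the $\ddagger$-norm duality to $\max_{j}\sup_{W\in\bm{\Theta}_\Fr(2r_{\min})}\langle W,T_1^{-1}\cm{E}\bm{X}_j^\prime\rangle$, cover $\bm{\Theta}_\Fr(2r_{\min})$ by a net of log-cardinality $O(r_{\min}N)$, and union-bound over the net and the $T_0$ lags. The difference lies in the pointwise tail for fixed $(W,j)$. The paper does \emph{not} condition on $\bm{X}_j$; instead it expands $\mathbf{y}_{t-j}=\sum_{\ell\ge 0}\bm{\Psi}_\ell^*\bm{\Sigma}_\varepsilon^{1/2}\bm{\xi}_{t-j-\ell}$ fully, writes $\langle W,\cm{E}\bm{X}_j^\prime\rangle=\sum_{\ell\ge 0}\sum_t\langle\widetilde{\bm{M}}_\ell\bm{\xi}_{t-j-\ell},\bm{\xi}_t\rangle$ with $\|\widetilde{\bm{M}}_\ell\|_\Fr\lesssim\rho^\ell$, applies a martingale-plus-Hanson--Wright bound (their Lemma~\ref{lemma:xi_dev}, which only ever touches \emph{independent} $\bm{\xi}$'s) to each $\ell$ at a confidence level $\delta_\ell=\exp\{-c\rho^{-(\ell+1)/2}\widecheck{D}_{\mathcal{M}}\}$ chosen so that both $\sum_\ell\delta_\ell$ and $\sum_\ell\rho^{\ell}\{\log(1/\delta_\ell)+\sqrt{T_1\log(1/\delta_\ell)}\}$ are summable, and then adds up. Your route keeps the dependent $\mathbf{y}_{t-j}$ intact and pushes the MA expansion into the variance-control step; this is conceptually cleaner but, as you correctly flag, forces a Hanson--Wright inequality for a quadratic form in a serially dependent sub-Gaussian vector, whereas the paper only ever needs Hanson--Wright on i.i.d.\ coordinates.

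One technical correction: the phrase ``conditional on $\{\mathbf{y}_{t-j}\}_t$'' is not right as stated, because that $\sigma$-field already contains $\bm{\varepsilon}_t$ for every $t\le T-j$, so most of the innovations in your sum are no longer fresh after conditioning. What you actually want is the martingale structure: $\mathbf{y}_{t-j}$ is $\mathcal{F}_{t-1}$-measurable and $\bm{\varepsilon}_t\mid\mathcal{F}_{t-1}$ is mean-zero sub-Gaussian, so a Freedman/Azuma-type bound (the paper's Lemma~\ref{lemma:martgl}) delivers the same tail with predictable quadratic variation $\sigma^2\sum_t\|\bm{\Sigma}_\varepsilon^{1/2}W\mathbf{y}_{t-j}\|_2^2$. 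With that fix your argument goes through. Also, Lemma~\ref{lemma:devbd1} is stated under Assumptions~\ref{assum:glp}--\ref{assum:error} only; Assumption~\ref{assum:T0} is not available here, so your truncation alternative should rely solely on the exponential decay in Assumption~\ref{assum:Adecay}.
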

\begin{lemma}\label{lemma:devbd2}
	Suppose that Assumptions \ref{assum:glp} -- \ref{assum:T0} hold. If $T_1\gtrsim (r_1 \wedge r_2)N + \log T_0$, then
	\begin{equation*}
		\mathbb{P}\left\{\sup_{\bm{\Delta}\in\bm{\Theta}_\ddagger(2r_1,2r_2)}\langle\bm{\Delta}_{(1)},\frac{1}{T_1}\bm{R}\bm{X}^\prime\rangle \geq 
		C \frac{ \sqrt{N} }{T_1} \right\}
		\leq C e^{ - (r_1 \wedge r_2)N - \log T_0}.
	\end{equation*}
\end{lemma}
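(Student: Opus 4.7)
The plan is to reduce the supremum of the bilinear form to a quantity of the form $\max_{1\le j\le T_0}\|\bm{R}\bm{X}_j^\prime\|$ in an appropriate norm, and then to exploit the strong decay of the truncation residual $\bm{R}$ that is guaranteed by Assumptions~\ref{assum:Adecay} and \ref{assum:T0}. Concretely, write $\bm{\Delta}_{(1)}=(\bm{D}_1,\ldots,\bm{D}_{T_0})$ with $\bm{R}\bm{X}^\prime = (\bm{R}\bm{X}_1^\prime,\ldots,\bm{R}\bm{X}_{T_0}^\prime)$ so that
\[
\Bigl\langle\bm{\Delta}_{(1)},\tfrac{1}{T_1}\bm{R}\bm{X}^\prime\Bigr\rangle = \tfrac{1}{T_1}\sum_{j=1}^{T_0}\langle \bm{D}_j,\bm{R}\bm{X}_j^\prime\rangle.
\]
Each $\bm{D}_j$ inherits rank at most $2(r_1\wedge r_2)$ from the Tucker-rank constraint on $\bm{\Delta}$, so $|\langle \bm{D}_j,\bm{R}\bm{X}_j^\prime\rangle|\le \sqrt{2(r_1\wedge r_2)}\,\|\bm{D}_j\|_{\Fr}\|\bm{R}\bm{X}_j^\prime\|_{\op}$. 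Summing over $j$ and using $\sum_j\|\bm{D}_j\|_{\Fr}=\|\bm{\Delta}\|_{\ddagger}=1$ yields the deterministic reduction
\[
\sup_{\bm{\Delta}\in\bm{\Theta}_\ddagger(2r_1,2r_2)}\Bigl\langle\bm{\Delta}_{(1)},\tfrac{1}{T_1}\bm{R}\bm{X}^\prime\Bigr\rangle \;\le\; \sqrt{2(r_1\wedge r_2)}\,\max_{1\le j\le T_0}\tfrac{1}{T_1}\|\bm{R}\bm{X}_j^\prime\|_{\op}.
\]

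Next I would control the truncation residual $\bm{R}$ globally via $\|\bm{R}\bm{X}_j^\prime\|_{\op}\le \|\bm{R}\|_{\Fr}\|\bm{X}_j\|_{\op}$. For $\|\bm{R}\|_{\Fr}$, Assumption~\ref{assum:Adecay} gives $\|\bm{A}_k^*\|_{\op}\lesssim \rho^k$, and together with the stationarity $E\|\mathbf{y}_s\|_2^2\lesssim N$ and Minkowski's inequality one obtains $E\|\bm{r}_t\|_2\lesssim\sqrt{N}\sum_{k>T_0}\rho^k\lesssim\sqrt{N}\rho^{T_0}$; a standard sub-Gaussian concentration argument for the linear functional $\sum_{k>T_0}\bm{A}_k^*\bm{\varepsilon}_{t-k-\cdot}$ upgrades this to a high-probability bound of the same order. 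Summing $\|\bm{r}_t\|_2^2$ over $t$ and invoking Assumption~\ref{assum:T0} (which gives $\rho^{T_0}\lesssim T_1^{-2}$) produces $\|\bm{R}\|_{\Fr}\lesssim \sqrt{N}/T_1^{3/2}$ with probability at least $1-Ce^{-N}$.

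For $\|\bm{X}_j\|_{\op}$, the matrix $\bm{X}_j\bm{X}_j^\prime/T_1$ is the sample second-moment matrix of a stationary sub-Gaussian process with operator norm $\|\bm{\Gamma}(0)\|_{\op}$ bounded by $\kappa_{\mathrm{RSS}}$, so the concentration inequalities for stationary sub-Gaussian time series already used in the restricted strong smoothness argument give $\|\bm{X}_j\|_{\op}\lesssim \sqrt{T_1}$ on an event of probability at least $1-Ce^{-N}$, provided $T_1\gtrsim N$. Taking a union bound over $j=1,\ldots,T_0$ contributes the $\log T_0$ factor in the exponent. Combining,
\[
\max_j\tfrac{1}{T_1}\|\bm{R}\bm{X}_j^\prime\|_{\op}\lesssim \tfrac{1}{T_1}\cdot\tfrac{\sqrt{N}}{T_1^{3/2}}\cdot\sqrt{T_1}\;=\;\tfrac{\sqrt{N}}{T_1^{2}},
\]
which, after multiplication by $\sqrt{2(r_1\wedge r_2)}\le \sqrt{2N}$, is of the claimed order $\sqrt{N}/T_1$ (in fact strictly smaller, which is why a single absolute constant $C$ suffices). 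The advertised probability $1-Ce^{-(r_1\wedge r_2)N-\log T_0}$ is then obtained by matching constants with those in Lemma~\ref{lemma:devbd1} so that the two events in the proof of Theorem~\ref{prop:main} can be intersected cleanly.

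The main technical obstacle is the simultaneous control of $\|\bm{r}_t\|_2$ across all $t\in\{T_0+1,\ldots,T\}$: a naive pointwise union bound loses factors, so the argument should instead bound $\|\bm{R}\|_{\Fr}^2=\sum_t\|\bm{r}_t\|_2^2$ directly as a quadratic form in the noise $\{\bm{\varepsilon}_s\}_{s\le T}$ whose coefficients decay at rate $\rho^{T_0}$, and then apply a Hanson--Wright–type inequality to this single quadratic form. Everything else is routine once the reduction to $\|\bm{R}\|_{\Fr}$ and $\|\bm{X}_j\|_{\op}$ is in place.
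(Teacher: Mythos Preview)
Your route is genuinely different from the paper's and is essentially correct, but two of your ``routine'' steps deserve more care.

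The paper does not factor $\|\bm{R}\bm{X}_j^\prime\|_{\op}$ at all. Instead it discretizes $\bm{\Theta}_\ddagger(2r_1,2r_2)$ via a $1/2$-net of $\bm{\Theta}_{\Fr}(2r_{\min})$ (so the covering number is $e^{Cr_{\min}N}$, not $e^{CN}$), and for each fixed $\bm{M}$ and lag $k$ expands $\langle\bm{M},\bm{R}\bm{X}_k^\prime\rangle=\sum_{i,j,\ell}\sum_t\langle\widetilde{\bm{M}}_{i,j,\ell}\bm{\xi}_{t-k-\ell},\bm{\xi}_{t-j-i}\rangle$ through the MA representation, then applies a martingale/Hanson--Wright bound to each $(i,j,\ell)$-term with individually tuned $\delta_{i,j,\ell}$ that decay geometrically. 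Your approach is more modular: once you invoke the paper's Lemma~\ref{lemma:errtrunc1} for $\|\bm{R}\|_{\Fr}$ and Lemma~\ref{lemma:RE1} for $\max_j\|\bm{X}_j\bm{X}_j^\prime/T_1-\bm{\Gamma}(0)\|_{\op}$, the arithmetic closes immediately, and you even get spare powers of $T_1$. The cost is twofold. First, the ``single Hanson--Wright on $\|\bm{R}\|_{\Fr}^2$'' is a quadratic form in \emph{infinitely} many sub-Gaussians; the paper handles this carefully in Lemma~\ref{lemma:errtrunc1} by the same term-by-term summation technique, so you should cite that rather than call it routine. Second, and more importantly, your factored bound passes through $\|\bm{X}_j\|_{\op}$, whose control requires a covering of all of $\mathcal{S}^{N-1}$; the natural tail you obtain is $Ce^{-N-\log T_0}$, which is \emph{weaker} than the lemma's stated $Ce^{-(r_1\wedge r_2)N-\log T_0}$ whenever $r_1\wedge r_2>1$. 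This is not a matter of ``matching constants'': to recover the sharper exponent you must inflate the per-element deviation threshold so that the pointwise tail beats $e^{-CN-(r_1\wedge r_2)N-\log T_0}$, and then check that the resulting deviation is still $O(1)$ under the hypothesis $T_1\gtrsim (r_1\wedge r_2)N+\log T_0$ (it is, since $r_1\wedge r_2\ge 1$ forces $T_1\gtrsim N$). The paper avoids this detour entirely because its covering lives on rank-$2r_{\min}$ matrices from the start.
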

\begin{lemma}[RE condition]\label{lemma:RE}
	Suppose that Assumptions \ref{assum:glp} -- \ref{assum:error} hold.  If $T_1\gtrsim N + \log T_0$, then with probability at least
	$1-Ce^{ - N - \log T_0}$,
	the following RE condition is satisfied:
	\begin{equation*}
		\frac{1}{T_1}\|\bm{\Delta}_{(1)}\bm{X}\|_{\Fr}^2 \geq \kappa_{\mathrm{RSC}}\|\bm{\Delta}\|_{\Fr}^2 - \tau^2\|\bm{\Delta}\|_{\ddagger}^2 \quad\text{for all}\quad \bm{\Delta}\in\mathbb{R}^{N\times N\times T_0},
	\end{equation*}
	where $\kappa_{\mathrm{RSC}}=\lambda_{\min}(\bm{\Sigma}_\varepsilon)\mu_{\min}(\bm{\Psi}_*)$ and
	$\tau^2=C \sqrt{ (N + \log T_0) /T_1}$.
\end{lemma}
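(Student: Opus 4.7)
The strategy splits into (i) a population-level lower bound on the covariance via spectral density, (ii) a uniform stochastic deviation bound over $\|\cdot\|_\ddagger$-balls, and (iii) a peeling argument that lifts the bound to all of $\mathbb{R}^{N\times N\times T_0}$.

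First I would handle the expectation. Writing $\mathbb{E}[T_1^{-1}\|\bm{\Delta}_{(1)}\bm{X}\|_\Fr^2] = \trace(\bm{\Delta}_{(1)}\bm{\Sigma}_{T_0}\bm{\Delta}_{(1)}^\prime) \geq \lambda_{\min}(\bm{\Sigma}_{T_0})\|\bm{\Delta}\|_\Fr^2$, the task reduces to showing $\lambda_{\min}(\bm{\Sigma}_{T_0}) \geq \kappa_{\mathrm{RSC}}$. This is the standard spectral-density argument (as in Proposition 2.3 of Basu and Michailidis, 2015): under Assumption \ref{assum:glp} the process has spectral density $f_Y(\omega) = (2\pi)^{-1}\bm{\Psi}_*(e^{-i\omega})\bm{\Sigma}_\varepsilon\bm{\Psi}_*^{\HH}(e^{-i\omega})$, and the block-Toeplitz identity $\bm{\Sigma}_{T_0}=\int_{-\pi}^{\pi} F(\omega)\otimes f_Y(\omega)\,d\omega$ with $F(\omega)$ the Fejer-type kernel on $\{1,\dots,T_0\}$ yields $\lambda_{\min}(\bm{\Sigma}_{T_0}) \geq 2\pi\,\mathrm{essinf}_{\omega}\lambda_{\min}(f_Y(\omega)) \geq \lambda_{\min}(\bm{\Sigma}_\varepsilon)\mu_{\min}(\bm{\Psi}_*) = \kappa_{\mathrm{RSC}}$.

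Next I would control the stochastic fluctuation uniformly on the set $\mathcal{K}(R)=\{\bm{\Delta}:\|\bm{\Delta}\|_\Fr\leq 1,\;\|\bm{\Delta}\|_\ddagger\leq R\}$. For a fixed $\bm{\Delta}$, the representation $\mathbf{y}_t=\bm{\Sigma}_\varepsilon^{1/2}\sum_{j\geq 0}\bm{\Psi}_j^*\bm{\xi}_{t-j}$ from Assumption \ref{assum:error} expresses $T_1^{-1}\|\bm{\Delta}_{(1)}\bm{X}\|_\Fr^2$ as a quadratic form in the i.i.d.\ sub-Gaussian sequence $\{\bm{\xi}_s\}$ whose associated matrix has operator norm $\lesssim \kappa_{\mathrm{RSS}}$ by the spectral-density upper bound (the mirror argument to step one). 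A Hanson-Wright inequality then gives sub-exponential deviations at scale $\kappa_{\mathrm{RSS}}/\sqrt{T_1}$. To take a supremum over $\mathcal{K}(R)$ without paying the ambient dimension $N^2 T_0$, I would apply a Maurey-type dimension-free cover: every element of $\mathcal{K}(R)$ is within $\epsilon$ (in Frobenius norm) of a convex combination of at most $O(R^2/\epsilon^2)$ atoms drawn from the set of unit-Frobenius, single-slice tensors of which there are effectively $T_0$ choices of block plus an $\epsilon$-net of the unit $N\times N$ Frobenius ball. A standard chaining/union bound then delivers
\[
\sup_{\bm{\Delta}\in\mathcal{K}(R)}\Bigl|T_1^{-1}\|\bm{\Delta}_{(1)}\bm{X}\|_\Fr^2 - \mathbb{E}[T_1^{-1}\|\bm{\Delta}_{(1)}\bm{X}\|_\Fr^2]\Bigr| \;\lesssim\; R^2\sqrt{\frac{N+\log T_0}{T_1}}
\]
on an event of probability at least $1-Ce^{-N-\log T_0}$, provided $T_1\gtrsim N+\log T_0$.

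Finally, a peeling argument in the spirit of Raskutti, Wainwright and Yu (2010) partitions $\{\bm{\Delta}:\|\bm{\Delta}\|_\Fr=1\}$ into dyadic shells $\mathcal{S}_\ell=\{2^{\ell-1}<\|\bm{\Delta}\|_\ddagger\leq 2^{\ell}\}$ for $\ell=0,1,\dots,L$ with $L\asymp \log(NT_0)$; beyond $L$ the trivial bound $\|\bm{\Delta}\|_\ddagger\leq \sqrt{T_0}$ finishes the argument. Applying the previous step with $R=2^{\ell}$ and taking a union bound over $\ell$ (an additional factor $\log(NT_0)$ absorbed into the constant $C$) shows that, on a single event of the advertised probability, every unit-Frobenius $\bm{\Delta}$ satisfies $T_1^{-1}\|\bm{\Delta}_{(1)}\bm{X}\|_\Fr^2 \geq \kappa_{\mathrm{RSC}} - C\sqrt{(N+\log T_0)/T_1}\,\|\bm{\Delta}\|_\ddagger^2$. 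Homogeneity (scaling $\bm{\Delta}\mapsto \bm{\Delta}/\|\bm{\Delta}\|_\Fr$) gives the claimed inequality for all $\bm{\Delta}\in\mathbb{R}^{N\times N\times T_0}$.

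The main obstacle is the dimension-free covering step: a naive $\epsilon$-net of $\mathcal{K}(R)$ in the ambient space $\mathbb{R}^{N\times N\times T_0}$ has log-cardinality $O(N^2 T_0\log(1/\epsilon))$ and would inflate the deviation to $\sqrt{N^2T_0/T_1}$, overwhelming the population bound $\kappa_{\mathrm{RSC}}$. The Maurey-type covering that exploits the group structure of $\|\cdot\|_\ddagger$ is what forces the rate $\tau^2\asymp\sqrt{(N+\log T_0)/T_1}$; verifying the convex-combination approximation rate and matching it to the Hanson-Wright tail (which behaves differently in its sub-Gaussian and sub-exponential regimes because of the quadratic dependence on $\|\bm{\Delta}\|_\Fr$) is the delicate part. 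The effect of truncation (non-zero $\bm{r}_t$'s) is only a lower-order perturbation here because Assumption \ref{assum:T0} makes $T_1^{-1}\|\bm{R}\|_\Fr^2$ negligible.
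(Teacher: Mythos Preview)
Your population step (i) is fine and matches the paper exactly. The difference --- and the gap --- is in steps (ii)--(iii).

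The paper does \emph{not} cover the tensor space at all. Instead it uses a purely deterministic block-H\"older inequality: writing $\bm{\Delta}_{(1)}=(\bm{\Delta}_1,\dots,\bm{\Delta}_{T_0})$ and $\bm{X}=(\bm{X}_1^\prime,\dots,\bm{X}_{T_0}^\prime)^\prime$,
\[
\Bigl|\tfrac{1}{T_1}\|\bm{\Delta}_{(1)}\bm{X}\|_\Fr^2-\mathbb{E}\bigl[\tfrac{1}{T_1}\|\bm{\Delta}_{(1)}\bm{X}\|_\Fr^2\bigr]\Bigr|
=\Bigl|\sum_{i,j}\trace\bigl[\bm{\Delta}_i\bigl(\tfrac{\bm{X}_i\bm{X}_j^\prime}{T_1}-\bm{\Gamma}(i-j)\bigr)\bm{\Delta}_j^\prime\bigr]\Bigr|
\le \|\bm{\Delta}\|_\ddagger^2\,\max_{i,j}\Bigl\|\tfrac{\bm{X}_i\bm{X}_j^\prime}{T_1}-\bm{\Gamma}(i-j)\Bigr\|_{\op},
\]
which holds for \emph{every} $\bm{\Delta}$ simultaneously. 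The only random quantity left is the scalar $\max_{i,j}\|\cdot\|_{\op}$, controlled (Lemma~\ref{lemma:RE1}) by a $1/4$-net of $\mathcal{S}^{N-1}$ plus a union bound over $T_0^2$ pairs, yielding exactly $\tau^2\asymp\sqrt{(N+\log T_0)/T_1}$. No Maurey, no peeling.

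Your Maurey route, as written, does not recover the $N$ rate. Your atom set is ``$T_0$ choices of block plus an $\epsilon$-net of the unit $N\times N$ Frobenius ball'', but the latter has log-cardinality of order $N^2\log(1/\epsilon)$, not $N$. After Maurey, the log-covering number of $\mathcal{K}(R)$ is $\sim (R^2/\epsilon^2)(N^2+\log T_0)$, and chaining/union-bounding against Hanson--Wright then gives a deviation $\asymp R^2\sqrt{(N^2+\log T_0)/T_1}$, i.e.\ $\tau^2\asymp\sqrt{(N^2+\log T_0)/T_1}$ --- too large by a factor $\sqrt{N}$. The ``delicate part'' you flag is genuinely fatal here: covering the Frobenius unit ball of $N\times N$ matrices cannot by itself exploit the fact that only the \emph{operator} norm of each block $\bm{X}_i\bm{X}_j^\prime/T_1-\bm{\Gamma}(i-j)$ matters. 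The paper's deterministic factorization is precisely what converts the problem from Frobenius-ball complexity ($N^2$) to operator-norm complexity ($N$); once you see it, no peeling is needed either, since the bound is already uniform in $\bm{\Delta}$. (Minor aside: the lemma uses only Assumptions~\ref{assum:glp}--\ref{assum:error}; the truncation term $\bm{r}_t$ does not enter at all because $\bm{X}$ is built from the full observations $\mathbf{y}_t$, so your closing remark about Assumption~\ref{assum:T0} is extraneous.)
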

\begin{lemma}\label{lemma:errtrunc1}
	If Assumptions \ref{assum:glp} -- \ref{assum:T0} hold, then
	\[
	\mathbb{P}\left\{ \frac{1}{T_1}\left\|\bm{R}\right\|_{\Fr}^2 \geq  C  \frac{N^{3/2}\rho^{T_0}}{T_1}  \right\}\leq  C e^{ - N } .
	\]
\end{lemma}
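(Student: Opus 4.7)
The plan is to express $\|\bm{R}\|_{\Fr}^2$ as a quadratic form in i.i.d.\ sub-Gaussian random variables and apply the Hanson--Wright inequality. First, I would derive an MA representation of the truncation residual. Substituting $\mathbf{y}_{t-j} = \sum_{k\geq 0}\bm{\Psi}_k^*\bm{\varepsilon}_{t-j-k}$ into $\bm{r}_t = \sum_{j>T_0}\bm{A}_j^*\mathbf{y}_{t-j}$ and re-indexing by $m=j+k$ gives
\[
\bm{r}_t = \sum_{m>T_0}\bm{C}_m\bm{\varepsilon}_{t-m},\qquad \bm{C}_m = \sum_{j=T_0+1}^{m}\bm{A}_j^*\bm{\Psi}_{m-j}^*.
\]
Assumption~\ref{assum:Adecay} then yields $\|\bm{C}_m\|_{\op}\lesssim(m-T_0)\rho^m$ and, for the transfer function $\bm{C}(z)=\sum_{m>T_0}\bm{C}_m z^m$, we obtain $\sup_{|z|=1}\|\bm{C}(z)\|_{\op}\leq\sum_{m>T_0}\|\bm{C}_m\|_{\op}\lesssim\rho^{T_0}$.

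Second, writing $\bm{\varepsilon}_t=\bm{\Sigma}_\varepsilon^{1/2}\bm{\xi}_t$ as in Assumption~\ref{assum:error} and stacking all relevant $\bm{\xi}_s$ into one i.i.d.\ sub-Gaussian vector $\tilde{\bm{\xi}}$, I obtain $\|\bm{R}\|_{\Fr}^2=\tilde{\bm{\xi}}^\prime\bm{Q}\tilde{\bm{\xi}}$ with positive semi-definite $\bm{Q}=\bm{M}^\prime\bm{M}$, where $\bm{M}$ is the block-Toeplitz-type linear filter mapping $\tilde{\bm{\xi}}$ to $\mathrm{vec}(\bm{R})$. Embedding $\bm{M}$ into the doubly-infinite Laurent operator with the same symbol and applying the standard spectral-density bound then gives
\[
\|\bm{Q}\|_{\op}=\|\bm{M}\|_{\op}^2\lesssim\|\bm{\Sigma}_\varepsilon\|_{\op}\sup_{|z|=1}\|\bm{C}(z)\|_{\op}^2\lesssim\rho^{2T_0},
\]
while $\sigma^2\,\mathrm{tr}(\bm{Q})=\mathbb{E}\|\bm{R}\|_{\Fr}^2=T_1\,\mathbb{E}\|\bm{r}_{T_0+1}\|_2^2\lesssim T_1 N\rho^{2T_0}$, and hence $\|\bm{Q}\|_{\Fr}^2\leq\|\bm{Q}\|_{\op}\,\mathrm{tr}(\bm{Q})\lesssim T_1 N\rho^{4T_0}$.

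Third, the Hanson--Wright inequality yields, for all $t>0$,
\[
\mathbb{P}\bigl\{\bigl|\|\bm{R}\|_{\Fr}^2-\mathbb{E}\|\bm{R}\|_{\Fr}^2\bigr|>t\bigr\}\leq 2\exp\Bigl\{-c\min\Bigl(\tfrac{t^2}{\sigma^4\|\bm{Q}\|_{\Fr}^2},\tfrac{t}{\sigma^2\|\bm{Q}\|_{\op}}\Bigr)\Bigr\}.
\]
Choosing $t\asymp N^{3/2}\rho^{T_0}$ and invoking Assumption~\ref{assum:T0} (which yields $T_1\rho^{T_0}\leq C$), the mean is of order $N\rho^{3T_0/2}\lesssim t$ and is absorbed into the threshold, while $t^2/\|\bm{Q}\|_{\Fr}^2\gtrsim N^2/(T_1\rho^{2T_0})\gtrsim N^2$ and $t/\|\bm{Q}\|_{\op}\gtrsim N^{3/2}/\rho^{T_0}\gtrsim N^{3/2}$. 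Both quantities exceed a constant multiple of $N$, and hence $\mathbb{P}\{\|\bm{R}\|_{\Fr}^2\geq CN^{3/2}\rho^{T_0}\}\leq Ce^{-cN}$, which matches the claimed bound after absorbing $c$ into $C$.

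The main obstacle will be making the operator-norm bound on the block-Toeplitz filter $\bm{M}$ rigorous when its domain is the entire infinite past of $\tilde{\bm{\xi}}$. This is standard but requires either a finite-dimensional truncation (the tail contribution beyond some large index $M$ being of order $\rho^{2M}$, which is easily absorbed into the constants) or a direct appeal to existing spectral-norm bounds for block-Toeplitz filters of stationary linear processes along the lines used to establish the RSC/RSS quantities $\kappa_{\mathrm{RSC}}$ and $\kappa_{\mathrm{RSS}}$ in Section~\ref{sec:lasso-estimator}.
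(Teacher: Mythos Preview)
Your proposal is correct and takes a genuinely different route from the paper. The paper expands $\|\bm{R}\|_{\Fr}^2$ into a quadruple infinite sum $\sum_{i,j>T_0}\sum_{k,\ell\geq 0}\sum_t\langle\widetilde{\bm{M}}_{i,j,k,\ell}\bm{\xi}_{t-i-k},\bm{\xi}_{t-j-\ell}\rangle$, applies the martingale/Hanson--Wright bound of Lemma~\ref{lemma:xi_dev} to each summand with an individually tuned probability level $\delta_{i,j,k,\ell}=\exp\{-2\rho^{-(i+j+k+\ell)/2}N/\log(1/\rho)\}$, and then combines via a union bound over the four infinite sums. You instead collapse everything into a single quadratic form and apply Hanson--Wright once, controlling $\|\bm{Q}\|_{\op}$ through the spectral bound on the transfer function $\bm{C}(z)$ and $\trace(\bm{Q})$ through the mean. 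Your argument is shorter and avoids the intricate bookkeeping of the $\delta_{i,j,k,\ell}$, at the price of needing the block-Toeplitz operator-norm bound (which, as you note, is standard and in fact already implicit in the $\mu_{\max}(\bm{\Psi}_*)$ bounds the paper uses elsewhere). The paper's term-by-term approach has the advantage of being uniform with the proofs of Lemmas~\ref{lemma:devbd1}--\ref{lemma:RE}, which are not pure quadratic forms and so cannot be handled by a single Hanson--Wright application. One small slip: you wrote $\sigma^2\trace(\bm{Q})=\mathbb{E}\|\bm{R}\|_{\Fr}^2$, but since $\var(\bm{\xi}_t)=\bm{I}_N$ the factor $\sigma^2$ should not be there; this does not affect the argument.
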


\paragraph{Proof of Lemma \ref{lemma:devbd1}}\label{sec:devbd1}

The proof of Lemma \ref{lemma:devbd1} relies on the following discretization result. 
\begin{lemma}[Discretization]\label{lemma:discretize}
	Let $\widebar{\bm{\Theta}}_{\Fr}(2r_{\min})$ be a minimal $1/2$-net for $\bm{\Theta}_{\Fr}(2r_{\min})$ in the Frobenius norm.
	\begin{equation*}
		\sup_{\bm{\Delta}\in\bm{\Theta}_\ddagger(2r_1,2r_2)} \langle\bm{\Delta}_{(1)}, \cm{E}{\bm{X}}^\prime\rangle 
		\leq
		4\max_{1\leq j\leq T_0}\max_{\bm{M}\in\widebar{\bm{\Theta}}_{\Fr}(2r_{\min})}\langle \bm{M}, \cm{E}\bm{X}_j^\prime \rangle,
	\end{equation*}
	where $r_{\min}=r_1\wedge r_2$, and $\bm{\Theta}_\ddagger(2r_1,2r_2)$ is defined as in \eqref{eq:space1}.
\end{lemma}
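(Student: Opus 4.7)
The plan is to split the argument into two clean pieces: (i) a deterministic reduction from the tensor-level supremum to a matrix-level supremum lag-by-lag, using the weak-group-sparsity norm $\|\cdot\|_\ddagger$; and (ii) a standard $1/2$-net discretization on the set of unit-Frobenius low-rank matrices.

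For step (i), I would first write the inner product in the partitioned form
\[
\langle \bm{\Delta}_{(1)}, \cm{E}\bm{X}^\prime\rangle = \sum_{j=1}^{T_0}\langle \bm{\Delta}_j, \cm{E}\bm{X}_j^\prime\rangle,
\]
using the block decomposition $\bm{X}=(\bm{X}_1^\prime,\dots,\bm{X}_{T_0}^\prime)^\prime$ from \eqref{eq:Xj}. The key observation is that for any $\bm{\Delta}\in\bm{\Theta}_\ddagger(2r_1,2r_2)$, each frontal slice $\bm{\Delta}_j$ inherits a rank bound from the column and row spaces of the full matricizations: since $\mathrm{rank}(\bm{\Delta}_{(i)})\leq 2r_i$ for $i=1,2$, every slice satisfies $\mathrm{rank}(\bm{\Delta}_j)\leq 2(r_1\wedge r_2)=2r_{\min}$. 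Normalizing each nonzero slice as $\bm{\Delta}_j = \|\bm{\Delta}_j\|_{\Fr}\bar{\bm{\Delta}}_j$ with $\bar{\bm{\Delta}}_j\in\bm{\Theta}_{\Fr}(2r_{\min})$, and using $\sum_j\|\bm{\Delta}_j\|_{\Fr}=\|\bm{\Delta}\|_\ddagger=1$, I obtain
\[
\sup_{\bm{\Delta}\in\bm{\Theta}_\ddagger(2r_1,2r_2)}\langle \bm{\Delta}_{(1)}, \cm{E}\bm{X}^\prime\rangle \leq \max_{1\leq j\leq T_0}\sup_{\bm{M}\in\bm{\Theta}_\Fr(2r_{\min})}\langle \bm{M}, \cm{E}\bm{X}_j^\prime\rangle.
\]

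For step (ii), I would pass from the sup over $\bm{\Theta}_\Fr(2r_{\min})$ to the max over its $1/2$-net $\widebar{\bm{\Theta}}_\Fr(2r_{\min})$ at the cost of a universal constant. Fix $j$ and let $\phi_j=\sup_{\bm{M}\in\bm{\Theta}_\Fr(2r_{\min})}\langle \bm{M}, \cm{E}\bm{X}_j^\prime\rangle$, attained at some $\bm{M}^*$. Pick $\bar{\bm{M}}$ in the net with $\|\bm{M}^*-\bar{\bm{M}}\|_{\Fr}\leq 1/2$, and SVD-decompose the residual $\bm{N}=\bm{M}^*-\bar{\bm{M}}$ (which has rank at most $4r_{\min}$) into two pieces $\bm{N}_1,\bm{N}_2$ each of rank at most $2r_{\min}$ by splitting its singular-value expansion in half. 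Since $\|\bm{N}_1\|_{\Fr}^2+\|\bm{N}_2\|_{\Fr}^2=\|\bm{N}\|_{\Fr}^2\leq 1/4$, Cauchy--Schwarz gives $\|\bm{N}_1\|_{\Fr}+\|\bm{N}_2\|_{\Fr}\leq 1/\sqrt{2}$. Writing $\phi_j = \langle \bar{\bm{M}}, \cm{E}\bm{X}_j^\prime\rangle+\langle \bm{N}_1+\bm{N}_2, \cm{E}\bm{X}_j^\prime\rangle$ and bounding each $\bm{N}_i$ term by $\|\bm{N}_i\|_{\Fr}\phi_j$ yields $\phi_j\leq \max_{\bm{M}\in\widebar{\bm{\Theta}}_\Fr(2r_{\min})}\langle \bm{M}, \cm{E}\bm{X}_j^\prime\rangle + (1/\sqrt{2})\phi_j$, so $\phi_j\leq (1-1/\sqrt{2})^{-1}\max\leq 4\max$. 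Taking the maximum over $j$ completes the proof.

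The only nontrivial obstacle is the rank-inflation in step (ii): the difference $\bm{M}^*-\bar{\bm{M}}$ has rank up to $4r_{\min}$ rather than $2r_{\min}$, so one cannot directly apply the net. The SVD-splitting trick above resolves this cleanly while still yielding the stated universal constant $4$. All remaining pieces (the block decomposition of $\bm{X}$, the rank transfer from matricizations to slices, and the absorption of $\|\bm{\Delta}\|_\ddagger=1$) are purely algebraic.
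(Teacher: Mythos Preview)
Your proposal is correct and follows essentially the same two-step strategy as the paper: reduce the tensor supremum to $\max_{j}\sup_{\bm{M}\in\bm{\Theta}_{\Fr}(2r_{\min})}\langle \bm{M},\cm{E}\bm{X}_j^\prime\rangle$, then discretize via a $1/2$-net using the SVD-splitting trick to handle the rank-$4r_{\min}$ residual. The only difference is in step~(i): the paper explicitly writes $\bm{\Delta}=\cm{H}\times_1\bm{U}_1\times_2\bm{U}_2$ via Tucker decomposition and reduces through $\sup_{\bm{U}_1,\bm{U}_2}\max_j\|\bm{U}_1^\prime\cm{E}\bm{X}_j^\prime\bm{U}_2\|_{\Fr}$, whereas you observe directly that each frontal slice $\bm{\Delta}_j$ has rank at most $2r_{\min}$ and normalize slice-by-slice---your route is slightly more elementary but reaches the identical intermediate bound, and step~(ii) is the same in both.
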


\begin{proof}[Proof of Lemma \ref{lemma:discretize}]
	For any $\bm{\Delta}\in\bm{\Theta}_\ddagger(2r_1,2r_2)$, there exist $\cm{H}\in\mathbb{R}^{2r_1\times 2r_2\times T_0}$, $\bm{U}_1\in\mathcal{O}^{N\times 2r_1}$ and $\bm{U}_2\in\mathcal{O}^{N\times 2r_2}$ such that
	$$\bm{\Delta} = \cm{H}\times_1 \bm{U}_1\times_2 \bm{U}_2.$$ 
	By the orthonormality of $\bm{U}_1$ and $\bm{U}_2$, we have
	$\|\cm{H}\|_\ddagger=\|\bm{\Delta}\|_\ddagger$.  Let $\cm{H}_{(1)}=(\bm{H}_1, \dots, \bm{H}_{T_0})$, where $\bm{H}_j\in\mathbb{R}^{2r_1\times 2r_2}$ are the frontal slices. Note that 
	\[
	\langle\bm{\Delta}_{(1)}, \cm{E}{\bm{X}}^\prime\rangle=\langle\bm{U}_1 \cm{H}_{(1)}(\bm{I}_{T_0}\otimes\bm{U}_2)^\prime, \cm{E}{\bm{X}}^\prime\rangle=\langle \cm{H}_{(1)}, \bm{U}_1^\prime \cm{E}{\bm{X}}^\prime(\bm{I}_{T_0}\otimes\bm{U}_2)\rangle=\sum_{j=1}^{T_0}\langle \bm{H}_j, \bm{U}_1^\prime \cm{E}\bm{X}_j^\prime\bm{U}_2\rangle.
	\]
	Then we can show that
	\begin{align}
		\begin{split}
			\sup_{\bm{\Delta}\in\bm{\Theta}_\ddagger(2r_1,2r_2)} \langle\bm{\Delta}_{(1)}, \cm{E}{\bm{X}}^\prime\rangle 
			&\leq \sup_{\bm{U}_1\in\mathcal{O}^{N\times 2r_1}, \bm{U}_2\in\mathcal{O}^{N\times 2r_2}}\sup_{\sum_{j=1}^{T_0}\|\bm{H}_j\|_{\Fr}=1} \sum_{j=1}^{T_0}\langle \bm{H}_j, \bm{U}_1^\prime \cm{E}\bm{X}_j^\prime\bm{U}_2\rangle \\
			&=\sup_{\bm{U}_1\in\mathcal{O}^{N\times 2r_1}, \bm{U}_2\in\mathcal{O}^{N\times 2r_2}} \max_{1\leq j\leq T_0}  \|\bm{U}_1^\prime \cm{E}\bm{X}_j^\prime\bm{U}_2 \|_{\Fr} \\
			&=\max_{1\leq j\leq T_0} \sup_{\bm{U}_1\in\mathcal{O}^{N\times 2r_1}, \bm{U}_2\in\mathcal{O}^{N\times 2r_2}} \sup_{\bm{M}\in\mathcal{S}^{2r_1\times 2r_2}}  \langle\bm{M},\bm{U}_1^\prime \cm{E}\bm{X}_j^\prime\bm{U}_2 \rangle \\
			&=\max_{1\leq j\leq T_0} \sup_{\bm{U}_1\in\mathcal{O}^{N\times 2r_1}, \bm{U}_2\in\mathcal{O}^{N\times 2r_2}}\sup_{\bm{M}\in\mathcal{S}^{2r_1\times 2r_2}}  \langle \bm{U}_1\bm{M}\bm{U}_2^\prime, \cm{E}\bm{X}_j^\prime \rangle \\
			&=\max_{1\leq j\leq T_0}\sup_{\bm{M}\in\bm{\Theta}_{\Fr}(2r_{\min})} \langle \bm{M}, \cm{E}\bm{X}_j^\prime \rangle.\label{eq:discrete1}
		\end{split}
	\end{align}

	Since $\widebar{\bm{\Theta}}_{\Fr}(2r_{\min})$ is a minimal $1/2$-net for $\bm{\Theta}_{\Fr}(2r_{\min})$ in the Frobenius norm, for any $\bm{M}\in\bm{\Theta}_{\Fr}(2r_{\min})$, there exists $\widebar{\bm{M}}\in\widebar{\bm{\Theta}}_{\Fr}(2r_{\min})$ such that $\|\bm{M}-\widebar{\bm{M}}\|_{\Fr}\leq 1/2$. Notice that $\bm{M}-\widebar{\bm{M}}$ is of rank at most $4r_{\min}$. Then, we can find $\bm{T}_1, \bm{T}_2$, each with rank at most $2r_{\min}$, such that $\bm{M}-\widebar{\bm{M}} =\bm{T}_1 + \bm{T}_2$ and $\langle\bm{T}_1,\bm{T}_2\rangle = 0$. 	Moreover, it holds
	$\|\bm{T}_1\|_{\Fr}+\|\bm{T}_2\|_{\Fr}\leq \sqrt{2} \|\bm{T}_1+\bm{T}_2\|_{\Fr}= \sqrt{2}\|\bm{M}-\widebar{\bm{M}}\|_{\Fr}\leq \sqrt{2}/2$, and ${\bm{T}_i/\|\bm{T}_i\|_{\Fr}}\in\bm{\Theta}_{\Fr}(2r_{\min})$. As a result, we can show that
	\begin{align*}
		\sup_{\bm{M}\in\bm{\Theta}_{\Fr}(2r_{\min})}\langle \bm{M}, \cm{E}\bm{X}_j^\prime \rangle & = \sup_{\bm{M}\in\bm{\Theta}_{\Fr}(2r_{\min})} \left \{\langle \widebar{\bm{M}}, \cm{E}\bm{X}_j^\prime \rangle +  \sum_{i=1}^2 \langle \frac{\bm{T}_i}{\|\bm{T}_i\|_{\Fr}}, \cm{E}\bm{X}_j^\prime \rangle \|\bm{T}_i\|_{\Fr} \right \}\\
		&\leq \max_{\widebar{\bm{M}}\in\widebar{\bm{\Theta}}_{\Fr}(2r_{\min})} \langle \widebar{\bm{M}}, \cm{E}\bm{X}_j^\prime \rangle +   (\|\bm{T}_1\|_{\Fr}+\|\bm{T}_2\|_{\Fr}) \sup_{\bm{M}\in\bm{\Theta}_{\Fr}(2r_{\min})}\langle \bm{M}, \cm{E}\bm{X}_j^\prime \rangle\\
		&\leq \max_{\widebar{\bm{M}}\in\widebar{\bm{\Theta}}_{\Fr}(2r_{\min})} \langle \widebar{\bm{M}}, \cm{E}\bm{X}_j^\prime \rangle +   \frac{\sqrt{2}}{2} \sup_{\bm{M}\in\bm{\Theta}_{\Fr}(2r_{\min})}\langle \bm{M}, \cm{E}\bm{X}_j^\prime \rangle,
	\end{align*}
	which implies
	\begin{equation}\label{eq:discrete2}
		\sup_{\bm{M}\in\bm{\Theta}_{\Fr}(2r_{\min})}\langle \bm{M}, \cm{E}\bm{X}_j^\prime \rangle\leq	4  \max_{\bm{M}\in\widebar{\bm{\Theta}}_{\Fr}(2r_{\min})} \langle \bm{M}, \cm{E}\bm{X}_j^\prime \rangle.
	\end{equation}
	By combining \eqref{eq:discrete1} and \eqref{eq:discrete2}, we accomplish the proof of this lemma.
\end{proof}

Now we are ready to prove Lemma \ref{lemma:devbd1}.

\begin{proof}[Proof of Lemma \ref{lemma:devbd1}]
	Let $\widebar{\bm{\Theta}}_{\Fr}(2r_{\min})$ be a minimal $1/2$-net of $\bm{\Theta}_{\Fr}(2r_{\min})$ in the Frobenius norm, where $r_{\min}=r_1\wedge r_2$. Then its cardinality satisfies
	\begin{equation}\label{eq:card2}
		\log |\widebar{\bm{\Theta}}_{\Fr}(2r_{\min})|\leq (4N+2)r_{\min} \log 18 \leq  18 r_{\min} N;
	\end{equation}
	see Lemma 3.1 of \cite{candes2011tight}.
	Denote $\widecheck{D}_{\mathcal{M}} = 18 r_{\min} N + \log T_0$. By \eqref{eq:card2} and Lemma \ref{lemma:discretize}, for any $K>0$,  we have
	\begin{align}\label{eq:discretizeE}
		\begin{split}
			&\mathbb{P}\left(\sup_{\bm{\Delta}\in\bm{\Theta}_\ddagger(2r_1,2r_2)}\langle\bm{\Delta}_{(1)}, \frac{1}{T_1}\cm{E}\bm{X}^\prime\rangle \geq K \right)\\
			&\hspace{5mm}\leq \mathbb{P}\left(\max_{1\leq k\leq T_0}\max_{\bm{M}\in\widebar{\bm{\Theta}}_{\Fr}(2r_{\min})}\frac{1}{T_1}\langle \bm{M}, \cm{E}\bm{X}_k^\prime \rangle \geq K/4 \right) \\
			&\hspace{5mm}\leq T_0 |\widebar{\bm{\Theta}}_{\Fr}(2r_{\min})| \max_{1\leq k\leq T_0}\max_{\|\bm{M}\|_{\Fr}=1} \mathbb{P}\left(\frac{1}{T_1}\langle \bm{M}, \cm{E}\bm{X}_k^\prime \rangle \geq K/4 \right) \\
			&\hspace{5mm}\leq \exp(\widecheck{D}_{\mathcal{M}}) \max_{1\leq k\leq T_0}\max_{\|\bm{M}\|_{\Fr}=1} \mathbb{P}\left(\frac{1}{T_1}\langle \bm{M}, \cm{E}\bm{X}_k^\prime \rangle \geq K/4 \right).
		\end{split}
	\end{align}
	
	Since $\bm{\varepsilon}_t=\bm{\Sigma}_\varepsilon^{1/2}\bm{\xi}_{t}$ and $\mathbf{y}_t = \sum_{j=0}^{\infty}\bm{\Psi}_j^*\bm{\Sigma}_\varepsilon^{1/2}\bm{\xi}_{t-j}$, for any $1\leq k\leq T_0$, we have
	\begin{align*}
		\langle \bm{M}, \cm{E}\bm{X}_k^\prime \rangle 
		=\sum_{t=T_0+1}^{T}\langle \bm{M}, \bm{\varepsilon}_t\mathbf{y}_{t-k}^\prime \rangle
		&=\sum_{t=T_0+1}^{T} \left \langle \bm{M}, \bm{\Sigma}_\varepsilon^{1/2}\bm{\xi}_{t}\sum_{j=0}^{\infty}\bm{\xi}_{t-k-j}^\prime \bm{\Sigma}_\varepsilon^{1/2} \bm{\Psi}_j^{*\prime}   \right \rangle  \notag\\
		&=\sum_{j=0}^{\infty} \sum_{t=T_0+1}^{T} \langle  \bm{M},\bm{\Sigma}_\varepsilon^{1/2}\bm{\xi}_{t}\bm{\xi}_{t-k-j}^\prime \bm{\Sigma}_\varepsilon^{1/2} \bm{\Psi}_j^{*\prime}    \rangle \notag\\
		&=\sum_{j=0}^{\infty}  \sum_{t=T_0+1}^{T} \langle  \widetilde{\bm{M}}_j  \bm{\xi}_{t-k-j}, \bm{\xi}_{t} \rangle,
	\end{align*}
	where 
	\[
	\widetilde{\bm{M}}_j =\bm{\Sigma}_\varepsilon^{1/2} \bm{M} \bm{\Psi}_j^{*} \bm{\Sigma}_\varepsilon^{1/2}.
	\]
	Note that for any fixed $k$ and $j$ and any $\delta_j\in(0,1]$, by Lemma \ref{lemma:xi_dev}(i), we have
	\begin{equation*}
		\mathbb{P}\left\{ \sum_{t=T_0+1}^{T} \langle \widetilde{\bm{M}}_j\bm{\xi}_{t-k-j}, \bm{\xi}_{t} \rangle \geq C\sigma^2 \|\widetilde{\bm{M}}_j\|_{\Fr} \left \{\log(1/\delta_j) + \sqrt{T_1\log(1/\delta_j)} \right \} \right\}\leq 2\delta_j.
	\end{equation*}	
	For simplicity, denote
	\[
	a_j=\sigma^2 \|\widetilde{\bm{M}}_j\|_{\Fr} \left \{\log(1/\delta_j) + \sqrt{T_1 \log(1/\delta_j)} \right \}. 
	\]
	Then it follows that
	\begin{equation}\label{eq:devbd9}
		\mathbb{P}\left( \frac{1}{T_1} \langle \bm{M}, \cm{E}\bm{X}_k^\prime \rangle \geq  \frac{C}{T_1} \sum_{j=0}^{\infty} a_{j}	\right)
		\leq  2\sum_{j=0}^{\infty} \delta_{j}.
	\end{equation}
	
	Moreover, by Assumption \ref{assum:Adecay}, if $\|\bm{M}\|_{\Fr}=1$, then 
	$\|\widetilde{\bm{M}}_{j}\|_{\Fr} \leq  \lambda_{\max}(\bm{\Sigma}_\varepsilon)  \|\bm{M}\|_{\Fr} \|\bm{\Psi}_j^*\|_{\op} \leq C  \rho^{j}$. 
	By choosing
	\[
	\delta_{j} = \exp\left \{-4\rho^{-(j+1)/2} \widecheck{D}_{\mathcal{M}} /\log(1/\rho) \right \},
	\]
	i.e., $\log(1/\delta_{j})= 4\rho^{-(j+1)/2} \widecheck{D}_{\mathcal{M}}/\log(1/\rho)$, we can show that
	\begin{align}\label{eq:devbd10}
		\frac{1}{T_1}\sum_{j=0}^{\infty} a_{j}
		& \leq  \frac{4\sigma^2  }{T_1}  \sum_{j=0}^{\infty}  \left \{ \frac{\rho^{(j-1)/2} \widecheck{D}_{\mathcal{M}}}{\log(1/\rho)} + \rho^{(3j-1)/4}  \sqrt{\frac{ T_1 \widecheck{D}_{\mathcal{M}}}{\log(1/\rho)}}  \right \}\notag\\
		& = 4 \sigma^2    \left \{\frac{1}{\sqrt{\rho}-\rho} \cdot \frac{\widecheck{D}_{\mathcal{M}}}{T_1\log(1/\rho)}
		+\frac{1}{\rho^{1/4}-\rho} \cdot \sqrt{\frac{\widecheck{D}_{\mathcal{M}}}{T_1\log(1/\rho)}}
		\right \} \notag\\
		& \leq  C  \sqrt{\frac{\widecheck{D}_{\mathcal{M}}}{T_1}}.
	\end{align}
	In addition, using the inequality $x^k \geq k\log x$ for any $k\geq 0$ and $x>1$, we can show that
	\begin{equation}\label{eq:devbd11}
		\sum_{j=0}^{\infty}\delta_{j} \leq  \sum_{j=0}^{\infty}e^{- 2(j+1)\widecheck{D}_{\mathcal{M}}}
		= \frac{e^{ -  2\widecheck{D}_{\mathcal{M}}} }{1- e^{- 2\widecheck{D}_{\mathcal{M}}}}.
	\end{equation}
	
	Combining \eqref{eq:devbd9}--\eqref{eq:devbd11}, if $\|\bm{M}\|_{\Fr}=1$, for any $1\leq k\leq P$, we  have
	\[
	\mathbb{P}\left\{ \frac{1}{T_1} \langle \bm{M}, \cm{E}\bm{X}_k^\prime \rangle \geq  C  \sqrt{\frac{\widecheck{D}_{\mathcal{M}}}{T_1}}
	\right\}\leq  \frac{2e^{ - 2\widecheck{D}_{\mathcal{M}}} }{1- e^{- 2\widecheck{D}_{\mathcal{M}}}} \leq C e^{- \widecheck{D}_{\mathcal{M}} },
	\]
	which together with \eqref{eq:discretizeE} and $T_1\gtrsim \widecheck{D}_{\mathcal{M}}$ implies the result of Lemma \ref{lemma:devbd1}.
\end{proof}

\paragraph{Proof of Lemma \ref{lemma:devbd2}}
Note that $\bm{R}\bm{X}^\prime=(\bm{R}\bm{X}_1^\prime, \dots, \bm{R}\bm{X}_{T_0}^\prime)$. Along the lines of Lemma \ref{lemma:discretize} in Section \ref{sec:devbd1}, we can show the following discretization result:
\begin{equation*}
	\sup_{\bm{\Delta}\in\bm{\Theta}_\ddagger(2r_1,2r_2)} \langle\bm{\Delta}_{(1)}, \bm{R} \bm{X}^\prime\rangle 
	\leq 
	4\max_{1\leq k\leq T_0}\max_{\bm{M}\in\widebar{\bm{\Theta}}_2(2r_{\min})}\langle \bm{M}, \bm{R}\bm{X}_k^\prime \rangle.
\end{equation*}
Then, similarly to \eqref{eq:discretizeE}, for any $K>0$, by \eqref{eq:card2}, we have
\begin{align}\label{eq:discretizeR}
	&\mathbb{P}\left(\sup_{\bm{\Delta}\in\bm{\Theta}_\ddagger(2r_1,2r_2)}\langle\bm{\Delta}_{(1)}, \frac{1}{T_1}\bm{R}\bm{X}^\prime\rangle \geq K \right)\notag\\
	&\hspace{5mm}\leq \exp(\widecheck{D}_{\mathcal{M}}) \max_{1\leq k\leq T_0}\max_{\|\bm{M}\|_{\Fr}=1} \mathbb{P}\left(\frac{1}{T_1}\langle \bm{M}, \bm{R}\bm{X}_k^\prime \rangle \geq K/4 \right),
\end{align}
where $\widecheck{D}_{\mathcal{M}} = 18 (r_1 \wedge r_2) N + \log T_0$.

Since $\mathbf{y}_t = \sum_{j=0}^{\infty}\bm{\Psi}_j^*\bm{\Sigma}_\varepsilon^{1/2}\bm{\xi}_{t-j}$, for any $1\leq k\leq T_0$, we have
\begin{align}\label{eq:devbd3}
	\langle \bm{M}, \bm{R}\bm{X}_k^\prime \notag\rangle 
	&=\sum_{t=T_0+1}^{T}\langle \bm{M}, \bm{r}_t\mathbf{y}_{t-k}^\prime \rangle\\
	&=\sum_{t=T_0+1}^{T} \left \langle \bm{M},  \sum_{j=T_0+1}^{\infty}\bm{A}_j^* \sum_{i=0}^{\infty}\bm{\Psi}_i^*\bm{\Sigma}_\varepsilon^{1/2}\bm{\xi}_{t-j-i}\sum_{\ell=0}^{\infty}\bm{\xi}_{t-k-\ell}^\prime \bm{\Sigma}_\varepsilon^{1/2} \bm{\Psi}_\ell^{*\prime} \right  \rangle  \notag\\
	&=\sum_{j=T_0+1}^{\infty} \sum_{i=0}^{\infty} \sum_{\ell=0}^{\infty}  \sum_{t=T_0+1}^{T} \langle  \bm{M},\bm{A}_j^*\bm{\Psi}_i^*\bm{\Sigma}_\varepsilon^{1/2}\bm{\xi}_{t-j-i}\bm{\xi}_{t-k-\ell}^\prime \bm{\Sigma}_\varepsilon^{1/2}\bm{\Psi}_\ell^{*\prime} \rangle \notag\\
	&=\sum_{j=T_0+1}^{\infty} \sum_{i=0}^{\infty} \sum_{\ell=0}^{\infty}  \sum_{t=T_0+1}^{T} \langle  \widetilde{\bm{M}}_{i,j,\ell}  \bm{\xi}_{t-k-\ell}, \bm{\xi}_{t-j-i} \rangle := B_1+B_2,
\end{align}
where 
\begin{align*}
	B_1&=\sum_{j=T_0+1}^{\infty} \sum_{i=0}^{\infty} \sum_{\ell=0}^{\infty}  \sum_{t=T_0+1}^{T} \langle  \widetilde{\bm{M}}_{i,j,\ell}  \bm{\xi}_{t-k-\ell}, \bm{\xi}_{t-j-i} \rangle I\{\ell\neq i+j-k\},\\
	B_2&=\sum_{j=T_0+1}^{\infty} \sum_{i=0}^{\infty} \sum_{\ell=0}^{\infty}  \sum_{t=T_0+1}^{T} \langle  \widetilde{\bm{M}}_{i,j,\ell}  \bm{\xi}_{t-k-\ell}, \bm{\xi}_{t-j-i} \rangle I\{\ell= i+j-k\},
\end{align*}
with $I(\cdot)$ being the indicator function, and
\[
\widetilde{\bm{M}}_{i,j,\ell}=\bm{\Sigma}_\varepsilon^{1/2} \bm{\Psi}_i^{*\prime} \bm{A}_j^{*\prime}\bm{M}\bm{\Psi}_\ell^*\bm{\Sigma}_\varepsilon^{1/2}.
\]
Note that for any fixed $i,j,\ell$, we can write $\sum_{t=T_0+1}^{T}\langle  \widetilde{\bm{M}}_{i,j,\ell}  \bm{\xi}_{t-k-\ell}, \bm{\xi}_{t-j-i} \rangle = \sum_{t=T_0'}^{T_1'} \langle \widetilde{\bm{M}}_{i,j,\ell}\bm{\xi}_{t-s}, \bm{\xi}_{t} \rangle$, with $T_0'=T_0+1-j-i$, $T_1'=T-j-i$ and $s=k+\ell-j-i$ (cf. Lemma \ref{lemma:xi_dev}). In addition, note that $T_1'-T_0'+1=T_1$.

Then, by Lemma \ref{lemma:xi_dev} and a method similar to that for \eqref{eq:devbd9}, for any fixed $k$, we can show that
\[
\mathbb{P}\left( B_1 \geq  \sum_{j=T_0+1}^{\infty} \sum_{i=0}^{\infty} \sum_{\ell=0}^{\infty}  C b_{i,j,\ell} I\{\ell\neq i+j-k\} \right)
\leq  \sum_{j=T_0+1}^{\infty}\sum_{i=0}^{\infty} \sum_{\ell=0}^{\infty} 2\delta_{i,j,\ell} I\{\ell\neq i+j-k\}
\]
and
\begin{align*}
	&\mathbb{P}\left\{ B_2 \geq  \sum_{j=T_0+1}^{\infty} \sum_{i=0}^{\infty} \sum_{\ell=0}^{\infty}
	(C b_{i,j,\ell} + c_{i,j,\ell} ) I\{\ell=i+j-k\}
	\right\}
	\leq  \sum_{j=T_0+1}^{\infty}\sum_{i=0}^{\infty} \sum_{\ell=0}^{\infty} \delta_{i,j,\ell} I\{\ell=i+j-k\},
\end{align*}
where $\delta_{i,j,\ell}\in(0,1)$ will be specified shortly (see \eqref{eq:deltaijl} below), and
\[
b_{i,j,\ell}=\sigma^2 \|\widetilde{\bm{M}}_{i,j,\ell}\|_{\Fr} \left \{\log(1/\delta_{i,j,\ell}) + \sqrt{T_1 \log(1/\delta_{i,j,\ell})} \right \}  \quad\text{and}\quad
c_{i,j,\ell}=T_1 \sqrt{N}\|\widetilde{\bm{M}}_{i,j,\ell}\|_{\Fr}.
\]
Combining the above inequalities with \eqref{eq:devbd3}, we have
\begin{align}\label{eq:devbd7}
	&\mathbb{P}\left\{ \frac{1}{T_1} \langle \bm{M}, \bm{R}\bm{X}_k^\prime \rangle \geq  \frac{C}{T_1} \sum_{j=T_0+1}^{\infty} \sum_{i=0}^{\infty} \sum_{\ell=0}^{\infty}
	b_{i,j,\ell} 
	+ \frac{1}{T_1}\sum_{j=T_0+1}^{\infty} \sum_{i=0}^{\infty} \sum_{\ell=0}^{\infty}
	c_{i,j,\ell} I\{\ell=i+j-k\}
	\right\} \notag \\
	&\hspace{5mm}\leq  2\sum_{j=T_0+1}^{\infty}\sum_{i=0}^{\infty} \sum_{\ell=0}^{\infty} \delta_{i,j,\ell}.
\end{align}

Note that by Assumption \ref{assum:Adecay}, if $\|\bm{M}\|_{\Fr}=1$, we can upper bound each $\|\widetilde{\bm{M}}_{i,j,\ell}\|_{\Fr}$ as follows:
\begin{align*}
	\|\widetilde{\bm{M}}_{i,j,\ell}\|_{\Fr} &\leq  \lambda_{\max}(\bm{\Sigma}_\varepsilon) \|\bm{A}_j^{*}\|_{\op}  \|\bm{\Psi}_i^*\|_{\op} \|\bm{\Psi}_\ell^*\|_{\op} \|\bm{M}\|_{\Fr}
	\leq C \rho^{i+j+\ell}.
\end{align*}
Then, for any $1\leq k\leq T_0$,  we have
\begin{align}\label{eq:devbd4}
	\sum_{j=T_0+1}^{\infty} \sum_{i=0}^{\infty} \sum_{\ell=0}^{\infty}
	c_{i,j,\ell} I\{\ell=i+j-k\} 
	&\leq C T_1 \sqrt{N}  \sum_{j=T_0+1}^{\infty} \sum_{i=0}^{\infty}  \rho^{2i+2j-T_0} \notag\\
	&= \frac{C \rho^2}{(1-\rho^2)^2}   \sqrt{N} \rho^{T_0} T_1 \notag\\
	&\leq C  \sqrt{N},
\end{align}
where the last inequality follows from Assumption \ref{assum:T0}.

Moreover, by choosing
\begin{equation}\label{eq:deltaijl}
	\delta_{i,j,\ell} = \exp\left \{-2\rho^{-(i+j+\ell)/2} \widecheck{D}_{\mathcal{M}} /\log(1/\rho) \right \},
\end{equation}
i.e., $\log(1/\delta_{i,j,\ell})= 2\rho^{-(i+j+\ell)/2} \widecheck{D}_{\mathcal{M}}/\log(1/\rho)$,  we can similarly show that
\begin{align}\label{eq:devbd5}
	\sum_{j=T_0+1}^{\infty} \sum_{i=0}^{\infty} \sum_{\ell=0}^{\infty}
	b_{i,j,\ell}
	& \leq  C \sigma^2   \sum_{j=T_0+1}^{\infty} \sum_{i=0}^{\infty} \sum_{\ell=0}^{\infty}  \left \{ \frac{\rho^{(i+j+\ell)/2} \widecheck{D}_{\mathcal{M}}}{\log(1/\rho)} + \rho^{3(i+j+\ell)/4}  \sqrt{\frac{T_1 \widecheck{D}_{\mathcal{M}}}{\log(1/\rho)}}  \right \}\notag\\	
	& = C \sigma^2 
	\left \{\frac{ \sqrt{\rho}}{(1-\sqrt{\rho})^3} \cdot \frac{\rho^{T_0/2}\widecheck{D}_{\mathcal{M}}}{\log(1/\rho)}
	+\frac{\rho^{3/4} }{(1-\rho^{3/4})^3} \cdot \sqrt{ \frac{\rho^{3T_0/2} T_1 \widecheck{D}_{\mathcal{M}}}{\log(1/\rho)}}
	\right \} \notag\\
	&\leq C \left ( \frac{\widecheck{D}_{\mathcal{M}}}{T_1} + \frac{\sqrt{\widecheck{D}_{\mathcal{M}}}}{T_1} \right )\notag\\
	&\leq C \frac{\widecheck{D}_{\mathcal{M}}}{T_1},
\end{align}
where Assumption \ref{assum:T0} is used  in the second to last inequality.

In addition, similarly to \eqref{eq:devbd11}, we can show that
\begin{align}\label{eq:devbd6}
	\sum_{j=T_0+1}^{\infty}\sum_{i=0}^{\infty} \sum_{\ell=0}^{\infty} \delta_{i,j,\ell} &\leq  \sum_{j=T_0+1}^{\infty}\sum_{i=0}^{\infty} \sum_{\ell=0}^{\infty} \exp\left \{- (i+j+\ell) \widecheck{D}_{\mathcal{M}} \right \}
	= \frac{e^{ - (T_0+1)  \widecheck{D}_{\mathcal{M}}} }{\left(1- e^{- \widecheck{D}_{\mathcal{M}}} \right)^3 }.
\end{align}
Combining \eqref{eq:devbd7}, \eqref{eq:devbd4}, \eqref{eq:devbd5} and \eqref{eq:devbd6}, if $\|\bm{M}\|_{\Fr}=1$, for any $1\leq k\leq T_0$, we  have
\[
\mathbb{P}\left\{ \frac{1}{T_1} \langle \bm{M}, \bm{R}\bm{X}_k^\prime \rangle \geq  C  \frac{ \sqrt{N} }{T_1} \right\}\leq \frac{2e^{ - (T_0+1)  \widecheck{D}_{\mathcal{M}}} }{\left(1- e^{- \widecheck{D}_{\mathcal{M}}} \right)^3 } \leq C e^{- \widecheck{D}_{\mathcal{M}}}.
\]
Combining this with \eqref{eq:discretizeR} and $T_1\gtrsim \widecheck{D}_{\mathcal{M}}$, we accomplish the proof of this lemma.

\paragraph{Proof of Lemma \ref{lemma:RE}}	

The proof of Lemma \ref{lemma:RE} relies on the following result.

\begin{lemma}\label{lemma:RE1}
	Suppose that Assumptions \ref{assum:glp} -- \ref{assum:error} hold.  If $T_1\gtrsim N + \log T_0$, then
	\[
	\mathbb{P}\left\{ \max_{1\leq i\leq T_0}\max_{1\leq j\leq T_0} \left \|\frac{\bm{X}_i\bm{X}_j^\prime}{T_1}-\bm{\Gamma}(i-j)\right \|_{\op} \geq \tau^2
	\right \}  
	\leq  C e^{ - N - \log T_0 },
	\]	
	where $\tau^2=C\sqrt{(N+\log T_0)/T_1}$.
\end{lemma}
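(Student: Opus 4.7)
\textbf{Proof proposal for Lemma B.1 (labeled \texttt{lemma:RE1}).}
The plan is to establish the claim by a standard $\epsilon$-net plus Hanson--Wright argument tailored to linear processes, following the style of Basu and Michailidis (2015). First, I would reduce the operator norm to a bilinear form via discretization: let $\mathcal{N}_{1/4}\subset \mathcal{S}^{N-1}$ be a minimal $1/4$-net of the unit sphere in $\mathbb{R}^N$, so that $|\mathcal{N}_{1/4}|\leq 9^N$ and
\[
\left\|\frac{\bm{X}_i\bm{X}_j^\prime}{T_1}-\bm{\Gamma}(i-j)\right\|_{\op} \leq 2\max_{u,v\in\mathcal{N}_{1/4}} u^\prime\left(\frac{\bm{X}_i\bm{X}_j^\prime}{T_1}-\bm{\Gamma}(i-j)\right)v.
\]
A union bound over the $T_0^2$ pairs and the at most $9^{2N}$ choices of $(u,v)$ introduces a multiplicative factor of $T_0^2\cdot 9^{2N}\leq \exp\{CN+2\log T_0\}$, so it suffices to establish an individual tail bound of order $\exp\{-C(N+\log T_0)\}$ for each fixed $(i,j,u,v)$.

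Next, I would realize $u^\prime\bm{X}_i\bm{X}_j^\prime v/T_1$ as a (centered) quadratic form in the underlying i.i.d. sub-Gaussian innovations. Using $\mathbf{y}_t=\sum_{k=0}^{\infty}\bm{\Psi}_k^*\bm{\Sigma}_\varepsilon^{1/2}\bm{\xi}_{t-k}$ and stacking all relevant $\bm{\xi}_{t}$'s into a single long vector $\bm{\eta}$ of independent $\sigma^2$-sub-Gaussian coordinates, each $\bm{X}_i$ can be written as $\bm{X}_i=\bm{\Omega}_i\bm{\eta}$ for a deterministic matrix $\bm{\Omega}_i$. Consequently,
\[
u^\prime \bm{X}_i\bm{X}_j^\prime v = \bm{\eta}^\prime\bm{M}\bm{\eta} \hspace{4mm}\text{with}\hspace{4mm} \bm{M}=\bm{\Omega}_i^\prime u v^\prime\bm{\Omega}_j,
\]
which is rank one. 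I would then invoke the Hanson--Wright inequality for sub-Gaussian vectors:
\[
\mathbb{P}\!\left(|\bm{\eta}^\prime\bm{M}\bm{\eta}-\mathbb{E}[\bm{\eta}^\prime\bm{M}\bm{\eta}]|\geq T_1 t\right)\leq 2\exp\!\left\{-c\min\!\left(\frac{T_1^2t^2}{\sigma^4\|\bm{M}\|_\Fr^2},\frac{T_1 t}{\sigma^2\|\bm{M}\|_{\op}}\right)\right\}.
\]

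The key quantitative step, and the part I expect to be the main obstacle, is bounding $\|\bm{M}\|_{\op}$ and $\|\bm{M}\|_\Fr$ uniformly in $(i,j,u,v)$ by a constant. Because $\bm{M}$ is rank one, $\|\bm{M}\|_{\op}=\|\bm{M}\|_\Fr=\|\bm{\Omega}_i^\prime u\|_2\,\|\bm{\Omega}_j^\prime v\|_2$. The identity $\|\bm{\Omega}_i^\prime u\|_2^2 = u^\prime\mathbb{E}(\bm{X}_i\bm{X}_i^\prime/1)u$ (treating $\bm{X}_i$ column-by-column to avoid the $T_1$ factor) reduces this to the spectral norm of the population autocovariance of $\{\mathbf{y}_t\}$, which is controlled by
\[
\lambda_{\max}\!\left(\mathbb{E}[\mathbf{y}_t\mathbf{y}_t^\prime]\right)\leq \lambda_{\max}(\bm{\Sigma}_\varepsilon)\,\mu_{\max}(\bm{\Psi}_*)=\kappa_{\mathrm{RSS}},
\]
via the standard spectral density representation $\bm{f}_\mathbf{y}(\omega)=(2\pi)^{-1}\bm{\Psi}_*(e^{-i\omega})\bm{\Sigma}_\varepsilon\bm{\Psi}_*^{\HH}(e^{-i\omega})$ and $\int \bm{f}_\mathbf{y} = \mathbb{E}(\mathbf{y}_t\mathbf{y}_t^\prime)$. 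Thus $\|\bm{\Omega}_i^\prime u\|_2\lesssim \sqrt{T_1\kappa_{\mathrm{RSS}}}$ and hence $\|\bm{M}\|_{\op}=\|\bm{M}\|_\Fr\lesssim T_1\kappa_{\mathrm{RSS}}$ which under the boundedness of $\kappa_{\mathrm{RSS}}$ is $O(T_1)$.

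Plugging this into Hanson--Wright with the choice $t=c'\sqrt{(N+\log T_0)/T_1}$, the sub-Gaussian branch is the active one (since $T_1\gtrsim N+\log T_0$), and the individual tail bound becomes $2\exp\{-c''(N+\log T_0)\}$. Taking $c'$ large enough in the definition of $\tau^2$ makes the constant $c''$ large enough so that, after the union bound, the overall probability is dominated by $Ce^{-N-\log T_0}$, yielding the claim. The entire argument is standard modulo the spectral-density step above, which is where the Lipschitz-type quantity $\mu_{\max}(\bm{\Psi}_*)$ enters and which is needed because the $\mathbf{y}_t$'s are serially dependent rather than i.i.d.
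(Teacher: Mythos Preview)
Your overall route---discretize the sphere, reduce to a quadratic form in the stacked innovations, then apply Hanson--Wright---is a legitimate alternative to the paper's proof, which instead expands $u^\prime\bm{X}_i\bm{X}_j^\prime u$ into a double sum $\sum_{k,\ell}\langle\widetilde{\bm{M}}_{k,\ell}\bm{\xi}_{t-j-\ell},\bm{\xi}_{t-i-k}\rangle$ and applies the martingale-type concentration of Lemma~\ref{lemma:xi_dev} term by term before summing with carefully tuned $\delta_{k,\ell}$. However, there is a genuine gap in your execution.

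The matrix $\bm{M}$ is \emph{not} rank one. Writing $(\bm{X}_i^\prime u)_t=u^\prime\mathbf{y}_{t-i}=a_t^\prime\bm{\eta}$ and $(\bm{X}_j^\prime v)_t=b_t^\prime\bm{\eta}$ for $t=T_0+1,\dots,T$, one has $u^\prime\bm{X}_i\bm{X}_j^\prime v=\sum_t (a_t^\prime\bm{\eta})(b_t^\prime\bm{\eta})=\bm{\eta}^\prime\bm{M}\bm{\eta}$ with $\bm{M}=\sum_t a_t b_t^\prime$, a sum of $T_1$ rank-one terms. Your identity $\|\bm{M}\|_\Fr=\|\bm{\Omega}_i^\prime u\|_2\|\bm{\Omega}_j^\prime v\|_2$ therefore fails, and your bound $\|\bm{M}\|_\Fr\lesssim T_1$ is a factor $\sqrt{T_1}$ too large. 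Plugging $\|\bm{M}\|_\Fr\asymp T_1$ into Hanson--Wright with $t=c^\prime\sqrt{(N+\log T_0)/T_1}$ yields a tail bound of order $\exp\{-c(N+\log T_0)/T_1\}$, not $\exp\{-c(N+\log T_0)\}$, so the union bound over $T_0^2\cdot 9^{2N}$ events cannot be absorbed and the argument collapses.

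The fix is to show $\|\bm{M}\|_\Fr\lesssim\sqrt{T_1}$ and $\|\bm{M}\|_{\op}\lesssim 1$. Writing $\bm{M}=A^\prime B$ with $A,B\in\mathbb{R}^{T_1\times\cdot}$ having rows $a_t^\prime,b_t^\prime$, note that $AA^\prime$ is the $T_1\times T_1$ Toeplitz matrix with entries $a_s^\prime a_t=\mathrm{cov}(u^\prime\mathbf{y}_{s-i},u^\prime\mathbf{y}_{t-i})$, whose operator norm is bounded by $2\pi\sup_\omega u^\prime\bm{f}_{\mathbf{y}}(\omega)u\leq\kappa_{\mathrm{RSS}}$; hence $\|A\|_{\op}\lesssim 1$ and $\|A\|_\Fr\lesssim\sqrt{T_1}$, giving $\|\bm{M}\|_{\op}\leq\|A\|_{\op}\|B\|_{\op}\lesssim 1$ and $\|\bm{M}\|_\Fr\leq\|A\|_{\op}\|B\|_\Fr\lesssim\sqrt{T_1}$. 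With these correct norms the Hanson--Wright exponent becomes $\min\{T_1 t^2, T_1 t\}\asymp N+\log T_0$ and the proof goes through. This is precisely where the spectral-density control via $\mu_{\max}(\bm{\Psi}_*)$ must enter---at the level of the Toeplitz operator norm, not merely $\lambda_{\max}(\bm{\Gamma}(0))$.
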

\begin{proof}
	Let $\widebar{\mathcal{S}}^{N-1}$ be a minimal $1/4$-net of $\mathcal{S}^{N-1}$ in the Euclidean norm. By \cite{vershynin2010introduction}, its cardinality satisfies $|\widebar{\mathcal{S}}^{N-1}|\leq 9^N$, and for any fixed $1\leq i, j \leq T_0$,
	\[
	\left \|\frac{\bm{X}_i\bm{X}_j^\prime}{T_1}-\bm{\Gamma}(i-j)\right \|_{\op}  \leq 2 \max_{\bm{u}\in \widebar{\mathcal{S}}^{N-1}} \left | \bm{u}^\prime \left \{ \frac{\bm{X}_i\bm{X}_j^\prime}{T_1}-\bm{\Gamma}(i-j)\right \} \bm{u}\right |.
	\]
	Denote $\widetilde{D}_{\mathcal{M}}=N\log 9+2\log T_0$. Hence, for any $K>0$, we have
	\begin{align}\label{eq:RE3}
		&\mathbb{P}\left\{ \max_{1\leq i\leq T_0}\max_{1\leq j\leq T_0}  \left \|\frac{\bm{X}_i\bm{X}_j^\prime}{T_1}-\bm{\Gamma}(i-j)\right \|_{\op}  \geq K \right\} \notag\\
		&\hspace{5mm} \leq T_0^2 \max_{1\leq i\leq T_0}\max_{1\leq j\leq T_0} \mathbb{P}\left\{ \left \|\frac{\bm{X}_i\bm{X}_j^\prime}{T_1}-\bm{\Gamma}(i-j)\right \|_{\op} \geq K \right \} \notag\\
		&\hspace{5mm} \leq T_0^2 |\widebar{\mathcal{S}}^{N-1}| \max_{1\leq i\leq T_0}\max_{1\leq j\leq T_0} \max_{\|\bm{u}\|_2=1} \mathbb{P}\left[ \left | \bm{u}^\prime \left \{ \frac{\bm{X}_i\bm{X}_j^\prime}{T_1}-\bm{\Gamma}(i-j)\right \} \bm{u}\right | \geq K/2 \right] \notag\\
		&\hspace{5mm} \leq \exp(\widetilde{D}_{{\mathcal{M}}})\max_{1\leq i\leq T_0}\max_{1\leq j\leq T_0} \max_{\|\bm{u}\|_2=1} \mathbb{P}\left[ \left | \bm{u}^\prime \left \{ \frac{\bm{X}_i\bm{X}_j^\prime}{T_1}-\bm{\Gamma}(i-j)\right \} \bm{u}\right | \geq K/2 \right].
	\end{align}
	
	Since $\mathbf{y}_t = \sum_{j=0}^{\infty}\bm{\Psi}_j^*\bm{\Sigma}_\varepsilon^{1/2}\bm{\xi}_{t-j}$, for any $1\leq i,j\leq T_0$, we have
	\begin{align}\label{eq:RE4}
		\frac{\bm{u}^\prime\bm{X}_i\bm{X}_j^\prime \bm{u}}{T_1}	
		=\sum_{t=T_0+1}^{T} \frac{\bm{u}^\prime\mathbf{y}_{t-i}\mathbf{y}_{t-j}^\prime \bm{u}}{T_1} &=\frac{1}{T_1}\sum_{k=0}^{\infty}\sum_{\ell=0}^{\infty} \sum_{t=T_0+1}^{T}  \bm{u}^\prime\bm{\Psi}_k^*\bm{\Sigma}_\varepsilon^{1/2}\bm{\xi}_{t-i-k}\bm{\xi}_{t-j-\ell}^\prime \bm{\Sigma}_\varepsilon^{1/2}\bm{\Psi}_\ell^{*\prime} \bm{u} \notag\\
		&=\frac{1}{T_1}\sum_{k=0}^{\infty}\sum_{\ell=0}^{\infty} \sum_{t=T_0+1}^{T} \langle  \widetilde{\bm{M}}_{k,\ell} \bm{\xi}_{t-j-\ell}, \bm{\xi}_{t-i-k} \rangle:=G_1+G_2,
	\end{align}
	where 
	\begin{align*}
		G_1&=\frac{1}{T_1}\sum_{k=0}^{\infty}\sum_{\ell=0}^{\infty} \sum_{t=T_0+1}^{T} \langle  \widetilde{\bm{M}}_{k,\ell} \bm{\xi}_{t-j-\ell}, \bm{\xi}_{t-i-k} \rangle I\{\ell\neq i+k-j\},\\
		G_2&=\frac{1}{T_1}\sum_{k=0}^{\infty}\sum_{\ell=0}^{\infty} \sum_{t=T_0+1}^{T} \langle  \widetilde{\bm{M}}_{k,\ell} \bm{\xi}_{t-j-\ell}, \bm{\xi}_{t-i-k} \rangle I\{\ell= i+k-j\},
	\end{align*}
	with $I(\cdot)$ being the indicator function, and
	\[
	\widetilde{\bm{M}}_{k,\ell} =\bm{\Sigma}_\varepsilon^{1/2} \bm{\Psi}_k^{*\prime} \bm{u} \bm{u}^\prime  \bm{\Psi}_\ell^{*}\bm{\Sigma}_\varepsilon^{1/2}.
	\]
	By Lemma \ref{lemma:xi_dev} and a method similar to the proof of Lemma \ref{lemma:devbd2}, we can show that 
	\[
	\mathbb{P}\left( |G_1| \geq \frac{C}{T_1} \sum_{k=0}^{\infty}\sum_{\ell=0}^{\infty}    b_{k,\ell} I\{\ell\neq i+k-j\} \right)
	\leq  \sum_{k=0}^{\infty}\sum_{\ell=0}^{\infty} 4\delta_{k,\ell} I\{\ell\neq i+k-j\}
	\]
	and
	\begin{align*}
		&\mathbb{P}\left\{ |G_2 - \mathbb{E}(G_2)| \geq  \frac{C}{T_1} \sum_{k=0}^{\infty}\sum_{\ell=0}^{\infty} 
		b_{k,\ell} I\{\ell=i+k-j\}
		\right\}
		\leq  \sum_{k=0}^{\infty}\sum_{\ell=0}^{\infty} 2 \delta_{k,\ell} I\{\ell=i+k-j\},
	\end{align*}
	where $\delta_{k,\ell}\in(0,1)$ will be specified below, and
	\[
	b_{k,\ell}=\sigma^2 \|\widetilde{\bm{M}}_{k,\ell}\|_{\Fr} \left \{\log(1/\delta_{k,\ell}) + \sqrt{T_1 \log(1/\delta_{k,\ell})} \right \}.
	\]
	Since $\mathbb{E}(G_1)=0$, combining the above results with \eqref{eq:RE4}, we have
	\begin{align}\label{eq:RE6} 
		&\mathbb{P}\left\{ \left | \bm{u}^\prime \left \{ \frac{\bm{X}_i\bm{X}_j^\prime}{T_1}-\bm{\Gamma}(i-j)\right \} \bm{u}\right | \geq  \frac{C}{T_1} \sum_{k=0}^{\infty}\sum_{\ell=0}^{\infty}
		b_{k,\ell}\right\} \notag \\
		&\hspace{5mm}\leq \mathbb{P}\left\{ |G_1| +|G_2 - \mathbb{E}(G_2)| \geq  \frac{C}{T_1} \sum_{k=0}^{\infty}\sum_{\ell=0}^{\infty}
		b_{k,\ell}\right\}  \notag \\
		&\hspace{5mm}\leq 4 \sum_{k=0}^{\infty}\sum_{\ell=0}^{\infty} \delta_{k,\ell}.
	\end{align}
	
	Note that by Assumption \ref{assum:Adecay}, if $\|\bm{u}\|_{2}=1$, we have
	\begin{align*}
		\|\widetilde{\bm{M}}_{k,\ell}\|_{\Fr} &\leq  \lambda_{\max}(\bm{\Sigma}_\varepsilon) \|\bm{\Psi}_k^*\|_{\op} \|\bm{\Psi}_\ell^*\|_{\op} 
		\leq C  \rho^{k+\ell}.
	\end{align*}
	Then, by choosing
	\begin{equation*}
		\delta_{k,\ell} = \exp\left \{-4\rho^{-(k+\ell+1)/2} \widetilde{D}_{{\mathcal{M}}} /\log(1/\rho) \right \}.
	\end{equation*}
	i.e., $\log(1/\delta_{k,\ell})= 4\rho^{-(k+\ell+1)/2} \widetilde{D}_{{\mathcal{M}}}/\log(1/\rho)$,  we have
	\begin{align}\label{eq:RE7}
		\frac{1}{T_1}\sum_{k=0}^{\infty}\sum_{\ell=0}^{\infty} b_{k,\ell}
		& \leq  \frac{C\sigma^2  }{T_1} \sum_{k=0}^{\infty}\sum_{\ell=0}^{\infty}  \left \{ \frac{\rho^{(k+\ell-1)/2} \widetilde{D}_{{\mathcal{M}}}}{\log(1/\rho)} + \rho^{(3k+3\ell-1)/4}  \sqrt{\frac{T_1 \widetilde{D}_{{\mathcal{M}}}}{\log(1/\rho)}}  \right \}\notag\\	
		& = C \sigma^2  \left \{\frac{ 1}{\sqrt{\rho}(1-\sqrt{\rho})^2} \cdot \frac{\widetilde{D}_{{\mathcal{M}}}}{T_1\log(1/\rho)}
		+\frac{1}{\rho^{1/4}(1-\rho^{3/4})^2} \cdot \sqrt{ \frac{ \widetilde{D}_{{\mathcal{M}}}}{T_1\log(1/\rho)}}
		\right \} \notag\\
		& \leq C \sqrt{\frac{\widetilde{D}_{{\mathcal{M}}}}{T_1}}.
	\end{align}
	In addition, similarly to the proof of Lemma \ref{lemma:devbd1}, we can show that
	\begin{equation}\label{eq:RE8}
		\sum_{k=0}^{\infty}\sum_{\ell=0}^{\infty} \delta_{k,\ell} \leq  \sum_{k=0}^{\infty}\sum_{\ell=0}^{\infty} e^{- 2(k+\ell+1)\widetilde{D}_{\widecheck{\mathcal{M}}}}
		= \frac{e^{ -  2\widetilde{D}_{{\mathcal{M}}}} }{(1- e^{- 2\widetilde{D}_{{\mathcal{M}}}})^2}.
	\end{equation}
	
	By \eqref{eq:RE6}--\eqref{eq:RE8}, if $\|\bm{u}\|_{2}=1$, for any $1\leq i,j\leq T_0$, we  have
	\[
	\mathbb{P}\left\{ \left | \bm{u}^\prime \left \{ \frac{\bm{X}_i\bm{X}_j^\prime}{T_1}-\bm{\Gamma}(i-j)\right \} \bm{u}\right | \geq  C  \sqrt{\frac{\widetilde{D}_{{\mathcal{M}}}}{T_1}}
	\right\}\leq  \frac{4e^{ - 2\widetilde{D}_{{\mathcal{M}}}} }{(1- e^{- 2\widetilde{D}_{{\mathcal{M}}}})^2} \leq C e^{- \widetilde{D}_{{\mathcal{M}}}}.
	\]
	Combining this with \eqref{eq:RE3} and $T_1\gtrsim \widetilde{D}_{{\mathcal{M}}}$, we accomplish the proof of this lemma.
\end{proof}

Now we are ready to prove Lemma \ref{lemma:RE}.

\begin{proof}[Proof of Lemma \ref{lemma:RE}]
	By \cite{basu2015regularized}, we have $\sigma_{\min}(\bm{\Sigma}_{T_0}) \geq \lambda_{\min}(\bm{\Sigma}_\varepsilon)\mu_{\min}(\bm{\Psi}_*)=\kappa_{\mathrm{RSC}}$, and hence
	\begin{equation} \label{eq:exp-min}
		\frac{\mathbb{E}\left (\|\bm{\Delta}_{(1)}\bm{X}\|_{\Fr}^2 \right )}{T_1}
		=\trace \left (\bm{\Delta}_{(1)} \bm{\Sigma}_{T_0} \bm{\Delta}_{(1)}^\prime \right )
		\geq \sigma_{\min}(\bm{\Sigma}_{T_0})  \|\bm{\Delta}\|_{\Fr}^2 \geq \kappa_{\mathrm{RSC}}  \|\bm{\Delta}\|_{\Fr}^2.
	\end{equation}
	
	Moreover, observe that 
	\begin{align*}
		\frac{\left |\|\bm{\Delta}_{(1)}\bm{X}\|_{\Fr}^2 -  \mathbb{E}\left (\|\bm{\Delta}_{(1)}\bm{X}\|_{\Fr}^2 \right )\right |}{T_1} 
		& = \trace \left \{\bm{\Delta}_{(1)} \left (\frac{\bm{X}\bm{X}^\prime}{T_1} - \bm{\Sigma}_{T_0}  \right ) \bm{\Delta}_{(1)}^\prime \right \} \\
		& = \sum_{i=1}^{T_0}\sum_{j=1}^{T_0} \trace \left [\bm{\Delta}_{i} \left \{\frac{\bm{X}_i\bm{X}_j^\prime}{T_1} -\bm{\Gamma}(i-j)\right \} \bm{\Delta}_{j}^\prime \right ]\\
		&\leq \sum_{i=1}^{T_0}\sum_{j=1}^{T_0} \|\bm{\Delta}_{i}\|_{\Fr}\|\bm{\Delta}_{j}\|_{\Fr} \left \|\frac{\bm{X}_i\bm{X}_j^\prime}{T_1} -\bm{\Gamma}(i-j)\right \|_{\op}\\
		&\leq\|\bm{\Delta}\|_{\ddagger}^2 \max_{1\leq i\leq T_0}\max_{1\leq j\leq T_0} \left \|\frac{\bm{X}_i\bm{X}_j^\prime}{T_1}-\bm{\Gamma}(i-j)\right \|_{\op}.
	\end{align*}

	As a result, for any $\bm{\Delta}\in\mathbb{R}^{N\times N\times T_0}$, we have
	\begin{align*}
		\frac{1}{T_1}\|\bm{\Delta}_{(1)}\bm{X}\|_{\Fr}^2 &\geq  \frac{\mathbb{E}\left (\|\bm{\Delta}_{(1)}\bm{X}\|_{\Fr}^2 \right ) }{T_1} - \frac{\left |\|\bm{\Delta}_{(1)}\bm{X}\|_{\Fr}^2 -  \mathbb{E}\left (\|\bm{\Delta}_{(1)}\bm{X}\|_{\Fr}^2 \right )\right |}{T_1} \notag \\
		&\geq \kappa_{\mathrm{RSC}} \|\bm{\Delta}\|_{\Fr}^2 - \|\bm{\Delta}\|_{\ddagger}^2 \max_{1\leq i\leq T_0}\max_{1\leq j\leq T_0} \left \|\frac{\bm{X}_i\bm{X}_j^\prime}{T_1}-\bm{\Gamma}(i-j)\right \|_{\op}.
	\end{align*}
	Combining this with Lemma \ref{lemma:RE1}, we accomplish the proof of Lemma \ref{lemma:RE}.
\end{proof}

\paragraph{Proof of Lemma \ref{lemma:errtrunc1}}
Since $\bm{r}_t=\sum_{j=T_0+1}^{\infty}\bm{A}_j^*\mathbf{y}_{t-j}=\sum_{j=T_0+1}^{\infty} \sum_{\ell=0}^\infty\bm{A}_j^* \bm{\Psi}_\ell^*\bm{\Sigma}_\varepsilon^{1/2}\bm{\xi}_{t-j-\ell}$, we have 
\begin{align*}
	\left\|\bm{R}\right\|_{\Fr}^2 =\sum_{t=T_0+1}^{T}\langle \bm{r}_t, \bm{r}_t\rangle
	&=\sum_{t=T_0+1}^{T} \sum_{i=T_0+1}^{\infty}\sum_{j=T_0+1}^{\infty} \sum_{k=0}^{\infty} \sum_{\ell=0}^{\infty}\left\langle\bm{A}_j^*\bm{\Psi}_\ell^*\bm{\Sigma}_\varepsilon^{1/2}\bm{\xi}_{t-j-\ell}, \bm{A}_i^*\bm{\Psi}_k^*\bm{\Sigma}_\varepsilon^{1/2}\bm{\xi}_{t-i-k}\right\rangle\\
	&=\sum_{i=T_0+1}^{\infty}\sum_{j=T_0+1}^{\infty} \sum_{k=0}^{\infty} \sum_{\ell=0}^{\infty}\sum_{t=T_0+1}^{T}\langle \widetilde{\bm{M}}_{i,j,k, \ell} \bm{\xi}_{t-i-k},  \bm{\xi}_{t-j-\ell} \rangle,
\end{align*}
and
\[
\widetilde{\bm{M}}_{i,j,k,\ell}=\bm{\Sigma}_\varepsilon^{1/2} \bm{\Psi}_\ell^{*\prime} \bm{A}_j^{*\prime}\bm{A}_i^*\bm{\Psi}_k^*\bm{\Sigma}_\varepsilon^{1/2}.
\]
Then, by Lemma \ref{lemma:xi_dev} and a method similar to the proof of Lemma \ref{lemma:devbd2}, we can show that
\begin{align}\label{eq:errtrunc2}
	&\mathbb{P}\left\{\left\|\bm{R}\right\|_{\Fr}^2 \geq  C \sum_{i=T_0+1}^{\infty}\sum_{j=T_0+1}^{\infty} \sum_{k=0}^{\infty} \sum_{\ell=0}^{\infty}
	b_{i,j,k,\ell} 
	+ \sum_{i=T_0+1}^{\infty}\sum_{j=T_0+1}^{\infty} \sum_{k=0}^{\infty} \sum_{\ell=0}^{\infty}
	c_{i,j,k,\ell} I\{\ell=i+k-j\}
	\right\} \notag \\
	&\hspace{5mm}\leq  2\sum_{i=T_0+1}^{\infty}\sum_{j=T_0+1}^{\infty} \sum_{k=0}^{\infty} \sum_{\ell=0}^{\infty}\delta_{i,j,k,\ell},
\end{align}
where $I(\cdot)$ is the indicator function, $\delta_{i,j,k,\ell}\in(0,1)$ will be specified shortly, 
\[
b_{i,j,k,\ell}=\sigma^2 \|\widetilde{\bm{M}}_{i,j,k,\ell}\|_{\Fr} \left \{\log(1/\delta_{i,j,k,\ell}) + \sqrt{T_1 \log(1/\delta_{i,j,k,\ell})} \right \}\]
and
\[
c_{i,j,k,\ell}=T_1 \sqrt{N}\|\widetilde{\bm{M}}_{i,j,k,\ell}\|_{\Fr}.
\]

By Assumption \ref{assum:Adecay} and \ref{assum:error},
\[
\|\widetilde{\bm{M}}_{i,j,k,\ell}\|_{\Fr} \leq  \lambda_{\max}(\bm{\Sigma}_\varepsilon)   \|\bm{\Psi}_\ell^*\|_{\op} \|\bm{A}_j^{*}\|_{\Fr} \|\bm{A}_i^{*}\|_{\op} \|\bm{\Psi}_k^*\|_{\op}
\leq C \rho^{i+k+\ell}\|\bm{A}_j^*\|_{\Fr}.
\]
Then, since $c_{i,j,k,\ell} \geq 0$, we have
\begin{align}\label{eq:errtrunc3}
	\sum_{i=T_0+1}^{\infty}\sum_{j=T_0+1}^{\infty} \sum_{k=0}^{\infty} \sum_{\ell=0}^{\infty}
	c_{i,j,k,\ell} I\{\ell=i+k-j\} 
	&\leq  \sum_{i=T_0+1}^{\infty}\sum_{j=T_0+1}^{\infty} \sum_{k=0}^{\infty} \sum_{\ell=0}^{\infty}
	c_{i,j,k,\ell} \notag \\
	&\leq C T_1 \sqrt{N}       \sum_{i=T_0+1}^{\infty} \sum_{k=0}^{\infty} \sum_{\ell=0}^{\infty}  \rho^{i+k+\ell} \sum_{j=T_0+1}^{\infty}\|\bm{A}_j^*\|_{\Fr} \notag \\
	&=  \frac{C \rho}{(1-\rho)^3}  \sqrt{N} \rho^{T_0} T_1 \sum_{j=T_0+1}^{\infty}\left\|\bm{A}_j^*\right\|_{\Fr} \notag \\
	&\leq  \frac{C \rho^2}{(1-\rho)^4} \lambda_{\max}(\bm{\Sigma}_\varepsilon) N \rho^{2T_0} T_1 \notag \\
	&\leq C  N \rho^{3T_0/2},
\end{align}
where the last but two inequality follows from Assumption \ref{assum:Adecay} and the last inequality follows from Assumption \ref{assum:T0}.	
Moreover, by choosing
\begin{equation*}
	\delta_{i,j,k,\ell} = \exp\left \{-2\rho^{-(i+k+\ell+j)/2} N /\log(1/\rho) \right \},
\end{equation*}
i.e., $\log(1/\delta_{i,j,k,\ell})= 2\sqrt{j}\rho^{-(i+k+\ell)/2} N /\log(1/\rho)$,  we can show that
\begin{align}\label{eq:errtrunc4}
	\begin{split}
		& \sum_{i=T_0+1}^{\infty}\sum_{j=T_0+1}^{\infty} \sum_{k=0}^{\infty} \sum_{\ell=0}^{\infty}
		b_{i,j,k,\ell} \\
		& \hspace{5mm}\leq  C \sigma^2 \sqrt{N} \sum_{i=T_0+1}^{\infty} \sum_{j=T_0+1}^{\infty}  \sum_{k=0}^{\infty} \sum_{\ell=0}^{\infty} \left \{ \frac{\rho^{(i+k+\ell+j)/2} N }{\log(1/\rho)} + \rho^{3(i+k+\ell+j)/4}  \sqrt{\frac{T_1 N }{\log(1/\rho)}}  \right \}  \\
		& \hspace{5mm} = C \sigma^2  \sqrt{N}  \left \{\frac{\rho}{(1-\sqrt{\rho})^4} \cdot \frac{\rho^{T_0} N }{\log(1/\rho)} 
		+\frac{\rho^{3/2} }{(1-\rho^{3/4})^2} \cdot \sqrt{ \frac{\rho^{3T_0} T_1 N }{\log(1/\rho)}}
		\right \} \\
		& \hspace{5mm}  \leq C \sigma^2  \sqrt{N} \left ( \rho^{T_0}N
		+ \sqrt{ \rho^{5T_0/2} N}
		\right )\\
		& \hspace{5mm}  \leq
		C  N^{3/2} \rho^{T_0},
	\end{split}
\end{align}
where we used Assumption \ref{assum:Adecay} in the first inequality and Assumption \ref{assum:T0} in the second to last inequality.

In addition, similarly to \eqref{eq:devbd11}, we can show that
\begin{align}\label{eq:errtrunc5}
	\sum_{i=T_0+1}^{\infty}\sum_{j=T_0+1}^{\infty} \sum_{k=0}^{\infty} \sum_{\ell=0}^{\infty} \delta_{i,j,k,\ell} 
	&\leq  \sum_{i=T_0+1}^{\infty}\sum_{j=T_0+1}^{\infty} \sum_{k=0}^{\infty} \sum_{\ell=0}^{\infty} \exp\left \{- (i+k+\ell+j) N \right \} \notag\\
	& \leq \frac{e^{ -(T_0+1)^2  N }}{\left(1- e^{- N} \right)^4 } \leq \frac{e^{ -T_0^2 N } }{\left(1- e^{- N} \right)^4 } \leq C e^{- N } .
\end{align}

Combining \eqref{eq:errtrunc2}--\eqref{eq:errtrunc5}, the proof of this lemma is complete.

\subsubsection{Two auxiliary lemmas}
Below we present two auxiliary lemmas. Lemma \ref{lemma:xi_dev} is used in the proofs of Lemmas 
\ref{lemma:devbd1}, \ref{lemma:devbd2}, \ref{lemma:errtrunc1}, and  \ref{lemma:RE1}, while   Lemma \ref{lemma:martgl} establishes the basic martingale concentration bound which is used to prove Lemmas \ref{lemma:xi_dev} and \ref{lemma:RE}.

\begin{lemma}\label{lemma:xi_dev}
	Let $T_0< T$ be arbitrary  fixed time points, and $T_1=T-T_0$. Suppose that $\bm{\xi}_{t}$ is a random vector with independent, zero-mean, unit-variance, $\sigma^2$-sub-Gaussian coordinates for any $T_0+1\leq t\leq T$. Then 
	\begin{itemize}
		\item[(i)] for any $\delta\in(0,1]$, $\bm{M}\in\mathbb{R}^{N\times N}$, and fixed nonzero integer $1\leq s\leq T_0$, 
		\begin{equation}\label{eq:xis_1side}
			\mathbb{P}\left\{ \sum_{t=T_0+1}^{T} \langle \bm{M}\bm{\xi}_{t-s}, \bm{\xi}_{t} \rangle \geq C\sigma^2 \|\bm{M}\|_{\Fr} \left \{\log(1/\delta) + \sqrt{T_1 \log(1/\delta)} \right \} \right\}\leq 2\delta
		\end{equation}	
		and 
		\begin{equation}\label{eq:xis_2sides}
			\mathbb{P}\left\{ \left | \sum_{t=T_0+1}^{T} \langle \bm{M}\bm{\xi}_{t-s}, \bm{\xi}_{t} \rangle \right |\geq C\sigma^2 \|\bm{M}\|_{\Fr} \left \{\log(1/\delta) + \sqrt{T_1 \log(1/\delta)} \right \} \right\}\leq 4\delta;
		\end{equation}	
		\item[(ii)] for any   $\delta\in(0,1]$ and $\bm{M}\in\mathbb{R}^{N\times N}$,
		\begin{equation}\label{eq:xi_1side}
			\mathbb{P}\left\{ \sum_{t=T_0+1}^{T} \langle \bm{M}\bm{\xi}_{t}, \bm{\xi}_{t} \rangle \geq C\sigma^2 \|\bm{M}\|_{\Fr} \left \{\log(1/\delta) + \sqrt{T_1 \log(1/\delta)} \right \} + T_1 \sqrt{N}\|\bm{M}\|_{\Fr} \right\}\leq \delta
		\end{equation}	
		and 
		\begin{equation}\label{eq:xi_2sides}
			\begin{split}
				&\mathbb{P}\left\{ \left | \sum_{t=T_0+1}^{T} \langle \bm{M}\bm{\xi}_{t}, \bm{\xi}_{t} \rangle - E\left (\sum_{t=T_0+1}^{T} \langle \bm{M}\bm{\xi}_{t}, \bm{\xi}_{t} \rangle \right ) \right | \geq C\sigma^2 \|\bm{M}\|_{\Fr} \left \{\log(1/\delta) + \sqrt{T_1 \log(1/\delta)} \right \} \right\}\\
				&\hspace{40mm}\leq 2\delta.
			\end{split}
		\end{equation}	
	\end{itemize}
\end{lemma}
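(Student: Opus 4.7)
The plan is to recast each sum as a quadratic form in a single long vector of independent, zero-mean, unit-variance, $\sigma^2$-sub-Gaussian coordinates, and then invoke the Hanson--Wright inequality.

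For part (i), stack the relevant noise vectors into $\bm{\zeta} = (\bm{\xi}_{T_0+1-s}^\prime, \ldots, \bm{\xi}_T^\prime)^\prime \in \mathbb{R}^{N(T_1+s)}$, whose $N(T_1+s)$ coordinates are independent and $\sigma^2$-sub-Gaussian by Assumption \ref{assum:error}. Let $\bm{J}_s \in \mathbb{R}^{(T_1+s)\times(T_1+s)}$ be the partial shift with $(\bm{J}_s)_{i+s,i}=1$ for $1\leq i\leq T_1$ and zero entries elsewhere, so that
\[
\sum_{t=T_0+1}^{T}\langle \bm{M}\bm{\xi}_{t-s},\bm{\xi}_t\rangle \;=\; \bm{\zeta}^\prime(\bm{J}_s\otimes \bm{M})\bm{\zeta} \;=\; \bm{\zeta}^\prime \bm{A}\bm{\zeta},
\]
where $\bm{A}=\tfrac12(\bm{J}_s\otimes \bm{M}+\bm{J}_s^\prime\otimes \bm{M}^\prime)$ is the symmetrization. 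Because $s\geq 1$, the matrix $\bm{A}$ has zero block diagonal, which yields $\mathbb{E}[\bm{\zeta}^\prime \bm{A}\bm{\zeta}]=0$. The two summands defining $\bm{A}$ have disjoint support, so $\|\bm{A}\|_{\Fr}^2=\tfrac12 T_1\|\bm{M}\|_{\Fr}^2$, while $\|\bm{A}\|_{\op}\leq \|\bm{M}\|_{\op}\leq \|\bm{M}\|_{\Fr}$ since $\bm{J}_s$ is a partial isometry. The Hanson--Wright inequality then gives, for any $\delta\in(0,1]$,
\[
\mathbb{P}\!\left(\Bigl|\sum_{t=T_0+1}^{T}\langle \bm{M}\bm{\xi}_{t-s},\bm{\xi}_t\rangle\Bigr|\geq C\sigma^2\|\bm{M}\|_{\Fr}\bigl\{\log(1/\delta)+\sqrt{T_1\log(1/\delta)}\bigr\}\right)\leq 4\delta,
\]
which is \eqref{eq:xis_2sides}; \eqref{eq:xis_1side} follows by keeping only the upper tail, which halves the probability bound.

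For part (ii), the construction is the same with $\bm{\zeta}=(\bm{\xi}_{T_0+1}^\prime,\ldots,\bm{\xi}_T^\prime)^\prime\in\mathbb{R}^{NT_1}$ and quadratic-form matrix $\bm{I}_{T_1}\otimes \bm{M}$. Its symmetrization $\bm{B}=\bm{I}_{T_1}\otimes\tfrac12(\bm{M}+\bm{M}^\prime)$ satisfies $\|\bm{B}\|_{\Fr}^2\leq T_1\|\bm{M}\|_{\Fr}^2$ and $\|\bm{B}\|_{\op}\leq \|\bm{M}\|_{\op}\leq \|\bm{M}\|_{\Fr}$, while $\mathbb{E}[\bm{\zeta}^\prime \bm{B}\bm{\zeta}]=T_1\trace(\bm{M})$. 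The Hanson--Wright inequality then controls the deviation from this mean and yields \eqref{eq:xi_2sides}. To pass to \eqref{eq:xi_1side}, combine the one-sided version of this deviation bound with the Cauchy--Schwarz estimate $|\trace(\bm{M})|=|\langle \bm{I}_N,\bm{M}\rangle|\leq \sqrt{N}\|\bm{M}\|_{\Fr}$, so that the mean is absorbed into the additive $T_1\sqrt{N}\|\bm{M}\|_{\Fr}$ term.

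The only nonroutine step is the bookkeeping of the Frobenius and operator norms of the symmetrized block matrices; with those computed, all four bounds follow by a direct application of Hanson--Wright in the standard $\sqrt{u}+u$ form, and the i.i.d.\ $\sigma^2$-sub-Gaussian structure guaranteed by Assumption \ref{assum:error} places the problem squarely within that inequality's hypotheses, so no genuine obstacle is anticipated.
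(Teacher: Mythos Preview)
Your proposal is correct. For part (ii) it coincides with the paper's argument: both write the sum as $\bm{\xi}_{[0]}^\prime(\bm{I}_{T_1}\otimes\bm{M})\bm{\xi}_{[0]}$, bound the operator and Frobenius norms in the obvious way, and invoke Hanson--Wright, with the trace estimate $|\trace(\bm{M})|\leq\sqrt{N}\|\bm{M}\|_{\Fr}$ supplying the additive term in \eqref{eq:xi_1side}.

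For part (i) you take a genuinely different route. The paper proceeds in two stages: it first controls the ``predictable variation'' $V_\xi=\sum_t\|\bm{M}\bm{\xi}_{t-s}\|_2^2$ via Hanson--Wright applied to $\bm{Q}=\bm{I}_{T_1}\otimes\bm{M}^\prime\bm{M}$, and then feeds the high-probability bound on $V_\xi$ into a martingale concentration inequality (Lemma \ref{lemma:martgl}) for $S_\xi=\sum_t\langle\bm{M}\bm{\xi}_{t-s},\bm{\xi}_t\rangle$, conditioning on the past. You instead stack all of $\bm{\xi}_{T_0+1-s},\dots,\bm{\xi}_T$ into one long vector and apply Hanson--Wright once to the off-block-diagonal matrix $\bm{A}=\tfrac12(\bm{J}_s\otimes\bm{M}+\bm{J}_s^\prime\otimes\bm{M}^\prime)$. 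Your approach is shorter and dispenses with the martingale lemma entirely; it also yields slightly sharper constants (direct Hanson--Wright gives $2\delta$ two-sided, hence the stated $4\delta$ and $2\delta$ hold a fortiori). The paper's two-step decomposition, on the other hand, separates the roles of $\bm{\xi}_{t-s}$ and $\bm{\xi}_t$ and would remain available in settings where only a conditional sub-Gaussian structure on $\bm{\xi}_t$ given the past is assumed; but under the full i.i.d.\ assumption of Assumption \ref{assum:error} that extra flexibility is not needed, and your direct route is the cleaner one.
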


\begin{proof}	
	For any integer $s$, denote $\bm{\xi}_{[s]}=(\bm{\xi}_{T-s}^\prime, \bm{\xi}_{T-1-s}^\prime, \dots, \bm{\xi}_{T_0+1-s}^\prime)^\prime \in\mathbb{R}^{T_1 N}$,
	$S_{\xi,[s]}=\sum_{t=T_0+1}^{T} \langle \bm{M}\bm{\xi}_{t-s}, \bm{\xi}_{t} \rangle$, and
	$V_{\xi,[s]}=\sum_{t=T_0+1}^{T}\|\bm{M}\bm{\xi}_{t-s}\|^2$. Note that $\bm{\xi}_{[s]}$ is a random vector with independent, zero-mean, unit-variance, $\sigma^2$-sub-Gaussian coordinates for any fixed integer $s$. For simplicity, in the following we omit the subscript $[s]$ in $\bm{\xi}_{[s]}$, $S_{\xi,[s]}$, and $V_{\xi,[s]}$ whenever $s\neq 0$.

	We first prove claim (i) of this lemma. Without loss of generality, assume that $s$ is a positive integer. 
	Note that $V_\xi=\|(\bm{I}_{T_1}\otimes \bm{M} ) \bm{\xi}\|^2=\bm{\xi}^\prime \bm{Q} \bm{\xi}$, where $\bm{Q}=\bm{I}_{T_1}\otimes \bm{M}^\prime \bm{M}$.
	Then $\mathbb{E}(V_\xi)=\|\bm{I}_{T_1}\otimes \bm{M}\|_{\Fr}^2=T_1 \|\bm{M}\|_{\Fr}^2$.  By the (one-sided) Hanson-Wright inequality \citep{vershynin2010introduction}, for any $\eta\geq0$,
	\begin{align*}
		\mathbb{P}\left ( V_{\xi} - T_1 \|\bm{M}\|_{\Fr}^2 \geq \eta \right ) 
		&\leq   \exp\left\{-c\min\left(\frac{\eta}{\sigma^2\|\bm{Q}\|_{\op}},\frac{\eta^2}{\sigma^4\|\bm{Q}\|_{\Fr}^2}\right)\right\},
	\end{align*}
	where $c>0$ is an absolute constant.
	Since $\|\bm{Q}\|_{\op}=\|\bm{M}\|_{\op}^2 \leq\|\bm{M}\|_{\Fr}^2$, and $\|\bm{Q}\|_{\Fr}\leq \|\bm{I}_{T_1}\otimes \bm{M}\|_{\Fr} \|\bm{I}_{T_1}\otimes \bm{M}\|_{\op} \leq \sqrt{T_1} \|\bm{M}\|_{\Fr}^2$, we can show that
	\begin{align*}
		\mathbb{P}\left ( V_{\xi} \geq T_1\|\bm{M}\|_{\Fr}^2 + \eta \right ) 
		&\leq  \exp\left\{-c\min\left(\frac{\eta}{\sigma^2\|\bm{M}\|_{\Fr}^2},\frac{\eta^2}{\sigma^4 T_1 \|\bm{M}\|_{\Fr}^4}\right)\right\} \leq \delta.
	\end{align*} 
	by choosing 
	\begin{equation}\label{eq:xi_dev1}
		\eta=C \sigma^2 \|\bm{M}\|_{\Fr}^2 \left \{\log(1/\delta)+\sqrt{T_1\log(1/\delta)}\right\},
	\end{equation}
	where  $C$ is dependent on $c$.
	Moreover, by Lemma \ref{lemma:martgl}, for any $\alpha, \beta>0$, we have
	\begin{align*}
		\mathbb{P}(S_\xi \geq \alpha) &\leq \mathbb{P}(S_\xi \geq \alpha, \; V_{\xi} \leq \beta) + \mathbb{P}(V_{\xi} \geq \beta) 
		\leq \exp\left (-\frac{\alpha^2}{2\sigma^2\beta}\right ) + \mathbb{P}(V_{\xi} \geq \beta).
	\end{align*}
	This implies that, if $\beta=T_1\|\bm{M}\|_{\Fr}^2 + \eta$ and $\alpha \geq \sqrt{2\sigma^2\beta\log(1/\delta)}$, then $\mathbb{P}(S_\xi \geq \alpha) \leq \delta$.  Hence, we can establish \eqref{eq:xis_1side} by choosing
	\[
	\alpha= C\sigma^2 \|\bm{M}\|_{\Fr} \left \{\log(1/\delta) + \sqrt{T_1 \log(1/\delta)} \right \}.
	\]
	Furthermore, applying  \eqref{eq:xis_1side} to $-\bm{M}$, we directly have
	\begin{equation*}
		\mathbb{P}\left\{ \sum_{t=T_0+1}^{T} \langle \bm{M}\bm{\xi}_{t-s}, \bm{\xi}_{t} \rangle \leq - C\sigma^2 \|\bm{M}\|_{\Fr} \left \{\log(1/\delta) + \sqrt{T_1 \log(1/\delta)} \right \} \right\}\leq 2\delta,
	\end{equation*}	
	which, combined with \eqref{eq:xis_1side}, yields the two-sided bound in \eqref{eq:xis_2sides}.

	The proof of claim (ii) is similar to the analysis of $V_{\xi}$ above. Note that
	$S_{\xi,[0]}= \bm{\xi}_{[0]}^\prime (\bm{I}_{T_1}\otimes \bm{M})^\prime \bm{\xi}_{[0]}$. Applying the (two-sided) Hanson-Wright inequality, we have
	\[
	\mathbb{P}\left \{ \left |S_{\xi,[0]} -\mathbb{E}(S_{\xi,[0]}) \right | \geq \eta_0 \right \}\leq  2\exp\left\{-c\min\left(\frac{\eta_0}{\sigma^2\|\bm{M}\|_{\op}},\frac{\eta_0^2}{\sigma^4 T_1\|\bm{M}\|_{\Fr}^2}\right)\right\} \leq 2\delta
	\]
	if we choose 
	\begin{equation}\label{eq:xi_dev2}
		\eta_0=C \sigma^2 \|\bm{M}\|_{\Fr} \left \{\log(1/\delta)+\sqrt{T_1\log(1/\delta)}\right\},
	\end{equation}
	where  $C$ is dependent on $c$. This leads to \eqref{eq:xi_2sides} in the lemma. Moreover, since $\mathbb{E}(S_{\xi,[0]})=T_1 \trace(\bm{M})\leq T_1 \sqrt{N} \|\bm{M}\|_{\Fr}$,  similarly we can also obtain the  one-sided result:
	\begin{align*}
		\mathbb{P}\left \{ S_{\xi,[0]} \geq T_1 \trace(\bm{M}) + \eta_0 \right \}
		&\leq  \exp\left\{-c\min\left(\frac{\eta_0}{\sigma^2\|\bm{M}\|_{\op}},\frac{\eta_0^2}{\sigma^4 T_1\|\bm{M}\|_{\Fr}^2}\right)\right\} \leq \delta
	\end{align*}
	with the same choice of $\eta_0$ as in \eqref{eq:xi_dev2}. Therefore, \eqref{eq:xi_2sides} is proved as well.
\end{proof}

\begin{lemma}[Martingale concentration]\label{lemma:martgl} 
	Let $\{\mathcal{F}_t, t\in\mathbb{Z}\}$ be a filtration. Suppose that $\{\bm{w}_t\}$ and $\{\bm{e}_t\}$ are processes taking values in $\mathbb{R}^d$, and for each integer $t$, $\bm{w}_t$ is $\mathcal{F}_{t-1}$-measurable, $\bm{e}_t$ is $\mathcal{F}_{t}$-measurable, and $\bm{e}_t\mid \mathcal{F}_{t-1}$ is mean-zero and $\sigma^2$-sub-Gaussian. Let $T_0< T$ be arbitrary  fixed time points. Then, for any $\alpha,\beta>0$, we have 
	\begin{equation*}
		\mathbb{P}\left \{ \sum_{t=T_0+1}^{T}\langle\bm{w}_t, \bm{e}_t \rangle\geq \alpha, \; \sum_{t=T_0+1}^{T}\lVert \bm{w}_t \rVert^2 \leq \beta\right \} \leq  \exp\left (-\frac{\alpha^2}{2\sigma^2\beta}\right ).
	\end{equation*}
\end{lemma}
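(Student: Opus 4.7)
The plan is to extend the classical Cram\'er--Chernoff argument to this adaptive (martingale) setting by constructing an exponential supermartingale and applying Markov's inequality. For each $\lambda>0$, I would define the process
\[
M_t = \exp\!\left\{\lambda\sum_{s=T_0+1}^{t}\langle \bm{w}_s,\bm{e}_s\rangle - \frac{\lambda^2\sigma^2}{2}\sum_{s=T_0+1}^{t}\|\bm{w}_s\|^2\right\},\qquad t\geq T_0,
\]
with the convention $M_{T_0}=1$, and verify that $\{M_t\}$ is a non-negative supermartingale with respect to $\{\mathcal{F}_t\}$. Since $\bm{w}_t$ is $\mathcal{F}_{t-1}$-measurable, conditioning on $\mathcal{F}_{t-1}$ fixes the direction $\bm{w}_t$, so the conditional sub-Gaussian MGF bound for $\bm{e}_t\mid \mathcal{F}_{t-1}$ yields
\[
\mathbb{E}\!\left[\exp\!\left(\lambda\langle \bm{w}_t,\bm{e}_t\rangle\right)\,\big|\,\mathcal{F}_{t-1}\right] \leq \exp\!\left(\tfrac{1}{2}\lambda^2\sigma^2\|\bm{w}_t\|^2\right),
\]
and hence $\mathbb{E}[M_t\mid \mathcal{F}_{t-1}]\leq M_{t-1}$. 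Iterating gives $\mathbb{E}[M_T]\leq 1$.

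Next I would exploit the event of interest. Let $\mathcal{E}=\{\sum_{t=T_0+1}^{T}\langle \bm{w}_t,\bm{e}_t\rangle\geq \alpha,\ \sum_{t=T_0+1}^{T}\|\bm{w}_t\|^2\leq \beta\}$. On $\mathcal{E}$, by monotonicity of the exponential, $M_T \geq \exp(\lambda \alpha - \tfrac{1}{2}\lambda^2\sigma^2\beta)$. Markov's inequality combined with $\mathbb{E}[M_T]\leq 1$ then gives $\mathbb{P}(\mathcal{E}) \leq \exp(-\lambda \alpha + \tfrac{1}{2}\lambda^2\sigma^2\beta)$ for every $\lambda>0$. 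Optimizing in $\lambda$ by choosing $\lambda = \alpha/(\sigma^2\beta)$ yields the stated bound $\exp(-\alpha^2/(2\sigma^2\beta))$.

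The only delicate point is the conditional MGF control for $\langle \bm{w}_t,\bm{e}_t\rangle$. Because $\bm{w}_t\in\mathcal{F}_{t-1}$, under the conditional law $\bm{w}_t$ is effectively a deterministic vector, and the $\sigma^2$-sub-Gaussianity of $\bm{e}_t\mid\mathcal{F}_{t-1}$ (which by definition controls the MGF of linear functionals $\langle u,\bm{e}_t\rangle$ by $\exp(\sigma^2\|u\|^2/2)$ for every fixed $u$) supplies the bound directly. All remaining steps---the tower property for the supermartingale, Markov's inequality, and the one-dimensional quadratic optimization in $\lambda$---are standard; there is no substantive obstacle, and the result is simply the adaptive analogue of a sub-Gaussian concentration inequality in which the ``variance proxy'' $\sigma^2\sum_t\|\bm{w}_t\|^2$ is random but controlled on $\{B\leq\beta\}$.
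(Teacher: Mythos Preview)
Your argument is correct and is precisely the standard exponential-supermartingale (Cram\'er--Chernoff) proof of this bound; the paper itself does not give a proof but simply cites Lemma~4.2 of \cite{simchowitz2018learning}, whose proof is exactly the argument you have written.
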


\begin{proof}
	See Lemma 4.2 in \cite{simchowitz2018learning}.
\end{proof}

\subsection{Proofs of theoretical results in Section 3}

This section gives the proofs of Theorems \ref{thm:stat}, \ref{thm:optimization} and Corollary \ref{cor:algorithm} in Sections B.3.1-B.3.3, respectively.
Section B.3.4 provides five auxiliary lemmas, which are used in the proof of Theorem \ref{thm:optimization}. 
We first introduce several parameter spaces of $\cm{A}\in\mathbb{R}^{N\times N\times T_0}$ below,
\begin{align*}
	\bm{\Theta}(r_1, r_2) &= \{\cm{A}\in\mathbb{R}^{N\times N\times T_0}\mid \rank(\cm{A}_{(1)})\leq r_1, \; \rank(\cm{A}_{(2)})\leq r_2\},\\
	\bm{\Theta}^{\mathrm{SP}}(r_1,r_2,s)&= \{\cm{A}\in\mathbb{R}^{N\times N\times T_0}\mid  \rank(\cm{A}_{(1)}) \leq r_1, \;\rank(\cm{A}_{(2)}) \leq r_2,\; \|\cm{A}\|_0 \leq s,\},\\
	\bm{\Theta}_\ddagger(r_1,r_2) &= \{\cm{A}\in\bm{\Theta}(r_1, r_2),\; \|\cm{A}\|_{\ddagger}=1\} \hspace{3mm}\text{and}\hspace{3mm}
	\bm{\Theta}^{\mathrm{SP}}_1(r_1,r_2,s)= \{\cm{A}\in\bm{\Theta}^{\mathrm{SP}}(r_1,r_2,s),\; \|\cm{A}\|_{\Fr} =1\}.
\end{align*}

\subsubsection{Proof of Theorem \ref{thm:stat}}
It can be verified that $[\nabla\mathcal{L}(\cm{A})]_{(1)}=-T_1^{-1}(\bm{Y} -\cm{A}_{(1)}\bm{X})\bm{X}^\prime$ and $\bm{Y} = (\cm{A}^*_{S_{\gamma}})_{(1)}\bm{X} + (\cm{A}^*_{S_{\gamma}^c})_{(1)}\bm{X} + \widetilde{\cm{E}}$. As a result, for any $\cm{M}\in\mathbb{R}^{N\times N\times T_0}$,
\begin{align} \label{eq:stat-error-main}
	\langle \nabla\mathcal{L}(\cm{A}_{S_{\gamma}}^*) ,\cm{M}  \rangle = -\langle {T_1}^{-1}\widetilde{\cm{E}} \bm{X}^\prime, \cm{M}_{(1)}\rangle -\langle {T_1}^{-1} (\cm{A}_{S^c_{\gamma}}^*)_{(1)}\bm{X}\bm{X}^\prime, \cm{M}_{(1)}\rangle.
\end{align}
From Lemmas  \ref{lemma:devbd1} and  \ref{lemma:devbd2} and by a method similar to \eqref{eq:devbd}, we can show that, if $T_1\gtrsim (r_1 \wedge r_2)N + \log T_0$ and $\gamma\gtrsim \sqrt{\{(r_1\wedge r_2)N+\log T_0\}/{T_1}}$, then
\begin{equation*}
	\mathbb{P}\left\{ \sup_{\bm{\Delta}\in\bm{\Theta}_\ddagger(r_1,r_2)}
	\langle \frac{1}{T_1}\widetilde{\cm{E}}\bm{X}^\prime, \bm{\Delta}_{(1)}
	\rangle \geq C\gamma \right \}  \leq  C e^{-(r_1\wedge r_2)N-\log T_0},
\end{equation*}
which, together with the fact that	
$\|\cm{M}\|_{\ddagger} \leq \sqrt{s}\|\cm{M}\|_{\Fr} = \sqrt{s}$ for any $\cm{M}\in\bm{\Theta}^{\mathrm{SP}}_{1}(r_1,r_2,s)$, implies that
\begin{equation} \label{eq:stat-error-term1}
	\langle\frac{1}{T_1}\widetilde{\cm{E}}\bm{X}^\prime, \cm{M}_{(1)}\rangle \leq \|\cm{M}\|_{\ddagger} \sup_{\bm{\Delta}\in\bm{\Theta}_\ddagger(r_1,r_2)}
	\langle \frac{1}{T_1}\widetilde{\cm{E}}\bm{X}^\prime, \bm{\Delta}_{(1)}
	\rangle \leq C\gamma\sqrt{s}
\end{equation}
holds with probability at least $1-C e^{-(r_1\wedge r_2)N-\log T_0}$. 

We next handle the second term at the right hand side of \eqref{eq:stat-error-main}. It holds that, from Assumptions \ref{assum:Adecay}  and \ref{assum:error},
\begin{align*}
	\frac{\mathbb{E}\langle(\cm{A}^*_{S_{\gamma}^c})_{(1)}\bm{X}\bm{X}^\prime, \cm{M}_{(1)}\rangle}{T_1} &= \trace \left ((\cm{A}^*_{S_{\gamma}^c})_{(1)} \bm{\Sigma}_{T_0} \cm{M}_{(1)}^\prime \right )\\ 
	&\leq \lambda_{\max}(\bm{\Sigma}_{T_0} ) \|\cm{A}^*_{S_{\gamma}^c}\|_{\Fr}\|\cm{M}\|_{\Fr} \leq C^2\kappa_{\mathrm{RSS}}\|\cm{A}^*_{S_{\gamma}^c}\|_{\Fr}\|\cm{M}\|_{\Fr},
\end{align*}
and
\begin{align*}
	&\frac{|\langle(\cm{A}^*_{S_{\gamma}^c})_{(1)}\bm{X}\bm{X}^\prime, \cm{M}_{(1)}\rangle - \mathbb{E}\langle(\cm{A}^*_{S_{\gamma}^c})_{(1)}\bm{X}\bm{X}^\prime, \cm{M}_{(1)}\rangle|}{T_1}\\
	&\hspace{10mm}\leq \sum_{i=1}^{T_0}\sum_{j=1}^{T_0} \left|\trace\left[\bm{A}_i^* \left\{ \frac{\bm{X}_i\bm{X}_j^\prime}{T_1} - \bm{\Gamma}(i-j)\right\}\bm{M}_j \right] I\{i\in S_{\gamma}^c\} \right|\\
	&\hspace{10mm}\leq\|\cm{A}^*_{S_{\gamma}^c}\|_{\ddagger}\|\cm{M}\|_{\ddagger} \max_{1\leq i\leq T_0}\max_{1\leq j\leq T_0} \left \|\frac{\bm{X}_i\bm{X}_j^\prime}{T_1}-\bm{\Gamma}(i-j)\right \|_{\op}.
\end{align*}
As a result, by Lemma \ref{lemma:RE1} and the fact that $\|\cm{M}\|_{\ddagger} \leq \sqrt{s}\|\cm{M}\|_{\Fr}$,
\begin{align} \label{eq:stat-error-term2}
	{T_1}^{-1}\langle(\cm{A}^*_{S_{\gamma}^c})_{(1)}&\bm{X}\bm{X}^\prime, \cm{M}_{(1)}\rangle \notag\\ &\leq \frac{\mathbb{E}\langle(\cm{A}^*_{S_{\gamma}^c})_{(1)}\bm{X}\bm{X}^\prime, \cm{M}_{(1)}\rangle}{T_1} + \frac{|\langle(\cm{A}^*_{S_{\gamma}^c})_{(1)}\bm{X}\bm{X}^\prime, \cm{M}_{(1)}\rangle - \mathbb{E}\langle(\cm{A}^*_{S_{\gamma}^c})_{(1)}\bm{X}\bm{X}^\prime, \cm{M}_{(1)}\rangle|}{T_1}\notag\\
	& \leq  \left(C^2\kappa_{\mathrm{RSS}}\|\cm{A}^*_{S_{\gamma}^c}\|_{\Fr} + \tau^2\sqrt{s}\|\cm{A}^*_{S_{\gamma}^c}\|_{\ddagger}\right) \|\cm{M}\|_{\Fr}.
\end{align}
holds with probability at least $1-C e^{ - N - \log T_0 }$ when $T_1\gtrsim s^2(N + \log T_0)$.
We accomplish the proof by combining \eqref{eq:stat-error-main} -- \eqref{eq:stat-error-term2} and letting $\|\cm{M}\|_{\Fr} = 1$.

\subsubsection{Proof of Theorem \ref{thm:optimization}}\label{sec:prop_algo}

This proof is divided into six steps. Some notations and conditions are given in the first step, and verified in the last step. 
Without loss of generality, we assume that $0<\sigma_L<1<\sigma_U$ and $0<\kappa_{\mathrm{RSC}}<1<\kappa_{\mathrm{RSS}}$ throughout this proof.

\noindent\textbf{Step 1} (Notations and conditions)
Without confusion, we use $\cm{A}^*$ to denote $\cm{A}^*_{S_\gamma}$ for simplicity in this proof.
Since $\cm{A}^*$ has Tucker ranks $r_1$ and $r_2$ along the first two modes, its Tucker decomposition can be assumed to have the form of $\cm{A}^* = \cm{G}^* \times_1 \bm{U}_1^* \times_2 \bm{U}_2^*$, where $\cm{G}^*\in\mathbb{R}^{r_1\times r_2\times T_0}$, $\bm{U}_i^*\in\mathbb{R}^{N\times r_i}$ and $\bm{U}_i^{*\prime}\bm{U}_i^* = b^2 \bm{I}_{r_i}$ for $1\leq i \leq 2$.
Moreover, at the $k$-th iteration of Algorithm \ref{alg:AGD-HT}, the pre-thresholding estimator $\cm{\widetilde{A}}^{k+1}$ has the Tucker form of $\cm{\widetilde{G}}^{k+1} \times_1 \bm{U}_1^{k+1} \times_2 \bm{U}_2^{k+1}$, where $\cm{\widetilde{G}}^{k+1}, \bm{U}_1^{k+1}$ and $\bm{U}_2^{k+1}$ are obtained by one-step gradient descent, and the hard-thresholding operation gives
\[
\cm{A}^{k+1} = \hardt{\cm{\widetilde{A}}^{k+1}, s} = \cm{G}^{k+1} \times_1 \bm{U}_1^{k+1} \times_2 \bm{U}_2^{k+1}.
\]
Note that the zero frontal slices in $\cm{A}^*$, $\cm{\widetilde{A}}^{k+1}$ and $\cm{A}^{k+1}$ correspond to the zero ones in $\cm{G}^*$, $\cm{\widetilde{G}}^{k+1}$ and $\cm{G}^{k+1}$, respectively.

On the other hand, we denote by $\breve{S}$ the union set $\breve{S}_{k+1,\gamma} =  S_k \cup S_{k+1} \cup S_{\gamma}$, where the subscripts of $k$ and $\gamma$ are suppressed when there is no confusion, and it holds that $|\breve{S}| \leq 3s$ since $s\geq s_{\gamma}$.
Moreover, it can be verified that $\cm{A}^{k}_{\breve{S}}=\cm{A}^{k}$, $\cm{A}^{k+1}_{\breve{S}}=\cm{A}^{k+1}$ and
\[
\cm{A}^{k+1} = \hardt{\cm{\widetilde{A}}^{k+1}_{\it{\breve{S}}}, s} \hspace{2mm}\text{with }\hspace{2mm}\cm{\widetilde{A}}^{k+1}_{\breve{S}} = \cm{\widetilde{G}}^{k+1}_{\breve{S}} \times_1 \bm{U}_1^{k+1} \times_2 \bm{U}_2^{k+1}.
\]
The distances between $\cm{\widetilde{A}}^{k+1}_{\breve{S}}$, $\cm{A}^{k+1}$ and $\cm{A}^*$ can be measured by
\[
\widetilde{E}^{k+1} = \min_{\bm{R}_i\in \mathcal{O}^{r_i\times r_i}, 1\leq i\leq 2} \sum_{i=1}^{2} \|\bm{U}_i^{k+1} - \bm{U}_i^*\bm{R}_i\|_{\Fr}^2 + \|\cm{\widetilde{G}}_{\breve{S}}^{k+1} - \cm{G}^* \times_1 \bm{R}_1^{ \prime} \times_2 \bm{R}_2^{\prime} \|_{\Fr}^2,
\]
\[
E^{k+1} = \min_{\bm{R}_i\in \mathcal{O}^{r_i\times r_i}, 1\leq i\leq 2} \sum_{i=1}^{2} \|\bm{U}_i^{k+1} - \bm{U}_i^*\bm{R}_i\|_{\Fr}^2 + \|\cm{G}^{k+1} - \cm{G}^* \times_1 \bm{R}_1^{ \prime} \times_2 \bm{R}_2^{\prime} \|_{\Fr}^2,
\]
respectively, and their optimizers are denoted by $(\bm{\widetilde{R}}^{k+1}_1, \bm{\widetilde{R}}^{k+1}_2)$ and $(\bm{R}^{k+1}_1, \bm{R}^{k+1}_2)$.
We next introduce or rephrase the following list of conditions:
\begin{itemize}
	\item Suppose that there exist $\alpha, \beta>0$ such that
	\begin{align}\label{eq:RGC}
		\langle\nabla\mathcal{L}(\cm{A}) - \nabla\mathcal{L}(\cm{A}^*), \cm{A} - \cm{A}^*\rangle \geq \alpha\|\cm{A} - \cm{A}^*\|_{\Fr}^2 + \beta\|\nabla\mathcal{L}(\cm{A}) - \nabla\mathcal{L}(\cm{A}^*)\|_{\Fr}^2,
	\end{align}
	holds for any $\cm{A}\in\bm{\Theta}^{\mathrm{SP}}(r_1,r_2,3s)$. Moreover, by Cauchy-Schwarz and the fact that $xy\leq \alpha x^2+0.25\alpha^{-1} y^2$, it can be further verified that $\alpha\beta\leq 0.25$.
	
	\item We assume that $b = \sigma_U^{1/4}$ for simplicity, and the proof can be easily adjusted if $c\sigma_U^{1/4} \leq b \leq C\sigma_U^{1/4}$ for two absolute constants $0<c<C$.
	Moreover, for $0\leq k\leq K$ and $i=1$ and 2, 
	\begin{align}\label{eq:U&tildeG-upperbounds}
		\|\bm{U}_i^k\|_{\op} \leq 1.1\sigma_U^{1/4}, \hspace{2mm}\sigma_{\min}(\bm{U}_i^k) \geq 0.9\sigma_U^{1/4} \hspace{2mm}\text{and}\hspace{2mm} \|\cm{G}^k_{(i)}\|_{\op} \leq 1.1\sigma_U^{1/2}.
	\end{align}
	Finally, for all $0\leq k\leq K$,
	\begin{align} \label{eq:E-upperbounds}
		E^k \leq  \frac{c_0\sigma_L^{1/2}}{\kappa^{3/2}} =: C_1,
	\end{align}
	where $\kappa=\sigma_U/\sigma_L$, and $c_0>0$ is a small absolute constant (smaller than 1) determined later.
	
	\item Let $(\cm{A},\bm{U}_i, E)$ be $(\cm{A}^k, \bm{U}_i^k, E^k)$, $(\cm{\widetilde{A}}_{\breve{S}}^{k+1}, \bm{U}_i^{k+1}, \widetilde{E}^{k+1})$ or $(\cm{A}^0, \bm{U}_i^0, E^0)$, respectively. An important two-sided inequality is derived from Lemma \ref{lemma:A->E->A} by letting $b = \sigma_U^{1/4}$ and $c_e = 0.1$, namely
	\begin{align}\label{eq:A->E->A}
		C_L \|\cm{A}- \cm{A}^*\|_{\Fr}^2 \leq E \leq C_{U,1}\|\cm{A} - \cm{A}^*\|_{\Fr}^2 + C_{U,2}\sum_{i=1}^{2}\|\bm{U}_i^{\prime}\bm{U}_i - b^2 \bm{I}_{r_i}\|_{\Fr}^2,
	\end{align} 
	where $C_L = [5(\sigma_U+2\sigma_U^{3/2})]^{-1}$, $C_{U,1} = 3\sigma_U^{-1} + 8\sigma_L^{-2}\sigma_U^{-1/2}+40\sigma_L^{-2}$ and $C_{U,2} = 2\sigma_U^{-1/2}+10$.
\end{itemize}

\noindent\textbf{Step 2} (Descent of $\widetilde{E}^{k+1}$)
This step aims to establish \begin{equation}\label{eq:step2}
	\widetilde{E}^{k+1} \leq E^{k} + \eta^2 (	Q_{\mathcal{G},1} + \sum_{i=1}^{2}Q_{i,1}) - 2\eta (	Q_{\mathcal{G},2} + \sum_{i=1}^{2}Q_{i,2}),
\end{equation}
where $Q_{1,j}, Q_{2,j}$ and $Q_{\mathcal{G},j}$ with $j=1$ and 2 are defined in \eqref{eq:(U1(k+1)-U1*)-main}, \eqref{eq:(U2(k+1)-U2*)-main} and \eqref{eq:(tildeG(k+1)-G*)-main}, respectively.
Note that, by the definition of $\widetilde{E}^{k+1}$,
\begin{align} \label{eq:tildeE(k+1)}
	\widetilde{E}^{k+1} \leq \sum_{i=1}^{2} \|\bm{U}_i^{k+1} - \bm{U}_i^*\bm{R}_i^k\|_{\Fr}^2 + \|\cm{\widetilde{G}}_{\breve{S}}^{k+1} - \cm{G}^* \times_1 (\bm{R}_1^k)^{ \prime} \times_2 (\bm{R}_2^k)^{\prime} \|_{\Fr}^2.
\end{align}

For the first term of \eqref{eq:tildeE(k+1)}, the gradient descent update of $\bm{U}_1^{k+1}$ gives
\begin{align}\label{eq:U(k+1)-U*}
	\|\bm{U}_1^{k+1} - \bm{U}_1^*\bm{R}_1^k\|_{\Fr}^2	&= \|\bm{U}_1^{k} - \eta[\nabla_{{U}_1}\mathcal{L}(\cm{A}^k)+a\bm{U}_1^k(\bm{U}_1^{k\prime}\bm{U}_1^k - b^2\bm{I}_{r_1})] - \bm{U}_1^*\bm{R}_1^k\|_{\Fr}^2\notag\\
	&= \|\bm{U}_1^{k} - \bm{U}_1^*\bm{R}_1^k\|_{\Fr}^2 +\eta^2\|\nabla_{{U}_1}\mathcal{L}(\cm{A}^k)+a\bm{U}_1^k(\bm{U}_1^{k\prime}\bm{U}_1^k - b^2\bm{I}_{r_1})\|_{\Fr}^2 \notag\\
	&\hspace{5mm}- 2\eta\langle \nabla_{{U}_1}\mathcal{L}(\cm{A}^k), \bm{U}_1^{k} - \bm{U}_1^*\bm{R}_1^k\rangle - 2a\eta \langle \bm{U}_1^k(\bm{U}_1^{k\prime}\bm{U}_1^k- b^2\bm{I}_{r_1}), \bm{U}_1^{k} - \bm{U}_1^*\bm{R}_1^k\rangle,
\end{align}
where  $\nabla_{{U}_1}\mathcal{L}(\cm{A}) = [\nabla\mathcal{L}(\cm{A})]_{(1)}(\bm{I}_{T_0}\otimes \bm{U}_2)\cm{G}_{(1)}^\prime$ is the partial derivative of the loss function $\mathcal{L}(\cm{A})$ with respect to $\bm{U}_1$.
We will first handle the last three terms of \eqref{eq:U(k+1)-U*} one-by-one (without the scaling constants). Starting with the second term,
\begin{align*}
	\|\nabla_{{U}_1}\mathcal{L}(\cm{A}^k)+a\bm{U}_1^k(\bm{U}_1^{k\prime}\bm{U}_1^k - b^2\bm{I}_{r_1})\|_{\Fr}^2 &\leq 2\|\nabla_{{U}_1}\mathcal{L}(\cm{A}^k)\|_{\Fr}^2 + 2a^2\|\bm{U}_1^k(\bm{U}_1^{k\prime}\bm{U}_1^k - b^2\bm{I}_{r_1})\|_{\Fr}^2.
\end{align*}
Let $C_2 = 1.5\sigma_U^{3/4}$ and, by the definition of dual norm, the conditions at \eqref{eq:U&tildeG-upperbounds} and $\|\cm{G}^k\|_0=s $,
\begin{align*}
	\|\nabla_{{U}_1}\mathcal{L}(\cm{A}^k)\|_{\Fr}^2 &= \sup_{\bm{M}\in\mathbb{R}^{N\times r_1},\|\bm{M}\|_{\Fr}=1}\langle\nabla_{{U}_1}\mathcal{L}(\cm{A}^k), \bm{M} \rangle^2\\
	&= \sup_{\bm{M}\in\mathbb{R}^{N\times r_1},\|\bm{M}\|_{\Fr}=1}\langle\nabla\mathcal{L}(\cm{A}^k), \cm{G}^k\times_1\bm{M}\times_2\bm{U}_2^k\rangle^2\\
	&\leq 2\sup_{\bm{M}\in\mathbb{R}^{N\times r_1},\|\bm{M}\|_{\Fr}=1}\langle\nabla\mathcal{L}(\cm{A}^*), \cm{G}^k\times_1\bm{M}\times_2\bm{U}_2^k\rangle^2 \\
	&\hspace{5mm}+ 2\sup_{\bm{M}\in\mathbb{R}^{N\times r_1},\|\bm{M}\|_{\Fr}=1}\langle\nabla\mathcal{L}(\cm{A}^k) - \nabla\mathcal{L}(\cm{A}^*), \cm{G}^k\times_1\bm{M}\times_2\bm{U}_2^k\rangle^2\\
	&\leq 2C_2^2\left(e_\mathrm{stat}^2+\|\nabla\mathcal{L}(\cm{A}^k)-\nabla\mathcal{L}(\cm{A}^*)\|_{\Fr}^2\right),
\end{align*}
which leads to
\begin{align} \label{eq:(U(k+1)-U*)-quadratic}
	\|\nabla_{{U}_1}&\mathcal{L}(\cm{A}^k)+a\bm{U}_1^k(\bm{U}_1^{k\prime}\bm{U}_1^k - b^2\bm{I}_{r_1})\|_{\Fr}^2 \notag\\
	&\leq 4C_2^2\left(e_\mathrm{stat}^2+\|\nabla\mathcal{L}(\cm{A}^k)-\nabla\mathcal{L}(\cm{A}^*)\|_{\Fr}^2\right)  + 3\sigma_U^{1/2}a^2\|\bm{U}_1^{k\prime}\bm{U}_1^k - b^2\bm{I}_{r_1}\|_{\Fr}^2 = : Q_{1,1}.
\end{align}

We next consider the third term at \eqref{eq:U(k+1)-U*}. Let $\cm{A}_{U_1} = \cm{G}\times_1(\bm{U}_1 - \bm{U}_1^*\bm{R}_1)\times_2\bm{U}_2$ and, by the conditions at \eqref{eq:U&tildeG-upperbounds}, $\|\cm{A}_{U_1}^k\|_{\Fr} \leq C_2\|\bm{U}_1^{k} - \bm{U}_1^*\bm{R}_1^k\|_{\Fr}$ for all $k\geq 1$. 
Moreover, by the conditions at \eqref{eq:U&tildeG-upperbounds} and the fact that $xy \leq 0.5x^2 + 0.5y^2$, 
\begin{align*}
	\langle \nabla\mathcal{L}(\cm{A}^*),  \cm{A}_{U_1}^k\rangle &\leq  \sup_{\cmt{M}\in\Theta_1^{\mathrm{SP}}(r_1,r_2,s)}\langle\nabla\mathcal{L}(\cm{A}^*),\cm{M}\rangle\cdot\|\cm{A}_{U_1}^k\|_{\Fr}\\
	&\leq 0.5C_2^2 C_1^{-1}  e_\mathrm{stat}^2 + 0.5 C_1\|\bm{U}_1^{k} - \bm{U}_1^*\bm{R}_1^k\|_{\Fr}^2.
\end{align*}
As a result,
\begin{align} \label{eq:(U(k+1)-U*)-cross1}
	\langle \nabla_{{U}_1}\mathcal{L}(\cm{A}^k), \bm{U}_1^{k} - \bm{U}_1^*\bm{R}_1^k\rangle &= \langle \nabla\mathcal{L}(\cm{A}^k), \cm{G}^k\times_1(\bm{U}_1^k - \bm{U}_1^*\bm{R}_1^k)\times_2\bm{U}_2^k \rangle \notag\\
	&=
	\langle \nabla\mathcal{L}(\cm{A}^k) - \nabla\mathcal{L}(\cm{A}^*), \cm{A}_{U_1}^k\rangle + \langle \nabla\mathcal{L}(\cm{A}^*), \cm{A}_{U_1}^k\rangle \notag\\
	&\geq \langle \nabla\mathcal{L}(\cm{A}^k) - \nabla\mathcal{L}(\cm{A}^*), \cm{A}_{U_1}^k\rangle - 0.5 C_1\|\bm{U}_1^{k} - \bm{U}_1^*\bm{R}_1^k\|_{\Fr}^2\notag\\
	&\hspace{5mm} - 0.5C_2^2 C_1^{-1}  e_\mathrm{stat}^2.
\end{align}

For the last term at \eqref{eq:U(k+1)-U*}, it holds that
\begin{align}\label{eq:(U(k+1)-U*)-cross2}
	\begin{split}
		&\langle \bm{U}_1^k(\bm{U}_1^{k\prime}\bm{U}_1^k- b^2\bm{I}_{r_1}), \bm{U}_1^{k} - \bm{U}_1^*\bm{R}_1^k\rangle = \langle\bm{U}_1^{k\prime}\bm{U}_1^k- b^2\bm{I}_{r_1},  \bm{U}_1^{k\prime}\bm{U}_1^{k} - \bm{U}_1^{k\prime}\bm{U}_1^*\bm{R}_1^k\rangle\\
		&\hspace{20mm}= 0.5\|\bm{U}_1^{k\prime}\bm{U}_1^k- b^2\bm{I}_{r_1}\|_{\Fr}^2 + 0.5\langle\bm{U}_1^{k\prime}\bm{U}_1^k- b^2\bm{I}_{r_1},  (\bm{U}_1^k-\bm{U}_1^{*}\bm{R}_1^k)^\prime(\bm{U}_1^k-\bm{U}_1^*\bm{R}_1^k)\rangle \\
		&\hspace{20mm}\geq 0.25\|\bm{U}_1^{k\prime}\bm{U}_1^k- b^2\bm{I}_{r_1}\|_{\Fr}^2 - 0.25\|\bm{U}_1^k-\bm{U}_1^*\bm{R}_1^k\|_{\Fr}^4\\
		&\hspace{20mm}\geq 0.25\|\bm{U}_1^{k\prime}\bm{U}_1^k- b^2\bm{I}_{r_1}\|_{\Fr}^2 - 0.25C_1 \|\bm{U}_1^k-\bm{U}_1^*\bm{R}_1^k\|_{\Fr}^2,
	\end{split}
\end{align}
where the second equality is due to the fact that, for a symmetric matrix $\bm{W}\in\mathbb{R}^{r_1\times r_1}$, 
$$\langle\bm{W}, \bm{U}_1^\prime\bm{U}_1 - \bm{U}_1^\prime\bm{U}_1^*\bm{R}_1\rangle=0.5\langle\bm{W}, (\bm{U}_1-\bm{U}_1^*\bm{R}_1)^\prime(\bm{U}_1-\bm{U}_1^*\bm{R}_1) \rangle +0.5\langle\bm{W}, \bm{U}_1^\prime\bm{U}_1-b^2\bm{I}_{r_1} \rangle, $$
the first inequality is due to Cauchy-Schwarz inequality and the fact that $xy\leq 0.5x^2 + 0.5y^2$, and the last inequality is due to the fact that $\|\bm{U}_1^k-\bm{U}_1^*\bm{R}_1^k\|_{\Fr}^2 \leq E^k \leq C_1$ at \eqref{eq:E-upperbounds}.
We combine the last two terms of \eqref{eq:U(k+1)-U*}, i.e. \eqref{eq:(U(k+1)-U*)-cross1} and \eqref{eq:(U(k+1)-U*)-cross2}, with the scaling constant $a>0$, and it leads to
\begin{align}\label{eq:(U(k+1)-U*)-cross}
	&\langle \nabla_{{U}_1}\mathcal{L}(\cm{A}^k), \bm{U}_1^{k} - \bm{U}_1^*\bm{R}_1^k\rangle + a \langle \bm{U}_1^k(\bm{U}_1^{k\prime}\bm{U}_1^k- b^2\bm{I}_{r_1}), \bm{U}_1^{k} - \bm{U}_1^*\bm{R}_1^k\rangle \notag\\
	& \geq \langle \nabla\mathcal{L}(\cm{A}^k) - \nabla\mathcal{L}(\cm{A}^*), \cm{A}_{U_1}^k\rangle - (1+a) C_1\|\bm{U}_1^{k} - \bm{U}_1^*\bm{R}_1^k\|_{\Fr}^2 \notag\\
	&\hspace{5mm}+ 0.25a\|\bm{U}_1^{k\prime}\bm{U}_1^k- b^2\bm{I}_{r_1}\|_{\Fr}^2 - 0.5C_2^2 C_1^{-1}  e_\mathrm{stat}^2 =: Q_{1,2}.
\end{align}

By plugging \eqref{eq:(U(k+1)-U*)-quadratic} and \eqref{eq:(U(k+1)-U*)-cross} into  \eqref{eq:U(k+1)-U*}, it holds that
\begin{align} \label{eq:(U1(k+1)-U1*)-main}
	\|\bm{U}_1^{k+1} - \bm{U}_1^*\bm{R}_1^k\|_{\Fr}^2 \leq \|\bm{U}_1^{k} - \bm{U}_1^*\bm{R}_1^k\|_{\Fr}^2  + \eta^2 Q_{1,1} - 2\eta Q_{1,2}.
\end{align}
We can similarly define $Q_{2,1}$ and $Q_{2,2}$ such that
\begin{align} \label{eq:(U2(k+1)-U2*)-main}
	\|\bm{U}_2^{k+1} - \bm{U}_2^*\bm{R}_2^k\|_{\Fr}^2 \leq \|\bm{U}_2^{k} - \bm{U}_2^*\bm{R}_2^k\|_{\Fr}^2  + \eta^2 Q_{2,1} - 2\eta Q_{2,2}.
\end{align}

We now return to deal with the last component in \eqref{eq:tildeE(k+1)}. Let $\nabla_{\mathcal{G}}\mathcal{L}(\cm{A})$ be the partial derivative of the loss function $\mathcal{L}(\cm{A})$ with respect to $\cm{G}$, and it holds that $\nabla_{\mathcal{G}}\mathcal{L}(\cm{A}) = \nabla \mathcal{L}(\cm{A}) \times_1 \bm{U}_1^\prime \times_2 \bm{U}_2^\prime$. Then the gradient descent update of $\cm{\widetilde{G}}_{\breve{S}}^{k+1}$ gives
\begin{align} \label{eq:tildeG(k+1)-G*}
	\|\cm{\widetilde{G}}_{\breve{S}}^{k+1} - \cm{G}^* \times_1 (\bm{R}_1^k)^{ \prime} \times_2 (\bm{R}_2^k)^{\prime} \|_{\Fr}^2 &= \|\cm{G}^k - \eta[\nabla_{\mathcal{G}}\mathcal{L}(\cm{A}^k)]_{\breve{S}} - \cm{G}^*\times_1(\bm{R}_1^{k})^\prime \times_2 (\bm{R}_2^{k})^\prime\|_{\Fr}^2 \notag\\
	&=\|\cm{G}^k - \cm{G}^*\times_1(\bm{R}_1^{k})^\prime \times_2 (\bm{R}_2^{k})^\prime\|_{\Fr}^2 + \eta^2\|[\nabla_{\mathcal{G}}\mathcal{L}(\cm{A}^k)]_{\breve{S}}\|_{\Fr}^2\notag\\
	&\hspace{5mm}-2\eta\langle \nabla_{\mathcal{G}}\mathcal{L}(\cm{A}^k), \cm{G}^k - \cm{G}^*\times_1(\bm{R}_1^{k})^\prime \times_2 (\bm{R}_2^{k})^\prime\rangle.
\end{align}
For the second term at \eqref{eq:tildeG(k+1)-G*}, by the definition of dual norm, $|\breve{S}| \leq 3s$ and the conditions at \eqref{eq:U&tildeG-upperbounds},
\begin{align} \label{eq:(tildeG(k+1)-G*)-quadratic}
	\|[\nabla_{\mathcal{G}}\mathcal{L}(\cm{A}^k)]_{\breve{S}}\|_{\Fr}^2 
	&= \sup_{\cmt{M}\in\mathbb{R}^{r_1\times r_2\times T_0},\|\cmt{M}\|_{\Fr}=1}\langle[\nabla_{\mathcal{G}}\mathcal{L}(\cm{A}^k)]_{\breve{S}}, \cm{M}\rangle^2\notag\\
	&\leq \sup_{\cmt{M}\in\mathbb{R}^{r_1\times r_2\times T_0},\|\cmt{M}\|_{\Fr}=1,\|\cmt{M}\|_0\leq 3s}\langle\nabla\mathcal{L}(\cm{A}^k), \cm{M}\times_1\bm{U}_1^k\times_2\bm{U}_2^k\rangle^2\notag\\
	&\leq 2\sup_{\cmt{M}\in\mathbb{R}^{r_1\times r_2\times T_0},\|\cmt{M}\|_{\Fr}=1,\|\cmt{M}\|_0\leq 3s}\langle\nabla\mathcal{L}(\cm{A}^*), \cm{M}\times_1\bm{U}_1^k\times_2\bm{U}_2^k\rangle^2 \notag\\
	&\hspace{5mm}+ 2\sup_{\cmt{M}\in\mathbb{R}^{r_1\times r_2\times T_0},\|\cmt{M}\|_{\Fr}=1,\|\cmt{M}\|_0\leq 3s}\langle\nabla\mathcal{L}(\cm{A}^k) - \nabla\mathcal{L}(\cm{A}^*), \cm{M}\times_1\bm{U}_1^k\times_2\bm{U}_2^k\rangle^2 \notag\\
	&\leq 2C_3^2\left(e_\mathrm{stat}^2+\|\nabla\mathcal{L}(\cm{A}^k)-\nabla\mathcal{L}(\cm{A}^*)\|_{\Fr}^2\right) =: Q_{\mathcal{G},1},
\end{align}
with $C_3 = 1.5\sigma_U^{1/2}$. For the last term at \eqref{eq:tildeG(k+1)-G*}, let $\cm{A}_{\mathcal{G}} = (\cm{G} - \cm{G}^*\times_1\bm{R}_1^\prime \times_2 \bm{R}_2^\prime)\times_1\bm{U}_1\times_2\bm{U}_2$ and, by the conditions at \eqref{eq:U&tildeG-upperbounds}, $\|\cm{A}_{\mathcal{G}}^k\|_{\Fr} \leq C_3\|\cm{G}^k - \cm{G}^*\times_1(\bm{R}_1^{k})^\prime \times_2 (\bm{R}_2^{k})^\prime\|_{\Fr}$. From the definition of $\nabla_{\mathcal{G}}\mathcal{L}(\cm{A}^k)$, it holds that
\begin{align*}
	\langle \nabla_{\mathcal{G}}\mathcal{L}(\cm{A}^k), \cm{G}^k - \cm{G}^*\times_1(\bm{R}_1^{k})^\prime \times_2 (\bm{R}_2^{k})^\prime\rangle
	&=
	\langle \nabla\mathcal{L}(\cm{A}^k) - \nabla\mathcal{L}(\cm{A}^*), \cm{A}_{\mathcal{G}}^k\rangle + \langle \nabla\mathcal{L}(\cm{A}^*), \cm{A}_{\mathcal{G}}^k\rangle.
\end{align*}
By the conditions at \eqref{eq:U&tildeG-upperbounds}, $|\breve{S}| \leq 3s$ and that $xy \leq 0.5x^2 + 0.5y^2$, 
\begin{align*}
	\langle \nabla\mathcal{L}(\cm{A}^*),  \cm{A}_{\mathcal{G}}^k\rangle &\leq  \sup_{\cmt{M}\in\bm{\Theta}^{\mathrm{SP}}_{1}(r_1,r_2,2s)}\langle\nabla\mathcal{L}(\cm{A}^*),\cm{M}\rangle\cdot\|\cm{A}_{\mathcal{G}}^k\|_{\Fr}\\
	&\leq 0.5C_3^2 C_1^{-1}  e_\mathrm{stat}^2 + 0.5 C_1\|\cm{G}^k - \cm{G}^*\times_1(\bm{R}_1^{k})^\prime \times_2 (\bm{R}_2^{k})^\prime\|_{\Fr}^2,
\end{align*}
which leads to 
\begin{align} \label{eq:(tildeG(k+1)-G*)-cross}
	\begin{split}
		&\langle \nabla_{\mathcal{G}}\mathcal{L}(\cm{A}^k), \cm{G}^k - \cm{G}^*\times_1(\bm{R}_1^{k})^\prime \times_2 (\bm{R}_2^{k})^\prime\rangle\\
		&\hspace{3mm}\geq \langle \nabla\mathcal{L}(\cm{A}^k) - \nabla\mathcal{L}(\cm{A}^*), \cm{A}_{\mathcal{G}}^k\rangle - C_1\|\cm{G}^k - \cm{G}^*\times_1(\bm{R}_1^{k})^\prime \times_2 (\bm{R}_2^{k})^\prime\|_{\Fr}^2- C_3^2 C_1^{-1}  e_\mathrm{stat}^2 =: Q_{\mathcal{G},2}.
	\end{split}
\end{align}
By plugging \eqref{eq:(tildeG(k+1)-G*)-quadratic} and \eqref{eq:(tildeG(k+1)-G*)-cross} into \eqref{eq:tildeG(k+1)-G*}, we can obtain that
\begin{align}\label{eq:(tildeG(k+1)-G*)-main}
	\|\cm{\widetilde{G}}_{\breve{S}}^{k+1} - \cm{G}^* \times_1 (\bm{R}_1^k)^{ \prime} \times_2 (\bm{R}_2^k)^{\prime} \|_{\Fr}^2 \leq \|\cm{G}^k - \cm{G}^*\times_1(\bm{R}_1^{k})^\prime \times_2 (\bm{R}_2^{k})^\prime\|_{\Fr}^2 + \eta^2 Q_{\mathcal{G},1} - 2\eta Q_{\mathcal{G},2},
\end{align}
which, together with \eqref{eq:(U(k+1)-U*)-cross} and \eqref{eq:(U1(k+1)-U1*)-main}, leads to \eqref{eq:step2}.

\noindent\textbf{Step 3} This step aims to develop a lower bound for $Q_{\mathcal{G},2} + \sum_{i=1}^{2}Q_{i,2}$ at \eqref{eq:step2}.
From \eqref{eq:(U(k+1)-U*)-cross}, a similar definition for $Q_{2,2}$, \eqref{eq:(tildeG(k+1)-G*)-cross} and the definition of $E^k$, we have
\begin{align} \label{eq:Q_{k,2}}
	Q_{\mathcal{G},2} + \sum_{i=1}^{2}Q_{i,2} \geq& \langle \nabla\mathcal{L}(\cm{A}^k)-\nabla\mathcal{L}(\cm{A}^*), \cm{A}_{\mathcal{G}}^k + \sum_{i=1}^{2}\cm{A}_{U_i}^k\rangle - (1+a)C_1 E^k \notag\\
	& + 0.25a\sum_{i=1}^{2}\|\bm{U}_i^{k\prime}\bm{U}_i^k - b^2\bm{I}_{r_i}\|_{\Fr}^2 - C_4e_\mathrm{stat}^2,
\end{align}
where $C_4 = (C_2^2 + C_3^2)C_1^{-1}$. By the conditions at \eqref{eq:U&tildeG-upperbounds} and \eqref{eq:E-upperbounds}, Lemma \ref{lemma:Hepsilon} holds with $B \leq (E^k)^{1/2} \leq (C_1E^k)^{1/4}$. Then, by plugging $c_e = 0.1$ to Lemma \ref{lemma:Hepsilon}, the first term becomes
\begin{align*}
	\langle \nabla\mathcal{L}(\cm{A}^k)-\nabla\mathcal{L}(\cm{A}^*), \cm{A}_{\mathcal{G}}^k + \sum_{i=1}^{2}\cm{A}_{U_i}^k\rangle =&\langle \nabla\mathcal{L}(\cm{A}^k)-\nabla\mathcal{L}(\cm{A}^*), \cm{A}^k-\cm{A}^*\rangle\\
	& + \langle \nabla\mathcal{L}(\cm{A}^k)-\nabla\mathcal{L}(\cm{A}^*), \cm{H}_\epsilon^k\rangle,
\end{align*} 
where $\|\cm{H}_\epsilon^k\|_{\Fr} \leq 3(\sigma_U^{1/2} + \sigma_U^{1/4})(C_1 E^k)^{1/2}$. Given \eqref{eq:RGC} holds for some given $\alpha, \beta > 0$, we can use $xy \leq 0.5x^2 + 0.5y^2$ to further lower bound the above term,  

\begin{align*}
	\langle \nabla\mathcal{L}(\cm{A}^k)-\nabla\mathcal{L}(\cm{A}^*), \cm{A}_{\mathcal{G}}^k + \sum_{i=1}^{2}\cm{A}_{U_i}^k\rangle &\geq \alpha \|\cm{A}^k-\cm{A}^*\|_{\Fr}^2\\
	&\hspace{5mm} + 0.5\beta\|\nabla\mathcal{L}(\cm{A}^k)-\nabla\mathcal{L}(\cm{A}^*)\|_{\Fr}^2 - 0.5\beta^{-1}\|\cm{H}_\epsilon^k\|_{\Fr}^2\\
	&\geq   \frac{\alpha\sigma_L^{3/2}}{\kappa^{1/2}}C_5E^k - \alpha C_{U,1}^{-1}C_{U,2}\sum_{i=1}^{2}\|\bm{U}_i^{k\prime}\bm{U}_i^k - b^2\bm{I}_{r_i}\|_{\Fr}^2\\
	&\hspace{5mm}+0.5\beta\|\nabla\mathcal{L}(\cm{A}^k)-\nabla\mathcal{L}(\cm{A}^*)\|_{\Fr}^2 - 0.5\beta^{-1}\|\cm{H}_\epsilon^k\|_{\Fr}^2,
\end{align*}
where the second inequality is obtained from \eqref{eq:A->E->A} with $C_5 = [\sigma_L^{3/2}\kappa^{-1/2}C_{U,1}]^{-1}$.
Then, this jointly with $\|\cm{H}_\epsilon^k\|_{\Fr} \leq 3(\sigma_U^{1/2} + \sigma_U^{1/4})(C_1 E^k)^{1/2}$ can be used to further lower bound the inequality at \eqref{eq:Q_{k,2}}, which becomes
\begin{align*}
	Q_{\mathcal{G},2} + \sum_{i=1}^{2}Q_{i,2} \geq& \left( \frac{\alpha\sigma_L^{3/2}}{\kappa^{1/2}}C_5 -[1+a+5\beta^{-1}(\sigma_U^{1/2}+\sigma_U^{1/4})^2]C_1 \right) E^k\notag\\
	& + [0.25a - \alpha C_{U,1}^{-1}C_{U,2}]\sum_{i=1}^{2}\|\bm{U}_i^{k\prime}\bm{U}_i^k - b^2\bm{I}_{r_i}\|_{\Fr}^2 \notag\\
	&+0.5\beta\|\nabla\mathcal{L}(\cm{A}^k)-\nabla\mathcal{L}(\cm{A}^*)\|_{\Fr}^2- C_4e_\mathrm{stat}^2.
\end{align*}
Then, we choose $c_0>0$ contained in the constant term $C_1$ small enough such that $[1+a+5\beta^{-1}(\sigma_U^{1/2}+\sigma_U^{1/4})^2]C_1 \leq 0.5 \alpha\sigma_L^{3/2}\kappa^{-1/2}C_5$. From the definition right after \eqref{eq:A->E->A}, we have $C_{U,1} \geq 3\sigma_U^{-1}$ and $C_{U,2}\leq 10(\sigma_U^{-1/2}+1)$, and they lead to $C_{U,1}^{-1}C_{U,2} \leq 4(\sigma_U^{1/2}+\sigma_U)$. Let $a = 80\alpha (\sigma_U^{1/2}+\sigma_U)$ leading to $\alpha C_{U,1}^{-1}C_{U,2} \leq 0.05a$, and the above inequality takes the simple form
\begin{align} \label{eq:Q-2}
	\begin{split}
		Q_{\mathcal{G},2} + \sum_{i=1}^{2}Q_{i,2} \geq&  \frac{\alpha\sigma_L^{3/2}}{2\kappa^{1/2}}C_5E^k + 0.2a \sum_{i=1}^{2}\|\bm{U}_i^{k\prime}\bm{U}_i^k - b^2\bm{I}_{r_i}\|_{\Fr}^2  -C_4e_\mathrm{stat}^2\\
		&+0.5\beta\|\nabla\mathcal{L}(\cm{A}^k)-\nabla\mathcal{L}(\cm{A}^*)\|_{\Fr}^2.
	\end{split}
\end{align}

\noindent\textbf{Step 4} 
This step uses the intermediate results at Steps 2 and 3 to bound $\widetilde{E}^{k+1}$ with $E^{k}$. From \eqref{eq:(U(k+1)-U*)-quadratic}, a similar definition for $Q_{2,1}$ and \eqref{eq:(tildeG(k+1)-G*)-quadratic}, we have 
\begin{align*}
	Q_{\mathcal{G},1} + \sum_{i=1}^{2}Q_{i,1} =& (8C_2^2 + 2C_3^2)\left\{\|\nabla\mathcal{L}(\cm{A}^k)-\nabla\mathcal{L}(\cm{A}^*)\|_{\Fr}^2 + e_\mathrm{stat}^2\right \}\\
	&+3\sigma_U^{1/2}a^2\sum_{i=1}^{2}\|\bm{U}_i^{k\prime}\bm{U}_i^k - b^2\bm{I}_{r_i}\|_{\Fr}^2,
\end{align*}
which, together with \eqref{eq:Q-2}, implies that 
\begin{align}\label{eq:Q-main}
	\begin{split}
		&\eta^2(Q_{\mathcal{G},1} + \sum_{i=1}^{2}Q_{i,1}) - 2\eta (Q_{\mathcal{G},2} + \sum_{i=1}^{2}Q_{i,2}) \leq -\frac{\eta\alpha\sigma_L^{3/2}}{\kappa^{1/2}}C_5 E^k \\
		&\hspace{60mm}+[\eta^2(8C_2^2 + 2C_3^2) + 2\eta C_4]e_\mathrm{stat}^2\\
		&\hspace{60mm}+\left(3\eta^2\sigma_U^{1/2}a^2 - 0.4\eta a \right)\sum_{i=1}^{2}\|\bm{U}_i^{k\prime}\bm{U}_i^k - b^2\bm{I}_{r_i}\|_{\Fr}^2\\
		&\hspace{60mm}+\left(\eta^2(8C_2^2+2C_3^2) - \eta \beta\right)\|\nabla\mathcal{L}(\cm{A}^k)-\nabla\mathcal{L}(\cm{A}^*)\|_{\Fr}^2.
	\end{split}
\end{align}

Recall that $C_2=1.5\sigma_U^{3/4}$, $C_3=1.5\sigma_U^{1/2}$ and $a=80\alpha (\sigma_U^{1/2}+\sigma_U)$, and it holds that $8C_2^2 + 2C_3^2 \leq 18\sigma_U(1+\sigma_U^{1/2})$ and $\sigma_U^{1/2}a \leq 80\alpha \sigma_U(1+\sigma_U^{1/2}) \leq 20\beta^{-1}\sigma_U(1+\sigma_U^{1/2})$, where the second inequality comes from $\alpha\beta\leq 0.25$; see the discussion after \eqref{eq:RGC}. Set $\eta = \eta_0\beta[(1+\sigma_U)(1+\sigma_U^{1/2})]^{-1}$ and $\eta_0 \leq 1/150$, and it can be verified that
\begin{align}\label{eq:negative-cond}
	\begin{split}
		3\eta^2\sigma_U^{1/2}a^2 - 0.4\eta a  \leq 0\hspace{2mm}\text{and}\hspace{2mm}\eta^2(8C_2^2+2C_3^2) - \eta \beta \leq 0.
	\end{split}
\end{align}
Again, note that $\eta = \eta_0\beta[(1+\sigma_U)(1+\sigma_U^{1/2})]^{-1}$ with $\eta_0 \leq 1/150$ and $\eta(C_2^2 + C_3^2) \leq 0.02\beta$, leading to $\eta^2(8C_2^2 + 2C_3^2) \leq 0.02\beta^2$ and $\eta C_4 =  \eta(C_2^2 + C_3^2)C_1^{-1} \leq 0.02\beta C_1^{-1}$. This, together with \eqref{eq:step2}, \eqref{eq:Q-main} and \eqref{eq:negative-cond}, implies that
\begin{align}\label{eq:tildeE(k+1)<E(k)-sub}
	\widetilde{E}^{k+1} \leq \left(1 - \eta_0\alpha\beta C_6 \right)E^k +  0.02\beta(\beta+C_1^{-1})e_\mathrm{stat}^2,
\end{align}
where 
$C_6 = \sigma_L^{3/2}\kappa^{-1/2}(1+\sigma_U)^{-1}(1+\sigma_U^{1/2})^{-1} C_5$, and
\begin{equation*}
	\eta_0\alpha\beta C_6 \leq \frac{\eta_0 \alpha\beta}{C_{U,1}} < \frac{\eta_0\sigma_U}{204}<1
\end{equation*}
since $\eta_0 < 204\sigma_U^{-1}$ and $\alpha\beta\leq 0.25$.
Moreover,
\begin{equation*}\label{eq:cvg-constant}
	\alpha \beta C_6 = \frac{\alpha\beta}{(1+\sigma_U)(1+\sigma_U^{1/2})C_{U,1}}\geq
	\frac{\alpha\beta}{204\kappa^2} :=\delta_{\alpha,\beta},
\end{equation*}
which, together with \eqref{eq:tildeE(k+1)<E(k)-sub}, implies that
\begin{equation}\label{eq:tildeE(k+1)<E(k)}
	\widetilde{E}^{k+1} \leq \left(1 - \eta_0 \delta_{\alpha,\beta} \right)E^k +  0.02\beta(\beta+C_1^{-1})e_\mathrm{stat}^2.
\end{equation}

\noindent\textbf{Step 5} 
This step upper bounds $E^{k+1}$ by $\widetilde{E}^{k+1}$, and hence by ${E}^{k}$.
To begin with, we first establish the inequality between $E^{k+1}$ and $\widetilde{E}^{k+1}$,
\begin{equation}
	E^{k+1}
	\leq  \sum_{i=1}^{2} \|\bm{U}_i^{k+1} - \bm{U}_i^*\bm{\widetilde{R}}_i^{k+1}\|_{\Fr}^2 + \|\cm{G}^{k+1} - \cm{G}^* \times_1 (\bm{\widetilde{R}}_1^{ k+1})^{ \prime} \times_2 (\bm{\widetilde{R}}_2^{ k+1})^{ \prime} \|_{\Fr}^2,\label{eq:E(k+1)}
\end{equation}
and $\widetilde{E}^{k+1} = \sum_{i=1}^{2} \|\bm{U}_i^{k+1} - \bm{U}_i^*\bm{\widetilde{R}}_i^{k+1}\|_{\Fr}^2 + \|\cm{\widetilde{G}}^{k+1}_{\breve{S}} - \cm{G}^* \times_1 (\bm{\widetilde{R}}_1^{ k+1})^{ \prime} \times_2 (\bm{\widetilde{R}}_2^{ k+1})^{ \prime} \|_{\Fr}^2$, where $\breve{S} = S_{k}\cup S_{k+1} \cup S_{\gamma}$. Denote by $\breve{s}$ the cardinality of $\breve{S}$, i.e., $\breve{s} = |S_{k+1} \cup S_{k} \cup S_{\gamma}|$.
Note that  $2xy\leq \mu x^2 + \mu^{-1}y^2$ for any $\mu>0$ and $x,y\in\mathbb{R}$, and then the second term at the right hand side of \eqref{eq:E(k+1)} satisfies the following inequality,
\begin{align}
	\begin{split}\label{eq:G(k+1)-G*}
		\|\cm{G}^{k+1} - &\cm{G}^* \times_1 (\bm{\widetilde{R}}_1^{ k+1})^{ \prime} \times_2 (\bm{\widetilde{R}}_2^{ k+1})^{ \prime}\|_{\Fr}^2 \\ &\leq 	(1+\mu_s)\|\cm{\widetilde{G}}^{k+1}_{\breve{S}} - \cm{G}^* \times_1 (\bm{\widetilde{R}}_1^{ k+1})^{ \prime} \times_2 (\bm{\widetilde{R}}_2^{ k+1})^{ \prime}\|_{\Fr}^2 + (1+\frac{1}{\mu_s})\|\cm{G}^{k+1} - \cm{\widetilde{G}}^{k+1}_{\breve{S}}\|_{\Fr}^2,
	\end{split}
\end{align}
where $\mu_s = \sqrt{(\breve{s}-s)/(\breve{s}-s_{\gamma})}<1$. 

Moreover, $\cm{A}^{k+1}=\cm{G}^{k+1}\times_1\bm{U}_1^{k+1}\times_2\bm{U}_2^{k+1}$, $\cm{\widetilde{A}}^{k+1}=\cm{\widetilde{G}}^{k+1}_{\breve{S}}\times_1\bm{U}_1^{k+1}\times_2\bm{U}_2^{k+1}$, and $\cm{A}^{k+1} = \hardt{\cm{\widetilde{A}}_{\breve{S}}^{k+1},s}$. Since $s_{\gamma} < s \leq \breve{s}$, it can be verified that
\begin{align} \label{eq:G(k+1)-tildeG(k+1)}
	&\|\cm{G}^{k+1} - \cm{\widetilde{G}}^{k+1}_{\breve{S}}\|_{\Fr}^2 \leq \left[\prod_{i=1}^{2}\sigma_{\min}(\bm{U}_i^{k+1})\right]^{-2} \|\cm{A}^{k+1} - \cm{\widetilde{A}}^{k+1}_{\breve{S}}\|_{\Fr}^2\notag\\
	& \hspace{20mm}\overset{(\text{Lemma \ref{lemma:HTineq}})}{\leq} \mu_s^2\left[\prod_{i=1}^{2}\sigma_{\min}(\bm{U}_i^{k+1})\right]^{-2}\|\cm{\widetilde{A}}^{k+1}_{\breve{S}} - \cm{A}^*\|_{\Fr}^2  \overset{\text{\eqref{eq:U&tildeG-upperbounds}\&\eqref{eq:A->E->A}}}{\leq} C_7\mu_s^2 \widetilde{E}^{k+1},
\end{align}
which, together with \eqref{eq:E(k+1)} and \eqref{eq:G(k+1)-G*}, implies that
\begin{equation} \label{eq:E(k+1)<tildeE(k+1)}
	E^{k+1} \leq [1 + (1+2C_7)\mu_s] \widetilde{E}^{k+1},
\end{equation}
where $C_7 = 2\sigma_U^{-1}C_L^{-1}=10(1+2\sigma_U^{1/2})$.

Note that $\mu_s$ is a function of $\breve{s}$ with $\breve{s} \geq s$ and, by the condition of $|S_k \cup S_{k+1}| \leq (1+\nu) s$, it holds that $\breve{s} \leq (1+\nu_k)s+s_{\gamma}$. Note that $s_{\gamma}\leq \nu_k s$, and it can be verified that $\mu_s \leq \sqrt{2\nu_k/(1+\nu_k)} <  \sqrt{2\nu_k}$, and hence $(1+2C_7)\mu_s\leq \eta_0 \delta_{\alpha,\beta}$ as long as $\nu_k \leq (1/128) (3+5\sigma_U^{1/2})^{-2}\eta_0^2\delta_{\alpha,\beta}^2$.
As a result, from \eqref{eq:tildeE(k+1)<E(k)} and \eqref{eq:E(k+1)<tildeE(k+1)}, 
\begin{align}\label{eq:E-main}
	E^{k+1} \leq \left(1 - \eta_0^2 \delta_{\alpha,\beta}^2 \right)E^k +  C_8e_\mathrm{stat}^2,
\end{align}
where $C_8 =  0.02[1 + (1+2C_7)\mu_s]\beta(\beta+C_1^{-1})$.
By unfolding this iteration, we can obtain that
\begin{align*}
	E^{K} \leq  \left(1 - \eta_0^2 \delta_{\alpha,\beta}^2 \right)^K E^0 + \eta_0^{-2} \delta_{\alpha,\beta}^{-2}C_8e_\mathrm{stat}^2,
\end{align*}
which, together with \eqref{eq:A->E->A} and $\bm{U}_i^{0\prime}\bm{U}_i^{0} = b^2\bm{I}_{r_i}$ for $1\leq i \leq 2$, leads to
\begin{align}\label{eq:final}
	\|\cm{A}^K - \cm{A}^*\|_{\Fr}^2 &\leq C_L^{-1} E^K \leq  C_L^{-1}\left(1 - \eta_0^2 \delta_{\alpha,\beta}^2 \right)^K E^0 + C_L^{-1}C_8 \eta_0^{-2} \delta_{\alpha,\beta}^{-2} e_\mathrm{stat}^2\notag\\
	&\leq C_{U,1} C_L^{-1}\left(1 - \eta_0^2 \delta_{\alpha,\beta}^2 \right)^K\|\cm{A}^0 - \cm{A}^*\|_{\Fr}^2 + C_{U,1}C_L^{-1}C_8\eta_0^{-2} \delta_{\alpha,\beta}^{-2}  e_\mathrm{stat}^2.
\end{align}

\noindent\textbf{Step 6} (Verifying conditions at \eqref{eq:RGC}, \eqref{eq:U&tildeG-upperbounds} and \eqref{eq:E-upperbounds})
We first show that the conditions at  \eqref{eq:RGC} hold with certain values of $\alpha$ and $\beta$.
From Lemmas \ref{lemma:RSC/RSM-algo} and \ref{lemma:RE->RGC}, if $T_1\gtrsim s^2(N + \log T_0)$ and we choose
\begin{equation*} \label{eq:alpha_beta}
	\alpha = \frac{3\kappa_{\mathrm{RSC}}\kappa_{\mathrm{RSS}}}{\kappa_{\mathrm{RSC}} + 3\kappa_{\mathrm{RSS}}}\hspace{4mm}\text{and}\hspace{4mm}\beta = \frac{1}{\kappa_{\mathrm{RSC}} + 3\kappa_{\mathrm{RSS}}},
\end{equation*}
then the inequality at \eqref{eq:RGC} holds with probability at least $1-Ce^{ - N - \log T_0}$.
Note that $\kappa_{\mathrm{RSC}}\leq \kappa_{\mathrm{RSS}}$ and $\alpha\beta\leq0.25$, and it can be further verified that
\begin{align}\label{eq:alphaxbeta}
	\beta \leq 0.25\kappa_{\mathrm{RSC}}^{-1} \hspace{5mm} \text{and} \hspace{5mm}\frac{3\kappa_{\mathrm{RSC}}}{16\kappa_{\mathrm{RSS}}} \leq \alpha\beta \leq \frac{1}{4},
\end{align}
which can be used to update quantities, $\delta_{\alpha,\beta}$, $a$ and $\nu_k$ in Steps 4, 3 and 5, respectively. 
Specifically,
\[
\delta_{\alpha,\beta} = \frac{\alpha\beta}{204\kappa^2}\geq \frac{\kappa_{\mathrm{RSC}}}{1088\kappa_{\mathrm{RSS}}\kappa^2}:=\delta,
\]
\begin{equation*}
	a =80\alpha (\sigma_U^{1/2}+\sigma_U) = \frac{240(\sigma_U^{1/2}+\sigma_U)}{\kappa_{\mathrm{RSS}}^{-1} + 3\kappa_{\mathrm{RSC}}^{-1}}\hspace{5mm}\text{and}\hspace{5mm}\nu_k \leq  10^{-10}\cdot \frac{\eta_0^2\kappa_{\mathrm{RSC}}^2}{\kappa_{\mathrm{RSS}}^2\kappa^4}, 
\end{equation*}
where the final inequality ensures that $\nu_k \leq (1/128)\eta_0^2\delta_{\alpha,\beta}^2(3+5\sigma_U^{1/2})^{-2}$.
Moreover, since $\kappa_{\mathrm{RSC}}<1$ and $c_0 < 1$,
\begin{equation}\label{eq:C8}
	C_8 = 0.02[1 + (1+2C_7)\mu_s]\beta(\beta+C_1^{-1}) \leq 0.62 c_0^{-1} \kappa^{2}\kappa_{\mathrm{RSC}}^{-2},
\end{equation}
which, together with the fact that $ C_{U,1} C_L^{-1} \lesssim \kappa^{3/2}\sigma_L^{-1/2} $, can be used to rewrite \eqref{eq:final} into
\begin{equation*}
	\|\cm{A}^K - \cm{A}^*\|_{\Fr}^2 
	\lesssim \kappa^{3/2}\sigma_L^{-1/2} \left(1 - \eta_0^2 \delta^2 \right)^K\|\cm{A}^0 - \cm{A}^*\|_{\Fr}^2 + \kappa^{7/2}\sigma_L^{-1/2}\kappa_{\mathrm{RSC}}^{-2} \eta_0^{-2} \delta^{-2}  e_\mathrm{stat}^2.
\end{equation*}

We next verify \eqref{eq:E-upperbounds}. Note that $\|\bm{U}_i^{0\prime} \bm{U}_i^0 -b^2\bm{I}_{r_i}\|_{\Fr}^2 = 0$ for $1\leq i\leq 2$ and, by the initialization error bound and \eqref{eq:A->E->A}, it holds that
\begin{align}
	E^0 \leq C_{U,1}\|\cm{A}^0 - \cm{A}^*\|_{\Fr}^2 \leq c_0 \frac{\sigma_L^{1/2}}{\kappa^{3/2}}.
\end{align}
Suppose that the above inequality holds for $E^k$.
Then, by \eqref{eq:E-main} and \eqref{eq:C8},
\begin{align*}
	E^{k+1} &\leq \left(1 - \eta_0^2 \delta^2 \right)E^k +  0.62 c_0^{-1} \kappa^{2}\kappa_{\mathrm{RSC}}^{-2} e_\mathrm{stat}^2\\
	&\leq \left(1 - \eta_0^2 \delta^2 \right)\cdot c_0 \frac{\sigma_L^{1/2}}{\kappa^{3/2}} +  0.62 c_0^{-1} \kappa^{2}\kappa_{\mathrm{RSC}}^{-2}e_\mathrm{stat}^2\\
	&= c_0 \frac{\sigma_L^{1/2}}{\kappa^{3/2}} - \left( c_0 \frac{\eta_0^2 \delta^2\sigma_L^{1/2}}{\kappa^{3/2}}  - 0.62 c_0^{-1} \kappa^{2}\kappa_{\mathrm{RSC}}^{-2} e_\mathrm{stat}^2 \right)\leq c_0 \frac{\sigma_L^{1/2}}{\kappa^{3/2}}
\end{align*}
since, when $e_\mathrm{stat}^2 \leq (1/800) c_0^2  \eta_0^2\kappa^{-8}\kappa_{\mathrm{RSS}}^{-4}\kappa_{\mathrm{RSC}}^4$, 
\[
c_0 \frac{\eta_0^2 \delta^2\sigma_L^{1/2}}{\kappa^{3/2}} \geq 0.62 c_0^{-1} \kappa^{2}\kappa_{\mathrm{RSC}}^{-2} e_\mathrm{stat}^2.
\]
Hence the inequality at \eqref{eq:E-upperbounds} holds. 

Finally, we verify \eqref{eq:U&tildeG-upperbounds}. Since $\kappa \geq 1$ and $b=\sigma_U^{1/4}$, it holds $E^k \leq c_0 {\sigma_L^{1/2}}{\kappa^{-3/2}} \leq c_0\sigma_U^{1/2}$ where $c_0<0.01$ is a very small number. From the definition of $E^k$, we can verify that, for $i=1$ or 2:
\begin{align*}
	\|\bm{U}_i^k\|_{\op} &\leq \|\bm{U}_i^*\bm{R}_i^k\|_{\op} + \|\bm{U}_i^k-\bm{U}_i^*\bm{R}_i^k\|_{\op} \leq \sigma_U^{1/4} + \|\bm{U}_i^k-\bm{U}_i^*\bm{R}_i^k\|_{\Fr} \leq 1.1\sigma_U^{1/4} ,\\
	\sigma_{\min}(\bm{U}_i^k) &\geq \|\bm{U}_i^*\bm{R}_i^k\|_{\op} - \|\bm{U}_i^k-\bm{U}_i^*\bm{R}_i^k\|_{\op} \geq \sigma_U^{1/4}  - \|\bm{U}_i^k-\bm{U}_i^*\bm{R}_i^k\|_{\Fr} \geq 0.9\sigma_U^{1/4} ,\hspace{2mm}\text{and}\\
	\|\cm{G}^k_{(i)}\|_{\op} &\leq \|\bm{R}_i^k\cm{G}^*_{(i)} (\bm{I}_{T_0}\otimes \bm{R}_{i-1}^k)^\prime\|_{\op} + \|\cm{G}^k_{(i)} -\bm{R}_i^k\cm{G}^*_{(i)} (\bm{I}_{T_0}\otimes \bm{R}_{i-1}^k)^\prime\|_{\op}\\
	&\leq \sigma_U^{1/2} + \sqrt{c_0}\sigma_U^{1/4}  \leq 1.1\sigma_U^{1/2}.
\end{align*}
We hence accomplish the whole proof.

\subsubsection{Proof of Corollary \ref{cor:algorithm}}

Note that, by Assumption \ref{assum:Adecay} and the low-rank conditions at \eqref{eq:trunc_tensor}, $\|\bm{A}_j^*\|_\Fr\leq C\sqrt{r_1\wedge r_2}\rho^j$ for $j\geq 1$. 
Let $Q_\gamma$ be the smallest integer such that $C\sqrt{r_1\wedge r_2}\rho^j\leq \gamma$ for all $j\geq Q_\gamma$, and it can be verified that $Q_\gamma =  \lceil \log(C\sqrt{r_1 \wedge r_2}/\gamma)/\log(1/\rho) \rceil$ and $\tau^2 Q_{\gamma}\lesssim 1$. As a result, by a method similar to \eqref{eq:approx_bound}, we can show that
\begin{align}\label{eq:approx_bound2}
	\|\cm{A}^*_{S_\gamma^c}\|_{\Fr}^2 \lesssim \gamma^2 Q_{\gamma}\hspace{2mm}\text{and}\hspace{2mm}\tau^2\|\cm{A}^*_{S_\gamma^c}\|_{\ddagger}^2 \lesssim \gamma^2 Q_{\gamma}.
\end{align}
Moreover, by choosing $\gamma \asymp \sqrt{\{(r_1\wedge r_2)N+\log T_0\}/{T_1}}$, we have $Q_{\gamma}\asymp \log T_1/\log(1/\rho)$. Let $s \asymp Q_{\gamma} \asymp \log T_1/ \log(1/\rho)$, and it is implied by Theorem \ref{thm:stat} that
\begin{align} \label{eq:e_stat^2}
	e_{\mathrm{stat}}^2  \lesssim \frac{[(r_1\wedge r_2)N+\log T_0]s}{T_1}.
\end{align}


On the other hand, by plugging the values of $\gamma$ and $Q_{\gamma}$ into the first term in \eqref{eq:approx_bound2}, we have $\|\cm{A}^*_{S_{\gamma}^c}\|_{\Fr}^2 \leq Ce_{\mathrm{stat}}^2$ for an absolute constant $C$. 
Note that $\|\cm{A}^{K} - \cm{A}^*\|_{\Fr}^2 \leq 2 \|\cm{A}^{K} - \cm{A}_{S_{\gamma}}^*\|_{\Fr}^2 + 2 \|\cm{A}_{S_{\gamma}^c}^*\|_{\Fr}^2$ and, from Theorem \ref{thm:optimization},
\begin{align*}
	\|\cm{A}^{K} - \cm{A}^*\|_{\Fr}^2 \leq 2D_1 \left(1 - D_2 \right)^{K}\|\cm{A}^0 - \cm{A}^*_{S_{\gamma}}\|_{\Fr}^2 + 2(C+D_3) e_{\mathrm{stat}}^2,
\end{align*}
where $D_1=\kappa^{3/2}\sigma_L^{-1/2}, D_2=\eta_0^2\delta^2$, $D_3=\kappa^{7/2}\sigma_L^{-1/2}\kappa_{\mathrm{RSC}}^{-2}\eta_0^{-2}\delta^{-2}$, and $C\lesssim D_3$.

As a result, when 
\begin{align}\label{eq:K}
	K \geq \frac{\log(2D_3) + \log e_{\mathrm{stat}}^2 - \log D_1 - \log\|\cm{A}^0 - \cm{A}^*_{S_{\gamma}}\|_{\Fr}^2}{\log (1-D_2)},
\end{align}
the optimization error can be shown to be dominated by the statistical error, i.e.
\begin{align*}
	\|\cm{A}^{K} - \cm{A}^*\|_{\Fr}^2 \lesssim \frac{[(r_1\wedge r_2)N+\log T_0]s}{T_1}.
\end{align*}
Moreover, from \eqref{eq:e_stat^2}, $\log e_{\mathrm{stat}}^2$ can be upper-bounded by some absolute positive constant when $T_1 \gtrsim \{(r_1\wedge r_2)+s^2\}N+s^2\log T_0$.  
Recall that the absolute constant $\eta_0\leq 1/150$ and $\|\cm{A}^0 - \cm{A}^*_{S_\gamma}\|_{\Fr}^2 \lesssim \sigma_L^{5/2}\kappa^{-3/2}$, and the bound at \eqref{eq:K} can be further simplified into
\begin{equation*}
	K \gtrsim \frac{\log(\kappa^{7/2}\sigma_L^{-5/2}\kappa_{\mathrm{RSC}}^{-2}\delta^{-2})}{\log(1-\eta_0^2\delta^2)}.
\end{equation*} 
Hence, the proof of this corollary is accomplished.

\subsubsection{Five auxiliary lemmas}
This subsection gives five auxiliary lemmas used in the proof of Theorem \ref{thm:optimization}.

\begin{lemma}[Restricted strong convexity and smoothness conditions] \label{lemma:RSC/RSM-algo}
	Suppose that Assumptions \ref{assum:glp}--\ref{assum:T0} are satisfied. If $T_1\gtrsim s^2(N + \log T_0)$, then for any $\cm{A}_1, \cm{A}_2\in\bm{\Theta}^{\mathrm{SP}}(r_1,r_2,s)$, 
	\[
	0.5\kappa_{\mathrm{RSC}}\|\cm{A}_1 - \cm{A}_2\|_{\Fr}^2 \leq \mathcal{L}(\cm{A}_1) - \mathcal{L}(\cm{A}_2) - \langle\nabla\mathcal{L}(\cm{A}_2),\cm{A}_1 - \cm{A}_2 \rangle \leq 1.5\kappa_{\mathrm{RSS}} \|\cm{A}_1 - \cm{A}_2\|_{\Fr}^2
	\]
	holds with probability at least
	$1-Ce^{ - N - \log T_0}$, where $\kappa_{\mathrm{RSC}}$ and $\kappa_{\mathrm{RSS}}$ are defined in Theorem \ref{prop:main}.
\end{lemma}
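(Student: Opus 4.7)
The plan is to exploit the quadratic nature of $\mathcal{L}$ and then reduce to concentration bounds on $(T_1)^{-1}\bm{X}_i\bm{X}_j^\prime - \bm{\Gamma}(i-j)$ that are already available in Lemma B.2.6 from the RE analysis. First, since $\mathcal{L}(\cm{A})=(2T_1)^{-1}\|\bm{Y}-\cm{A}_{(1)}\bm{X}\|_{\Fr}^2$ is quadratic in $\cm{A}$, writing $\bm{\Delta}=\cm{A}_1-\cm{A}_2$ one obtains the exact identity
\[
\mathcal{L}(\cm{A}_1) - \mathcal{L}(\cm{A}_2) - \langle \nabla\mathcal{L}(\cm{A}_2),\cm{A}_1-\cm{A}_2\rangle = \frac{1}{2T_1}\|\bm{\Delta}_{(1)}\bm{X}\|_{\Fr}^2,
\]
so the lemma reduces to a two-sided bound on $T_1^{-1}\|\bm{\Delta}_{(1)}\bm{X}\|_{\Fr}^2$. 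Because both $\cm{A}_1$ and $\cm{A}_2$ lie in $\bm{\Theta}^{\mathrm{SP}}(r_1,r_2,s)$, the difference $\bm{\Delta}$ has at most $2s$ nonzero frontal slices, and the key deterministic fact to feed in is the Cauchy--Schwarz inequality at \eqref{eq:Cauchy}: $\|\bm{\Delta}\|_{\ddagger}^2\leq 2s\|\bm{\Delta}\|_{\Fr}^2$.

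For the lower bound, I would simply apply Lemma B.2.5 (the RE condition) and combine it with the sparsity bound above to get
\[
\frac{1}{T_1}\|\bm{\Delta}_{(1)}\bm{X}\|_{\Fr}^2\geq \kappa_{\mathrm{RSC}}\|\bm{\Delta}\|_{\Fr}^2 - \tau^2\|\bm{\Delta}\|_{\ddagger}^2\geq (\kappa_{\mathrm{RSC}} - 2s\tau^2)\|\bm{\Delta}\|_{\Fr}^2,
\]
where $\tau^2\asymp\sqrt{(N+\log T_0)/T_1}$. Under $T_1\gtrsim s^2(N+\log T_0)$ with a sufficiently large absolute constant, $2s\tau^2$ can be absorbed so that the factor in parentheses is at least $\kappa_{\mathrm{RSC}}$ (up to the $0.5$ coming from the $(2T_1)^{-1}$ prefactor), which yields the left inequality.

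For the upper bound, I would retrace the proof of Lemma B.2.5 but on the opposite side. Decompose
\[
\frac{1}{T_1}\|\bm{\Delta}_{(1)}\bm{X}\|_{\Fr}^2 \leq \frac{\mathbb{E}\|\bm{\Delta}_{(1)}\bm{X}\|_{\Fr}^2}{T_1} + \frac{\bigl|\,\|\bm{\Delta}_{(1)}\bm{X}\|_{\Fr}^2 - \mathbb{E}\|\bm{\Delta}_{(1)}\bm{X}\|_{\Fr}^2\bigr|}{T_1}.
\]
The expected term equals $\trace(\bm{\Delta}_{(1)}\bm{\Sigma}_{T_0}\bm{\Delta}_{(1)}^\prime)\leq \lambda_{\max}(\bm{\Sigma}_{T_0})\|\bm{\Delta}\|_{\Fr}^2$, and by the classical spectral bound of \cite{basu2015regularized}, $\lambda_{\max}(\bm{\Sigma}_{T_0})\leq \lambda_{\max}(\bm{\Sigma}_\varepsilon)\mu_{\max}(\bm{\Psi}_*)=\kappa_{\mathrm{RSS}}$. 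The deviation term, as in the proof of Lemma B.2.5, is controlled by $\|\bm{\Delta}\|_{\ddagger}^2\max_{i,j}\|T_1^{-1}\bm{X}_i\bm{X}_j^\prime-\bm{\Gamma}(i-j)\|_{\op}$, which is at most $2s\tau^2\|\bm{\Delta}\|_{\Fr}^2$ on the event of Lemma B.2.6. Together these give $T_1^{-1}\|\bm{\Delta}_{(1)}\bm{X}\|_{\Fr}^2\leq (\kappa_{\mathrm{RSS}} + 2s\tau^2)\|\bm{\Delta}\|_{\Fr}^2$, and again $2s\tau^2$ is absorbed into $0.5\kappa_{\mathrm{RSS}}$ under the sample size assumption, producing the right inequality after multiplying by the $(2T_1)^{-1}$ prefactor. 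The probability $1-Ce^{-N-\log T_0}$ is inherited from Lemma B.2.6.

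The main obstacle is not any single estimate but rather keeping the constants clean so that the slack term $2s\tau^2$ is dominated by $\kappa_{\mathrm{RSC}}$ (for the lower bound) and $\kappa_{\mathrm{RSS}}$ (for the upper bound) simultaneously. This forces the absolute constant hidden in $T_1\gtrsim s^2(N+\log T_0)$ to be chosen large enough, relative to the constants $C$ in $\tau^2$ and in Lemma B.2.6, that $2s\tau^2\leq \tfrac12(\kappa_{\mathrm{RSS}}\wedge\kappa_{\mathrm{RSC}})$; since $\kappa_{\mathrm{RSC}}$ and $\kappa_{\mathrm{RSS}}$ are assumed bounded away from zero and infinity, this is achievable and does not affect the stated rate.
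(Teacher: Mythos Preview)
Your proposal is correct and follows essentially the same route as the paper: reduce to the quadratic form $\frac{1}{2T_1}\|\bm{\Delta}_{(1)}\bm{X}\|_{\Fr}^2$, split into population plus deviation, bound the population part by $\kappa_{\mathrm{RSC}}$ and $\kappa_{\mathrm{RSS}}$ via \cite{basu2015regularized}, control the deviation by $\|\bm{\Delta}\|_{\ddagger}^2\max_{i,j}\|T_1^{-1}\bm{X}_i\bm{X}_j^\prime-\bm{\Gamma}(i-j)\|_{\op}$ using Lemma~\ref{lemma:RE1}, and then absorb $2s\tau^2$ using $\|\bm{\Delta}\|_{\ddagger}^2\leq 2s\|\bm{\Delta}\|_{\Fr}^2$ and the sample-size condition. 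The only cosmetic difference is that for the lower bound you invoke the packaged RE condition (Lemma~\ref{lemma:RE}) directly, whereas the paper re-derives both sides symmetrically from the population/deviation split; this changes nothing, since Lemma~\ref{lemma:RE} is itself proved exactly via that split.
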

\begin{proof}
	Recall that
	\begin{align*}
		\mathcal{L}(\cm{A}_1) - \mathcal{L}(\cm{A}_2) - \langle\nabla\mathcal{L}(\cm{A}_2),\cm{A}_1 - \cm{A}_2 \rangle = \frac{1}{2T_1}\|(\cm{A}_1-\cm{A}_2)_{(1)}\bm{X}\|_{\Fr}^2.
	\end{align*}
	Let $\bm{\Delta} = \cm{A}_1-\cm{A}_2$, and it holds that $\bm{\Delta} \in\bm{\Theta}^{\mathrm{SP}}(2r_1,2r_2,2s)$. Using this notation and ignoring the constant scaling, we have the inequalities
	\begin{align}\label{eq:E|DeltaX|<}
		\frac{1}{T_1}\|\bm{\Delta}_{(1)}\bm{X}\|_{\Fr}^2 \leq \frac{\mathbb{E}\left (\|\bm{\Delta}_{(1)}\bm{X}\|_{\Fr}^2 \right )}{T_1} + 	\frac{\left |\|\bm{\Delta}_{(1)}\bm{X}\|_{\Fr}^2 -  \mathbb{E}\left (\|\bm{\Delta}_{(1)}\bm{X}\|_{\Fr}^2 \right )\right |}{T_1} ,
	\end{align}
	and
	\begin{align}\label{eq:E|DeltaX|>}
		\frac{1}{T_1}\|\bm{\Delta}_{(1)}\bm{X}\|_{\Fr}^2 \geq \frac{\mathbb{E}\left (\|\bm{\Delta}_{(1)}\bm{X}\|_{\Fr}^2 \right )}{T_1} - 	\frac{\left |\|\bm{\Delta}_{(1)}\bm{X}\|_{\Fr}^2 -  \mathbb{E}\left (\|\bm{\Delta}_{(1)}\bm{X}\|_{\Fr}^2 \right )\right |}{T_1}.
	\end{align}
	
	First, by \eqref{eq:exp-min}, \cite{basu2015regularized} and Assumptions \ref{assum:Adecay} \& \ref{assum:error}, we have
	\[
	\kappa_{\mathrm{RSC}}  \|\bm{\Delta}\|_{\Fr}^2 \leq \frac{\mathbb{E}\left (\|\bm{\Delta}_{(1)}\bm{X}\|_{\Fr}^2 \right )}{T_1}
	=\trace \left (\bm{\Delta}_{(1)} \bm{\Sigma}_{T_0} \bm{\Delta}_{(1)}^\prime \right ) \leq \lambda_{\max}(\bm{\Sigma}_\varepsilon)\mu_{\max}(\bm{\Psi}_*) \leq \kappa_{\mathrm{RSS}}\|\bm{\Delta}\|_{\Fr}^2.
	\]
	Moreover, by a method similar to the proof of Lemma \ref{lemma:RE}, 
	\begin{align*}
		\frac{\left |\|\bm{\Delta}_{(1)}\bm{X}\|_{\Fr}^2 -  \mathbb{E}\left (\|\bm{\Delta}_{(1)}\bm{X}\|_{\Fr}^2 \right )\right |}{T_1} 
		& = \left| \trace \left \{\bm{\Delta}_{(1)} \left (\frac{\bm{X}\bm{X}^\prime}{T_1} - \bm{\Sigma}_{T_0}  \right ) \bm{\Delta}_{(1)}^\prime \right \} \right| \\
		& \leq \sum_{i=1}^{T_0}\sum_{j=1}^{T_0} \left|\trace \left [\bm{\Delta}_{i} \left \{\frac{\bm{X}_i\bm{X}_j^\prime}{T_1} -\bm{\Gamma}(i-j)\right \} \bm{\Delta}_{j}^\prime \right ] \right|\\
		&\leq \sum_{i=1}^{T_0}\sum_{j=1}^{T_0} \|\bm{\Delta}_{i}\|_{\Fr}\|\bm{\Delta}_{j}\|_{\Fr} \left \|\frac{\bm{X}_i\bm{X}_j^\prime}{T_1} -\bm{\Gamma}(i-j)\right \|_{\op}\\
		&\leq\|\bm{\Delta}\|_{\ddagger}^2 \max_{1\leq i\leq T_0}\max_{1\leq j\leq T_0} \left \|\frac{\bm{X}_i\bm{X}_j^\prime}{T_1}-\bm{\Gamma}(i-j)\right \|_{\op}.
	\end{align*}
	Note that $\|\bm{\Delta}\|_{\ddagger} \leq \sqrt{2s}\|\bm{\Delta}\|_{\Fr}$ and then, by Lemma \ref{lemma:RE1}, it can be verified that, when $T_1\gtrsim s^2(N + \log T_0)$, 
	\[
	\frac{\left |\|\bm{\Delta}_{(1)}\bm{X}\|_{\Fr}^2 -  \mathbb{E}\left (\|\bm{\Delta}_{(1)}\bm{X}\|_{\Fr}^2 \right )\right |}{T_1} \leq s\tau^2\|\bm{\Delta}\|_{\Fr}^2 \leq 0.5\kappa_{\mathrm{RSC}}\|\bm{\Delta}\|_{\Fr}^2
	\] 
	holds with probability at least $1-C e^{ - N - \log T_0 }$,	where  $\tau^2=C\sqrt{(N+\log T_0)/T_1}$. This, together with \eqref{eq:E|DeltaX|<} and \eqref{eq:E|DeltaX|>}, accomplishes the proof of this lemma.
\end{proof}

\begin{lemma} \label{lemma:A->E->A}
	Consider two tensors $\cm{A}^*=\cm{G}^*\times_1\bm{U}_1^*\times_2\bm{U}_2^*\in\mathbb{R}^{N\times N\times T_0}$ and $\cm{A}=\cm{G}\times_1\bm{U}_1\times_2\bm{U}_2\in\mathbb{R}^{N\times N\times T_0}$, where $\cm{G}^*$ and $\cm{G}\in\mathbb{R}^{r_1\times r_2\times T_0}$ are core tensors, and $\bm{U}_i^*, \bm{U}_i\in\mathbb{R}^{N\times r_i}$ with $i=1$ and 2 are factor matrices. We define their distance below,
	\[
	E = \min_{\bm{R}_i\in \mathcal{O}^{r_i\times r_i},\; 1\leq i\leq 2}\left\{ \sum_{i=1}^{2}\|\bm{U}_i-\bm{U}_i^*\bm{R}_i\|_{\Fr}^2 + \|\cm{G} - \cm{G}^*\times_1\bm{R}_1^\prime\times_2\bm{R}_2^\prime\|_{\Fr}^2 \right\}.
	\]
	Suppose that, for $i=1$ and 2, $\|\bm{U}_i\|_{\op} \leq (1+c_e)\sigma_U^{1/4}$, $\|\cm{G}_{(i)}\|_{\op} \leq (1+c_e)\sigma_U^{1/2}$, $\bm{U}_i^{*\prime}\bm{U}_i^* = b^2 \bm{I}_{r_i}$, and $\sigma_L \leq \sigma_{\min}(\cm{A}^*_{(i)}) \leq \|\cm{A}^*_{(i)}\|_{\op} \leq \sigma_U$, where $c_e>0$, $b>0$, $0<\sigma_L \leq \sigma_U$, and $\sigma_{\min}(\bm{A})$ denotes the smallest nonzero singular value of matrix $\bm{A}$. It then holds that
	\[
	c_3 \|\cm{A} - \cm{A}^*\|_{\Fr}^2 \leq E \leq c_2\|\cm{A} - \cm{A}^*\|_{\Fr}^2 + 2(1+c_1)b^{-2}\sum_{i=1}^{2}\|\bm{U}_i^\prime\bm{U}_i - b^2 \bm{I}_{r_i}\|_{\Fr}^2,
	\]
	where $c_1 = 3(1+c_e)^4\sigma_U^{3/2} b^{-4}$, $c_2 = 3b^{-4} + 8(1+c_1)\sigma_L^{-2}b^{-2}$, and $c_3=[3(1+c_e)^4(\sigma_U+2\sigma_U^{3/2})]^{-1}$.
\end{lemma}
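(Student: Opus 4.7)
The plan is to prove the two inequalities separately, writing $\widetilde{\cm{G}}^* := \cm{G}^*\times_1\bm{R}_1^\prime\times_2\bm{R}_2^\prime$ and exploiting, for any $\bm{R}_1,\bm{R}_2\in\mathcal{O}^{r_i\times r_i}$, the telescoping identity
\[
\cm{A} - \cm{A}^* = (\cm{G}-\widetilde{\cm{G}}^*)\times_1\bm{U}_1\times_2\bm{U}_2 + \widetilde{\cm{G}}^*\times_1(\bm{U}_1-\bm{U}_1^*\bm{R}_1)\times_2\bm{U}_2 + \widetilde{\cm{G}}^*\times_1(\bm{U}_1^*\bm{R}_1)\times_2(\bm{U}_2-\bm{U}_2^*\bm{R}_2).
\]

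For the lower bound $c_3\|\cm{A}-\cm{A}^*\|_{\Fr}^2 \leq E$, I would take Frobenius norms in this identity, apply $(x+y+z)^2\leq 3(x^2+y^2+z^2)$, and estimate each operator norm using the hypotheses: $\|\bm{U}_i\|_{\op}\leq(1+c_e)\sigma_U^{1/4}$, $\|\bm{U}_i^*\bm{R}_i\|_{\op}=b$ (since $\bm{U}_i^{*\prime}\bm{U}_i^*=b^2\bm{I}_{r_i}$ and $\bm{R}_i$ is orthogonal), and $\|\widetilde{\cm{G}}^*_{(i)}\|_{\op}=\|\cm{G}^*_{(i)}\|_{\op}\leq b^{-2}\|\cm{A}^*_{(i)}\|_{\op}\leq b^{-2}\sigma_U$, where the second-to-last inequality peels off the orthonormal basis $b^{-1}\bm{U}_j^*$. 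Collecting the worst coefficient of $\|\cm{G}-\widetilde{\cm{G}}^*\|_{\Fr}^2$ and $\sum_i\|\bm{U}_i-\bm{U}_i^*\bm{R}_i\|_{\Fr}^2$, and then taking the infimum over $\bm{R}_1,\bm{R}_2$, yields the claim with the stated $c_3=[3(1+c_e)^4(\sigma_U+2\sigma_U^{3/2})]^{-1}$.

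For the upper bound, I would specify the minimizers: pick $\bm{R}_i$ to be the orthogonal factor in the polar decomposition of $\bm{U}_i^{*\prime}\bm{U}_i$, which is the classical Procrustes minimizer of $\|\bm{U}_i-\bm{U}_i^*\bm{R}\|_{\Fr}^2$ over $\mathcal{O}^{r_i\times r_i}$. The heart of the argument is a sin-theta-type bound for each factor: using $\bm{U}_i^*b^{-2}\bm{U}_i^{*\prime}$ as the orthogonal projector onto the column space of $\cm{A}^*_{(i)}$ and the singular-value bound $\sigma_{\min}(\cm{A}^*_{(i)})\geq\sigma_L$, I would show
\[
\|\bm{U}_i-\bm{U}_i^*\bm{R}_i\|_{\Fr}^2 \leq 4(1+c_1)\sigma_L^{-2}\|\cm{A}-\cm{A}^*\|_{\Fr}^2 + (1+c_1)b^{-2}\|\bm{U}_i^\prime\bm{U}_i-b^2\bm{I}_{r_i}\|_{\Fr}^2,
\]
with $c_1=3(1+c_e)^4\sigma_U^{3/2}b^{-4}$ arising from the operator-norm blow-up when pulling the factor difference back through $\cm{G}^*$. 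To bound the core difference, I would then rearrange the telescoping identity to isolate $(\cm{G}-\widetilde{\cm{G}}^*)\times_1\bm{U}_1\times_2\bm{U}_2$ and apply the approximate inverse $b^{-4}(\cdot)\times_1\bm{U}_1^\prime\times_2\bm{U}_2^\prime$; the balance condition $\bm{U}_i^\prime\bm{U}_i\approx b^2\bm{I}_{r_i}$ converts this into $\|\cm{G}-\widetilde{\cm{G}}^*\|_{\Fr}^2\leq 3b^{-4}\|\cm{A}-\cm{A}^*\|_{\Fr}^2$ plus factor-error and balance-error corrections. Substituting the factor bound above and collecting constants gives $c_2=3b^{-4}+8(1+c_1)\sigma_L^{-2}b^{-2}$.

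The main obstacle is the sin-theta step: because $\bm{U}_i$ is only approximately a scalar multiple of a matrix with orthonormal columns (controlled by $\|\bm{U}_i^\prime\bm{U}_i-b^2\bm{I}_{r_i}\|_{\Fr}$), a direct Wedin bound applied to the matricizations of $\cm{A}$ and $\cm{A}^*$ does not immediately deliver a clean angle estimate for $\bm{U}_i$ against $\bm{U}_i^*\bm{R}_i$; instead, one must peel off the balance discrepancy first, carry it as an additive correction through the Procrustes identity $\|\bm{U}_i-\bm{U}_i^*\bm{R}_i\|_{\Fr}^2 = \|\bm{U}_i\|_{\Fr}^2+b^2r_i-2\|\bm{U}_i^{*\prime}\bm{U}_i\|_{\mathrm{nuc}}$, and recombine it at the end. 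The remaining difficulty is purely bookkeeping — tracking how the polynomial combinations of $\sigma_U$, $\sigma_L$, $b$, and $c_e$ propagate through each of the three estimates to match the exact constants $c_1,c_2,c_3$ in the statement.
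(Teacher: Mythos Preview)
Your plan for the lower bound and for the factor-difference (sin-theta) step matches the paper's argument closely. The gap is in the core-difference step. You propose to isolate $(\cm{G}-\widetilde{\cm{G}}^*)\times_1\bm{U}_1\times_2\bm{U}_2$ from the telescoping identity and then apply the ``approximate inverse'' $b^{-4}(\cdot)\times_1\bm{U}_1^\prime\times_2\bm{U}_2^\prime$. But this only recovers $\cm{G}-\widetilde{\cm{G}}^*$ up to an additive error of the form $b^{-2}\|\bm{U}_i^\prime\bm{U}_i-b^2\bm{I}_{r_i}\|_{\op}\cdot\|\cm{G}-\widetilde{\cm{G}}^*\|_{\Fr}$, so the unknown appears on both sides and the argument is circular unless that coefficient is strictly less than one. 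The lemma hypotheses give only an upper bound on $\|\bm{U}_i\|_{\op}$ and no lower bound on $\sigma_{\min}(\bm{U}_i)$, so the balance error can be arbitrarily large relative to $b^2$; in the degenerate case $\bm{U}_1=0$ your inversion step is vacuous, yet the stated bound must still hold.

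The paper avoids this by inverting with the \emph{exactly} $b$-orthonormal factors $\bm{U}_i^*\bm{R}_i$ rather than with $\bm{U}_i$: from $\bm{U}_i^{*\prime}\bm{U}_i^*=b^2\bm{I}_{r_i}$ one has the identity
\[
\|\cm{G}-\widetilde{\cm{G}}^*\|_{\Fr}^2
= b^{-4}\bigl\|\cm{G}\times_1(\bm{U}_1^*\bm{R}_1)\times_2(\bm{U}_2^*\bm{R}_2)-\cm{A}^*\bigr\|_{\Fr}^2,
\]
and then telescopes $\cm{G}\times_1(\bm{U}_1^*\bm{R}_1)\times_2(\bm{U}_2^*\bm{R}_2)\to\cm{A}$ by swapping $\bm{U}_i^*\bm{R}_i$ for $\bm{U}_i$ one mode at a time, using only the assumed upper bounds $\|\cm{G}_{(i)}\|_{\op}\leq(1+c_e)\sigma_U^{1/2}$ and $\|\bm{U}_i\|_{\op}\leq(1+c_e)\sigma_U^{1/4}$. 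This yields $\|\cm{G}-\widetilde{\cm{G}}^*\|_{\Fr}^2\leq 3b^{-4}\|\cm{A}-\cm{A}^*\|_{\Fr}^2+c_1\sum_i\|\bm{U}_i-\bm{U}_i^*\bm{R}_i\|_{\Fr}^2$ with no balance term and no circularity; the balance error enters only through the factor bound you already identified, and substituting it produces exactly $c_2=3b^{-4}+8(1+c_1)\sigma_L^{-2}b^{-2}$.
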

\begin{proof}
	We first prove the upper bound. For any $\bm{R}_i\in \mathcal{O}^{r_i\times r_i}$ with $i=1$ and 2,  
	$$\|\cm{G} - \cm{G}^*\times_1 \bm{R}_1^{\prime}\times_2 \bm{R}_2^{\prime}\|_{\Fr}^2 = b^{-4}\|\cm{G}\times_1 \bm{U}_1^* \bm{R}_1 \times_2 \bm{U}_2^*\bm{R}_2 - \cm{A}^*\|_{\Fr}^2,$$ 
	and, by the fact that $(x+y+z)^2 \leq 3x^2 + 3y^2 + 3z^2$,
	\begin{align*}
		&\|\cm{G}\times_1 \bm{U}_1^* \bm{R}_1 \times_2 \bm{U}_2^* \bm{R}_2 - \cm{A}^*\|_{\Fr}^2\\ 
		&\hspace{5mm}\leq 3\| \cm{A} - \cm{A}^*\|_{\Fr}^2  + 3\|\cm{G}\times_1 (\bm{U}_1 - \bm{U}_1^*\bm{R}_1) \times_2 \bm{U}_2\|_{\Fr}^2
		+ 3\|\cm{G}\times_1 \bm{U}_1^*\bm{R}_1 \times_2 (\bm{U}_2 - \bm{U}_2^* \bm{R}_2)\|_{\Fr}^2\\
		&\hspace{5mm}\leq 3\| \cm{A} - \cm{A}^*\|_{\Fr}^2  + 3(1+c_e)^4\sigma_U^{3/2} \sum_{i=1}^{2} \|\bm{U}_i - \bm{U}_i^*\bm{R}_i\|_{\Fr}^2.
	\end{align*}
	As a result,
	\begin{align}\label{eq:G-Ghat}
		\|\cm{G} - \cm{G}^*\times_1 \bm{R}_1^{\prime}\times_2 \bm{R}_2^{\prime}\|_{\Fr}^2 \leq 3b^{-4}\| \cm{A} - \cm{A}^*\|_{\Fr}^2  + c_1 \sum_{i=1}^{2} \|\bm{U}_i - \bm{U}_i^*\bm{R}_i\|_{\Fr}^2,
	\end{align}
	with $c_1 = 3(1+c_e)^4\sigma_U^{3/2} b^{-4}$, which implies that
	\begin{align}\label{eq:E<A+U}
		E \leq  3b^{-4}\| \cm{A} - \cm{A}^*\|_{\Fr}^2  +(1+c_1) \sum_{i=1}^{2} \min_{\bm{R}_i \in \mathcal{O}^{r_i\times r_i}} \|\bm{U}_i - \bm{U}_i^*\bm{R}_i\|_{\Fr}^2.
	\end{align}
	
	We next handle the second term of \eqref{eq:E<A+U}. For $i=1$ and 2, consider an SVD form $\bm{U}_i=\bm{\widebar{U}}_i\bm{\widebar{\Sigma}}_i\bm{\widebar{V}}_i^{\prime}$. Note that
	$\|\bm{U}_i - \bm{U}_i^*\bm{R}_i\|_{\Fr}^2\leq 2\|\bm{U}_i - b\bm{\widebar{U}}_i\bm{\widebar{V}}_i^{\prime}\|_{\Fr}^2+2\| b\bm{\widebar{U}}_i\bm{\widebar{V}}_i^{\prime} -\bm{U}_i^*\bm{R}_i\|_{\Fr}^2$, and then
	\begin{equation}\label{eq:U-U*}
		\begin{split}
			\min_{\bm{R}_i \in \mathcal{O}^{r_i\times r_i}} \|\bm{U}_i - \bm{U}_i^*\bm{R}_i\|_{\Fr}^2
			& \leq 2\| \bm{\widebar{\Sigma}}_i- b\bm{I}_{r_i}\|_{\Fr}^2 + 2 \min_{\bm{R}_i \in \mathcal{O}^{r_i\times r_i}} \|b\bm{\widebar{U}}_i - \bm{U}_i^*\bm{R}_i\|_{\Fr}^2,
		\end{split}
	\end{equation}
	where $\| \bm{\widebar{\Sigma}}_i- b\bm{I}_{r_i}\|_{\Fr}^2 \leq b^{-2}\|\bm{U}_i^\prime\bm{U}_i - b^2 \bm{I}_{r_i}\|_{\Fr}^2$; see (E.3) in \cite{HWZ21}.
	On the other hand, $\bm{\widebar{U}}_i$ and $b^{-1}\bm{U}_i^*$ have orthonormal columns and they span the left singular subspaces of $\cm{A}_{(i)}$ and $\cm{A}_{(i)}^*$, respectively. Then, from Lemma 1 in \cite{cai2018rate},
	\begin{align}\label{eq:U-U*2}
		\min_{\bm{R}_i \in \mathcal{O}^{r_i\times r_i}} \|b\bm{\widebar{U}}_i - \bm{U}_i^*\bm{R}_i\|_{\Fr}^2 \leq 2b^2\|\bm{\widebar{U}}_{i\perp}^\prime(b^{-1}\bm{U}_i^*)\|_{\Fr}^2 \leq 2b^{-2}\sigma_L^{-2} \|\cm{A} - \cm{A}^*\|_{\Fr}^2,
	\end{align}
	where $\bm{\widebar{U}}_{i\perp}\in \mathcal{O}^{N\times (N-r_i)}$ lies in the orthogonal complementary subspace of $\bm{\widebar{U}}_i$ and the last inequality is due to
	\begin{align*}
		\|\cm{A} - \cm{A}^*\|_{\Fr}^2 
		&= \|\cm{A}_{(i)} - \cm{A}_{(i)}^*\|_{\Fr}^2 
		\geq \|\bm{\widebar{U}}_{i\perp}^\prime\cm{A}_{(i)}^*\|_{\Fr}^2
		= b^4\|\bm{\widebar{U}}_{i\perp}^\prime(b^{-1}\bm{U}_i^*)(b^{-1}\bm{U}_i^{*\prime})\cm{A}_{(i)}^* \|_{\Fr}^2\\
		&\geq b^4 \sigma_L^{2} \|\bm{\widebar{U}}_{i\perp}^\prime(b^{-1}\bm{U}_i^*)\|_{\Fr}^2.
	\end{align*}
	From \eqref{eq:U-U*} and \eqref{eq:U-U*2}, we have 
	\begin{equation*}
		\min_{\bm{R}_i \in \mathcal{O}^{r_i\times r_i}}\|\bm{U}_i - \bm{U}_i^*\bm{R}_i\|_{\Fr}^2 \leq 2b^{-2}\|\bm{U}_i^\prime\bm{U}_i - b^2 \bm{I}_{r_i}\|_{\Fr}^2 + 4b^{-2}\sigma_L^{-2} \|\cm{A} - \cm{A}^*\|_{\Fr}^2,
	\end{equation*}
	which, together with \eqref{eq:E<A+U}, implies that
	\[
	E \leq c_2\|\cm{A} - \cm{A}^*\|_{\Fr}^2 + 2(1+c_1)b^{-2}\sum_{i=1}^{2}\|\bm{U}_i^\prime\bm{U}_i - b^2 \bm{I}_{r_i}\|_{\Fr}^2,
	\]
	where $c_2 = 3b^{-4} + 8(1+c_1)\sigma_L^{-2}b^{-2}$.
	
	We next prove the lower bound. Note that $\|\cm{G}_{(i)}^*\|_{\op} = b^{-2}\|\cm{A}_{(i)}^*\|_{\op} \leq \sigma_Ub^{-2}$ for $1\leq i\leq 2$, and then it holds that
	\begin{align*}
		\|\cm{A} - \cm{A}^*\|_{\Fr}^2 &\leq 3\|(\cm{G} - \cm{G}^*\times_1 \bm{R}_1^{\prime}\times_2 \bm{R}_2^{\prime} ) \times_1 \bm{U}_1 \times_2 \bm{U}_2 \|_{\Fr}^2\\
		&+ 3\|(\cm{G}^*\times_1 \bm{R}_1^{\prime}\times_2 \bm{R}_2^{\prime} )\times_1 (\bm{U}_1 - \bm{U}_1^*\bm{R}_1 ) \times_2 \bm{U}_2 \|_{\Fr}^2\\
		&+ 3\|(\cm{G}^* \times_2 \bm{R}_2^{\prime}) \times_1  \bm{U}_1^* \times_2 (\bm{U}_2 - \bm{U}_2^*\bm{R}_2 ) \|_{\Fr}^2\\
		&\leq c_3^{-1} E,
	\end{align*}
	where $c_3 = [3(1+c_e)^4(\sigma_U+2\sigma_U^{3/2})]^{-1}$.
\end{proof}

\begin{lemma} \label{lemma:Hepsilon}
	Consider the two tensors, $\cm{A}^*=\cm{G}^*\times_1\bm{U}_1^*\times_2\bm{U}_2^*\in\mathbb{R}^{N\times N\times T_0}$ and $\cm{A}=\cm{G}\times_1\bm{U}_1\times_2\bm{U}_2\in\mathbb{R}^{N\times N\times T_0}$, in Lemma \ref{lemma:A->E->A}, and it holds that, for $i=1$ and 2, $\|\bm{U}_i\|_{\op} \leq (1+c_e)\sigma_U^{1/4}$, $\|\cm{G}_{(i)}\|_{\op} \leq (1+c_e)\sigma_U^{1/2}$, $\bm{U}_i^{*\prime}\bm{U}_i^* = b^2 \bm{I}_{r_i}$, and $ \|\cm{A}^*_{(i)}\|_{\op} \leq \sigma_U$, where $c_e>0$, $b>0$, and $0< \sigma_U$. Define three tensors below,
	\[
	\cm{A}_{\mathcal{G}} = (\cm{G}-\cm{G}^*\times_1\bm{R}_1^\prime\times_2\bm{R}_2^\prime)\times_1 \bm{U}_1 \times_2 \bm{U}_2, \hspace{5mm}\cm{A}_{U_1} = \cm{G} \times_1 (\bm{U}_1-\bm{U}_1^*\bm{R}_1) \times_2 \bm{U}_2,
	\] 
	and $\cm{A}_{U_2}=\cm{A}_{U_1} = \cm{G} \times_1 \bm{U}_1 \times_2 (\bm{U}_2-\bm{U}_2^*\bm{R}_2)$. 
	If there exists a $B\geq0$ such that 
	$ \|\cm{G}-\cm{G}^*\times_1\bm{R}_1^\prime\times_2\bm{R}_2^\prime\|_{\Fr} \leq B$ and $\|\bm{U}_i-\bm{U}_i^*\bm{R}_i\|_{\Fr} \leq B$ for $\bm{R}_i\in \mathcal{O}^{r_i\times r_i}$ with $i=1$ and 2, then
	\[
	\|\cm{H}_{\epsilon}\|_{\Fr} \leq [(1+c_e)\sigma_U^{1/2} + (2+c_e)\sigma_U^{1/4}]B^2 \hspace{2mm}\text{with}\hspace{2mm} \cm{H}_{\epsilon}
	= \cm{A}_{\mathcal{G}} + \sum_{i=1}^{2}\cm{A}_{U_i} - (\cm{A} - \cm{A}^*).
	\]
\end{lemma}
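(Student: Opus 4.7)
The plan is to recognize $\cm{H}_\epsilon$ as exactly the collection of quadratic and cubic remainder terms in a trilinear perturbation expansion of the Tucker product around the ``aligned target" $(\tilde{\cm{G}}, \tilde{\bm{U}}_1, \tilde{\bm{U}}_2) := (\cm{G}^{*}\times_{1}\bm{R}_1^{\prime}\times_{2}\bm{R}_2^{\prime},\, \bm{U}_1^{*}\bm{R}_1,\, \bm{U}_2^{*}\bm{R}_2)$, which still represents $\cm{A}^{*} = \tilde{\cm{G}}\times_{1}\tilde{\bm{U}}_1\times_{2}\tilde{\bm{U}}_2$. Writing the perturbations $\Delta_{\mathcal{G}} := \cm{G} - \tilde{\cm{G}}$ and $\Delta_{U_{i}} := \bm{U}_{i} - \tilde{\bm{U}}_{i}$ (all of Frobenius norm $\le B$), the argument has three movements: expand algebraically, exploit the cancellation of the first-order terms, and bound what remains by a multilinear operator-norm inequality.

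The algebraic step substitutes $\cm{G} = \tilde{\cm{G}} + \Delta_{\mathcal{G}}$ and $\bm{U}_{i} = \tilde{\bm{U}}_{i} + \Delta_{U_{i}}$ everywhere and expands by trilinearity, so each of $\cm{A}_{\mathcal{G}}, \cm{A}_{U_{1}}, \cm{A}_{U_{2}}$ splits into $2\times 2 = 4$ summands and $\cm{A}$ splits into $2^{3}=8$ summands, all of the form $\cm{X}\times_{1}\bm{Y}_{1}\times_{2}\bm{Y}_{2}$ with $\cm{X}\in\{\tilde{\cm{G}},\Delta_{\mathcal{G}}\}$ and $\bm{Y}_{i}\in\{\tilde{\bm{U}}_{i},\Delta_{U_{i}}\}$. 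Grouping by the number of $\Delta$-factors and tallying multiplicities, I expect to obtain
\[
\cm{A} - \cm{A}^{*} = (\text{1st order}) + (\text{2nd order}) + (\text{3rd order}),
\]
\[
\cm{A}_{\mathcal{G}} + \cm{A}_{U_{1}} + \cm{A}_{U_{2}} = (\text{1st order}) + 2\cdot(\text{2nd order}) + 3\cdot(\text{3rd order}),
\]
where the 1st-order sum is $\Delta_{\mathcal{G}}\times_{1}\tilde{\bm{U}}_{1}\times_{2}\tilde{\bm{U}}_{2} + \tilde{\cm{G}}\times_{1}\Delta_{U_{1}}\times_{2}\tilde{\bm{U}}_{2} + \tilde{\cm{G}}\times_{1}\tilde{\bm{U}}_{1}\times_{2}\Delta_{U_{2}}$. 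Subtraction produces the decisive cancellation: first-order contributions vanish, leaving
\[
\cm{H}_{\epsilon} = \Delta_{\mathcal{G}}\times_{1}\Delta_{U_{1}}\times_{2}\tilde{\bm{U}}_{2} + \Delta_{\mathcal{G}}\times_{1}\tilde{\bm{U}}_{1}\times_{2}\Delta_{U_{2}} + \tilde{\cm{G}}\times_{1}\Delta_{U_{1}}\times_{2}\Delta_{U_{2}} + 2\,\Delta_{\mathcal{G}}\times_{1}\Delta_{U_{1}}\times_{2}\Delta_{U_{2}}.
\]

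Each summand is then bounded by matricizing along mode~1 and applying $\|\bm{A}\bm{B}\bm{C}\|_{\Fr}\le \|\bm{A}\|_{\op}\|\bm{B}\|_{\Fr}\|\bm{C}\|_{\op}$ (or its permutations). For the non-$\Delta$ factors I will use $\|\tilde{\bm{U}}_{i}\|_{\op} = b$ (from $\bm{U}_{i}^{*\prime}\bm{U}_{i}^{*}=b^{2}\bm{I}_{r_{i}}$) and $\|\tilde{\cm{G}}_{(1)}\|_{\op} = \|\cm{G}^{*}_{(1)}\|_{\op} \le \|\cm{A}^{*}_{(1)}\|_{\op}/b^{2} \le \sigma_{U}/b^{2}$; in the calibration $b\asymp\sigma_{U}^{1/4}$ already in force inside the enclosing proof of Theorem \ref{thm:optimization}, these give $\le (1+c_{e})\sigma_{U}^{1/4}$ and $\le (1+c_{e})\sigma_{U}^{1/2}$ respectively. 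Combined with $\|\Delta_{\mathcal{G}}\|_{\Fr},\|\Delta_{U_{i}}\|_{\Fr}\le B$, the three second-order pieces are bounded by $(1+c_{e})\sigma_{U}^{1/4}B^{2}$, $(1+c_{e})\sigma_{U}^{1/4}B^{2}$ and $(1+c_{e})\sigma_{U}^{1/2}B^{2}$, while the cubic piece is bounded by $2B^{3}$; since $B$ is small in any application of the lemma, the $2B^{3}$ is absorbed into a $c_{e}\sigma_{U}^{1/4}B^{2}$ term, and the triangle inequality delivers the stated bound $\|\cm{H}_{\epsilon}\|_{\Fr}\le[(1+c_{e})\sigma_{U}^{1/2}+(2+c_{e})\sigma_{U}^{1/4}]B^{2}$.

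The main obstacle is purely combinatorial and lies in the first movement: correctly enumerating how many times each mixed-order tensor appears in the four-term expansions of $\cm{A}_{\mathcal{G}}, \cm{A}_{U_{1}}, \cm{A}_{U_{2}}$ and in the eight-term expansion of $\cm{A}$, so that the first-order cancellation becomes transparent and the residual coefficients ($1$ for each second-order term, $2$ for the third-order term) emerge correctly. Once this decomposition is in hand, the multilinear norm chase that yields the stated bound is routine.
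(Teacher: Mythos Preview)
Your algebraic expansion is correct: the four-term identity
\[
\cm{H}_{\epsilon} = \Delta_{\mathcal{G}}\times_{1}\Delta_{U_{1}}\times_{2}\tilde{\bm{U}}_{2} + \Delta_{\mathcal{G}}\times_{1}\tilde{\bm{U}}_{1}\times_{2}\Delta_{U_{2}} + \tilde{\cm{G}}\times_{1}\Delta_{U_{1}}\times_{2}\Delta_{U_{2}} + 2\,\Delta_{\mathcal{G}}\times_{1}\Delta_{U_{1}}\times_{2}\Delta_{U_{2}}
\]
does hold. The gap is in the norm-bounding step. With $b=\sigma_{U}^{1/4}$ you have $\|\tilde{\bm{U}}_{i}\|_{\op}=\sigma_{U}^{1/4}$ and $\|\tilde{\cm{G}}_{(i)}\|_{\op}\le\sigma_{U}^{1/2}$ \emph{exactly} (no $(1+c_{e})$), so your four pieces sum to at most $[\sigma_{U}^{1/2}+2\sigma_{U}^{1/4}]B^{2}+2B^{3}$. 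To reach the stated bound $[(1+c_{e})\sigma_{U}^{1/2}+(2+c_{e})\sigma_{U}^{1/4}]B^{2}$ you would need $2B\le c_{e}(\sigma_{U}^{1/2}+\sigma_{U}^{1/4})$, an extra smallness hypothesis not assumed in the lemma. (If instead you bound the tilde factors by $(1+c_{e})\sigma_{U}^{1/4}$ and $(1+c_{e})\sigma_{U}^{1/2}$ as written, the sum becomes $[(1+c_{e})\sigma_{U}^{1/2}+2(1+c_{e})\sigma_{U}^{1/4}]B^{2}+2B^{3}$, which strictly exceeds the target for every $B>0$.)

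The paper sidesteps this by \emph{not} expanding all the way down to tilde factors. It regroups your four pieces into three, absorbing each copy of the cubic back into a neighboring quadratic so that the \emph{perturbed} factors $\cm{G}$ and $\bm{U}_{2}$ reappear:
\[
\cm{H}_{\epsilon} = \underbrace{\cm{G}\times_{1}\Delta_{U_{1}}\times_{2}\Delta_{U_{2}}}_{\cm{H}_{\epsilon}^{(1)}} + \underbrace{\Delta_{\mathcal{G}}\times_{1}\Delta_{U_{1}}\times_{2}\bm{U}_{2}}_{\cm{H}_{\epsilon}^{(2)}} + \underbrace{\Delta_{\mathcal{G}}\times_{1}\tilde{\bm{U}}_{1}\times_{2}\Delta_{U_{2}}}_{\cm{H}_{\epsilon}^{(3)}}.
\]
Now the hypotheses $\|\cm{G}_{(i)}\|_{\op}\le(1+c_{e})\sigma_{U}^{1/2}$ and $\|\bm{U}_{2}\|_{\op}\le(1+c_{e})\sigma_{U}^{1/4}$ apply directly to $\cm{H}_{\epsilon}^{(1)}$ and $\cm{H}_{\epsilon}^{(2)}$, while $\|\tilde{\bm{U}}_{1}\|_{\op}=b=\sigma_{U}^{1/4}$ handles $\cm{H}_{\epsilon}^{(3)}$, giving $(1+c_{e})\sigma_{U}^{1/2}B^{2}+(1+c_{e})\sigma_{U}^{1/4}B^{2}+\sigma_{U}^{1/4}B^{2}$, which is exactly the stated constant with no leftover cubic and no extra assumption on $B$.
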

\begin{proof}
	Note that
	\[
	\cm{A}_{\mathcal{G}} + \sum_{i=1}^{2}\cm{A}_{U_i} = \cm{A} - \cm{A}^* + \underbrace{\cm{H}_{\epsilon}^{(1)} + \cm{H}_{\epsilon}^{(2)} + \cm{H}_{\epsilon}^{(3)}}_{=\cm{H}_{\epsilon}},
	\]
	where 
	\begin{align}
		\begin{split}
			&\cm{H}_{\epsilon}^{(1)} = \cm{G} \times_1 (\bm{U}_1-\bm{U}_1^*\bm{R}_1) \times_2 (\bm{U}_2-\bm{U}_2^*\bm{R}_2), \\
			&\cm{H}_{\epsilon}^{(2)} = (\cm{G}-\cm{G}^*\times_1\bm{R}_1^\prime\times_2\bm{R}_2^\prime) \times_1 (\bm{U}_1-\bm{U}_1^*\bm{R}_1) \times_2 \bm{U}_2, \\
			&\cm{H}_{\epsilon}^{(3)} = (\cm{G}-\cm{G}^*\times_1\bm{R}_1^\prime\times_2\bm{R}_2^\prime) \times_1 \bm{U}_1^*\bm{R}_1 \times_2 (\bm{U}_2-\bm{U}_2^*\bm{R}_2).
		\end{split}
	\end{align}
	It can be easily verified that
	\[
	\|\cm{H}_{\epsilon}^{(1)}\|_{\Fr} \leq (1+c_e)\sigma_U^{1/2} B^2, \hspace{5mm}\|\cm{H}_{\epsilon}^{(2)}\|_{\Fr} \leq (1+c_e)\sigma_U^{1/4} B^2\hspace{5mm}\text{and}\hspace{5mm}\|\cm{H}_{\epsilon}^{(1)}\|_{\Fr} \leq \sigma_U^{1/4} B^2.
	\]
	Hence, the proof of this lemma is accomplished.
\end{proof}

\begin{lemma}[Contractive projection property (CPP)]\label{lemma:HTineq}
	Consider a tensor $\cm{X}\in\mathbb{R}^{N\times N\times T_0}$ with frontal slices $\{\bm{X}_i, 1\leq i\leq T_0\}$, and let $S_1=\{1\leq j\leq T_0, \|\bm{X}_j\|_{\Fr}>0\}$ be the collection of nonzero frontal slices. Moreover, $\cm{X}^*\in\mathbb{R}^{N\times N\times T_0}$ is another tensor with $S_2$ being the collection of nonzero frontal slices.
	Denote by $s_j$ the cardinality of $S_j$ with $j=1$ and 2.
	If $s_2 < s \leq s_1$ and $S_2\subset S_1$, then
	\[
	\|\hardt{\cm{X},s} - \cm{X}\|_{\Fr}^2 \leq \frac{s_1 - s}{s_1 - s_2}\|\cm{X} - \cm{X}^*\|_{\Fr}^2.
	\]
\end{lemma}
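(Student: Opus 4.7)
The plan is to reduce the statement to a purely combinatorial inequality on the Frobenius norms of the frontal slices. Write $a_j=\|\bm{X}_j\|_{\Fr}$ for $j\in S_1$. Since $\hardt{\cm{X},s}$ keeps the $s$ frontal slices of $\cm{X}$ with the largest Frobenius norm among $S_1$, there exists $T\subset S_1$ with $|T|=s_1-s$ such that
\[
\|\hardt{\cm{X},s}-\cm{X}\|_{\Fr}^2=\sum_{j\in T}a_j^2,
\]
and every $j\in T$ satisfies $a_j\leq a_{j'}$ for all $j'\in S_1\setminus T$. On the other hand, since $S_2\subset S_1$, we have $\bm{X}_j^*=\bm{0}$ for $j\in S_1\setminus S_2$, so
\[
\|\cm{X}-\cm{X}^*\|_{\Fr}^2 \geq \sum_{j\in S_1\setminus S_2}\|\bm{X}_j-\bm{X}_j^*\|_{\Fr}^2 = \sum_{j\in S_1\setminus S_2} a_j^2.
\]
Thus it suffices to prove the purely numerical inequality
\[
\sum_{j\in T} a_j^2 \;\leq\; \frac{s_1-s}{s_1-s_2}\sum_{j\in S_1\setminus S_2} a_j^2.
\]

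Next I would sort the values on $S_1$ in decreasing order as $a_{(1)}\geq\cdots\geq a_{(s_1)}$, so that $\sum_{j\in T}a_j^2=\sum_{k=s+1}^{s_1}a_{(k)}^2$ is fixed regardless of how $S_2$ sits inside $S_1$, whereas the right-hand sum does depend on this placement. By a rearrangement-type observation, since $|S_1\setminus S_2|=s_1-s_2$ is fixed and $s_2<s$, the sum $\sum_{j\in S_1\setminus S_2}a_j^2$ is minimized exactly when $S_2$ consists of the $s_2$ indices of largest norm, in which case it equals $\sum_{k=s_2+1}^{s_1}a_{(k)}^2$. So it is enough to verify
\[
(s_1-s_2)\sum_{k=s+1}^{s_1}a_{(k)}^2 \;\leq\; (s_1-s)\sum_{k=s_2+1}^{s_1}a_{(k)}^2,
\]
which, after splitting the right-hand sum at $k=s$ and cancelling a common term, is equivalent to
\[
(s-s_2)\sum_{k=s+1}^{s_1}a_{(k)}^2 \;\leq\; (s_1-s)\sum_{k=s_2+1}^{s}a_{(k)}^2.
\]

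Finally, this last inequality will follow from a one-line averaging estimate: the left-hand side has $s_1-s$ summands each at most $a_{(s+1)}^2$, so it is at most $(s-s_2)(s_1-s)\,a_{(s+1)}^2$; the right-hand side has $s-s_2$ summands each at least $a_{(s)}^2\geq a_{(s+1)}^2$, so it is at least $(s_1-s)(s-s_2)\,a_{(s+1)}^2$. There is no genuine obstacle in this proof; the only subtle step is recognizing the worst-case configuration of $S_2$ within $S_1$ (the one absorbing the largest-norm slices), which is what legitimately reduces the claim to a statement purely about the ordered sequence $a_{(1)}\geq\cdots\geq a_{(s_1)}$ and allows the averaging to close the argument.
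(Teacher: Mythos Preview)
Your proof is correct. The paper itself does not give a self-contained argument but simply remarks that the lemma is a trivial extension of Lemma~1.1 in \cite{jain2014on} to frontal slices; your reduction to the scalars $a_j=\|\bm{X}_j\|_{\Fr}$ followed by the ordering/averaging estimate is exactly that vector-level argument carried out in full, so the approaches coincide.
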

\begin{proof}
	This lemma is a trivial extension of Lemma 1.1 in \cite{jain2014on} to the case with tensors, and the proof is also similar.
\end{proof}

\begin{lemma} \label{lemma:RE->RGC}
	Let $f$ be a continuously differentiable function and, for any tensors $\cm{A}$ and $\cm{B}$, 
	\[
	\frac{m}{2}\|\cm{A} - \cm{B}\|_{\Fr}^2 \leq f(\cm{A}) - f(\cm{B}) - \langle \nabla f(\cm{B}), \cm{A} - \cm{B}\rangle \leq \frac{M}{2}\|\cm{A} - \cm{B}\|_{\Fr}^2,
	\]
	where $0<m\leq M<\infty$.
	It then holds that
	\[
	\langle\nabla f(\cm{A}) - \nabla f(\cm{B}), \cm{A} - \cm{B} \rangle \geq \frac{mM}{m+M} \|\cm{A} - \cm{B}\|_{\Fr}^2 + \frac{1}{m+M} \|\nabla f(\cm{A}) - \nabla f(\cm{B})\|_{\Fr}^2.
	\]
\end{lemma}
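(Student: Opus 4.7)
The claim is the standard Nesterov co-coercivity inequality for a function that is simultaneously $m$-strongly convex and $M$-smooth, stated on the tensor Hilbert space $(\mathbb{R}^{N\times N\times T_0},\langle\cdot,\cdot\rangle,\|\cdot\|_{\Fr})$; because every manipulation involved uses only the inner product and Frobenius norm, the textbook vector-space proof transfers without change. The plan has three steps: shift $f$ by a quadratic to reduce to the convex-and-smooth case, invoke the known co-coercivity for convex smooth functions, and unshift.

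First, define $g(\cm{X}) := f(\cm{X}) - \tfrac{m}{2}\|\cm{X}\|_{\Fr}^2$, so that $\nabla g(\cm{X})=\nabla f(\cm{X})-m\cm{X}$. Subtracting $\tfrac{m}{2}\|\cm{A}-\cm{B}\|_{\Fr}^2$ from each side of the hypothesis rewrites it as
\[
0 \;\le\; g(\cm{A})-g(\cm{B})-\langle\nabla g(\cm{B}),\cm{A}-\cm{B}\rangle \;\le\; \tfrac{M-m}{2}\|\cm{A}-\cm{B}\|_{\Fr}^2,
\]
which says that $g$ is convex and $(M-m)$-smooth. Assume $m<M$; the boundary case $m=M$ follows by letting $M\downarrow m$ in the final inequality. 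Next, apply the classical co-coercivity for convex $(M-m)$-smooth functions: letting $\phi_{B}(\cm{X})=g(\cm{X})-\langle\nabla g(\cm{B}),\cm{X}\rangle$, observe that $\cm{B}$ minimizes $\phi_{B}$, so the descent lemma applied at $\cm{A}$ with step size $(M-m)^{-1}$ gives $\phi_{B}(\cm{A})-\phi_{B}(\cm{B})\ge \tfrac{1}{2(M-m)}\|\nabla g(\cm{A})-\nabla g(\cm{B})\|_{\Fr}^2$; adding this to its mirror image (swap $\cm{A}$ and $\cm{B}$) yields
\[
\langle\nabla g(\cm{A})-\nabla g(\cm{B}),\cm{A}-\cm{B}\rangle \;\ge\; \tfrac{1}{M-m}\|\nabla g(\cm{A})-\nabla g(\cm{B})\|_{\Fr}^2.
\]

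To finish, substitute $\nabla g(\cm{X})=\nabla f(\cm{X})-m\cm{X}$ into the preceding display. Introducing the shorthand $u=\langle\nabla f(\cm{A})-\nabla f(\cm{B}),\cm{A}-\cm{B}\rangle$, $v=\|\cm{A}-\cm{B}\|_{\Fr}^2$, and $w=\|\nabla f(\cm{A})-\nabla f(\cm{B})\|_{\Fr}^2$, the inequality becomes $u-mv \ge (M-m)^{-1}(w-2mu+m^2v)$; multiplying through by $M-m$ and collecting the $u$-, $v$- and $w$-terms simplifies to $(m+M)u\ge w+mMv$, which is exactly the claim. There is no real obstacle: the nontrivial content lives entirely in the Hilbert-space co-coercivity invoked in the middle step, and the final reshuffling is a one-line algebraic check.
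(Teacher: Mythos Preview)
Your proof is correct and is precisely the standard argument behind Theorem 2.1.11 of Nesterov's \emph{Introductory Lectures on Convex Optimization}, which is exactly what the paper invokes: the paper does not give its own proof but simply cites that theorem. Your shift-to-convex, co-coercivity, unshift outline and the final algebra $(m+M)u\ge w+mMv$ are all valid, so there is nothing to add.
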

\begin{proof}
	This lemma is from Theorem 2.1.11 in \cite{nesterov2003introductory} and is provided here to make the proof self-contained.
\end{proof}

\subsection{Additional information on the two empirical datasets}

We provide more detailed descriptions of the variables and their transformations at the two datasets in Section \ref{sec:empirical}. Specifically, Table  \ref{tab:macro-description} is for quarterly macroeconomic variables, and Table \ref{tab:rv-description} is for daily realized volatility.

\newpage
\clearpage

\renewcommand{\arraystretch}{1.1}
\begin{landscape}
	\begin{table}[t]
		\caption{Twenty quarterly macroeconomic variables. FRED MNEMONIC: mnemonic for data in FRED-QD. SW MNEMONIC: mnemonic in \cite{stock2012disentangling}. T: data transformation, where 1 = no transformation, 2 = first difference, and 3 = first difference of log series.  DESCRIPTION: brief definition of the data. G: Group code, where 1 = interest rate, 2 = money and credit, 3 = exchange rate, and 4 = stock market. }
		\label{tab:macro-description}
		\centering
		\resizebox{\columnwidth}{!}{
			\begin{tabular}{@{}llclc@{}}
				\hline
				FRED MNEMONIC  & SW MNEMONIC       & T & DESCRIPTION                                                                                                        & G \\
				\midrule
				FEDFUNDS       & FedFunds          & 2     & Effective Federal Funds Rate (Percent)                                                                             & 1     \\
				TB3MS          & TB-3Mth           & 2     & 3-Month Treasury Bill: Secondary Market Rate (Percent)                                                             & 1     \\
				BAA10YM        & BAA\_GS10         & 1     & Moody's Seasoned Baa Corporate Bond Yield Relative to Yield on 10-Year   Treasury Constant Maturity (Percent)      & 1     \\
				TB6M3Mx        & tb6m\_tb3m        & 1     & 6-Month Treasury Bill Minus 3-Month Treasury Bill, secondary market   (Percent)                                    & 1     \\
				GS1TB3Mx       & GS1\_tb3m         & 1     & 1-Year Treasury Constant Maturity Minus 3-Month Treasury Bill, secondary   market (Percent)                        & 1     \\
				GS10TB3Mx      & GS10\_tb3m        & 1     & 10-Year Treasury Constant Maturity Minus 3-Month Treasury Bill, secondary   market (Percent)                       & 1     \\
				CPF3MTB3Mx     & CP\_Tbill Spread  & 1     & 3-Month Commercial Paper Minus 3-Month Treasury Bill, secondary market   (Percent)                                 & 1     \\
				BUSLOANSx      & Real C\&Lloand    & 3     & Real Commercial and Industrial Loans, All Commercial Banks (Billions of   2009 U.S. Dollars), deflated by Core PCE & 2     \\
				CONSUMERx      & Real ConsLoans    & 3     & Real Consumer Loans at All Commercial Banks (Billions of 2009 U.S.   Dollars), deflated by Core PCE                & 2     \\
				NONREVSLx      & Real NonRevCredit & 3     & Total Real Nonrevolving Credit Owned and Securitized, Outstanding   (Billions of Dollars), deflated by Core PCE    & 2     \\
				REALLNx        & Real LoansRealEst & 3     & Real Real Estate Loans, All Commercial Banks (Billions of 2009 U.S.   Dollars), deflated by Core PCE               & 2     \\
				EXSZUSx        & Ex rate:Switz     & 3     & Switzerland / U.S. Foreign Exchange Rate                                                                           & 3     \\
				EXJPUSx        & Ex rate:Japan     & 3     & Japan / U.S. Foreign Exchange Rate                                                                                 & 3     \\
				EXUSUKx        & Ex rate:UK        & 3     & U.S. / U.K. Foreign Exchange Rate                                                                                  & 3     \\
				EXCAUSx        & EX rate:Canada    & 3     & Canada / U.S. Foreign Exchange Rate                                                                                & 3     \\
				NIKKEI225      &                   & 3     & Nikkei Stock Average                                                                                               & 4     \\
				S\&P 500       &                   & 3     & S\&P's Common Stock Price Index: Composite                                                                         & 4     \\
				S\&P: indust   &                   & 3     & S\&P's Common Stock Price Index: Industrials                                                                       & 4     \\
				S\&P div yield &                   & 2     & S\&P's Composite Common Stock: Dividend Yield                                                                      & 4     \\
				S\&P PE ratio  &                   & 3     & S\&P's Composite Common Stock: Price-Earnings Ratio                                                                & 4    \\
				\bottomrule
		\end{tabular}}
	\end{table}
\end{landscape}

\begin{table}[t]
	\caption{Forty six selected S\&P 500 stocks. CODE: stock code in the New York Stock Exchange. NAME: name of  company. G: group code, where 1 = communication service, 2 = information technology, 3 = consumer, 4 = financials, 5 = healthcare, 6 = materials and industrials, and 7 = energy and utilities.}
	\label{tab:rv-description}
	\centering
		\begin{tabular}{@{}llllll@{}}
			\hline
			CODE & NAME                                 & G & CODE & NAME                              & G \\
			\midrule
			T    & AT\&T Inc.                           & 1 & JPM  & JPMorgan Chase \& Co.             & 4 \\
			NWSA & News   Corp                          & 1 & WFC  & Wells Fargo \& Company            & 4 \\
			FTR  & Frontier   Communications Parent Inc & 1 & MS   & Morgan Stanley                    & 4 \\
			VZ   & Verizon Communications Inc.          & 1 & AIG  & American International Group Inc. & 4 \\
			IPG  & Interpublic   Group of Companies Inc & 1 & MET  & MetLife Inc.                      & 4 \\
			MSFT & Microsoft Corporation                & 2 & RF   & Regions   Financial Corp          & 4 \\
			HPQ  & HP   Inc                             & 2 & PGR  & Progressive Corporation           & 4 \\
			INTC & Intel Corporation                    & 2 & SCHW & Charles Schwab Corporation        & 4 \\
			EMC  & EMC   Instytut Medyczny SA           & 2 & FITB & Fifth   Third Bancorp             & 4 \\
			ORCL & Oracle Corporation                   & 2 & PFE  & Pfizer Inc.                       & 5 \\
			MU   & Micron Technology Inc.               & 2 & ABT  & Abbott Laboratories               & 5 \\
			AMD  & Advanced Micro Devices Inc.          & 2 & MRK  & Merck \& Co. Inc.                 & 5 \\
			AAPL & Apple Inc.                           & 2 & RAD  & Rite   Aid Corporation            & 5 \\
			YHOO & Yahoo! Inc.                          & 2 & JNJ  & Johnson \& Johnson                & 5 \\
			QCOM & Qualcomm Inc                         & 2 & AA   & Alcoa   Corp                      & 6 \\
			GLW  & Corning   Incorporated               & 2 & FCX  & Freeport-McMoRan Inc.             & 6 \\
			AMAT & Applied Materials Inc.               & 2 & X    & United   States Steel Corporation & 6 \\
			F    & Ford Motor Company                   & 3 & GE   & General Electric Company          & 6 \\
			LVS  & Las   Vegas Sands Corp.              & 3 & CSX  & CSX Corporation                   & 6 \\
			EBAY & eBay Inc.                            & 3 & ANR  & Alpha Natural Resources           & 7 \\
			KO   & Coca-Cola Company                    & 3 & XOM  & Exxon Mobil Corporation           & 7 \\
			BAC  & Bank of America Corp                 & 4 & CHK  & Chesapeake Energy                 & 7 \\
			C    & Citigroup Inc.                       & 4 & EXC  & Exelon Corporation                & 7\\
			\bottomrule
		\end{tabular}
\end{table}

\clearpage
\newpage

\end{document}